\DeclareMathOperator*{\argmax}{\arg\!\max}
\newcommand{\eqdef}{\overset{\mathrm{def}}{=\joinrel=}}
\newtheorem{theorem}{Theorem}[section]
\newtheorem{lemma}[theorem]{Lemma}
\newtheorem{proposition}[theorem]{Proposition}
\theoremstyle{plain}
\newenvironment{customass}[1]
  {\innercustomass}
  {\endinnercustomass}
\theoremstyle{nonumberplain}
\newtheorem{proof}{Proof}
\title{Linked Gaussian Process Emulation for Systems of \\ Computer Models using Mat{\'e}rn Kernels and Adaptive Design}
\author{
  Deyu Ming\thanks{Corresponding author: \texttt{deyu.ming.16@ucl.ac.uk}.}\\
  Department of Statistical Science\\
  University College London\\
  London, UK \\
   \And
 Serge Guillas \\
  Department of Statistical Science\\
  University College London\\
  London, UK \\
}
\begin{document}
\maketitle

\begin{abstract}
The state-of-the-art linked Gaussian process offers a way to build analytical emulators for systems of computer models. We generalize the closed form expressions for the linked Gaussian process under the squared exponential kernel to a class of Mat{\'e}rn kernels, that are essential in advanced applications. An iterative procedure to construct linked Gaussian processes as surrogate models for any feed-forward systems of computer models is presented and illustrated on a feed-back coupled satellite system. We also introduce an adaptive design algorithm that could increase the approximation accuracy of linked Gaussian process surrogates with reduced computational costs on running expensive computer systems, by allocating runs and refining emulators of individual sub-models based on their heterogeneous functional complexity.
\end{abstract}

\keywords{multi-physics \and multi-disciplinary \and surrogate model \and sequential design}

\section{Introduction}
\label{sec:intro}

Systems of computer models constitute the new frontier of many scientific and engineering simulations. These can be multi-physics systems of computer simulators such as coupled tsunami simulators with earthquake and landslide sources~\citep{salmanidou2017statistical,ulrich2019coupled}, coupled multi-physics model of the human heart~\citep{santiago2018fully}, and multi-disciplinary systems such as automotive and aerospace systems~\citep{fazeley2016multi,kodiyalam2004multidisciplinary,zhao2018multidisciplinary}. Other examples include climate models where climate variability arises from atmospheric, oceanic, land, and cryospheric processes and their coupled interactions~\citep{hawkins2016irreducible,kay2015community}, or highly multi-disciplinary future biodiversity models~\citep{thuiller2019uncertainty} using combinations of species distribution models, dispersal strategies, climate models, and representative concentration pathways. The number and complexity of computer models involved can hinder the analysis of such systems. For instance, the engineering design optimization of an aerospace system typically requires hundreds of thousands of system evaluations. When the system has feed-backs across computer models, the number of simulations becomes computationally prohibitive~\citep{chaudhuri2017multifidelity}. Therefore, building and using a surrogate model is crucial: the system outputs can be predicted at little computational cost, and subsequent sensitivity analysis, uncertainty propagation or inverse modeling can be conducted in a computationally efficient manner. 

Gaussian Stochastic process or Gaussian process (GaSP or GP) emulators have gained popularity as surrogate models of systems of computer models in fields including environmental science, biology and geophysics because of their attractive statistical properties. However, many studies~\citep{jandarov2014emulating,johnstone2016uncertainty,salmanidou2017statistical,simpson2001kriging,tagade2013gaussian} construct global GaSP emulators (named as composite emulators hereinafter) of such systems based on global inputs and outputs without consideration of system structures. One major drawback of such a structural ignorance is that designing experiments can be expensive because system structures may induce high non-linearity between global inputs and outputs~\citep{sanson2019systems}. Furthermore, runs of the whole system are required to produce new training points, even though the overall functional complexity global inputs and outputs originates from a few computer models. This pitfall is particularly undesirable because modern engineering and physical systems can include multiple computer models.

To overcome the disadvantages of the composite emulator, one could construct the surrogate for a system of computer models by integrating GaSP emulators of individual computer models. The idea of integrating GaSP emulators has been explored by~\cite{sanson2019systems} in a feed-forward system, but only using the Monte Carlo simulation to approximate the predictive mean and variance of the system output. The Monte Carlo method suffers from a low convergence rate and heavy computational cost, especially when the number of layers in a system is high~\citep{rainforth2018nesting} and the number of new input positions to be evaluated is large, making it prohibitive for complex systems. 

Recently, \cite{marque2019efficient} presents a nested emulator that works for systems of two computer models, while~\cite{kyzyurova2018coupling} derived a more flexible emulator, called linked GaSP, for two-layered feed-forward systems of computer models in analytical form (i.e., closed form expressions for mean and variance of the predicted output of the system at an unexplored input position). However, both of the work are carried out under the assumption that every computer model in the system is represented by a GaSP with a product of one-dimenional squared exponential kernels over different input dimensions. Indeed, the squared exponential kernel has been criticized for its over-smoothness~\citep{stein1999interpolation} and associated ill-conditioned problem~\citep{dalbey2013efficient,gu2018robust}. Thus, the generalization of the kernel assumption is necessary. In this study, we generalize the linked GaSP to a class of Mat{\'e}rn kernels for its wider applications in practice. We also demonstrate an iterative procedure, by which the linked GaSP can be constructed for any feed-forward computer systems.

Careful experimental design is important to construct efficient linked GaSP surrogate under limited computational resources. Poor designs can cause inaccurate linked GaSP with excessive designing cost, and numerical instabilities in training GaSP emulators of individual computer models. Particularly, the linked GaSP is more prone to the latter issue than the composite emulator because the design (e.g., the Latin hypercube design) of the global input can produce poor designs for GaSP emulators of internal computer models. Therefore, we discuss in the work several possible design strategies that can be used for linked GaSP emulation, and introduce an adaptive design algorithm that has the potential to effectively enhance the approximation accuracy of the linked GaSP with improved designs and reduced overall simulation cost.

The remainder of the manuscript is organized as follows. In Section~\ref{sec:review}, we review basics of the GaSP emulator and the linked GaSP. The extension of linked GaSP to Mat{\'e}rn kernels is then formulated with a synthetic experiment in Section~\ref{sec:extension}. An iterative procedure to produce linked GaSPs for any feed-forward computer systems is demonstrated with a feed-back coupled satellite model in Section~\ref{sec:feedforward}. In Section~\ref{sec:design}, we introduce an adaptive design strategy for the linked GaSP emulation and discuss its advantages and disadvantages in relating to other alternative designs. Limitations of the linked GaSP are discussed in Section~\ref{sec:discussion}. We conclude in Section~\ref{sec:conclusion}. Key closed form expressions for the linked GaSP under different kernels and associated proofs are contained in the appendices and supplementary materials, respectively.

\section{Review of GaSP Emulator and Linked GaSP}
\label{sec:review}
In this section, we first give a brief description of GaSP emulators for individual computer models in a computer system. Then the linked GaSP introduced in~\cite{kyzyurova2018coupling} is reviewed. Note that we present the linked GaSP using our own notations for the benefit of deriving kernel extensions in Section~\ref{sec:extension}.

\subsection{GaSP Emulators for Individual Computer Models}
\label{sec:gpmodel}
The GaSP emulator of a computer model considered in this work is itself a collection of GaSP emulators, approximating the functional dependence between the inputs of the computer model and its one-dimensional outputs. Each 1-D output emulator is constructed independently without the consideration of cross-output dependence, as in~\cite{gu2016parallel,kyzyurova2018coupling}.

Let $\mathbf{X}\in\mathbb{R}^p$ be a $p$-dimensional vector of inputs of a computer model and $Y(\mathbf{X})$ be the corresponding scalar-valued output. Then, given $m$ sets of inputs $\{\mathbf{X}_1,\dots,\mathbf{X}_m\}\,$, the GaSP model is defined by
\begin{equation*}
Y(\mathbf{X}_i)=t(\mathbf{X}_i,\,\mathbf{b})+\varepsilon_i,\quad i=1,\dots,m\,	
\end{equation*}
where $t(\mathbf{X}_i,\,\mathbf{b})=\mathbf{h}(\mathbf{X}_i)^\top\mathbf{b}$ is the trend function with $q$ basis functions $\mathbf{h}(\mathbf{X}_i)=[h_1(\mathbf{X}_i),\dots,h_q(\mathbf{X}_i)]^\top$ and $\mathbf{b}=[b_1,\dots,b_q]^\top\,$; $(\varepsilon_1,\dots,\varepsilon_m)^\top\sim\mathcal{N}(\mathbf{0},\,\sigma^2\mathbf{R})$ with $ij$-th element of the correlation matrix $\mathbf{R}$ given by $
R_{ij}=c(\mathbf{X}_i,\,\mathbf{X}_j)+\eta\mathbbm{1}_{\{\mathbf{X}_i=\mathbf{X}_j\}}$, where $c(\cdot,\cdot)$ is a given kernel function; $\eta$ is the nugget term; and $\mathbbm{1}_{\{\cdot\}}$ is the indicator function.

The specification of the kernel function $c(\cdot,\cdot)$ plays an important role in GaSP emulation as it characterizes the sample paths of a GaSP model~\citep{stein1999interpolation}. In this study we consider the kernel function with the following multiplicative form:
\begin{equation*}
c(\mathbf{X}_i,\,\mathbf{X}_j)=\prod_{k=1}^p c_k(X_{ik},\,X_{jk}),
\end{equation*}
where $c_k(\cdot,\cdot)$ is a one-dimensional kernel function for the $k$-th input dimension. Popular candidates for $c_k(\cdot,\cdot)$ are summarized in Table~\ref{tab:kernel}. In Section~\ref{sec:extension}, we will show that the linked GaSP is applicable to all these aforementioned choices. In the proofs of the supplement, we also consider the additive form of $c(\cdot,\cdot)$.

\begin{table}[htbp] 
\footnotesize{
\caption{Choices of $c_k(\cdot,\cdot)$. $\gamma_k>0$ is the range parameter for the $k$-th input dimension.}
\label{tab:kernel}	
\begin{center}
\begin{tabular}{llcll}
\toprule
\textbf{Exponential} & \multicolumn{4}{l}{$c_k(\cdot,\cdot)=\exp\left\{-\frac{|X_{ik}-X_{jk}|}{\gamma_k}\right\}$}\\
\addlinespace[0.3cm]
\textbf{\makecell[l]{Squared\\ Exponential}} & \multicolumn{4}{l}{$c_k(\cdot,\cdot)=\exp\left\{-\frac{(X_{ik}-X_{jk})^2}{\gamma^2_k}\right\}$}\\
\addlinespace[0.3cm]
\textbf{Mat{\'e}rn-1.5} & \multicolumn{4}{l}{$c_k(\cdot,\cdot)=\left(1+\frac{\sqrt{3}|X_{ik}-X_{jk}|}{\gamma_k}\right)\exp\left\{-\frac{\sqrt{3}|X_{ik}-X_{jk}|}{\gamma_k}\right\}$}\\
\addlinespace[0.3cm]
\textbf{Mat{\'e}rn-2.5} & \multicolumn{4}{l}{$c_k(\cdot,\cdot)=\left(1+\frac{\sqrt{5}|X_{ik}-X_{jk}|}{\gamma_k}+\frac{5(X_{ik}-X_{jk})^2}{3\gamma^2_k}\right)\exp\left\{-\frac{\sqrt{5}|X_{ik}-X_{jk}|}{\gamma_k}\right\}$}\\
\bottomrule
\end{tabular}
\end{center}
}
\end{table}

Assume that the GaSP model parameters $\sigma^2$, $\eta$ and $\boldsymbol{\gamma}=(\gamma_1,\dots,\gamma_p)^\top$ are known but $\mathbf{b}$ is a random vector that has a Gaussian distribution with mean $\mathbf{b}_0$ and variance $\tau^2\mathbf{V}_0$. Then, given $m$ inputs $\mathbf{x}^\mathcal{T}=(\mathbf{x}^\mathcal{T}_1,\dots,\mathbf{x}^\mathcal{T}_m)^\top$ and the corresponding outputs $\mathbf{y}^\mathcal{T}=(y_1^\mathcal{T},\dots,y_m^\mathcal{T})^\top$, the GaSP emulator of the computer model is defined by the predictive distribution of $Y(\mathbf{x}_0)$ (i.e., conditional distribution of $Y(\mathbf{x}_0)$ given $\mathbf{y}^\mathcal{T}$) at a new input position $\mathbf{x}_0$~\citep{santner2003design}, which is
\begin{equation}
\label{eq:1emu}
Y(\mathbf{x}_0)|\mathbf{y}^\mathcal{T}\sim\mathcal{N}(\mu_0(\mathbf{x}_0),\,\sigma^2_0(\mathbf{x}_0))
\end{equation}
with
\begin{align}
\label{eq:krigingmeannor}
\mu_0(\mathbf{x}_0)&=\mathbf{h}(\mathbf{x}_0)^{\top}\widehat{\mathbf{b}}+\mathbf{r}(\mathbf{x}_0)^\top\mathbf{R}^{-1}\left(\mathbf{y}^\mathcal{T}-\mathbf{H}(\mathbf{x}^\mathcal{T})\widehat{\mathbf{b}}\right)\\
\label{eq:krigingvarnor}
\sigma^2_0(\mathbf{x}_0)&={\sigma^2}\Big[1+\eta-\mathbf{r}(\mathbf{x}_0)^\top\mathbf{R}^{-1}\mathbf{r}(\mathbf{x}_0)+\left(\mathbf{h}(\mathbf{x}_0)-\mathbf{H}(\mathbf{x}^\mathcal{T})^\top\mathbf{R}^{-1}\mathbf{r}(\mathbf{x}_0)\right)^\top\\
    &\times\left(\mathbf{H}(\mathbf{x}^\mathcal{T})^\top\mathbf{R}^{-1}\mathbf{H}(\mathbf{x}^\mathcal{T})+\frac{\sigma^2}{\tau^2}\mathbf{V}_0^{-1}\right)^{-1}\left(\mathbf{h}(\mathbf{x}_0)-\mathbf{H}(\mathbf{x}^\mathcal{T})^\top\mathbf{R}^{-1}\mathbf{r}(\mathbf{x}_0)\right)\Big],\nonumber
\end{align}
where $\mathbf{r}(\mathbf{x}_0)=[c(\mathbf{x}_0,\mathbf{x}_1^\mathcal{T}),\dots,c(\mathbf{x}_0,\mathbf{x}_m^\mathcal{T})]^\top$, $\mathbf{H}(\mathbf{x}^\mathcal{T})=[\mathbf{h}(\mathbf{x}_1^\mathcal{T}),\dots,\mathbf{h}(\mathbf{x}_m^\mathcal{T})]^\top$ and 
\begin{equation*}
\widehat{\mathbf{b}}\eqdef\left(\mathbf{H}(\mathbf{x}^\mathcal{T})^\top\mathbf{R}^{-1}\mathbf{H}(\mathbf{x}^\mathcal{T})+\frac{\sigma^2}{\tau^2}\mathbf{V}_0^{-1}\right)^{-1}\left(\mathbf{H}(\mathbf{x}^\mathcal{T})^\top\mathbf{R}^{-1}\mathbf{y}^{\mathcal{T}}+\frac{\sigma^2}{\tau^2}\mathbf{V}_0^{-1}\mathbf{b}_0\right)	.
\end{equation*}
Let $\tau^2\rightarrow\infty$ (i.e., the Gaussian distribution of $\mathbf{b}$ gets more and more non-informative), then all terms associated with $\mathbf{b}_0$ and $\mathbf{V}_0$ in equation~\eqref{eq:krigingmeannor} and~\eqref{eq:krigingvarnor} become increasingly insignificant and thus we obtain the GaSP emulator defined by the predictive distribution of $Y(\mathbf{x}_0)$ with its mean and variance given by
\begin{align}
\label{eq:krigingmean}
\mu_0(\mathbf{x}_0)=&\mathbf{h}(\mathbf{x}_0)^{\top}\widehat{\mathbf{b}}+\mathbf{r}(\mathbf{x}_0)^\top\mathbf{R}^{-1}\left(\mathbf{y}^\mathcal{T}-\mathbf{H}(\mathbf{x}^\mathcal{T})\widehat{\mathbf{b}}\right)\\
\label{eq:krigingvar}
\sigma^2_0(\mathbf{x}_0)=&\sigma^2\Big[1+\eta-\mathbf{r}(\mathbf{x}_0)^\top\mathbf{R}^{-1}\mathbf{r}(\mathbf{x}_0)+\left(\mathbf{h}(\mathbf{x}_0)-\mathbf{H}(\mathbf{x}^\mathcal{T})^\top\mathbf{R}^{-1}\mathbf{r}(\mathbf{x}_0)\right)^\top\\
    &\quad\times\left(\mathbf{H}(\mathbf{x}^\mathcal{T})^\top\mathbf{R}^{-1}\mathbf{H}(\mathbf{x}^\mathcal{T})\right)^{-1}\left(\mathbf{h}(\mathbf{x}_0)-\mathbf{H}(\mathbf{x}^\mathcal{T})^\top\mathbf{R}^{-1}\mathbf{r}(\mathbf{x}_0)\right)\Big]\nonumber
\end{align}
with $\widehat{\mathbf{b}}\eqdef\left[\mathbf{H}(\mathbf{x}^\mathcal{T})^\top\mathbf{R}^{-1}\mathbf{H}(\mathbf{x}^\mathcal{T})\right]^{-1}\mathbf{H}(\mathbf{x}^\mathcal{T})^\top\mathbf{R}^{-1}\mathbf{y}^\mathcal{T}$, where $\mu_0(\mathbf{x}_0)$ and $\sigma^2_0(\mathbf{x}_0)$ match the best linear unbiased predictor (BLUP) of $Y(\mathbf{x}_0)$ and its mean squared error~\citep{stein1999interpolation}. In the remainder of the study we use the predictive distribution with mean and variance given in equation~\eqref{eq:krigingmean} and~\eqref{eq:krigingvar} as the GaSP emulator of a computer model. Note that the GaSP model parameters $\sigma^2$, $\eta$ and $\boldsymbol{\gamma}=(\gamma_1,\dots,\gamma_p)^\top$ in equation~\eqref{eq:krigingmean} and~\eqref{eq:krigingvar} are typically unknown and need to be estimated. One may estimate these parameters by solving the objective function
\begin{equation*}
    (\widehat{\eta},\,\widehat{\boldsymbol{\gamma}})=\underset{\eta,\,\boldsymbol{\gamma}}{\argmax}\,\mathcal{L}(\widehat{\sigma^2},\,\eta,\,\boldsymbol{\gamma}),
\end{equation*}
where 
\begin{multline*}
\mathcal{L}(\widehat{\sigma^2},\,\eta,\,\boldsymbol{\gamma})=\frac{|\mathbf{R}|^{-\frac{1}{2}}|\mathbf{H}(\mathbf{x}^\mathcal{T})^\top\mathbf{R}^{-1}\mathbf{H}(\mathbf{x}^\mathcal{T})|^{-\frac{1}{2}}}{(2\pi\widehat{\sigma^2})^{\frac{m-q}{2}}}\\
\times\exp\left\{-\frac{1}{{2\widehat{\sigma^2}}}\left(\mathbf{y}^\mathcal{T}-\mathbf{H}(\mathbf{x}^\mathcal{T})\widehat{\mathbf{b}}\right)^\top\mathbf{R}^{-1}\left(\mathbf{y}^\mathcal{T}-\mathbf{H}(\mathbf{x}^\mathcal{T})\widehat{\mathbf{b}}\right)\right\},  
\end{multline*}
is the marginal likelihood obtained by integrating out $\mathbf{b}$ from the full likelihood function $\mathcal{L}(\mathbf{b},\,\sigma^2,\,\eta,\,\boldsymbol{\gamma})$ and have $\sigma^2$ replaced by its maximum likelihood estimator
\begin{equation}
\label{eq:sigma2map}
\widehat{\sigma^2}=\frac{1}{m-q}\left(\mathbf{y}^\mathcal{T}-\mathbf{H}(\mathbf{x}^\mathcal{T})\widehat{\mathbf{b}}\right)^\top\mathbf{R}^{-1}\left(\mathbf{y}^\mathcal{T}-\mathbf{H}(\mathbf{x}^\mathcal{T})\widehat{\mathbf{b}}\right)
\end{equation}
with
$\widehat{\mathbf{b}}\eqdef\left[\mathbf{H}(\mathbf{x}^\mathcal{T})^\top\mathbf{R}^{-1}\mathbf{H}(\mathbf{x}^\mathcal{T})\right]^{-1}\mathbf{H}(\mathbf{x}^\mathcal{T})^\top\mathbf{R}^{-1}\mathbf{y}^\mathcal{T}$. Alternatively, the maximum a posterior (MAP) method is a more robust estimation technique~\citep{gu2018robust}. It maximizes the marginal posterior mode with respect to the objective function
\begin{equation}
\label{eq:map}
    (\widehat{\eta},\,\widehat{\boldsymbol{\gamma}})=\underset{\eta,\,\boldsymbol{\gamma}}{\argmax}\,\mathcal{L}(\widehat{\sigma^2},\,\eta,\,\boldsymbol{\gamma})\pi(\eta,\,\boldsymbol{\gamma}),
\end{equation}
where $\pi(\eta,\,\boldsymbol{\gamma})$ is the reference prior, see~\cite{gu2018robust} for different choices and parameterizations. 

After the estimates of $\sigma^2$, $\eta$ and $\boldsymbol{\gamma}$ are obtained, they are plugged into the predictive distribution mean~\eqref{eq:krigingmean} and variance~\eqref{eq:krigingvar}, forming the empirical GaSP emulator of a computer model. In the remainder of the study, all GaSP models of individual computer models are estimated using the MAP method via the~\texttt{R} package~\texttt{RobustGaSP}. Note that~\texttt{RobustGaSP} in fact estimates $\eta$ and $\boldsymbol{\gamma}$ with the marginal likelihood obtained by integrating out both $\mathbf{b}$ and $\sigma^2$. However, as demonstrated in~\cite{andrianakis2009parameter} the estimates of $\eta$ and $\boldsymbol{\gamma}$ are not influenced by the integration of $\sigma^2$. As a result, we can implement~\texttt{RobustGaSP} to obtain the estimates of $\eta$ and $\boldsymbol{\gamma}$ produced by the discussed MAP method and then have them plugged in equation~\eqref{eq:sigma2map} to obtain the estimate of $\sigma^2$.  

\subsection{Linked GaSP}
Consider a two-layered system of computer models, where the computer models in the first layer produce collectively $d$-dimensional output that feeds into a computer model in the second layer. Let $\mathbf{W}=[W_1(\mathbf{x}_1),\dots,W_d(\mathbf{x}_d)]^\top$ be the collection of the $d$-dimensional output produced by $d$ GaSP emulators $\widehat{f}_1,\dots,\widehat{f}_d$ of computer models in the first layer given the input positions $\mathbf{x}_1,\dots,\mathbf{x}_d$. Denote $\widehat{g}$ as the GaSP emulator of the computer model $g$ in the second layer, producing $Y(\mathbf{W},\mathbf{z})$ that approximates a scalar-valued output of $g$ at inputs $\mathbf{W}$ from $\widehat{f}_1,\dots,\widehat{f}_d$ and exogenous inputs $\mathbf{z}=(z_1,\dots,z_p)^\top$. Then the emulation of the two-layered system aims to link GaSP emulators connected as shown in Figure~\ref{fig:2systememu}.

\begin{figure}[htbp]
\centering
\scalebox{1}{
\begin{tikzpicture}[shorten >=1pt,->,draw=black!50, node distance=4cm]
    \tikzstyle{every pin edge}=[<-,shorten <=1pt]
    \tikzstyle{neuron}=[circle,fill=black!25,minimum size=20pt,inner sep=0pt]
    \tikzstyle{layer1}=[neuron, fill=green!50];
    \tikzstyle{layer2}=[neuron, fill=red!50];
    \tikzstyle{layer3}=[neuron, fill=blue!50];
    \tikzstyle{annot} = [text width=4em, text centered]

    \node[layer1, pin=left:$\mathbf{x}_1$] (I-0) at (0,-0.45) {$\widehat{f}_1$};
    \node[layer1, pin=left:$\mathbf{x}_2$] (I-1) at (0,-1.25) {$\widehat{f}_2$};
    \node[layer1, pin=left:$\mathbf{x}_d$] (I-2) at (0,-2.5) {$\widehat{f}_d$};

    \path[yshift=-1.375cm]
            node[layer2, pin={[pin edge={->}]right:$Y$},pin={[pin edge={<-}]below:$\mathbf{z}$}] (H-1) at (4cm,0 cm) {$\widehat{g}$};

            \path [draw] (I-0) -- (H-1) node[font=\small,pos=0.35,fill=white,align=left,sloped] {$W_1$};
            \path [draw] (I-1) -- (H-1) node[font=\small,pos=0.35,fill=white,align=left,sloped] {$W_2$};
            \path [draw] (I-2) -- (H-1) node[font=\small,pos=0.35,fill=white,align=left,sloped] {$W_d$};
            \path (I-1) -- (I-2) node [black, midway, sloped] {$\dots$};
            \path (I-1) -- (I-2) node [black, midway, sloped,transform canvas={xshift=-11.5mm}] {$\dots$};
            \path (I-1) -- (I-2) node [black, midway, sloped,transform canvas={xshift=14.5mm,yshift=2mm}] {$\dots$};
\end{tikzpicture}}
\caption{The connections of emulators to be linked for emulating a two-layered computer system. $\widehat{f}_1,\,\widehat{f}_2\dots,\widehat{f}_d$ are one-dimensional emulators approximating $d$ outputs from computer models in the first layer; $\widehat{g}$ is a one-dimensional GaSP emulator approximating a scalar-valued output of the computer model $g$ in the second layer of the system.}
\label{fig:2systememu}
\end{figure}
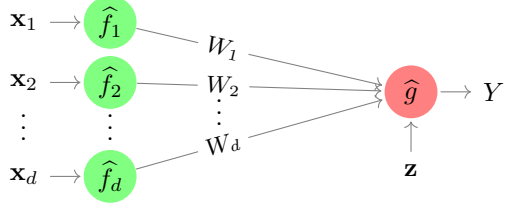

Perhaps the most straightforward way to build an emulator of the system is to obtain the predictive distribution of $Y(\mathbf{x}_1,\dots,\mathbf{x}_d,\,\mathbf{z})$, given the global inputs $\mathbf{x}_1,\dots,\mathbf{x}_d$ and $\mathbf{z}$. This predictive distribution, named as linked emulator by~\cite{kyzyurova2018coupling}, is naturally defined by the probability density function
\begin{equation}
\label{eq:intproblem}
p(y|\mathbf{x}_1,\dots,\mathbf{x}_d,\,\mathbf{z})=\int_{\mathbf{w}} p(y|\mathbf{w},\mathbf{z})\,p(\mathbf{w}|\mathbf{x}_1,\dots,\mathbf{x}_d)\,\mathrm{d}\mathbf{w},    
\end{equation}
where $\mathbf{w}=(w_1,\dots,w_d)^\top$. However, $p(y|\mathbf{x}_1,\dots,\mathbf{x}_d,\,\mathbf{z})$ is neither analytically tractable nor Gaussian in general. One might compute the integral in equation~\eqref{eq:intproblem} numerically or simply generate realizations of $Y(\mathbf{x}_1,\dots,\mathbf{x}_d,\,\mathbf{z})$ by sampling sequentially from Gaussian densities $p(y|\mathbf{w},\mathbf{z})$ and $p(\mathbf{w}|\mathbf{x}_1,\dots,\mathbf{x}_d)$,  and then use the resulting density or sampled realizations as the linked emulator. However, such approaches are computationally expensive and can soon become prohibitive for many uncertainty analysis as the dimensions of $\mathbf{x}_{i=1,\dots,d}$ and $\mathbf{w}$ increase. Fortunately, \cite{kyzyurova2018coupling} show that under some mild conditions, the mean and variance of the linked emulator can be calculated analytically, and its Gaussian approximation, called linked GaSP, is a Gaussian distribution with matching mean and variance. One of the key conditions that~\cite{kyzyurova2018coupling} make for the closed form mean and variance of the inked emulator is that the GaSP emulator $\widehat{g}$ is constructed under the squared exponential kernel. However, it is well known that the squared exponential kernel can have computational difficulties both in theory and practice~\citep{stein1999interpolation,dalbey2013efficient,gu2018robust}, limiting broader applications of the linked GaSP. In Section~\ref{sec:extension}, we relax this kernel limitation and show that there exists closed form expressions for the mean and variance of the linked emulator under a class of Mat\'{e}rn kernels.

\section{Generalization of Linked GaSP to Mat\'{e}rn kernels}
\label{sec:extension}
Assume that the GaSP emulator $\widehat{g}$ is built with $m$ training points $\mathbf{w}^{\mathcal{T}}=(\mathbf{w}_1^{\mathcal{T}},\dots,\mathbf{w}_m^{\mathcal{T}})^\top$, $\mathbf{z}^{\mathcal{T}}=(\mathbf{z}_1^{\mathcal{T}},\dots,\mathbf{z}_m^{\mathcal{T}})^\top$ and $\mathbf{y}^{\mathcal{T}}=(y_1^{\mathcal{T}},\dots,y_m^{\mathcal{T}})^\top$, where $\mathbf{w}_i^{\mathcal{T}}=(w_{i1}^{\mathcal{T}},\dots,w_{id}^{\mathcal{T}})^\top$ and $\mathbf{z}_i^{\mathcal{T}}=(z_{i1}^{\mathcal{T}},\dots,z_{ip}^{\mathcal{T}})^\top$ for all $i=1,\dots,m$. Then under the following two assumptions:
\begin{customass}{1}\label{ass:1}
The trend function $t(\mathbf{W},\,\mathbf{z},\,\boldsymbol{\theta},\,\boldsymbol{\beta})$ in the GaSP model for the computer model $g$ is specified by
$t(\mathbf{W},\,\mathbf{z},\,\boldsymbol{\theta},\,\boldsymbol{\beta})=\mathbf{W}^\top\boldsymbol{\theta}+\mathbf{h}(\mathbf{z})^\top\boldsymbol{\beta}$,
where 
\begin{itemize}
    \item $\boldsymbol{\theta}=(\theta_1,\dots,\theta_d)^\top$ and $\boldsymbol{\beta}=(\beta_1,\dots,\beta_{q})^\top$;
    \item $\mathbf{h}(\mathbf{z})=[h_1(\mathbf{z}),\dots,h_{q}(\mathbf{z})]^\top$ are basis functions of $\mathbf{z}$;
\end{itemize}
\end{customass}
\begin{customass}{2}\label{ass:2}
$W_k(\mathbf{x}_k)\stackrel{ind}{\sim}\mathcal{N}(\mu_k(\mathbf{x}_k),\,\sigma^2_k(\mathbf{x}_k))$ for $k=1,\dots,d$,
\end{customass}
we can derive in closed form the mean and variance of linked emulator subject to the choice of 1-D kernel functions used in GaSP emulator $\widehat{g}$.

\begin{theorem}
\label{thm:main}
Under Assumption~\ref{ass:1} and~\ref{ass:2}, the output $Y(\mathbf{x}_1,\dots,\mathbf{x}_d,\mathbf{z})$ of the linked emulator at the input positions $\mathbf{x}_1,\dots,\mathbf{x}_d$ and $\mathbf{z}$ has analytical mean $\mu_L$ and variance $\sigma^2_L$ given by
\begin{align}
\label{eq:intmu}
\mu_L=&\boldsymbol{\mu}^\top\widehat{\boldsymbol{\theta}}+\mathbf{h}(\mathbf{z})^\top\widehat{\boldsymbol{\beta}}+\mathbf{I}^\top\mathbf{A},\\  
\label{eq:intvar}
\sigma^2_L=&\mathbf{A}^\top\left(\mathbf{J}-\mathbf{I}\mathbf{I}^\top\right)\mathbf{A}+2\widehat{\boldsymbol{\theta}}^\top\left(\mathbf{B}-\boldsymbol{\mu}\mathbf{I}^\top\right)\mathbf{A}+\mathrm{tr}\left\{\widehat{\boldsymbol{\theta}}\widehat{\boldsymbol{\theta}}^\top\boldsymbol{\Omega}\right\}\\
&+\sigma^2\,\left(1+\eta+\mathrm{tr}\left\{\mathbf{Q}\mathbf{J}\right\}+\mathbf{G}^\top\mathbf{C}\mathbf{G}+\mathrm{tr}\left\{\mathbf{C}\mathbf{P}-2\mathbf{C}\widetilde{\mathbf{H}}^\top\mathbf{R}^{-1}\mathbf{K}\right\}\right),\nonumber
\end{align}
where
\begin{itemize}
\item $\boldsymbol{\mu}=[\mu_1(\mathbf{x}_1),\dots,\mu_d(\mathbf{x}_d)]^\top$ and $\left[\widehat{\boldsymbol{\theta}}^\top,\,\widehat{\boldsymbol{\beta}}^\top\right]^\top\eqdef\left(\widetilde{\mathbf{H}}^\top\mathbf{R}^{-1}\widetilde{\mathbf{H}}\right)^{-1}\widetilde{\mathbf{H}}^\top\mathbf{R}^{-1}\mathbf{y}^{\mathcal{T}}$;
\item $\boldsymbol{\Omega}=\mathrm{diag}(\sigma^2_1(\mathbf{x}_1),\dots,\sigma^2_d(\mathbf{x}_d))$ and $\mathbf{P}=\mathrm{blkdiag}(\boldsymbol{\Omega},\,\mathbf{0})$;
\item $\mathbf{A}=\mathbf{R}^{-1}\left(\mathbf{y}^{\mathcal{T}}-\mathbf{w}^{\mathcal{T}}\widehat{\boldsymbol{\theta}}-\mathbf{H}(\mathbf{z}^{\mathcal{T}})\widehat{\boldsymbol{\beta}}\right)$ with $\mathbf{H}(\mathbf{z}^{\mathcal{T}})=[\mathbf{h}(\mathbf{z}_1^{\mathcal{T}}),\dots,\mathbf{h}(\mathbf{z}_m^{\mathcal{T}})]^\top$; 
\item $\mathbf{Q}=\mathbf{R}^{-1}\widetilde{\mathbf{H}}\left(\widetilde{\mathbf{H}}^\top\mathbf{R}^{-1}\widetilde{\mathbf{H}}\right)^{-1}\widetilde{\mathbf{H}}^\top\mathbf{R}^{-1}-\mathbf{R}^{-1}$ with $\widetilde{\mathbf{H}}=\left[\mathbf{w}^{\mathcal{T}},\mathbf{H}(\mathbf{z}^{\mathcal{T}})\right]$;
\item $\mathbf{G}=[\boldsymbol{\mu}^\top,\,\mathbf{h}(\mathbf{z})^\top]^\top$, $\mathbf{C}=\left(\widetilde{\mathbf{H}}^\top\mathbf{R}^{-1}\widetilde{\mathbf{H}}\right)^{-1}$ and $\mathbf{K}=\left[\mathbf{B}^\top,\,\mathbf{I}\mathbf{h}(\mathbf{z})^\top\right]$;
\item $\mathbf{I}$ is a $m\times 1$ column vector with the $i$-th element given by 
\begin{equation*}
I_i=\prod_{k=1}^pc_k(z_k,z^{\mathcal{T}}_{ik})\prod_{k=1}^d\xi_{ik},
\end{equation*}
where $\xi_{ik}\eqdef\mathbb{E}\left[c_k(W_k(\mathbf{x}_k),\,w^{\mathcal{T}}_{ik})\right]$;
\item $\mathbf{J}$ is a $m\times m$ matrix with the $ij$-th element given by
\begin{equation*}
J_{ij}=\prod_{k=1}^p c_k(z_k,\,z^{\mathcal{T}}_{ik})\,c_k(z_k,\,z^{\mathcal{T}}_{jk})\prod_{k=1}^d \zeta_{ijk},
\end{equation*} 
where $\zeta_{ijk}\eqdef\mathbb{E}\left[c_k(W_k(\mathbf{x}_k),\,w^{\mathcal{T}}_{ik})\,c_k(W_k(\mathbf{x}_k),\,w^{\mathcal{T}}_{jk})\right]$;
\item $\mathbf{B}$ is a $d\times m$ matrix with the $lj$-th element given by
\begin{equation*}
B_{lj}=\psi_{jl}\prod^d_{\substack{k=1\\k\neq l}}\xi_{jk} \prod_{k=1}^pc_k(z_k,z^{\mathcal{T}}_{jk}),
\end{equation*}
where $\psi_{jl}\eqdef\mathbb{E}\left[W_l(\mathbf{x}_l)\,c_l(W_l(\mathbf{x}_l),\,w^{\mathcal{T}}_{jl})\right]$.
\end{itemize}
\end{theorem}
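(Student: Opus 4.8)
The plan is to compute the mean and variance of $Y(\mathbf{x}_1,\dots,\mathbf{x}_d,\mathbf{z})$ directly from the integral representation~\eqref{eq:intproblem} using the tower property of conditional expectation, rather than attempting to evaluate the integral in closed form. Write $Y = Y(\mathbf{W},\mathbf{z})$ where, conditional on $\mathbf{W}=\mathbf{w}$, the variable $Y$ is Gaussian with mean $\mu_0(\mathbf{w},\mathbf{z})$ and variance $\sigma^2_0(\mathbf{w},\mathbf{z})$ given by the kriging formulas~\eqref{eq:krigingmean}--\eqref{eq:krigingvar} applied to $\widehat{g}$, and $\mathbf{W}$ has the independent-Gaussian law of Assumption~\ref{ass:2}. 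Then
\begin{align*}
\mu_L &= \mathbb{E}_{\mathbf{W}}\!\left[\mu_0(\mathbf{W},\mathbf{z})\right],\\
\sigma^2_L &= \mathbb{E}_{\mathbf{W}}\!\left[\sigma^2_0(\mathbf{W},\mathbf{z})\right] + \mathrm{Var}_{\mathbf{W}}\!\left(\mu_0(\mathbf{W},\mathbf{z})\right)
= \mathbb{E}_{\mathbf{W}}\!\left[\sigma^2_0(\mathbf{W},\mathbf{z})\right] + \mathbb{E}_{\mathbf{W}}\!\left[\mu_0(\mathbf{W},\mathbf{z})^2\right] - \mu_L^2 .
\end{align*}
Under Assumption~\ref{ass:1} the trend is linear in $\mathbf{W}$, so $\mu_0(\mathbf{w},\mathbf{z}) = \mathbf{w}^\top\widehat{\boldsymbol{\theta}} + \mathbf{h}(\mathbf{z})^\top\widehat{\boldsymbol{\beta}} + \mathbf{r}(\mathbf{w},\mathbf{z})^\top\mathbf{A}$, where $\mathbf{A}$ is exactly the vector defined in the statement and $\mathbf{r}(\mathbf{w},\mathbf{z})$ has $i$-th entry $\prod_{k=1}^p c_k(z_k,z^{\mathcal T}_{ik})\prod_{k=1}^d c_k(w_k,w^{\mathcal T}_{ik})$ by the multiplicative kernel assumption. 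Everything then reduces to computing, for each coordinate $k$, the three one-dimensional Gaussian integrals
\[
\xi_{ik}=\mathbb{E}\!\left[c_k(W_k,w^{\mathcal T}_{ik})\right],\quad
\psi_{jl}=\mathbb{E}\!\left[W_l\,c_l(W_l,w^{\mathcal T}_{jl})\right],\quad
\zeta_{ijk}=\mathbb{E}\!\left[c_k(W_k,w^{\mathcal T}_{ik})\,c_k(W_k,w^{\mathcal T}_{jk})\right],
\]
with $W_k\sim\mathcal N(\mu_k(\mathbf{x}_k),\sigma^2_k(\mathbf{x}_k))$; these are abstracted as the quantities $\xi_{ik},\psi_{jl},\zeta_{ijk}$, and the main theorem is stated in terms of them so that only these three scalars need to be evaluated for each kernel family in Table~\ref{tab:kernel}.

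The detailed steps are: (i) expand $\mu_L=\mathbb{E}[\mathbf{W}^\top\widehat{\boldsymbol\theta}] + \mathbf{h}(\mathbf{z})^\top\widehat{\boldsymbol\beta} + \mathbb{E}[\mathbf{r}(\mathbf{W},\mathbf{z})]^\top\mathbf{A}$; by independence of the $W_k$ the expectation of the product factorizes, giving $\mathbb{E}[\mathbf{r}(\mathbf{W},\mathbf{z})] = \mathbf{I}$ and $\mathbb{E}[\mathbf{W}] = \boldsymbol\mu$, which yields~\eqref{eq:intmu}. (ii) For the variance, handle $\mathbb{E}[\mu_0(\mathbf{W},\mathbf{z})^2]$ by squaring the three-term sum: the cross term $\mathbb{E}[(\mathbf{W}^\top\widehat{\boldsymbol\theta})\,\mathbf{r}(\mathbf{W},\mathbf{z})^\top]\mathbf{A}$ produces the matrix $\mathbf{B}$ (each entry mixing one $\psi$ factor with the remaining $\xi$ factors and the $\mathbf{z}$-kernel factors), the term $\mathbb{E}[\mathbf{r}(\mathbf{W},\mathbf{z})\mathbf{r}(\mathbf{W},\mathbf{z})^\top]$ produces $\mathbf{J}$ (each entry a product of $\zeta$'s and squared $\mathbf{z}$-kernel factors), and $\mathbb{E}[\mathbf{W}\mathbf{W}^\top] = \boldsymbol\mu\boldsymbol\mu^\top + \boldsymbol\Omega$ by independence. (iii) Subtract $\mu_L^2$; the $\boldsymbol\mu\boldsymbol\mu^\top$, $\boldsymbol\mu\mathbf{I}^\top$, and $\mathbf{I}\mathbf{I}^\top$ pieces combine with the corresponding squared-mean terms to leave precisely $\mathbf{A}^\top(\mathbf{J}-\mathbf{I}\mathbf{I}^\top)\mathbf{A} + 2\widehat{\boldsymbol\theta}^\top(\mathbf{B}-\boldsymbol\mu\mathbf{I}^\top)\mathbf{A} + \mathrm{tr}\{\widehat{\boldsymbol\theta}\widehat{\boldsymbol\theta}^\top\boldsymbol\Omega\}$. (iv) For $\mathbb{E}[\sigma^2_0(\mathbf{W},\mathbf{z})]$, substitute~\eqref{eq:krigingvar}, note $\mathbf{r}(\mathbf{w},\mathbf{z})^\top\mathbf{R}^{-1}\mathbf{r}(\mathbf{w},\mathbf{z}) = \mathrm{tr}\{\mathbf{R}^{-1}\mathbf{r}\mathbf{r}^\top\}$, and for the trend-correction quadratic form write $\mathbf{h}(\mathbf{w},\mathbf{z}) - \widetilde{\mathbf{H}}^\top\mathbf{R}^{-1}\mathbf{r}$ with $\mathbf{h}(\mathbf{w},\mathbf{z})=[\mathbf{w}^\top,\mathbf{h}(\mathbf{z})^\top]^\top$; taking $\mathbb{E}_{\mathbf W}$ then introduces $\mathbf{G}$, $\mathbf{P}$ (from $\mathbb{E}[\mathbf{h}\mathbf{h}^\top]$, whose $\mathbf{W}$-block contributes $\boldsymbol\Omega$), $\mathbf{K}$ (from $\mathbb{E}[\mathbf{h}\,\mathbf{r}^\top]$, where the $\mathbf W$-block gives $\mathbf B$ and the $\mathbf z$-block gives $\mathbf I\mathbf h(\mathbf z)^\top$), and the trace with $\mathbf J$; collecting terms against $\mathbf{Q}$ and $\mathbf{C}$ gives the last line of~\eqref{eq:intvar}.

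The genuinely hard part is not this bookkeeping but step (v): verifying that the three scalar expectations $\xi_{ik}$, $\psi_{jl}$, $\zeta_{ijk}$ actually have closed forms when $c_k$ is one of the Mat\'ern-$1.5$ or Mat\'ern-$2.5$ kernels — for the squared exponential this is classical (Gaussian times Gaussian), but the Mat\'ern kernels are piecewise of the form (polynomial in $|w - w^{\mathcal T}|$) times $\exp\{-\alpha|w-w^{\mathcal T}|\}$, so each expectation splits at the knot $w = w^{\mathcal T}$ into two integrals of a polynomial times a Gaussian times a one-sided exponential. Each such integral evaluates to a combination of Gaussian CDFs $\Phi$ and densities $\phi$ via completing the square in the exponent and repeated integration by parts (or differentiation under the integral sign with respect to the exponential rate); for $\zeta_{ijk}$ one must further split into the three regions determined by the two knots $w^{\mathcal T}_{ik}$ and $w^{\mathcal T}_{jk}$. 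I would carry these out kernel-by-kernel, record the resulting formulas for $\xi, \psi, \zeta$ (this is the content relegated to the appendices and supplement), and substitute them back; the remainder of the proof is the linear-algebra rearrangement above, which is routine once the structure $\mu_L=\mathbb E[\mu_0]$, $\sigma^2_L=\mathbb E[\sigma^2_0]+\mathrm{Var}(\mu_0)$ is in place and the multiplicative/independence factorization is used to reduce every multivariate expectation to products of the three one-dimensional ones.
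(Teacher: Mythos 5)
Your proposal is correct and follows essentially the same route as the paper's own proof: the tower rule and law of total variance applied to the conditional Gaussian $\widehat{g}$, the independence-based factorization of $\mathbb{E}[\mathbf{r}(\mathbf{W},\mathbf{z})]$, $\mathbb{E}[\mathbf{r}\mathbf{r}^\top]$ and $\mathbb{E}[\mathbf{W}\mathbf{r}^\top]$ into the scalars $\xi_{ik}$, $\zeta_{ijk}$, $\psi_{jl}$, and the trace manipulations producing $\mathbf{C}$, $\mathbf{P}$, $\mathbf{G}$, $\mathbf{Q}$, $\mathbf{K}$. Your step (v) — the kernel-by-kernel evaluation via splitting at the knots, completing the square, and truncated Gaussian moments — is exactly the content the paper defers to Proposition~\ref{prop:kernel} and its supplementary derivations (via Lemma~\ref{lemma:partial}), so it is not needed for Theorem~\ref{thm:main} itself, which is stated abstractly in terms of those three expectations.
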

\begin{proof}
The proof is in Section~\ref{sec:thmproof} of supplementary materials.
\end{proof}

\begin{proposition}
\label{prop:kernel}
The three expectations $\xi_{ik}$, $\zeta_{ijk}$ and $\psi_{jl}$ defined in Section~\ref{thm:main} have closed form expressions for the squared exponential kernel and a class of Mat\'{e}rn kernels~\citep{williams2005gaussian} defined by
\begin{equation}
\label{eq:matern}
c_k(d_{ij,k})=\exp\left(-\frac{\sqrt{2p+1}\,d_{ij,k}}{\gamma_k}\right)\frac{p!}{(2p)!}\sum_{i=0}^p\frac{(p+i)!}{i!(p-i)!}\left(\frac{2d_{ij,k}\sqrt{2p+1}}{\gamma_k}\right)^{p-i},  
\end{equation}
where $d_{ij,k}=X_{ik}-X_{jk}$ and $p$ is a non-negative integer. 
\end{proposition}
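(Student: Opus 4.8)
The plan is to note that each of $\xi_{ik}$, $\zeta_{ijk}$ and $\psi_{jl}$ is an expectation of the form $\mathbb{E}[h(W)]$ for a \emph{single} Gaussian variable $W=W_k(\mathbf{x}_k)\sim\mathcal{N}(\mu,\sigma^2)$ with $\mu=\mu_k(\mathbf{x}_k)$ and $\sigma^2=\sigma^2_k(\mathbf{x}_k)$, so everything reduces to one-dimensional Gaussian integrals. For the Mat\'ern family~\eqref{eq:matern} the crucial structural fact is that it has the form $c_k(d)=\exp(-\alpha|d|)\sum_{n=0}^p a_n|d|^n$ with $\alpha=\sqrt{2p+1}/\gamma_k$ and explicit constants $a_n$ read off from~\eqref{eq:matern}. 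Hence it suffices to obtain closed forms for the base integrals
\[
\mathbb{E}\!\left[\exp(-\alpha|W-w|)\,(W-w)^n\,\mathbbm{1}_{\{W>w\}}\right]
\]
and their $\mathbbm{1}_{\{W<w\}}$ counterparts, together with the variants carrying an extra factor of $W$ (needed for $\psi_{jl}$) or a second kernel $\exp(-\alpha'|W-w'|)$ (needed for $\zeta_{ijk}$), since $\xi_{ik}$, $\zeta_{ijk}$ and $\psi_{jl}$ are finite linear combinations of these.

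The evaluation then proceeds by splitting the real line at the breakpoints ($w$, and also $w'$ for $\zeta_{ijk}$). On each of the resulting (at most three) intervals every absolute value becomes a signed affine function of $W$, so the integrand is a polynomial in $W$ times $\exp(\ell(W))\,\phi\!\left(\tfrac{W-\mu}{\sigma}\right)$ with $\ell$ affine; completing the square rewrites $\exp(\ell(W))\,\phi\!\left(\tfrac{W-\mu}{\sigma}\right)$ as a constant multiple of $\phi\!\left(\tfrac{W-\tilde\mu}{\sigma}\right)$ for a shifted mean $\tilde\mu$, and what remains is to integrate a polynomial against a Gaussian density over an interval. These are exactly the non-central truncated-Gaussian moments, which are closed form: with $M_n(a,b)=\int_a^b t^n\phi(t)\,dt$ one has $M_0(a,b)=\Phi(b)-\Phi(a)$ and the recursion $M_n(a,b)=(n-1)\,M_{n-2}(a,b)+a^{n-1}\phi(a)-b^{n-1}\phi(b)$, so each base integral is an explicit finite combination of values of $\Phi$ and $\phi$. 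I would isolate this recursion as a one-line lemma and then assemble the three quantities from it. The squared exponential case is even simpler and needs no splitting: $c_k(W,w)=\exp(-(W-w)^2/\gamma_k^2)$ is Gaussian in $W$, a product of two such factors is still Gaussian in $W$, and $\psi_{jl}$ follows from $\mathbb{E}[W\,c_k(W,w)]=\mathbb{E}[(W-w)\,c_k(W,w)]+w\,\mathbb{E}[c_k(W,w)]$ together with the Gaussian-times-Gaussian integral; the same identity, combined with $(W-w)|W-w|^n=\mathrm{sgn}(W-w)|W-w|^{n+1}$, reduces $\psi_{jl}$ to the $\xi_{ik}$-type base integrals in the Mat\'ern case.

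The main obstacle is organisational rather than conceptual, and it is concentrated in $\zeta_{ijk}$: there the integrand is a product of two Mat\'ern kernels evaluated at distinct points, forcing a genuine three-region decomposition of $\mathbb{R}$, and on each region one must multiply two polynomials of degree up to $p$, carry the region-dependent exponential-shift constant $\tilde\mu$ (which also depends on the sign pattern of the two absolute values), and collect the resulting $\Phi$ and $\phi$ contributions. Proving the truncated-moment lemma once in sufficient generality — arbitrary interval, arbitrary polynomial — and then applying it mechanically is what keeps this bookkeeping under control and produces the explicit expressions recorded in the appendices.
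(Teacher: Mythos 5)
Your proposal is correct and follows essentially the same route as the paper's proof: split the integration domain at the training points so each absolute value becomes signed affine, complete the square to absorb the exponential tilt into a shifted Gaussian mean, and reduce everything to truncated Gaussian moments $\int_a^b t^n\phi(t)\,\mathrm{d}t$ computed by the integration-by-parts recursion (the paper's Lemma on $\Gamma[m]$ is exactly your $M_n$ lemma, stated explicitly for $m\le 4$), with the same three-region decomposition for $\zeta_{ijk}$ and the same Gaussian-times-Gaussian shortcut for the squared exponential case. Your general-$n$ statement of the recursion also covers the paper's remark that the $p\ge 3$ Mat\'ern cases follow by the same argument.
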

\begin{proof}
Derivations for the squared exponential kernel, Mat\'{e}rn kernels~\eqref{eq:matern} with $p=0$ (exponential), $p=1$ (Mat\'{e}rn-1.5) and $p=2$ (Mat\'{e}rn-2.5) are detailed in Section~\ref{sec:proofprop} of supplementary materials. The corresponding closed form expressions are summarized in Appendice~\ref{app:expression}. The closed form expressions for Mat\'{e}rn kernels with $p\geq 3$ can be obtained straightforwardly by invoking Lemma~\ref{lemma:partial} of supplementary materials and using same arguments in proofs of Mat\'{e}rn-1.5 and Mat\'{e}rn-2.5. Note that we reproduce the result for the squared exponential kernel given in~\cite{kyzyurova2018coupling} using our own notations for completeness.
\end{proof}

\subsection{A Synthetic Experiment}
\label{sec:experiments}
Consider the computer system shown in Figure~\ref{fig:exp2}, which consists three computer models with the following analytical functional forms:
\begin{equation*}
f_1=30+5x_1\sin(5x_1),\quad f_2=4+\exp(-5x_2)\quad\mathrm{and}\quad f_3=(w_1w_2-100)/6
\end{equation*}
with $x_1\in[0,\,2]$ and $x_2\in[0,\,2]$.

\begin{figure}[htbp]
\centering
\scalebox{1}{
\begin{tikzpicture}[shorten >=1pt,->,draw=black!50, node distance=4.5cm]
    \tikzstyle{every pin edge}=[<-,shorten <=1pt]
    \tikzstyle{neuron}=[circle,fill=black!25,minimum size=17pt,inner sep=0pt]
    \tikzstyle{layer1}=[neuron, fill=green!50];
    \tikzstyle{layer2}=[neuron, fill=red!50];
    \tikzstyle{layer3}=[neuron, fill=blue!50];
    \tikzstyle{annot} = [text width=4em, text centered]

    \node[layer1, pin=left:${x}_1$] (I-1) at (0,-0.75) {$f_1$};
    \node[layer1, pin=left:${x}_2$] (I-2) at (0,-2) {$f_2$};

    \path[yshift=-1.375cm]
            node[layer2, pin={[pin edge={->}]right:$y$}] (H-1) at (4.5cm,0 cm) {$f_3$};
            \path [draw] (I-1) -- (H-1) node[font=\small,midway,fill=white,align=left,sloped] {$w_1$};
            \path [draw] (I-2) -- (H-1) node[font=\small,midway,fill=white,align=left,sloped] {$w_2$};

    \node[annot,above of=I-1, node distance=0.7cm] (hl) {Layer 1};
    \node[annot,right of=hl] {Layer 2};
\end{tikzpicture}}
\caption{The computer system in the synthetic experiment where $f_1$ and $f_2$ are two computer models with one-dimensional input and output, and $f_3$ is a computer model with two-dimensional input and one-dimensional output.}
\label{fig:exp2}
\end{figure}
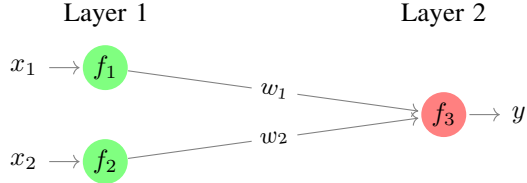

We generate ten training points from the maximin Latin hypercube and construct the composite emulator (Figure~\subref*{fig:exp2com_ang1}) and linked GaSP (Figure~\subref*{fig:exp2int_ang1}) of the system with Mat\'{e}rn-2.5 kernel. Figure~\subref*{fig:exp2int_ang1} indicates that the Mat\'{e}rn extension to the linked GaSP is valid because the constructed linked GaSP interpolates training points with sensible predictive mean and bounds.

\begin{figure}[htbp]
\centering 
\subfloat[Composite Emulator]{\label{fig:exp2com_ang1}\includegraphics[width=0.45\linewidth]{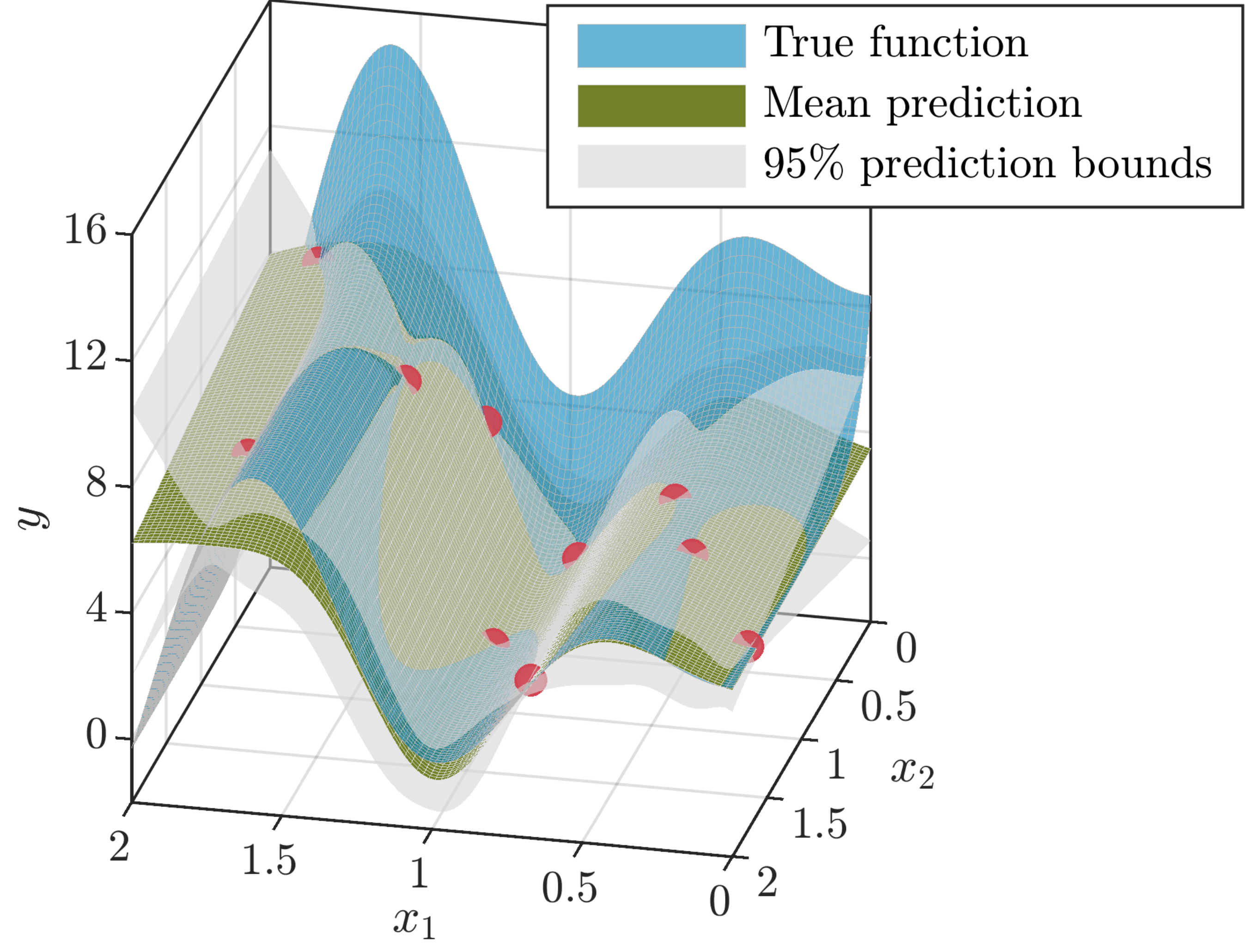}} 
\subfloat[Linked GaSP]{\label{fig:exp2int_ang1}\includegraphics[width=0.45\linewidth]{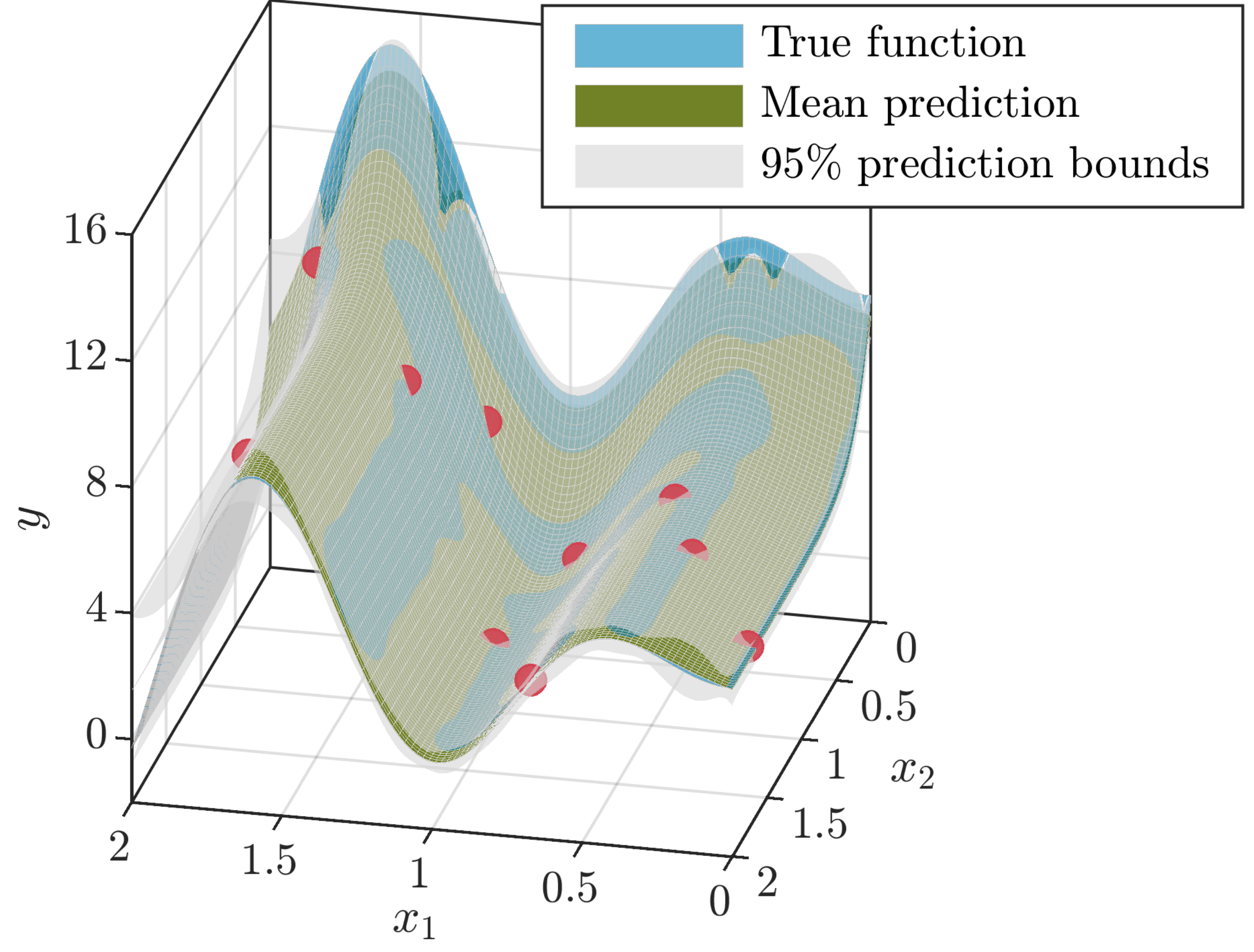}}
\caption{The composite emulator and linked GaSP of the system in Figure~\ref{fig:exp2}. The filled circles are training points used to construct the emulators.}
\label{fig:exp2emu}
\end{figure}

We further compare the linked GaSP with composite emulator with Mat\'{e}rn-2.5 kernel at different training sizes in Figure~\subref*{fig:exp2eva}. At each selected training set size, normalized root mean squared error of prediction (NRMSEP) of both composite emulator and linked GaSP are calculated, where
    \begin{equation}
    \label{eq:rmse1}
        \mathrm{NRMSEP}=\frac{\sqrt{\frac{1}{nT}\sum_{t=1}^T\sum_{i=1}^n(y(\mathbf{x}_i)-\mu^t_Y(\mathbf{x}_i))^2}}{\max\{y(\mathbf{x}_i)_{i=1,\dots,n}\}-\min\{y(\mathbf{x}_i)_{i=1,\dots,n}\}},
    \end{equation}
    in which $y(\mathbf{x}_i)$ denotes the true global output of the system evaluated at the testing input position $\mathbf{x}_i$ for $i=1,\dots,n$ with $n=2500$, which are equally spaced over the global input domain $[0,\,2]\times[0,\,2]$; $\mu^t_Y(\mathbf{x}_i)$ is the mean prediction of the respective emulator built with the $t$-th design of total $T=100$ designs sampled from the maximin Latin hypercube. Both Figure~\ref{fig:exp2emu} and~\subref*{fig:exp2eva} show that the linked GaSP outperforms (in terms of mean predictions, prediction bounds, NRMSEP and training cost) the composite emulator under the Mat\'{e}rn-2.5 kernel.

\begin{figure}[htbp]
  \centering
  \subfloat[Composite Emulator \emph{vs} Linked GaSP]{\label{fig:exp2eva}\includegraphics[width=0.45\linewidth]{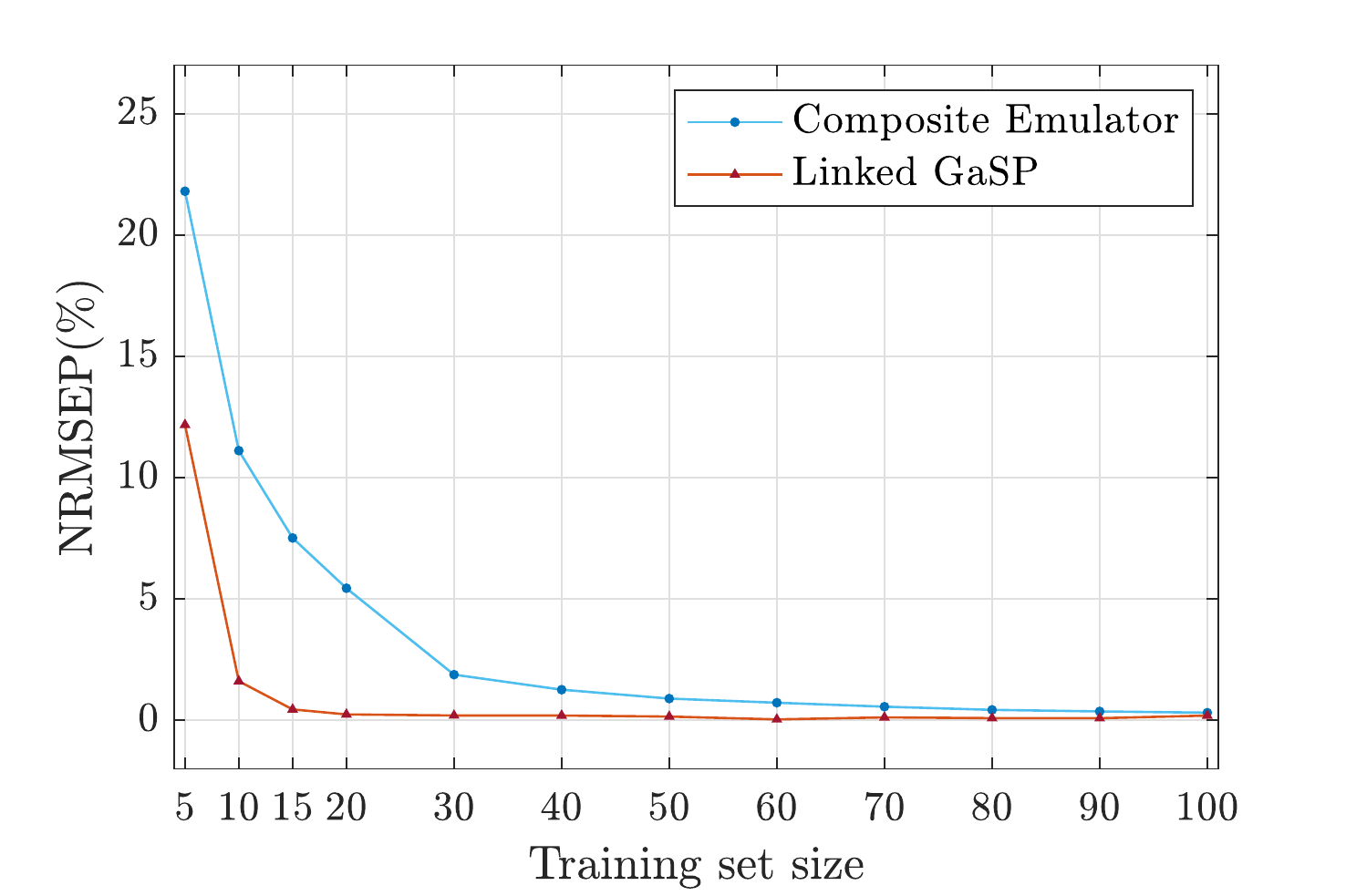}} 
\subfloat[Squared Exponential \emph{vs} Mat\'{e}rn-2.5]{\label{fig:exp2kernel}\includegraphics[width=0.45\linewidth]{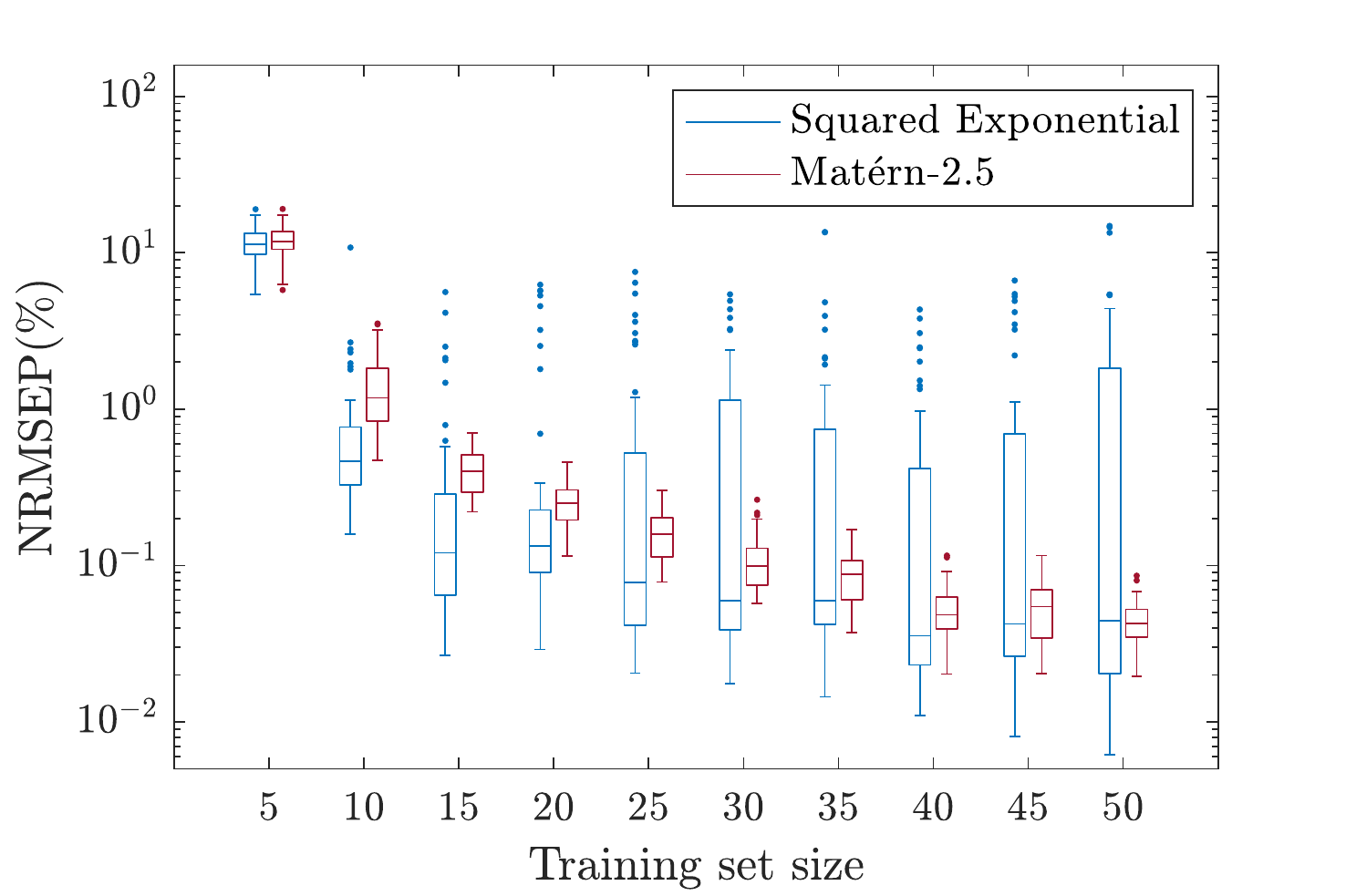}}
  \caption{Emulation results for the system in Figure~\ref{fig:exp2}. \emph{(a)} NRMSEP of composite emulator and linked GaSP with Mat\'{e}rn-2.5 kernel; \emph{(b)} NRMSEP of linked GaSPs with squared exponential and Mat\'{e}rn-2.5 kernels, both with a small nugget to handle ill-conditioned correlation matrices whenever necessary. NRMSEP in (b) is shown under the log-scale. }
  \label{fig:exp2result}
\end{figure}

In Figure~\subref*{fig:exp2kernel}, NRMSEP between linked GaSPs with squared exponential and Mat\'{e}rn-2.5 kernels are compared under ten different training set sizes. At each selected training set size, NRMSEPs are computed (without averaging over $T$ in equation~\eqref{eq:rmse1}) for $T=50$ random designs drawn from the maximin Latin hypercube. The NRMSEP of the linked GaSP with Mat\'{e}rn-2.5 kernel decays steadily as the training set size increases and its predictive performance is robust across different designs. On the contrary, NRMSEP of the linked GaSP with squared exponential kernel decreases with increasing oscillations over designs. Particularly, as the training set size increases beyond $15$, the linked GaSP with squared exponential kernel exhibits increasing chances of NRMSEPs over $1.0\%$ with extreme NRMSEPs reaching 5-10\% for some designs, whereas the linked GaSP with Mat\'{e}rn-2.5 kernel consistently provides NRMSEPs lower than 0.5-1.0\%. The large fluctuations of NRMSEPs displayed in the squared exponential case are due to the GaSP emulator $\widehat{f}_3$ that cannot capture adequately the true functional form of $f_3$ under some designs with the squared exponential kernel. It is also worth noting that in constructing GaSP emulators of individual computer models we experience ill-conditioned correlation matrices (which are subsequently addressed by enhancing their diagonal elements with a small nugget term) more frequently with the squared exponential kernel than the Mat\'{e}rn-2.5 kernel. These results stress the importance of Mat\'{e}rn extensions to the linked GaSP, in agreement with~\cite{gu2018robust,gramacy2020surrogates} that Mat\'{e}rn kernels are less vulnerable to ill-conditioning issues, provide reasonably adequate choices on the smoothness, and have both attractive theoretical properties and good practical performance. Furthermore, in practice, Mat\'{e}rn-1.5 and Mat\'{e}rn-2.5 are included in several computer emulation packages, such as~\texttt{DiceKriging} and~\texttt{RobustGaSP}, where Mat\'{e}rn-2.5 is the default kernel choice. In the remainder of the study, Mat\'{e}rn-2.5 is thus used for all GaSP emualtor constructions.

\section{Construction of Linked GaSP for Multi-Layered Computer Systems}
\label{sec:feedforward}

In this section, we demonstrate how to construct linked GaSP for a multi-layered system with feed-forward hierarchy, in which the  outputs of lower-layer computer models act as the inputs of higher-layer ones. 

It is a challenging analytical work to construct linked GaSP for a multi-layered feed-forward system in one-shot because there exists no closed form expressions for the mean and variance of the linked emulator, whose density function involves integration of GaSP emulators across a large number of layers. However, one could collapse a complex feed-forward system into a sequence of two-layered computer systems, and then successively construct linked GaSPs across two layers.

Consider a general feed-forward system of computer models, denoted by $e_{1\rightarrow L}$, with $L$ layers. The system can be decomposed into a sequence of $L-1$ sub-systems: $e_{1\rightarrow (i+1)}$ for $i=1,\dots,L-1$. Then, the linked GaSP of the whole system ($e_{1\rightarrow L}$) is built by the following steps:
\begin{enumerate}
    \item Construct the linked GaSP of $e_{1\rightarrow 2}$ by applying Theorem~\ref{thm:main} to GaSP emulators of computer models in the first and second layers of $e_{1\rightarrow L}$;
    \item For $i=2,\dots,L-1$, construct the linked GaSP of $e_{1\rightarrow i+1}$ by applying Theorem~\ref{thm:main} to the linked GaSP of $e_{1\rightarrow i}$ and GaSP emulators of computer models in the $(i+1)$-th layer of $e_{1\rightarrow L}$;
\end{enumerate}

For example, the system in Figure~\ref{fig:recsystem} can be decomposed into three recursive systems: $e_{1\rightarrow 2}$, $e_{1\rightarrow 3}$ and $e_{1\rightarrow 4}$, and the linked GaSP of the whole system $e_{1\rightarrow 4}$ takes three iterations to be produced. It is noted that the above iterative procedure works because Assumption~\ref{ass:2} only requires normality while has no constraints on specific forms of corresponding mean and variance.

\begin{figure}[htbp]
\centering
\scalebox{1}{
\begin{tikzpicture}[shorten >=1pt,->,draw=black!50, node distance=2.5cm]
    \tikzstyle{every pin edge}=[<-,shorten <=1pt]
    \tikzstyle{neuron}=[circle,fill=black!25,minimum size=17pt,inner sep=0pt]
    \tikzstyle{layer1}=[neuron, fill=green!50];
    \tikzstyle{layer2}=[neuron, fill=red!50];
    \tikzstyle{layer3}=[neuron, fill=blue!50];
    \tikzstyle{layer4}=[neuron, fill=purple!50];
    \tikzstyle{annot} = [text width=3.5em, text centered]
    
    \path[yshift=-1.625cm]
     node[layer1] (I-1) at (0,0) {${f}_1$};

    \node[layer2] (I-2) at (2.5,-0.75) {$f_2$};
    \node[layer2] (I-3) at (2.5,-2.5) {$f_3$};
    
    \node[layer3] (I-4) at (5,-0.75) {$f_4$};
    \node[layer3] (I-5) at (5,-2.5) {$f_5$};
    
    \path[yshift=-1.625cm]
     node[layer4] (I-6) at (7.5cm, 0cm) {${f}_6$};

    \path [draw] (I-1) -- (I-2);
    \path [draw] (I-1) -- (I-3);
    \path [draw] (I-2) -- (I-4);
    \path [draw] (I-2) -- (I-5);
    \path [draw] (I-3) -- (I-4);
    \path [draw] (I-3) -- (I-5);
    \path [draw] (I-4) -- (I-6);
    \path [draw] (I-5) -- (I-6);
    
    \draw[red!80,thick,dotted] ($(I-1.north east)+(-0.7,1.6)$) rectangle ($(I-3.south west)+(1,-0.2)$);
    \node[text width=4em, text centered,above of=I-2,node distance=0.7cm,text=red!80,xshift=0.2cm] {$\mathbf{e_{1\rightarrow 2}}$};
    \draw[blue!80,thick,dotted] ($(I-1.north east)+(-0.85,1.75)$) rectangle ($(I-5.south west)+(1,-0.35)$);
    \node[text width=4em, text centered,above of=I-4,node distance=0.85cm,text=blue!80,xshift=0.2cm] {$\mathbf{e_{1\rightarrow 3}}$};
    \draw[black!80,thick,dotted] ($(I-1.north east)+(-1,1.9)$) rectangle ($(I-6.south west)+(1,-1.4)$);
    \node[text width=4em, text centered,above of=I-6,node distance=1.85cm,text=black!80,xshift=0.2cm] {$\mathbf{e_{1\rightarrow 4}}$};
    
    \node[annot,left of=I-1, node distance=2cm] (input1) {Global Input 1};
    \node[annot,below of=I-1, node distance=2.5cm] (input2) {Global Input 2};
    \node[annot,below of=I-6, node distance=2.5cm] (input3) {Global Input 3};
    \node[annot,right of=I-6, node distance=2cm] (output) {Global Output};
    \path [draw] (input1) -- (I-1);
    \path [draw] (input2) -- (I-1);
    \path [draw] (input3) -- (I-6);
    \path [draw] (I-6) -- (output);
    \node[annot,above of=I-2, node distance=1.6cm] (hl1) {Layer 2};
    \node[annot,left of=hl1] {Layer 1};
    \node[annot,above of=I-4, node distance=1.6cm] (hl2) {Layer 3};
    \node[annot,right of=hl2] {Layer 4};
\end{tikzpicture}}
\caption{An illustration on the iterative procedure to construct linked GaSP for a 4-layered feed-forward computer system.}
\label{fig:recsystem}
\end{figure}
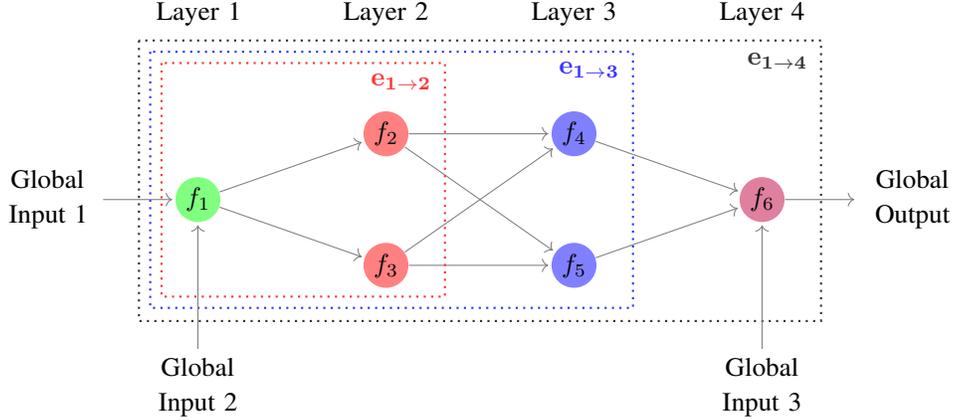

\subsection{Linked GaSP for a Feed-back Coupled Satellite Model} 
\label{sec:satellite}
In this section, we show the construction of the linked GaSP for a multi-layered fire-detection satellite model studied in~\cite{sankararaman2012likelihood}. This satellite is designed to conduct near-real-time detection, identification and monitoring of forest fires. The satellite system consists of three sub-models, namely the orbit analysis, the attitude control and power analysis. The satellite system is shown in Figure~\ref{fig:example}. It can be seen from Figure~\ref{fig:example} that there are nine global input variables $H,\,F_s,\,\theta,\,L_{sp},\,q,\,R_D,\,L_a,\,C_d,\,P_{other}$ and three global output variables of interest $\tau_{tot},\,P_{tot},\,A_{sa}$. The coupling variables are $\Delta t_{orbit}$, $\Delta t_{eclipse}$, $\nu$, $\theta_{slew}$, $P_{ACS}$, $I_{max}$ and $I_{min}$. Since $\Delta t_{orbit}$ is the input to both power analysis and attitude control, there are total eight coupling variables. Note that the system has feed-back coupling because the coupling variables $P_{ACS}$, $I_{max}$ and $I_{min}$ form an internal loop between power analysis and attitude control. Therefore, to implement the iterative procedure to build the linked GaSP of the system, we first convert the system to a feed-forward one by applying the decoupling algorithm proposed in~\cite{baptista2018optimal}. The decoupling algorithm identifies four weakly coupled variables $\Delta t_{orbit}$ (between orbit analysis and attitude control), $\theta_{slew}$, $I_{max}$ and $I_{min}$. Since the weakly coupled variables have insignificant impact on the accuracy of global outputs, they are neglected from the interaction terms between sub-models, producing a feed-forward system (see Figure~\ref{fig:example} without the dashed arrows). Table~\ref{tab:input} gives the domains of global inputs considered for the emulation.     

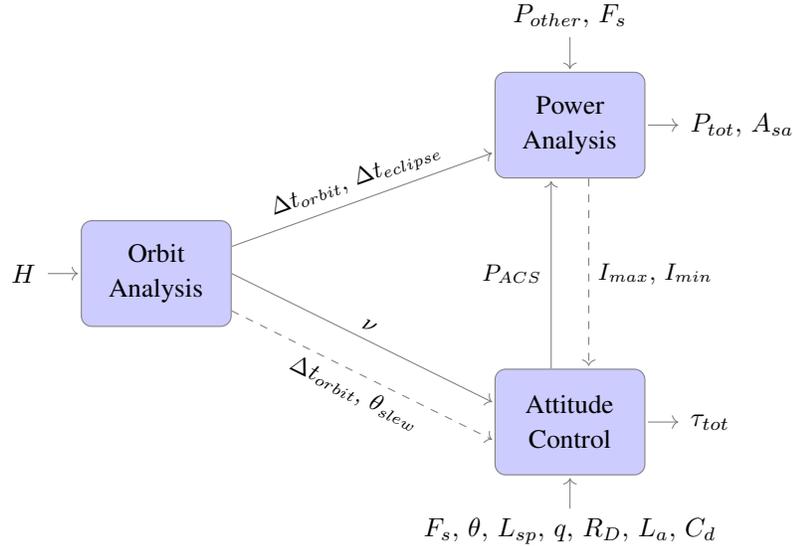
\begin{figure}[htbp]
\centering
\scalebox{1}{
\begin{tikzpicture}[shorten >=1pt,->,draw=black!50, node distance=5.5cm]
   \tikzstyle{every pin edge}=[<-,shorten <=1pt]
   \tikzstyle{block}=[rectangle, draw, fill=blue!20, text width=5em, text centered, rounded corners, minimum height=4em]
    \tikzstyle{annot} = [text width=4em, text centered]
    \tikzstyle{txtnode} = [text width=20em, text centered] 
    \node[block, pin=left:$H$] (I-1) at (0.5,-2.525) {Orbit Analysis};
    \node[block, pin={[pin edge={->}]right:$P_{tot},\,A_{sa}$}] (I-2) at (6,-0.55) {Power Analysis};
    \node[block, pin={[pin edge={->}]right:$\tau_{tot}$}] (I-3) at (6,-4.5) {Attitude Control};
    \node[txtnode,below of=I-3, node distance=1.45cm] (h) {$F_s,\,\theta,\,L_{sp},\,q,\,R_D,\,L_a,\,C_d$}; 
    \node[txtnode,above of=I-2, node distance=1.45cm] (h1) {$P_{other},\,F_s$};
    
           \path [draw] (h) -- (I-3);
           \path [draw] (h1) -- (I-2);
           \path [draw] (I-1) -- (I-2) node[above, font=\small,midway,align=left,sloped] {$\Delta t_{orbit},\,\Delta t_{eclipse}$};
           \path [draw,dashed] (I-1) -- ([yshift=-0.25cm]I-3.west) node[below,font=\small,midway,align=left,sloped] {$\Delta t_{orbit},\,\theta_{slew}$};
           \path [draw] ([yshift=0cm]I-1.east) -- ([yshift=.25cm]I-3.west) node[above,font=\small,midway,align=left,sloped] {$\nu$};
           \path [draw,dashed] ([xshift=.25cm]I-2.south) -- ([xshift=.25cm]I-3.north) node[right,font=\small,midway,align=left] {$I_{max},\,I_{min}$};
           \path [draw] ([xshift=-.25cm]I-3.north) -- ([xshift=-.25cm]I-2.south)  node[left,font=\small,midway,align=left] {$P_{ACS}$};
\end{tikzpicture}}
\caption{Fire-detection satellite model from~\cite{sankararaman2012likelihood}, where $H$ is altitude; $\Delta t_{orbit}$ is orbit period; $\Delta t_{eclipse}$ is eclipse period; $\nu$ is satellite velocity; $\theta_{slew}$ is maximum slewing angel; $P_{other}$ represents other sources of power; $P_{ACS}$ is power of attitude control system; $I_{max},\,I_{min}$ are maximum and minimum moment of inertia respectively; $F_s,\theta,\,L_{sp},\,q,\,R_D,\,L_a,\,C_d$ represent average solar flux, deviation of moment axis from vertical, moment arm for the solar radiation torque, reflectance factor, residual dipole, moment arm for aerodynamic torque, and drag coefficient respectively; $P_{tot}$ is total power; $A_{sa}$ is area of solar array; and $\tau_{tot}$ is total torque. The dashed arrows indicate the connections that can be decoupled between sub-models, according to the decoupling algorithm from~\cite{baptista2018optimal}.}
\label{fig:example}
\end{figure}

\begin{table}[htbp]  
\footnotesize{
\caption{Domains of the nine global input variables to be considered for the emulation.}
\label{tab:input}	
\begin{center}
\begin{tabular}{lcc}
\toprule
\textbf{Global input variable (unit)} & \textbf{Symbol} & \textbf{Domain}\\
\midrule
Altitude ($m$) & $H$ & $\left[1.50\times10^{17},\,2.10\times10^{17}\right]$\\
\addlinespace[0.1cm]
Other sources of power ($W$) & $P_{other}$ & $\left[8.50\times10^{2},\,1.15\times10^{3}\right]$\\
\addlinespace[0.1cm]
Average solar flux ($W/m^2$) & $F_s$ & $\left[1.34\times10^{3},\,1.46\times10^{3}\right]$\\
\addlinespace[0.1cm]
Deviation of moment axis from vertical ($^{\circ}$) & $\theta$ & $[12.00,\,18.00]$\\
\addlinespace[0.1cm]
Moment arm for the solar radiation torque ($m$) & $L_{sp}$ & $[0.80,\,3.20]$\\
\addlinespace[0.1cm]
Reflectance factor & $q$ & $[0,\,1]$\\
\addlinespace[0.1cm]
Residual dipole ($A\cdot m^2$)& $R_D$ & $[2.00,\,8.00]$\\
\addlinespace[0.1cm]
Moment arm for aerodynamic torque ($m$) & $L_a$ & $[0.80,\,3.20]$\\
\addlinespace[0.1cm]
Drag coefficient & $C_d$ & $[0.10,\,1,90]$ \\
\bottomrule
\end{tabular}
\end{center}
}
\end{table}

Maximin Latin hypercube sampling is then used to generate inputs positions for seven training sets, with sizes of $10$, $15$, $20$, $25$, $30$, $35$ and $40$ respectively. The corresponding output positions are consequently obtained by running the satellite model. For each of the seven training set and each of the three global output variables, we build the composite emulator and linked GaSP. Leave-one-out cross-validation is utilized for assessing the predictive performance of the two emulators. For example, in case of the composite emulation of the output variable $P_{tot}$ with training set size of $10$, we build ten composite emulators, each based on nine training points by dropping one training point out of the set. The dropped training point is then serves as the testing point to assess the associated composite emulator. The performance of the emulator (composite emulator or linked GaSP) of a global output variable given a certain training set is ultimately summarized by
\begin{equation*}
\mathrm{NRMSEP}=\frac{\sqrt{\frac{1}{n}\sum_{i=1}^n(f(\mathbf{x}_i)-\mu^{-i}(\mathbf{x}_i))^2}}{\max\{f(\mathbf{x}_i)_{i=1,\dots,n}\}-\min\{f(\mathbf{x}_i)_{i=1,\dots,n}\}},
\end{equation*}
where $\mathbf{x}_i$ is the $i$-th input position of a training set with size $n$; $f(\mathbf{x}_i)$ is the value of the output variable of interest produced by the satellite model at the input $\mathbf{x}_i$; the mean prediction $\mu^{-i}(\mathbf{x}_i)$ at input $\mathbf{x}_i$ is provided by the corresponding emulator constructed using all $n$ training points except for $\mathbf{x}_i$.   

The NRMSEP of the composite emulators and linked GaSPs of the three global output variables $\tau_{tot}$, $P_{tot}$ and $A_{sa}$ against seven different training sizes are presented on the top row of Figure~\ref{fig:sta_rmse}. It can be seen that for the output variable $\tau_{tot}$, the linked GaSP is only marginally better than the composite emulator. For the output variables $P_{tot}$ and $A_{sa}$, the linked GaSPs present better predictive performance than the composite ones when the training set size is small. The superiority of the linked GaSP soon vanishes when the training set size increases over $20$. To investigate the possible cause for this quick depreciation, we construct GaSP emulators for outputs produced by the three sub-models. The NRMSEP of these GaSP emulators across different training sizes are summarized on the bottom row of Figure~\ref{fig:sta_rmse}. We observe that the GaSP emulator of the attitude control with respect to $\tau_{tot}$ requires around $35$ training points to reach a low NRMSEP, while the GaSP emulator of the orbit analysis with respect to $\nu$ can reach such level with only $10$ training points. This indicates that the functional complexity between the global inputs and the output $\tau_{tot}$ is dominated by the sub-model attitude control, and thus the linked GaSP of $\tau_{tot}$ shows no obvious superiority over the corresponding composite emulator. Although the attitude control still dominates the functional complexity between the global inputs and $P_{tot}$ and $A_{sa}$ (see Figure~\subref*{fig:power}), $P_{tot}$ and $A_{sa}$ are produced not only by the orbit analysis and attitude control, but also by the power analysis. This extra sub-model increases the input dimension that the composite emulators need to explore, and thus cause the composite emulators slow to learn the functional dependence of $P_{tot}$ and $A_{sa}$ to the global inputs when training data size is small.

\begin{figure}[tbhp]
\centering 
\subfloat[$\tau_{tot}$]{\label{fig:tau_tot}\includegraphics[width=0.3\linewidth]{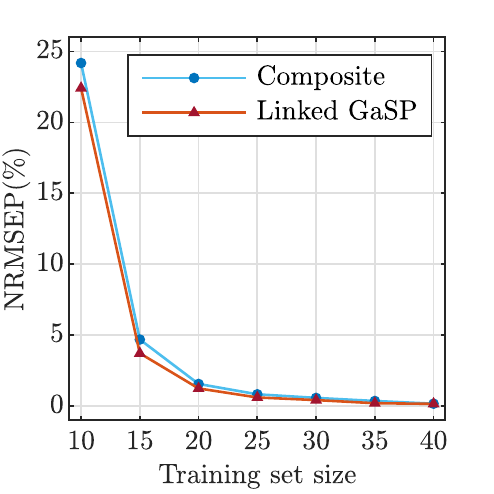}} 
\subfloat[$P_{tot}$]{\label{fig:p_tot}\includegraphics[width=0.3\linewidth]{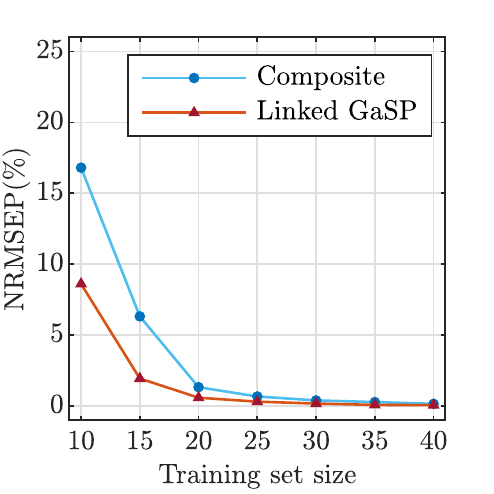}}
\subfloat[$A_{sa}$]{\label{fig:A_sa}\includegraphics[width=0.3\linewidth]{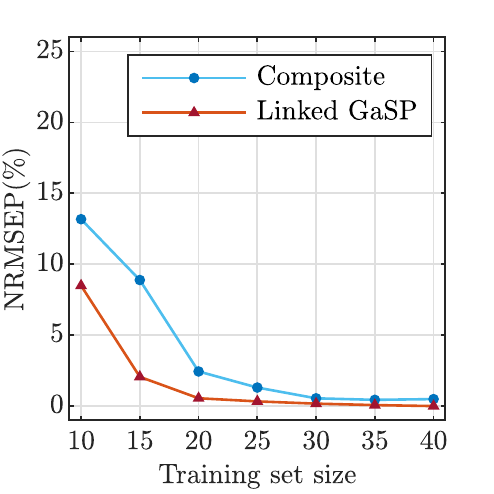}}\\[-1em]
\subfloat[Orbit Analysis]{\label{fig:orbit}\includegraphics[width=0.3\linewidth]{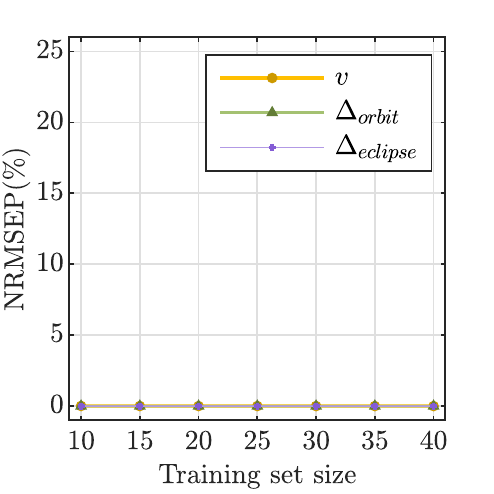}} 
\subfloat[Attitude Control]{\label{fig:ac}\includegraphics[width=0.3\linewidth]{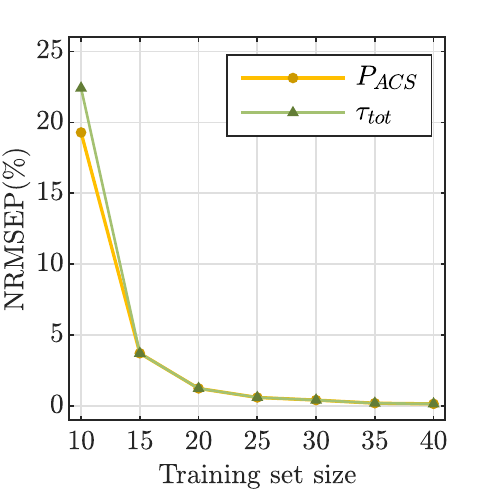}}
\subfloat[Power Analysis]{\label{fig:power}\includegraphics[width=0.3\linewidth]{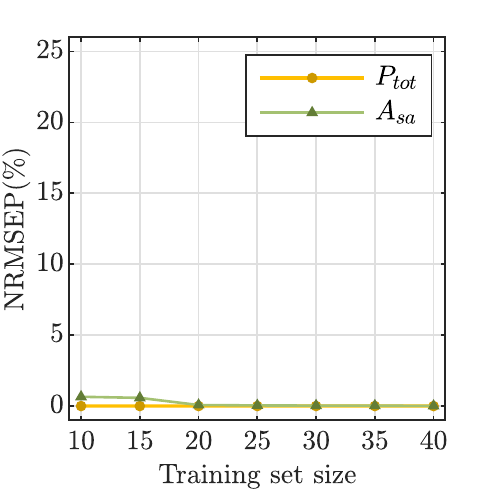}}
\caption{\emph{(Top)} NRMSEP of the composite emulators and linked GaSPs of the three global output variables $\tau_{tot}$, $P_{tot}$ and $A_{sa}$ against different training set sizes. \emph{(Bottom)} NRMSEP of the GaSP emulators of outputs produced by the three subsystems: orbit analysis, attitude control and power analysis.}
\label{fig:sta_rmse}
\end{figure}

\section{Experimental Designs for Linked GaSP}
\label{sec:design}
The linked GaSP is so far constructed using the Latin hypercube design (LHD)~\citep{santner2003design} in a sequential fashion. It means that a one-shot LHD is applied only to the global inputs (i.e., the inputs to the computer models in the first layer of the system) and designs for the inputs to the computer models in higher layers are automatically determined by the outputs from the lower-layer computer models. This design, called sequential LHD hereinafter, is a simple strategy and has the benefit that it only explores input spaces of individual computer models that have impact on the global outputs. However, the complexity of system structures and non-linearity of individual computer models can produce poor designs for sub-models in higher layers when the LHD of the global input is propagated through the system hierarchy. This issue can be seen from the sequential LHD (see Figure~\ref{fig:lhd}) that we used for the synthetic experiment in Section~\ref{sec:experiments}. Figure~\ref{fig:lhd} shows that although the LHD gives satisfactory input exploration for the global inputs $x_1$ and $x_2$, the design for the computer model $f_3$ is poor. This is because of the steep decrease of $f_2$ over $x_2\in[0,0.5]$, which concentrates most of the design points for $f_3$ on the border of its input $w_2$ while few of them locate over $w_2\in[4.1,5.0]$. Indeed, such an issue could be alleviated by increasing the size of the sequential LHD or implementing adaptive design strategies (e.g.,~\cite{beck2016sequential}) over the global inputs. However, these solutions can result in excessive design points that contain similar information about the underlying computer model. In addition, such sequential designs require full runs of entire systems, and thus can be computationally expensive and inefficient when the designs for some sub-models are already satisfactory and no further enhancements are needed. 

\cite{kyzyurova2018coupling} suggest an independent design strategy where the designs of sub-models are developed (by either one-shot LHD or adaptive designs) separately without considering their structural dependence. This design strategy is useful because the construction of the linked GaSP does not require realizations generated by running the whole system and thus different computer models can be ran in parallel rather than in sequence; one can even use existing realizations (with different sizes) from individual computer models to build the linked GaSP; the experimental design can be tailor-made for each computer model and thus one avoids issues related to the aforementioned sequential designs. 

While it is desirable to construct accurate GaSP emulators of individual computer models via the independent design and then integrate them to have a well-behaved linked GaSP, ignoring the structure dependence can cause \textit{unnecessary refinements} of GaSP emulators (and thus excessive experimental costs) over input spaces of computer models that are insignificant to the global output. Similarly, the ignorance of structural dependence may also cause GaSP emulators to be accurate only in part of input spaces that are significant to the global output. We illustrate such an issue in Section~\ref{sec:ind_design} of supplementary materials. In Section~\ref{sec:adaptive}, we introduce an adaptive design strategy for the linked GaSP that utilizes the analytical variance decomposition of linked emulators. As we will show, this design not only takes system structures into account but also shares some advantages of the independent design.

\begin{figure}[tbhp]
\centering 
\subfloat{\includegraphics[width=0.3\linewidth]{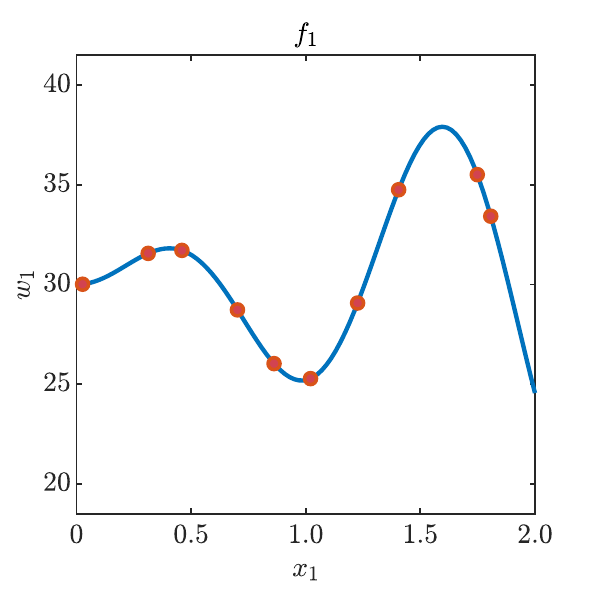}} 
\subfloat{\includegraphics[width=0.3\linewidth]{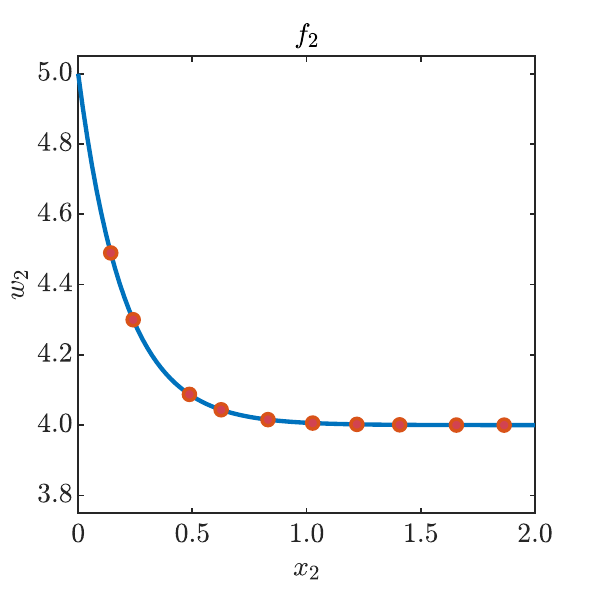}}
\subfloat{\includegraphics[width=0.3\linewidth]{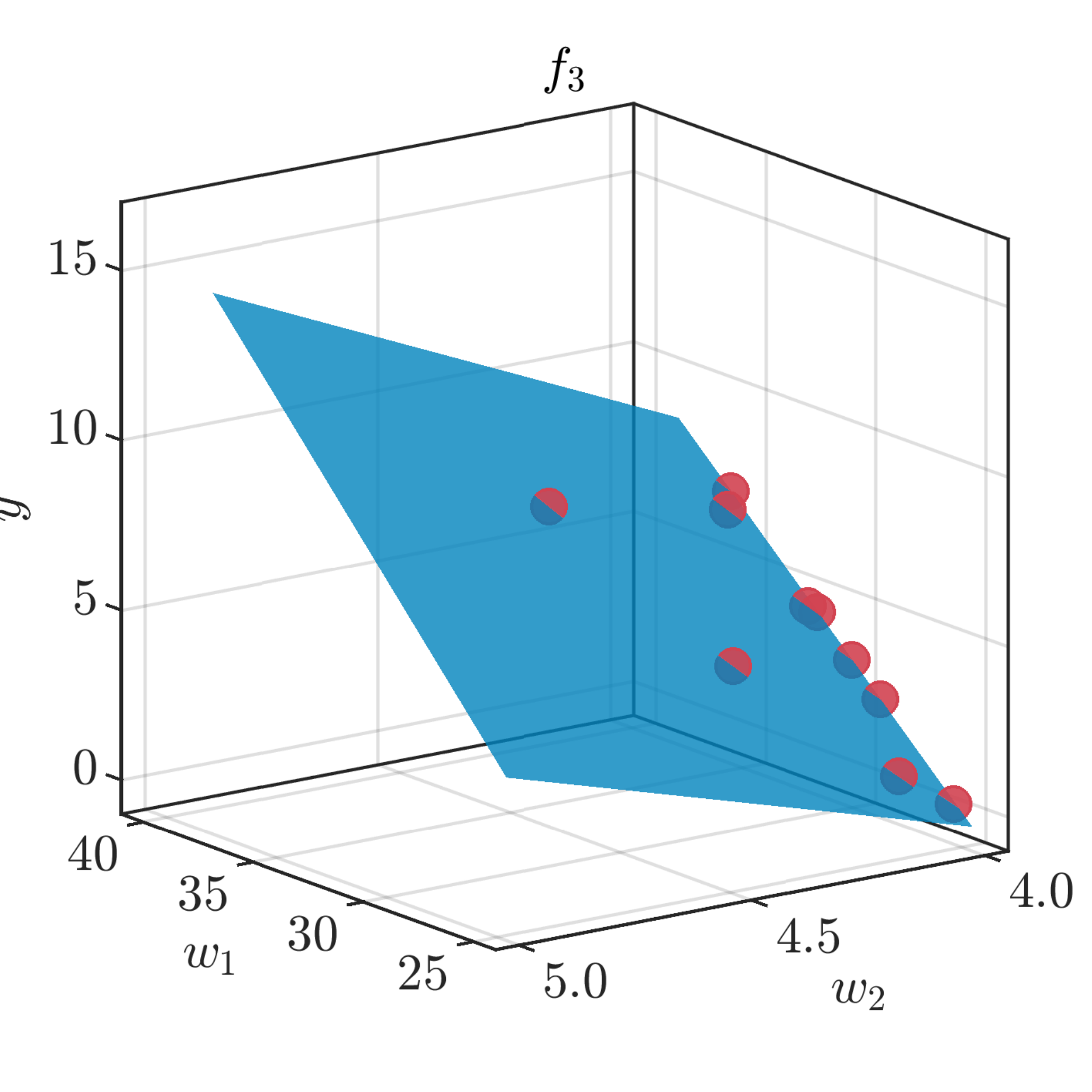}}
\caption{The sequential Latin hypercube design (LHD) used to build the linked GaSP for the synthetic experiment in Section~\ref{sec:experiments}. The solid lines and surface represent the true functional forms of each computer model; the filled circles are design points.}
\label{fig:lhd}
\end{figure}

\subsection{A Variance-Based Adaptive Design for Linked GaSP}
\label{sec:adaptive}
The adaptive design introduced in this section extends the simulation-based Single Model Selection training strategy given in~\cite{sanson2019systems}. At each iteration, the adaptive design conducts the follow three steps:
\begin{enumerate}
    \item Select one sub-model and determine the input position to run the model;
    \item Run the selected sub-model and refine its GaSP emulator given the new run;
    \item Construct the linked GaSP of the system.
\end{enumerate}

It can be seen that at each iteration the adaptive design only requires a single run of one sub-model. Therefore, one can save computational resources by avoiding runs of the whole system and only refining the GaSP emulator of one sub-model to improve the overall accuracy of the linked GaSP. We select the target sub-model at each iteration by searching for the sub-model whose GaSP emulator contributes the most to the variance of the linked GaSP. We demonstrate the approach on a two-layered system whose sub-models have their GaSP emulators connected as in Figure~\ref{fig:2systememu}. Note (see Section~cref{sec:thmproof} of supplementary materials) that the variance of linked emulator in equation~\eqref{eq:intvar} of Theorem~\ref{thm:main} can be written as
\begin{equation*}
\sigma^2_L=\mathrm{Var}\left(\mu_g(\mathbf{W},\mathbf{z})\right)+\mathbb{E}\left[\sigma^2_g(\mathbf{W},\mathbf{z})\right],
\end{equation*}
where
\begin{align*}
\mathrm{Var}\left(\mu_g(\mathbf{W},\mathbf{z})\right)&=\mathbf{A}^\top\left(\mathbf{J}-\mathbf{I}\mathbf{I}^\top\right)\mathbf{A}+2\widehat{\boldsymbol{\theta}}^\top\left(\mathbf{B}-\boldsymbol{\mu}\mathbf{I}^\top\right)\mathbf{A}+\mathrm{tr}\left\{\widehat{\boldsymbol{\theta}}\widehat{\boldsymbol{\theta}}^\top\boldsymbol{\Omega}\right\}\\
\mathbb{E}\left[\sigma^2_g(\mathbf{W},\mathbf{z})\right]&=\sigma^2\,\left(1+\eta+\mathrm{tr}\left\{\mathbf{Q}\mathbf{J}\right\}+\mathbf{G}^\top\mathbf{C}\mathbf{G}+\mathrm{tr}\left\{\mathbf{C}\mathbf{P}-2\mathbf{C}\widetilde{\mathbf{H}}^\top\mathbf{R}^{-1}\mathbf{K}\right\}\right)
\end{align*}
with $\mu_g(\mathbf{W},\mathbf{z})$ and $\sigma^2_g(\mathbf{W},\mathbf{z})$ being the mean and variance of $\widehat{g}$. 

Define 
\begin{equation*}
V_1=\mathrm{Var}\left(\mu_g(\mathbf{W},\mathbf{z})\right)\quad\mathrm{and}\quad V_2=\mathbb{E}\left[\sigma^2_g(\mathbf{W},\mathbf{z})\right],
\end{equation*}
then $V_1$ represents the overall contribution of GaSP emulators $\widehat{f}_1,\dots,\widehat{f}_d$ to $\sigma^2_L$, and $V_2$ represents the contribution of $\widehat{g}$ to $\sigma^2_L$. Analogously, the variance contribution of GaSP emulators $\widehat{f}_{k\in\mathbb{S}}$ for $\mathbb{S}\subseteq\{1,\dots,d\}$ can be defined by
\begin{equation*}
V_1(\mathbb{S})=\mathrm{Var}_{W_{k\in\mathbb{S}}}\left(\mathbb{E}_{W_{k\in\mathbb{S}^\mathsf{c}}}\left[\mu_g(\mathbf{W},\mathbf{z})\right]\right),    
\end{equation*}
where $\mathbb{S}^\mathsf{c}$ is the complement of $\mathbb{S}$. One can compute $V_1(\mathbb{S})$ analytically according to Proposition~\ref{prop:variance}.

\begin{proposition}
\label{prop:variance}
Under the same conditions of Theorem~\ref{thm:main}, $V_1(\mathbb{S})$ has the closed form expression given by
\begin{equation*}
V_1(\mathbb{S})=\mathbf{A}^\top\left(\widetilde{\mathbf{J}}-\mathbf{I}\mathbf{I}^\top\right)\mathbf{A}+2\widehat{\boldsymbol{\theta}}^\top\left(\widetilde{\mathbf{B}}-\boldsymbol{\mu}\mathbf{I}^\top\right)\mathbf{A}+\mathrm{tr}\left\{\widehat{\boldsymbol{\theta}}\widehat{\boldsymbol{\theta}}^\top\widetilde{\boldsymbol{\Omega}}\right\},
\end{equation*}
where
\begin{itemize}
\item $\widetilde{\boldsymbol{\Omega}}$	is a $d\times d$ diagonal matrix with $k$-th diagonal element given by
$\sigma_k^2(\mathbf{x}_k)\mathbbm{1}_{\{k\in\mathbb{S}\}}$;
\item $\widetilde{\mathbf{J}}$ is a $m\times m$ matrix with the $ij$-th element given by
\begin{equation*}
\widetilde{J}_{ij}=\prod_{k\in\mathbb{S}} \zeta_{ijk}\prod_{k\in\mathbb{S}^\mathsf{c}}\xi_{ik}\xi_{jk}\prod_{k=1}^p c_k(z_k,\,z^{\mathcal{T}}_{ik})\,c_k(z_k,\,z^{\mathcal{T}}_{jk});
\end{equation*} 
\item $\widetilde{\mathbf{B}}$ is a $d\times m$ matrix with the $lj$-th element given by
\begin{equation*}
\widetilde{B}_{lj}=\begin{dcases*}
\psi_{jl}\prod^d_{\substack{k=1\\k\neq l}}\xi_{jk} \prod_{k=1}^pc_k(z_k,z^{\mathcal{T}}_{jk}), & $l\in\mathbb{S}$,\\
\mu_l\prod^d_{\substack{k=1}}\xi_{jk} \prod_{k=1}^pc_k(z_k,z^{\mathcal{T}}_{jk}), & $l\in\mathbb{S}^\mathsf{c}$.
\end{dcases*}
\end{equation*}
\end{itemize}
\end{proposition}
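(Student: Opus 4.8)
The plan is to reuse, almost verbatim, the computation of $\mathrm{Var}\left(\mu_g(\mathbf{W},\mathbf{z})\right)$ carried out in the proof of Theorem~\ref{thm:main}, but with one extra layer of conditioning inserted: first take the partial expectation $\mathbb{E}_{W_{k\in\mathbb{S}^\mathsf{c}}}$, then the partial variance $\mathrm{Var}_{W_{k\in\mathbb{S}}}$. The only ingredients are the mutual independence of the $W_k(\mathbf{x}_k)$ across $k$ (Assumption~\ref{ass:2}) and the multiplicative form of the kernel; all matrices and vectors ($\mathbf{A}$, $\widehat{\boldsymbol{\theta}}$, $\widehat{\boldsymbol{\beta}}$, $\mathbf{I}$, $\xi_{ik}$, $\zeta_{ijk}$, $\psi_{jl}$) are as in Theorem~\ref{thm:main}.

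First I would write the kriging mean of $\widehat{g}$ explicitly: under Assumption~\ref{ass:1}, $\mu_g(\mathbf{W},\mathbf{z})=\mathbf{W}^\top\widehat{\boldsymbol{\theta}}+\mathbf{h}(\mathbf{z})^\top\widehat{\boldsymbol{\beta}}+\mathbf{r}(\mathbf{W},\mathbf{z})^\top\mathbf{A}$, where the $i$-th entry of $\mathbf{r}(\mathbf{W},\mathbf{z})$ factorizes as $\prod_{k=1}^p c_k(z_k,z_{ik}^{\mathcal{T}})\prod_{k=1}^d c_k(W_k(\mathbf{x}_k),w_{ik}^{\mathcal{T}})$. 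Taking $\mathbb{E}_{W_{k\in\mathbb{S}^\mathsf{c}}}$ and using independence, every factor with $k\in\mathbb{S}^\mathsf{c}$ in the $\mathbf{r}$-part turns into $\xi_{ik}$, each summand $W_k(\mathbf{x}_k)\widehat{\theta}_k$ with $k\in\mathbb{S}^\mathsf{c}$ becomes $\mu_k(\mathbf{x}_k)\widehat{\theta}_k$, and the $\mathbf{h}(\mathbf{z})$-term is untouched. This leaves the random variable
\[
Z := \sum_{l\in\mathbb{S}}W_l(\mathbf{x}_l)\widehat{\theta}_l + \sum_{i=1}^m A_i\, u_i \prod_{k\in\mathbb{S}}c_k(W_k(\mathbf{x}_k),w_{ik}^{\mathcal{T}}) + (\text{const}),\qquad u_i := \prod_{k=1}^p c_k(z_k,z_{ik}^{\mathcal{T}})\prod_{k\in\mathbb{S}^\mathsf{c}}\xi_{ik}.
\]
Here the constant gathers all terms not depending on $\{W_k(\mathbf{x}_k):k\in\mathbb{S}\}$.

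Next I would compute $\mathrm{Var}_{W_{k\in\mathbb{S}}}(Z)$ by writing $Z=X+Y+(\text{const})$ with $X:=\sum_{l\in\mathbb{S}}W_l(\mathbf{x}_l)\widehat{\theta}_l$ and $Y:=\sum_i A_i u_i\prod_{k\in\mathbb{S}}c_k(W_k(\mathbf{x}_k),w_{ik}^{\mathcal{T}})$, and expanding $\mathrm{Var}(X+Y)=\mathrm{Var}(X)+2\,\mathrm{Cov}(X,Y)+\mathrm{Var}(Y)$. Independence gives $\mathrm{Var}(X)=\sum_{l\in\mathbb{S}}\widehat{\theta}_l^2\sigma_l^2(\mathbf{x}_l)=\mathrm{tr}\{\widehat{\boldsymbol{\theta}}\widehat{\boldsymbol{\theta}}^\top\widetilde{\boldsymbol{\Omega}}\}$. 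For $\mathrm{Var}(Y)$, the definitions of $\xi$ and $\zeta$ together with independence across the $k$-factors yield $\mathbb{E}[Y]=\sum_i A_i u_i\prod_{k\in\mathbb{S}}\xi_{ik}=\mathbf{I}^\top\mathbf{A}$ and $\mathbb{E}[Y^2]=\sum_{i,j}A_iA_j u_iu_j\prod_{k\in\mathbb{S}}\zeta_{ijk}=\mathbf{A}^\top\widetilde{\mathbf{J}}\mathbf{A}$, hence $\mathrm{Var}(Y)=\mathbf{A}^\top(\widetilde{\mathbf{J}}-\mathbf{I}\mathbf{I}^\top)\mathbf{A}$. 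For the cross term I would isolate the single $W_l$-dependent factor to get $\mathrm{Cov}\bigl(W_l(\mathbf{x}_l),\prod_{k\in\mathbb{S}}c_k(W_k(\mathbf{x}_k),w_{ik}^{\mathcal{T}})\bigr)=\psi_{il}\prod_{k\in\mathbb{S}\setminus\{l\}}\xi_{ik}-\mu_l(\mathbf{x}_l)\prod_{k\in\mathbb{S}}\xi_{ik}$, and then recognize $u_i\,\psi_{il}\prod_{k\in\mathbb{S}\setminus\{l\}}\xi_{ik}=\widetilde{B}_{li}$ (since $\mathbb{S}^\mathsf{c}\cup(\mathbb{S}\setminus\{l\})=\{1,\dots,d\}\setminus\{l\}$ for $l\in\mathbb{S}$) and $u_i\,\mu_l(\mathbf{x}_l)\prod_{k\in\mathbb{S}}\xi_{ik}=\mu_l(\mathbf{x}_l)I_i$; using also $\widetilde{B}_{lj}=\mu_l(\mathbf{x}_l)I_j$ for $l\in\mathbb{S}^\mathsf{c}$, so that those rows of $\widetilde{\mathbf{B}}-\boldsymbol{\mu}\mathbf{I}^\top$ vanish, this gives $2\,\mathrm{Cov}(X,Y)=2\,\widehat{\boldsymbol{\theta}}^\top(\widetilde{\mathbf{B}}-\boldsymbol{\mu}\mathbf{I}^\top)\mathbf{A}$. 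Adding the three pieces produces the stated formula.

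The main obstacle is essentially \emph{bookkeeping}: tracking exactly which one-dimensional kernel factors survive each of the two expectation/variance operations, keeping the $\mathbb{S}$ versus $\mathbb{S}^\mathsf{c}$ split consistent, and checking that $u_i$ multiplied by the leftover $\xi$-products collapses precisely into the definitions of $\widetilde{\mathbf{J}}$ and $\widetilde{\mathbf{B}}$ — in particular that $\widetilde{B}_{lj}$ genuinely ranges over all $k\neq l$ when $l\in\mathbb{S}$, which is what the merge of $\prod_{k\in\mathbb{S}^\mathsf{c}}\xi_{ik}$ with $\prod_{k\in\mathbb{S}\setminus\{l\}}\xi_{ik}$ yields. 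No new idea beyond the decomposition $\sigma^2_L=\mathrm{Var}(\mu_g(\mathbf{W},\mathbf{z}))+\mathbb{E}[\sigma^2_g(\mathbf{W},\mathbf{z})]$ already established in Theorem~\ref{thm:main} is required; taking $\mathbb{S}=\{1,\dots,d\}$ recovers the $V_1$ term of Theorem~\ref{thm:main} verbatim, which is a good consistency check.
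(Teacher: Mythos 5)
Your proposal is correct and follows essentially the same route as the paper: both compute the inner expectation $\mathbb{E}_{W_{k\in\mathbb{S}^\mathsf{c}}}[\mu_g(\mathbf{W},\mathbf{z})]$ first (your $Z$ is exactly the paper's $\widetilde{\boldsymbol{\mu}}^\top\widehat{\boldsymbol{\theta}}+\mathbf{h}(\mathbf{z})^\top\widehat{\boldsymbol{\beta}}+\widetilde{\mathbf{I}}^\top\mathbf{A}$) and then take the outer variance using independence and the multiplicative kernel factorization, the only cosmetic difference being that you expand $\mathrm{Var}(X+Y)=\mathrm{Var}(X)+2\,\mathrm{Cov}(X,Y)+\mathrm{Var}(Y)$ while the paper writes $\mathbb{E}[(\cdot)^2]-(\mathbb{E}[\cdot])^2$ and cancels the $\boldsymbol{\mu}\boldsymbol{\mu}^\top$ terms at the end. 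Your observation that the rows of $\widetilde{\mathbf{B}}-\boldsymbol{\mu}\mathbf{I}^\top$ indexed by $l\in\mathbb{S}^\mathsf{c}$ vanish, so the covariance sum over $l\in\mathbb{S}$ can be written as a full matrix product, is exactly the bookkeeping the paper's definition of $\widetilde{\mathbf{B}}$ is designed to make work.
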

\begin{proof}
The proof is in Section~\ref{sec:proofvariance} of supplementary materials.
\end{proof}

Thanks to the closed form expressions of $V_1$, $V_2$ and $V_1(\mathbb{S})$, the adaptive design can quickly locate the sub-model and determine the input position to run the model. To show the performance we implement the adaptive design on the synthetic example in Section~\ref{sec:experiments} via Algorithm~\ref{alg:design}, where the optimization problem in Line~\ref{alg:opt} is done by grid search due to the low global input dimension. The linked GaSP built by the adaptive design is summarized in Figure~\ref{fig:adp_exp}. It can be observed from Figure~\ref{fig:adp_exp} that the linked GaSP built via the adaptive design can achieve lower NRMSEP than that built via the sequential LHD, with a smaller number of computer model runs. This is because, in contrast to the poor design for $f_3$ created by the sequential LHD (see Figure~\ref{fig:lhd}), the adaptive design creates a satisfactory design by adding extra design points to the input space of $f_3$ that is not well-explored by the sequential LHD but still significant to the global output. It can also be seen that the adaptive design leads to more runs of $f_1$, whose functional form is more complex than other models and thus needs to generate more realizations to be emulated adequately. Thus the adaptive design is able to improve the emulation performance of the linked GaSP with reduced experimental costs by allocating runs to computer models according to their heterogeneous functional complexity. We also report in Figure~\ref{fig:adp_exp} the NRMSEP of the linked GaSP trained with the independent design, by which GaSP emulators of individual computer models are built separately with their own training points independently generated from the LHD. Although the linked GaSP with the independent design achieves a low NRMSEP, its accuracy is overestimated because we assume that the input domain of $f_3$ that is significant to the global output is perfectly known or can be determined in a cost efficient way, e.g., we were able to determine the important input domain of $f_3$ by evaluating $f_1$ and $f_2$ exhaustively over the entire domain of the global input thanks to the cheap cost of the synthetic models. However, in practice it is rarely possible to gain perfect knowledge about the important input domain of a computer model or feasible to evaluate models thoroughly without constraints. 

\begin{algorithm}[htbp]
\caption{Adaptive design for the synthetic system illustrated in Section~\ref{sec:experiments}}
\label{alg:design}
\begin{algorithmic}[1]
\STATE{Choose $K$ number of enrichment (i.e., iterations) to the initial design.}
\FOR{$k=1,\dots,K$}
\STATE\label{alg:opt}{Find $\widehat{\mathbf{x}}$ and $\widehat{l}$ such that
\begin{equation*}
(\widehat{\mathbf{x}},\,\widehat{l})=\argmax_{\mathbf{x},\,l\in\{1,\,2\}}V_l(\mathbf{x}),
\end{equation*}
where $\mathbf{x}=(x_1,\,x_2)$, and $V_1(\mathbf{x})$ and $V_2(\mathbf{x})$ respectively are contributions of $\widehat{e}_{1}$ (i.e., GaSP emulators $\widehat{f}_1$ and $\widehat{f}_2$ in the first layer) and $\widehat{f}_3$ to the variance of the linked GaSP;}
\IF{$\widehat{l}=1$}
\STATE{Compute $V_{1k}(\widehat{\mathbf{x}})$ for $k\in\{1,\,2\}$ according to Proposition~\ref{prop:variance}, where $V_{1k}(\widehat{\mathbf{x}})$ is the contribution of $\widehat{f}_k$ to the variance of linked GaSP;}
\IF{$V_{11}(\widehat{\mathbf{x}})>V_{12}(\widehat{\mathbf{x}})$}
\STATE{Enrich the training points for $\widehat{f}_1$ by evaluating $f_1$ at the input position $\widehat{x}_1$;}
\ELSE
\STATE{Enrich the training points for $\widehat{f}_2$ by evaluating $f_2$ at the input position $\widehat{x}_2$;}
\ENDIF
\ELSE
\STATE{Enrich the training points for $\widehat{f}_3$ by evaluating $f_3$ at the input position $(\mu_1(\widehat{x}_1),\,\mu_2(\widehat{x}_2))$, obtained by evaluating the predictive mean $\mu_1$ and $\mu_2$ of $\widehat{f}_1$ and $\widehat{f}_2$ at the input position $\widehat{x}_1$ and $\widehat{x}_2$, respectively;}
\ENDIF
\STATE{Update the GaSP emulator $\widehat{f}_1$, $\widehat{f}_2$ or $\widehat{f}_3$ and construct the linked GaSP.}
\ENDFOR
\end{algorithmic} 
\end{algorithm}

\begin{figure}[htbp]
\centering
\includegraphics[width=0.9\linewidth]{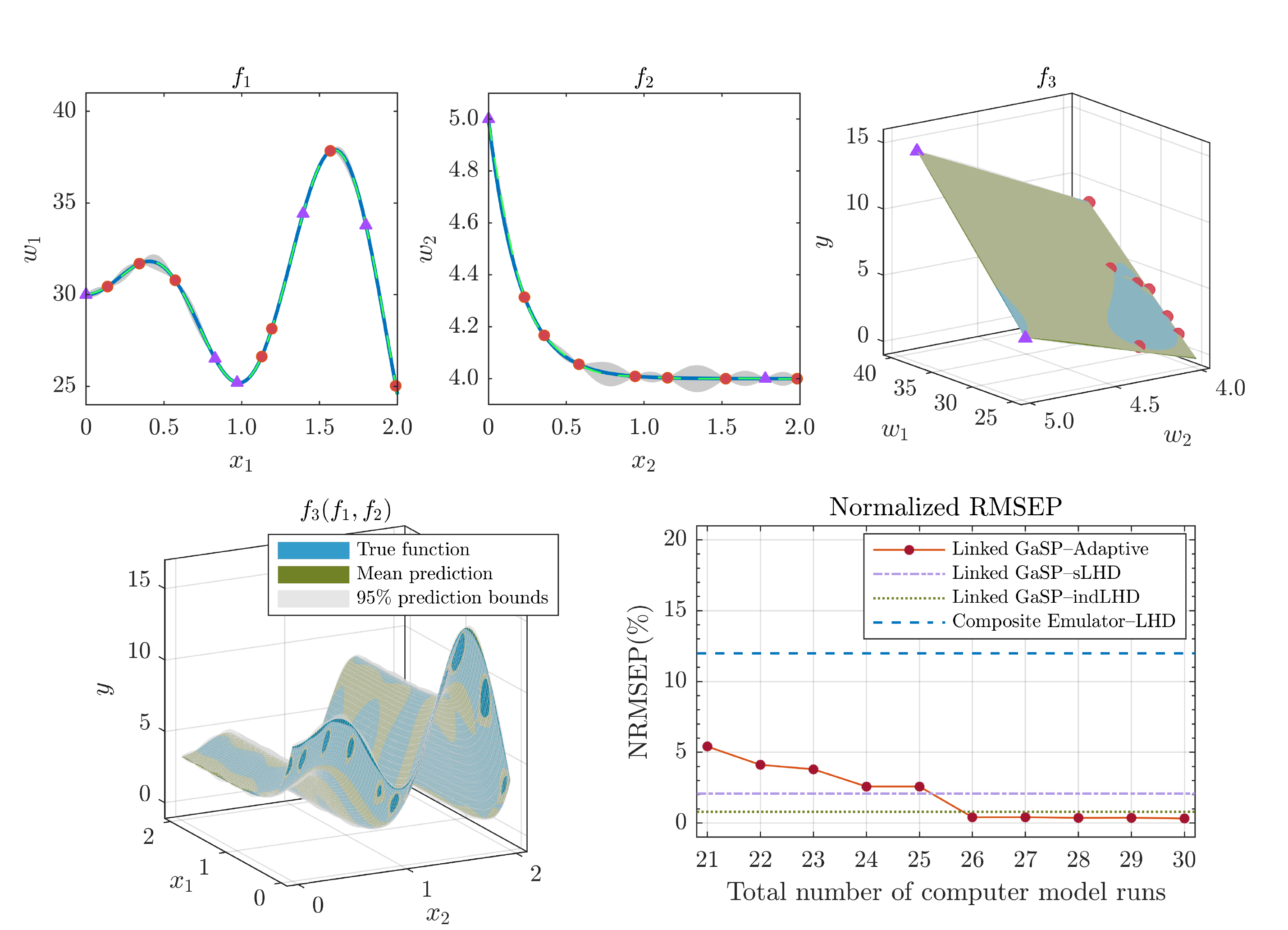}
\caption{The adaptive design for the synthetic experiment in Section~\ref{sec:experiments}. \emph{(Top-left)} GaSP emulator of $f_1$ ; \emph{(Top-middle)} GaSP emulator of $f_2$; \emph{(Top-right)} GaSP emulator of $f_3$; \emph{(Bottom-left)} linked GaSP of the system; \emph{(Bottom-right)} Comparison of NRMSEP between the linked GaSP with the adaptive design, the linked GaSP with the sequential LHD (sLHD), the linked GaSP with the independent LHD (indLHD), and the composite emulator with the LHD. The linked GaSP with the sLHD and the composite emulator are trained with $30$ computer runs (i.e., $10$ full runs of the entire system).The linked GaSP with the indLHD is trained with $10$ runs for each sub-model. The linked GaSP with the adaptive design is trained with $21$ initial computer model runs determined by the sLHD (i.e., $7$ runs of the whole system, corresponding to the filled circles in the top panels) and $9$ additional sub-model runs (corresponding to the filled triangles in the top panels) over $9$ iterations.}
\label{fig:adp_exp}
\end{figure}

Although the adaptive design is a desirable design strategy, it has its own limitations. Firstly, the adaptive design updates the GaSP emulator of one sub-model iteratively. Therefore, unlike the independent design, it does not allow sub-models of a system to run simultaneously during the experimental design. Beside, the adaptive design is still a sequential method because the input location at which the selected sub-model needs to run is determined by propagating the determined global input location through the GaSP emulators of those sub-models in lower layers. As a result, inaccurate GaSP emulators in lower layers may produce sub-optimal input positions to improve the GaSP emulators in higher layers. One thus need to implement the adaptive design with more iterations, and in turn spend more computational resources, to improve the linked GaSP sufficiently. Furthermore, the maximization problem involved in the adaptive design to search for the sub-model whose GaSP emulator needs to be updated is a challenging task especially when the global input dimension is high. Therefore, developing a fast and efficient searching algorithm is essential. Fortunately, the closed form expressions for the variance decomposition given in Proposition~\ref{prop:variance} render the exact evaluation of their derivatives respect to the input positions, thus many existing optimization algorithms (e.g., gradient ascent) could be applied. We leave this aspect as a future development without exploring further in this study.
 
\section{Discussion}
\label{sec:discussion}

The development of Theorem~\ref{thm:main} in Section~\ref{sec:extension} depends on Assumption~\ref{ass:2}, which asks for independence of input variables to the GaSP emulator of $g$ in the second layer. This independence assumption helps reduce analytical efforts in deriving the closed form mean and variance of the linked emulator. In addition, the consideration of dependence between input variables requires specification of their dependence structures, which can be a difficult task as careful dependence modeling, model training and predictions are needed. Nevertheless, ignoring the dependence structure between input variables feeding to the second layer can cause biased mean and variance of the linked emulator if the dependence is non-negligible. \cite{kyzyurova2018coupling} explore the impact of such dependence ignorance and conclude that in the case of Gaussian dependence under the squared exponential kernel, one could diagnose the significance of dependence by calculating the following ratios $r_k=\widehat{\gamma}^2_k/{\sigma^2_k}$ for all $k=1,\dots,d$, where $\widehat{\gamma}_k$ is the estimated range parameter of the $k$-th input to the GaSP emulator $\widehat{g}$. If $r_k$ is large (e.g., in the order of hundreds or thousands) for all $k$, the difference between the linked GaSPs with and without the dependence structure is then negligible. Note that given $\widehat{\gamma}^2_k$, $r_k$ increases as predictive variance $\sigma^2_k$ decreases. Thus, one could  safely neglect the impact of dependence by improving GaSP emulators in the feeding layer. We review these results in Section~\ref{sec:dependence} of supplementary materials. Since $r_k$ is calculated without the consideration of dependence and before invoking Theorem~\ref{thm:main}, it can be used as a measurement to determine whether one should consider the dependence before explicitly incorporating it to the emulation. 

However, $r_k$ may not be a valid measurement when kernels other than the squared exponential are used. It is also difficult in practice to have GaSP emulators producing sufficiently small predictive variances at the evaluated input positions to rule out the impact of dependence. Therefore, one may have to consider specifying the dependence structure between outputs of GaSP emulators from the feeding layer. One option for the dependence specification is to build multivariate GaSP emulators~\citep{rougier2009expert,fricker2013multivariate,zhang2015full}. However, existing literature on multivariate GaSP only consider the dependence among outputs from a single computer model, which means that in each layer of a system one has to treat all computer models, whose outputs are correlated, as a single model for the multivariate GaSP emulation, This is apparently an unpleasant feature because it reduces the benefit of system order reduction (i.e., GaSP emulators are constructed for individual computer models) offered by the linked GaSP emulation. A possible solution to this issue is to first build GaSP emulators ignoring the dependence and then model dependence structure separately, e.g., utilizing copulas~\citep{embrechts2003copula}. Nevertheless, one still need to conduct extra analytical efforts to derive more sophisticated closed form expressions for the mean and variance of linked emulator under the multivariate setting for different kernel choices.

Linked emulator gives the true distributional representation of coupled GaSP emulators of computer models in a system. Linked GaSP then serves as a Gaussian approximation to the analytically intractable linked emulator. The use of linked GaSP in replacement of linked emulator can be justified from two aspects. Firstly, with Gaussian distribution, one can construct closed form linked GaSP successively via the iterative procedure in Section~\ref{sec:feedforward}. Secondly, linked GaSP with its mean and variance matching to the linked emulator minimizes the Kullback–Leibler (KL) divergence (i.e., information loss) between the linked emulator and a Gaussian density~\citep{minka2013expectation}.  

The approximation accuracy of the linked GaSP to the linked emulator for a two layered system is explored in~\cite{kyzyurova2018coupling}, which indicate that the linked GaSP converges to the linked emulator when the predictive variances of GaSP emulators in the first layer reduce to zero. This statement is intuitive because GaSP emulators tend to be deterministic as their predictive variances drop. Consequently, the linked emulator decays to a Gaussian distribution that is equivalent to the corresponding linked GaSP. However, it is often not possible to ensure this condition for multi-layered systems, especially when systems are complex and the computational budget is limited. We explore provisionally the approximating performance of the linked GaSP in a three-layered synthetic system with a fairly small number of training points in Section~\ref{sec:approx} of supplementary materials. We found, and we also conjecture for systems with a moderate number of layers, that the linked GaSP approximates well the mean and variance of the linked emulator, while is unable to reconstruct sufficiently the full probabilistic distribution of the linked emulator. Therefore, the linked GaSP can be a good analytical replacement of a linked emulator for analysis, such as the history matching, where mean and variance are the key quantities of interest. However, if the full uncertainty description of an emulator is of concern (e.g., if tails are of specific interest), the linked GaSP may not be a fully adequate surrogate model.

Like all data-driven emulators, the linked GaSP is a simplified approximation to the underlying computer system, which can be both high-dimensional and extremely nonlinear. Thus, careful plans and implementations, such as computational budget allocation, design consideration and model validation, are essential for efficient emulation on systems of computer models. In addition, the accuracy of linked GaSPs is not only constrained by the assumptions listed in Section~\ref{sec:extension}, but also limited by those (e.g., stationarity) made for GaSP emulators. Therefore, further methodological and empirical advancements on both GaSP emulator and linked GaSP are required for robust uncertainty quantification of sophisticated real-world systems of computer models.

\section{Conclusion}
\label{sec:conclusion}
In this study, we generalize the linked GaSP  to  a class of Mat{\'e}rn kernels. The ability to use Mat{\'e}rn kernels is essential for wider applications of the linked GaSP on uncertainty quantification of systems of computer models.  The linked GaSP emulation can also be applied to any feed-forward systems with an iterative procedure. In combination with decoupling techniques, the linked GaSP can even be utilized for systems with internal loops.

The linked GaSP emulation can be further enhanced, in terms of the approximating accuracy and computational cost, via careful implementation of design strategies. We discuss pros and cons of several alternative designs, and introduce an adaptive design that improves the accuracy of the linked GaSP with reduced computational by allocating runs to different computer models in a system based on their heterogeneous functional complexity. The benefits of the adaptive design are illustrated via a synthetic example. Further refinements of the design and how it performs in real systems are directions worth exploring.

The linked GaSP outperforms the composite emulator by a ``divide-and-conquer'' strategy~\citep{kyzyurova2018coupling}, which converts the emulation of a bulky system into emulations of a number of simpler elements. However, when a single computer model dominates the functional complexity of the whole system the linked GaSP may not show a significant improvement over the composite emulator. Particularly, if the dimension of input to individual computer models is remarkably higher than that of global input, one might resort to dimension reduction techniques to construct GaSP emulators of individual computer models. Whether the benefits offered by the linked GaSP can overweight the approximation error induced by the dimension reduction methods needs to be studied in the future. Since the uncertainty quantification is now an integrated module in many research of multi-physics systems, one may consider split processes during the system development to facilitate surrogate modeling. 

Overall, we demonstrate both the effectiveness and efficiency of our new strategies to build linked GaSPs for systems of computer models. Another ambitious, but needed, task would be to investigate how our results can be exploited to emulate more complex feed-back coupled systems, such as climate models, than the one considered in this study.

\bibliographystyle{agsm}  
\bibliography{references}  

\begin{appendices}

\section{Closed Form Expressions}
\label{app:expression}
\subsection{Exponential Case}
\begin{align*}
 \xi_{ik}=&\exp\left\{\frac{\sigma^2_k+2\gamma_k\left(w^{\mathcal{T}}_{ik}-\mu_k\right)}{2\gamma_k^2}\right\}\Phi\left(\frac{\mu_A-w^{\mathcal{T}}_{ik}}{\sigma_k}\right)+\exp\left\{\frac{\sigma^2_k-2\gamma_k\left(w^{\mathcal{T}}_{ik}-\mu_k\right)}{2\gamma_k^2}\right\}\Phi\left(\frac{w^{\mathcal{T}}_{ik}-\mu_B}{\sigma_k}\right),\\
\zeta_{ijk}=&\begin{cases*}
	h_{\zeta}\left(w^{\mathcal{T}}_{ik},\,w^{\mathcal{T}}_{jk}\right), & $w^{\mathcal{T}}_{jk}\geq w^{\mathcal{T}}_{ik}$\;,\\
	h_{\zeta}\left(w^{\mathcal{T}}_{jk},\,w^{\mathcal{T}}_{ik}\right), & $w^{\mathcal{T}}_{jk}<w^{\mathcal{T}}_{ik}$\;,
\end{cases*}\\
\psi_{jk}=&\exp\left\{\frac{\sigma^2_k+2\gamma_k\left(w^{\mathcal{T}}_{jk}-\mu_k\right)}{2\gamma_k^2}\right\}\left[\mu_A\Phi\left(\frac{\mu_A-w^{\mathcal{T}}_{jk}}{\sigma_k}\right)+\frac{\sigma_k}{\sqrt{2\pi}}\exp\left\{-\frac{\left(w^{\mathcal{T}}_{jk}-\mu_A\right)^2}{2\sigma^2_k}\right\}\right]\\
    &-\exp\left\{\frac{\sigma^2_k-2\gamma_k\left(w^{\mathcal{T}}_{jk}-\mu_k\right)}{2\gamma_k^2}\right\}\left[\mu_B\Phi\left(\frac{w^{\mathcal{T}}_{jk}-\mu_B}{\sigma_k}\right)-\frac{\sigma_k}{\sqrt{2\pi}}\exp\left\{-\frac{\left(w^{\mathcal{T}}_{jk}-\mu_B\right)^2}{2\sigma^2_k}\right\}\right],
\end{align*}
where $\Phi(\cdot)$ denotes the cumulative density function of the standard normal;
\begin{align*}
    h_{\zeta}\left(x_1,\,x_2\right)=&\exp\left\{\frac{2\sigma^2_k+\gamma_k\left(x_1+x_2-2\mu_k\right)}{\gamma_k^2}\right\}\Phi\left(\frac{\mu_C-x_2}{\sigma_k}\right)\\
      &+\exp\left\{-\frac{x_2-x_1}{\gamma_k}\right\}\left[\Phi\left(\frac{x_2-\mu_k}{\sigma_k}\right)-\Phi\left(\frac{x_1-\mu_k}{\sigma_k}\right)\right]\\
      &+\exp\left\{\frac{2\sigma^2_k-\gamma_k\left(x_1+x_2-2\mu_k\right)}{\gamma_k^2}\right\}\Phi\left(\frac{x_1-\mu_D}{\sigma_k}\right);
\end{align*}
and
\begin{equation*}
 \mu_A=\mu_k-\frac{\sigma^2_k}{\gamma_k},\quad
 \mu_B=\mu_k+\frac{\sigma^2_k}{\gamma_k},\quad
 \mu_C=\mu_k-\frac{2\sigma^2_k}{\gamma_k}\quad\mathrm{and}\quad
 \mu_D=\mu_k+\frac{2\sigma^2_k}{\gamma_k}.
\end{equation*}
For notational convenience, in the above result we replace the index variable $l$ in the subscript of $\psi_{jl}$ by $k$, and $\mu_k(\mathbf{x}_k)$ and $\sigma_k(\mathbf{x}_k)$ by $\mu_k$ and $\sigma_k$. This change of notation is also applied in the remainder of the supplement. 

\subsection{Squared Exponential Case}
\begin{align*}
     \xi_{ik}&=\frac{1}{\sqrt{1+2\sigma^2_k/\gamma^2_k}}\exp\left\{-\frac{\left(\mu_k-w^{\mathcal{T}}_{ik}\right)^2}{2\sigma^2_k+\gamma^2_k}\right\},\\
    \zeta_{ijk}&=\frac{1}{\sqrt{1+4\sigma^2_k/\gamma^2_k}}\exp\left\{-\frac{\left(\frac{w^{\mathcal{T}}_{ik}+w^{\mathcal{T}}_{jk}}{2}-\mu_k\right)^2}{\gamma^2_k/2+2\sigma^2_k}-\frac{\left(w^{\mathcal{T}}_{ik}-w^{\mathcal{T}}_{jk}\right)^2}{2\gamma^2_k}\right\},\\
    \psi_{jk}&=\frac{1}{\sqrt{1+2\sigma^2_k/\gamma^2_k}}\exp\left\{-\frac{\left(\mu_k-w^{\mathcal{T}}_{jk}\right)^2}{2\sigma^2_k+\gamma^2_k}\right\}\frac{2\sigma^2_kw^{\mathcal{T}}_{jk}+\gamma^2_k\mu_k}{2\sigma^2_k+\gamma^2_k}.
\end{align*}

\subsection{Mat\'{e}rn-1.5 Case}
\begin{align*}
        \xi_{ik}=&\exp\left\{\frac{3\sigma^2_k+2\sqrt{3}\gamma_k\left(w^{\mathcal{T}}_{ik}-\mu_k\right)}{2\gamma_k^2}\right\}\left[\mathbf{E}^\top_1\boldsymbol{\Lambda}_{11}\Phi\left(\frac{\mu_A-w^{\mathcal{T}}_{ik}}{\sigma_k}\right)+\mathbf{E}^\top_1\boldsymbol{\Lambda}_{12}\frac{\sigma_k}{\sqrt{2\pi}}\exp\left\{-\frac{(w^{\mathcal{T}}_{ik}-\mu_A)^2}{2\sigma^2_k}\right\}\right]\\
         &+\exp\left\{\frac{3\sigma^2_k-2\sqrt{3}\gamma_k\left(w^{\mathcal{T}}_{ik}-\mu_k\right)}{2\gamma_k^2}\right\}\left[\mathbf{E}^\top_2\boldsymbol{\Lambda}_{21}\Phi\left(\frac{w^{\mathcal{T}}_{ik}-\mu_B}{\sigma_k}\right)+\mathbf{E}^\top_2\boldsymbol{\Lambda}_{22}\frac{\sigma_k}{\sqrt{2\pi}}\exp\left\{-\frac{(w^{\mathcal{T}}_{ik}-\mu_B)^2}{2\sigma^2_k}\right\}\right],\\
        \zeta_{ijk}=&\begin{cases*}
	h_{\zeta}\left(w^{\mathcal{T}}_{ik},\,w^{\mathcal{T}}_{jk}\right), & $w^{\mathcal{T}}_{jk}\geq w^{\mathcal{T}}_{ik}$\;,\\
	h_{\zeta}\left(w^{\mathcal{T}}_{jk},\,w^{\mathcal{T}}_{ik}\right), & $w^{\mathcal{T}}_{jk}<w^{\mathcal{T}}_{ik}$\;,\\
\end{cases*}\\
\psi_{jk}=&\exp\left\{\frac{3\sigma^2_k+2\sqrt{3}\gamma_k\left(w^{\mathcal{T}}_{jk}-\mu_k\right)}{2\gamma_k^2}\right\}\left[\mathbf{E}^\top_1\boldsymbol{\Lambda}_{61}\Phi\left(\frac{\mu_A-w^{\mathcal{T}}_{jk}}{\sigma_k}\right)+\mathbf{E}^\top_1\boldsymbol{\Lambda}_{62}\frac{\sigma_k}{\sqrt{2\pi}}\exp\left\{-\frac{(w^{\mathcal{T}}_{jk}-\mu_A)^2}{2\sigma^2_k}\right\}\right]\\
         &-\exp\left\{\frac{3\sigma^2_k-2\sqrt{3}\gamma_k\left(w^{\mathcal{T}}_{jk}-\mu_k\right)}{2\gamma_k^2}\right\}\left[\mathbf{E}^\top_2\boldsymbol{\Lambda}_{71}\Phi\left(\frac{w^{\mathcal{T}}_{jk}-\mu_B}{\sigma_k}\right)+\mathbf{E}^\top_2\boldsymbol{\Lambda}_{72}\frac{\sigma_k}{\sqrt{2\pi}}\exp\left\{-\frac{(w^{\mathcal{T}}_{jk}-\mu_B)^2}{2\sigma^2_k}\right\}\right],
\end{align*}
where
\begin{align*}
h_{\zeta}\left(x_1,\,x_2\right)=&\exp\left\{\frac{6\sigma_k^2+\sqrt{3}\gamma_k\left(x_1+x_2-2\mu_k\right)}{\gamma_k^2}\right\}\nonumber\\
        &\qquad\times\left[\mathbf{E}^\top_3\boldsymbol{\Lambda}_{31}\Phi\left(\frac{\mu_C-x_2}{\sigma_k}\right)+\mathbf{E}^\top_3\boldsymbol{\Lambda}_{32}\frac{\sigma_k}{\sqrt{2\pi}}\exp\left\{-\frac{(x_2-\mu_{C})^2}{2\sigma^2_k}\right\}\right]\nonumber\\
        &+\exp\left\{-\frac{\sqrt{3}\left(x_2-x_1\right)}{\gamma_k}\right\}\Bigg[\mathbf{E}^\top_4\boldsymbol{\Lambda}_{41}\left(\Phi\left(\frac{x_2-\mu_k}{\sigma_k}\right)-\Phi\left(\frac{x_1-\mu_k}{\sigma_k}\right)\right)\nonumber\\
        &\qquad+\mathbf{E}^\top_4\boldsymbol{\Lambda}_{42}\frac{\sigma_k}{\sqrt{2\pi}}\exp\left\{-\frac{(x_1-\mu_k)^2}{2\sigma^2_k}\right\}-\mathbf{E}^\top_4\boldsymbol{\Lambda}_{43}\frac{\sigma_k}{\sqrt{2\pi}}\exp\left\{-\frac{(x_2-\mu_k)^2}{2\sigma^2_k}\right\}\Bigg]\nonumber\\
        &+\exp\left\{\frac{6\sigma_k^2-\sqrt{3}\gamma_k\left(x_1+x_2-2\mu_k\right)}{\gamma_k^2}\right\}\nonumber\\
        &\qquad\times\left[\mathbf{E}^\top_5\boldsymbol{\Lambda}_{51}\Phi\left(\frac{x_1-\mu_D}{\sigma_k}\right)+\mathbf{E}^\top_5\boldsymbol{\Lambda}_{52}\frac{\sigma_k}{\sqrt{2\pi}}\exp\left\{-\frac{(x_1-\mu_{D})^2}{2\sigma^2_k}\right\}\right]
    \end{align*}
and
\begin{itemize}
         \item $\boldsymbol{\Lambda}_{11}=[1,\,\mu_A]^\top$, $\boldsymbol{\Lambda}_{12}=[0,\,1]^\top$, $\boldsymbol{\Lambda}_{21}=[1,\,-\mu_B]^\top$ and $\boldsymbol{\Lambda}_{22}=[0,\,1]^\top$;
         \item $\boldsymbol{\Lambda}_{31}=[1,\,\mu_C,\,\mu_C^2+\sigma^2_k]^\top$ and $\boldsymbol{\Lambda}_{32}=[0,\,1,\,\mu_C+x_2]^\top$; 
         \item $\boldsymbol{\Lambda}_{41}=[1,\,\mu_k,\,\mu_k^2+\sigma^2_k]^\top$, $\boldsymbol{\Lambda}_{42}=[0,\,1,\,\mu_k+x_1]^\top$ and $\boldsymbol{\Lambda}_{43}=[0,\,1,\,\mu_k+x_2]^\top$;
         \item $\boldsymbol{\Lambda}_{51}=[1,\,-\mu_D,\,\mu_D^2+\sigma^2_k]^\top$ and $\boldsymbol{\Lambda}_{52}=[0,\,1,\,-\mu_D-x_1]^\top\,$;
         \item $\boldsymbol{\Lambda}_{61}=[\mu_A,\,\mu^2_A+\sigma^2_k]^\top$ and $\boldsymbol{\Lambda}_{62}=[1,\,\mu_A+w^{\mathcal{T}}_{jk}]^\top$;
         \item $\boldsymbol{\Lambda}_{71}=[-\mu_B,\,\mu^2_B+\sigma^2_k]^\top$ and $\boldsymbol{\Lambda}_{72}=[1,\,-\mu_B-w^{\mathcal{T}}_{jk}]^\top$;
         \item $ \mathbf{E}_1=\left[1-\dfrac{\sqrt{3}w^{\mathcal{T}}_{ik}}{\gamma_k},\,\dfrac{\sqrt{3}}{\gamma_k}\right]^\top$ and $\mathbf{E}_2=\left[1+\dfrac{\sqrt{3}w^{\mathcal{T}}_{ik}}{\gamma_k},\,\dfrac{\sqrt{3}}{\gamma_k}\right]^\top$;
         \item $\mathbf{E}_3=\left[1+\dfrac{3x_1x_2-\sqrt{3}\gamma_k\left(x_1+x_2\right)}{\gamma_k^2},\,\dfrac{2\sqrt{3}\gamma_k-3\left(x_1+x_2\right)}{\gamma_k^2},\,\dfrac{3}{\gamma_k^2}\right]^\top$;
         \item $\mathbf{E}_4=\left[1+\dfrac{\sqrt{3}\gamma_k\left(x_2-x_1\right)-3x_1x_2}{\gamma_k^2},\,\dfrac{3\left(x_1+x_2\right)}{\gamma_k^2},\,-\dfrac{3}{\gamma_k^2}\right]^\top$;
         \item $\mathbf{E}_5=\left[1+\dfrac{3x_1x_2+\sqrt{3}\gamma_k\left(x_1+x_2\right)}{\gamma_k^2},\,\dfrac{2\sqrt{3}\gamma_k+3\left(x_1+x_2\right)}{\gamma_k^2},\,\dfrac{3}{\gamma_k^2}\right]^\top$;
         \item $\mu_A=\mu_k-\dfrac{\sqrt{3}\sigma^2_k}{\gamma_k}$, $\mu_B=\mu_k+\dfrac{\sqrt{3}\sigma^2_k}{\gamma_k}$, $\mu_{C}=\mu_k-\dfrac{2\sqrt{3}{\sigma^2_k}}{\gamma_k}$, $\mu_D=\mu_k+\dfrac{2\sqrt{3}{\sigma^2_k}}{\gamma_k}$.
\end{itemize}

\subsection{Mat\'{e}rn-2.5 Case}

\begin{align*}
\xi_{ik}=&\exp\left\{\frac{5\sigma^2_k+2\sqrt{5}\gamma_k\left(w^{\mathcal{T}}_{ik}-\mu_k\right)}{2\gamma_k^2}\right\}\left[\mathbf{E}^\top_1\boldsymbol{\Lambda}_{11}\Phi\left(\frac{\mu_A-w^{\mathcal{T}}_{ik}}{\sigma_k}\right)+\mathbf{E}^\top_1\boldsymbol{\Lambda}_{12}\frac{\sigma_k}{\sqrt{2\pi}}\exp\left\{-\frac{(w^{\mathcal{T}}_{ik}-\mu_A)^2}{2\sigma^2_k}\right\}\right]\\
         &+\exp\left\{\frac{5\sigma^2_k-2\sqrt{5}\gamma_k\left(w^{\mathcal{T}}_{ik}-\mu_k\right)}{2\gamma_k^2}\right\}\left[\mathbf{E}^\top_2\boldsymbol{\Lambda}_{21}\Phi\left(\frac{w^{\mathcal{T}}_{ik}-\mu_B}{\sigma_k}\right)+\mathbf{E}^\top_2\boldsymbol{\Lambda}_{22}\frac{\sigma_k}{\sqrt{2\pi}}\exp\left\{-\frac{(w^{\mathcal{T}}_{ik}-\mu_B)^2}{2\sigma^2_k}\right\}\right],\\[10pt]
\zeta_{ijk}=&\begin{cases*}
	h_{\zeta}\left(w^{\mathcal{T}}_{ik},\,w^{\mathcal{T}}_{jk}\right), & $w^{\mathcal{T}}_{jk}\geq w^{\mathcal{T}}_{ik}$\;,\\
	h_{\zeta}\left(w^{\mathcal{T}}_{jk},\,w^{\mathcal{T}}_{ik}\right), & $w^{\mathcal{T}}_{jk}<w^{\mathcal{T}}_{ik}$\;,\\
    \end{cases*}\\
 \psi_{jk}=&\exp\left\{\frac{5\sigma^2_k+2\sqrt{5}\gamma_k\left(w^{\mathcal{T}}_{jk}-\mu_k\right)}{2\gamma_k^2}\right\}\left[\mathbf{E}^\top_1\boldsymbol{\Lambda}_{61}\Phi\left(\frac{\mu_A-w^{\mathcal{T}}_{jk}}{\sigma_k}\right)+\mathbf{E}^\top_1\boldsymbol{\Lambda}_{62}\frac{\sigma_k}{\sqrt{2\pi}}\exp\left\{-\frac{(w^{\mathcal{T}}_{jk}-\mu_A)^2}{2\sigma^2_k}\right\}\right]\\
 &-\exp\left\{\frac{5\sigma^2_k-2\sqrt{5}\gamma_k\left(w^{\mathcal{T}}_{jk}-\mu_k\right)}{2\gamma_k^2}\right\}\left[\mathbf{E}^\top_2\boldsymbol{\Lambda}_{71}\Phi\left(\frac{w^{\mathcal{T}}_{jk}-\mu_B}{\sigma_k}\right)+\mathbf{E}^\top_2\boldsymbol{\Lambda}_{72}\frac{\sigma_k}{\sqrt{2\pi}}\exp\left\{-\frac{(w^{\mathcal{T}}_{jk}-\mu_B)^2}{2\sigma^2_k}\right\}\right],
\end{align*}
where
\begin{align*}
h_{\zeta}\left(x_1,\,x_2\right)=&\exp\left\{\frac{10\sigma_k^2+\sqrt{5}\gamma_k\left(x_1+x_2-2\mu_k\right)}{\gamma_k^2}\right\}\nonumber\\
        &\qquad\times\left[\mathbf{E}^\top_3\boldsymbol{\Lambda}_{31}\Phi\left(\frac{\mu_C-x_2}{\sigma_k}\right)+\mathbf{E}^\top_3\boldsymbol{\Lambda}_{32}\frac{\sigma_k}{\sqrt{2\pi}}\exp\left\{-\frac{(x_2-\mu_{C})^2}{2\sigma^2_k}\right\}\right]\nonumber\\
        &+\exp\left\{-\frac{\sqrt{5}\left(x_2-x_1\right)}{\gamma_k}\right\}\Bigg[\mathbf{E}^\top_4\boldsymbol{\Lambda}_{41}\left(\Phi\left(\frac{x_2-\mu_k}{\sigma_k}\right)-\Phi\left(\frac{x_1-\mu_k}{\sigma_k}\right)\right)\nonumber\\
        &\qquad+\mathbf{E}^\top_4\boldsymbol{\Lambda}_{42}\frac{\sigma_k}{\sqrt{2\pi}}\exp\left\{-\frac{(x_1-\mu_k)^2}{2\sigma^2_k}\right\}-\mathbf{E}^\top_4\boldsymbol{\Lambda}_{43}\frac{\sigma_k}{\sqrt{2\pi}}\exp\left\{-\frac{(x_2-\mu_k)^2}{2\sigma^2_k}\right\}\Bigg]\nonumber\\
        &+\exp\left\{\frac{10\sigma_k^2-\sqrt{5}\gamma_k\left(x_1+x_2-2\mu_k\right)}{\gamma_k^2}\right\}\nonumber\\
        &\qquad\times\left[\mathbf{E}^\top_5\boldsymbol{\Lambda}_{51}\Phi\left(\frac{x_1-\mu_D}{\sigma_k}\right)+\mathbf{E}^\top_5\boldsymbol{\Lambda}_{52}\frac{\sigma_k}{\sqrt{2\pi}}\exp\left\{-\frac{(x_1-\mu_{D})^2}{2\sigma^2_k}\right\}\right]
\end{align*}
and
\begin{itemize}
    \item $\boldsymbol{\Lambda}_{11}=[1,\,\mu_A,\,\mu^2_A+\sigma^2_k]^\top$ and $\boldsymbol{\Lambda}_{12}=[0,\,1,\,\mu_A+w^{\mathcal{T}}_{ik}]^\top$;
    \item $\boldsymbol{\Lambda}_{21}=[1,\,-\mu_B,\,\mu^2_B+\sigma^2_k]^\top$ and $\boldsymbol{\Lambda}_{22}=[0,\,1,\,-\mu_B-w^{\mathcal{T}}_{ik}]^\top$;
    \item $\boldsymbol{\Lambda}_{31}=[1,\,\mu_C,\,\mu_C^2+\sigma^2_k,\,\mu_C^3+3\sigma^2_k\mu_C,\,\mu_C^4+6\sigma^2_k\mu_C^2+3\sigma_k^4]^\top\,$;
    \item $\boldsymbol{\Lambda}_{32}=[0,\,1,\,\mu_C+x_2,\,\mu_C^2+2\sigma^2_k+x_2^2+\mu_Cx_2,\,\mu_C^3+x_2^3+x_2\mu_C^2+\mu_Cx_2^2+3\sigma_k^2x_2+5\sigma_k^2\mu_C]^\top\,$;
    \item $\boldsymbol{\Lambda}_{41}=[1,\,\mu_k,\,\mu_k^2+\sigma^2_k,\,\mu_k^3+3\sigma^2_k\mu_k,\,\mu_k^4+6\sigma^2_k\mu_k^2+3\sigma_k^4]^\top\,$;
    \item $\boldsymbol{\Lambda}_{42}=[0,\,1,\,\mu_k+x_1,\,\mu_k^2+2\sigma^2_k+x_1^2+\mu_kx_1,\,\mu_k^3+x_1^3+x_1\mu_k^2+\mu_kx_1^2+3\sigma_k^2x_1+5\sigma_k^2\mu_k]^\top\,$;
    \item $\boldsymbol{\Lambda}_{43}=[0,\,1,\,\mu_k+x_2,\,\mu_k^2+2\sigma^2_k+x_2^2+\mu_kx_2,\,\mu_k^3+x_2^3+x_2\mu_k^2+\mu_kx_2^2+3\sigma_k^2x_2+5\sigma_k^2\mu_k]^\top\,$;
    \item $\boldsymbol{\Lambda}_{51}=[1,\,-\mu_D,\,\mu_D^2+\sigma^2_k,\,-\mu_D^3-3\sigma^2_k\mu_D,\,\mu_D^4+6\sigma^2_k\mu_D^2+3\sigma_k^4]^\top\,$;
    \item $\boldsymbol{\Lambda}_{52}=[0,\,1,\,-\mu_D-x_1,\,\mu_D^2+2\sigma^2_k+x_1^2+\mu_Dx_1,\,-\mu_D^3-x_1^3-x_1\mu_D^2-\mu_Dx_1^2-3\sigma_k^2x_1-5\sigma_k^2\mu_D]^\top\,$;
    \item $\boldsymbol{\Lambda}_{61}=[\mu_A,\,\mu^2_A+\sigma^2_k,\,\mu^3_A+3\sigma_k^2\mu_A]^\top\,$;
    \item $\boldsymbol{\Lambda}_{62}=[1,\,\mu_A+w^{\mathcal{T}}_{jk},\,\mu^2_A+2\sigma^2_k+\left(w^{\mathcal{T}}_{jk}\right)^2+\mu_Aw^{\mathcal{T}}_{jk}]^\top\,$;
    \item $\boldsymbol{\Lambda}_{71}=[-\mu_B,\,\mu^2_B+\sigma^2_k,\,-\mu^3_B-3\sigma_k^2\mu_B]^\top\,$;
    \item $\boldsymbol{\Lambda}_{72}=[1,\,-\mu_B-w^{\mathcal{T}}_{jk},\,\mu^2_B+2\sigma^2_k+\left(w^{\mathcal{T}}_{jk}\right)^2+\mu_Bw^{\mathcal{T}}_{jk}]^\top\,$;
    \item $\mathbf{E}_1=\left[1-\dfrac{\sqrt{5}w^{\mathcal{T}}_{ik}}{\gamma_k}+\dfrac{5\left(w^{\mathcal{T}}_{ik}\right)^2}{3\gamma_k^2},\,\dfrac{\sqrt{5}}{\gamma_k}-\dfrac{10w^{\mathcal{T}}_{ik}}{3\gamma_k^2},\,\dfrac{5}{3\gamma_k^2}\right]^\top$;
    \item $\mathbf{E}_2=\left[1+\dfrac{\sqrt{5}w^{\mathcal{T}}_{ik}}{\gamma_k}+\dfrac{5\left(w^{\mathcal{T}}_{ik}\right)^2}{3\gamma_k^2},\,\dfrac{\sqrt{5}}{\gamma_k}+\dfrac{10w^{\mathcal{T}}_{ik}}{3\gamma_k^2},\,\dfrac{5}{3\gamma_k^2}\right]^\top$;
    \item $\mathbf{E}_3=[E_{30},\,E_{31},\,E_{32},\,E_{33},\,E_{34}]^\top\,$;
    \item $\mathbf{E}_4=[E_{40},\,E_{41},\,E_{42},\,E_{43},\,E_{44}]^\top\,$;
    \item $\mathbf{E}_5=[E_{50},\,E_{51},\,E_{52},\,E_{53},\,E_{54}]^\top\,$;
    \item 
    $\begin{aligned}[t]
    E_{30}=&1+\dfrac{25x_1^2x_2^2-3\sqrt{5}\left(3\gamma_k^3+5\gamma_kx_1x_2\right)\left(x_1+x_2\right)+15\gamma_k^2\left(x_1^2+x_2^2+3x_1x_2\right)}{9\gamma_k^4}\\
    E_{31}=&\dfrac{18\sqrt{5}\gamma_k^3+15\sqrt{5}\gamma_k\left(x_1^2+x_2^2\right)-(75\gamma_k^2+50x_1x_2)\left(x_1+x_2\right)+60\sqrt{5}\gamma_kx_1x_2}{9\gamma_k^4}\\
    E_{32}=&\dfrac{5\left[5x_1^2+5x_2^2+15\gamma_k^2-9\sqrt{5}\gamma_k\left(x_1+x_2\right)+20x_1x_2\right]}{9\gamma_k^4}\\
    E_{33}=&\dfrac{10\left(3\sqrt{5}\gamma_k-5x_1-5x_2\right)}{9\gamma_k^4}\quad\mathrm{and}\quad
    E_{34}=\dfrac{25}{9\gamma_k^4};
    \end{aligned}$
    \item
    $\begin{aligned}[t]
    E_{40}=&1+\dfrac{25x_1^2x_2^2+3\sqrt{5}\left(3\gamma_k^3-5\gamma_kx_1x_2\right)\left(x_2-x_1\right)+15\gamma_k^2\left(x_1^2+x_2^2-3x_1x_2\right)}{9\gamma_k^4}\\
    E_{41}=&\frac{5\left[3\sqrt{5}\gamma_k\left(x_2^2-x_1^2\right)+3\gamma_k^2\left(x_1+x_2\right)-10x_1x_2\left(x_1+x_2\right)\right]}{9\gamma_k^4}\\
    E_{42}=&\dfrac{5\left[5x_1^2+5x_2^2-3\gamma_k^2-3\sqrt{5}\gamma_k\left(x_2-x_1\right)+20x_1x_2\right]}{9\gamma_k^4}\\
    E_{43}=&-\dfrac{50\left(x_1+x_2\right)}{9\gamma_k^4}\quad\mathrm{and}\quad
    E_{44}=\dfrac{25}{9\gamma_k^4};
    \end{aligned}$
    \item
    $\begin{aligned}[t]
    E_{50}=&1+\dfrac{25x_1^2x_2^2+3\sqrt{5}\left(3\gamma_k^3+5\gamma_kx_1x_2\right)\left(x_1+x_2\right)+15\gamma_k^2\left(x_1^2+x_2^2+3x_1x_2\right)}{9\gamma_k^4}\\
    E_{51}=&\dfrac{18\sqrt{5}\gamma_k^3+15\sqrt{5}\gamma_k\left(x_1^2+x_2^2\right)+(75\gamma_k^2+50x_1x_2)\left(x_1+x_2\right)+60\sqrt{5}\gamma_kx_1x_2}{9\gamma_k^4}\\
    E_{52}=&\dfrac{5\left[5x_1^2+5x_2^2+15\gamma_k^2+9\sqrt{5}\gamma_k\left(x_1+x_2\right)+20x_1x_2\right]}{9\gamma_k^4}\\
    E_{53}=&\dfrac{10\left(3\sqrt{5}\gamma_k+5x_1+5x_2\right)}{9\gamma_k^4}\quad\mathrm{and}\quad
    E_{54}=\dfrac{25}{9\gamma_k^4};
    \end{aligned}$
    \item
    $\mu_A=\mu_k-\dfrac{\sqrt{5}\sigma^2_k}{\gamma_k}$, $\mu_B=\mu_k+\dfrac{\sqrt{5}\sigma^2_k}{\gamma_k}$, $\mu_C=\mu_k-\dfrac{2\sqrt{5}{\sigma^2_k}}{\gamma_k}$, $\mu_D=\mu_k+\dfrac{2\sqrt{5}{\sigma^2_k}}{\gamma_k}$. 
\end{itemize}
\end{appendices}

\pagebreak
\begin{center}
\label{supp}
\textbf{\Large Supplementary Materials}
\end{center}
\setcounter{equation}{0}
\setcounter{section}{0}
\makeatletter
\renewcommand{\thesection}{S.\arabic{section}}
\renewcommand\thesubsubsection{\arabic{subsubsection}}
\renewcommand{\theequation}{S\arabic{equation}} 

\section[An Example on the Deficiency of Independent Designs]{An Example on the Deficiency of Independent Designs}
\label{sec:ind_design}
In this section, we illustrate a scenario where the independent designs for the linked GaSP emulation can be problematic. Consider the computer system shown in Figure~\ref{fig:exp_supp}, which consists three computer models with the following analytical functional forms:
\begin{equation*}
f_1=0.5+0.5x\sin(10x),\; f_2=\exp(-10x),\; f_3=\sin\left(\frac{1}{(0.7w_1+0.3)(0.7w_2+0.3)}\right)
\end{equation*}
with $x\in[0,\,1]$.

\begin{figure}[htbp]
\centering
\begin{tikzpicture}[shorten >=1pt,->,draw=black!50, node distance=4.5cm]
    \tikzstyle{every pin edge}=[<-,shorten <=1pt]
    \tikzstyle{neuron}=[circle,fill=black!25,minimum size=17pt,inner sep=0pt]
    \tikzstyle{layer1}=[neuron, fill=green!50];
    \tikzstyle{layer2}=[neuron, fill=red!50];
    \tikzstyle{layer3}=[neuron, fill=blue!50];
    \tikzstyle{annot} = [text width=4em, text centered]

    \node[layer1, pin=left:$x$] (I-1) at (0,-0.75) {$f_1$};
    \node[layer1, pin=left:$x$] (I-2) at (0,-2) {$f_2$};

    \path[yshift=-1.375cm]
            node[layer2, pin={[pin edge={->}]right:$y$}] (H-1) at (4.5cm,0 cm) {$f_3$};
            \path [draw] (I-1) -- (H-1) node[font=\small,midway,fill=white,align=left,sloped] {$w_1$};
            \path [draw] (I-2) -- (H-1) node[font=\small,midway,fill=white,align=left,sloped] {$w_2$};

    \node[annot,above of=I-1, node distance=0.7cm] (hl) {Layer 1};
    \node[annot,right of=hl] {Layer 2};
\end{tikzpicture}
\caption{A synthetic computer system where $f_1$ and $f_2$ are two computer models with a common one-dimensional input but different scalar-valued outputs, and $f_3$ is a computer model with two-dimensional input and one-dimensional output.}
\label{fig:exp_supp}
\end{figure}
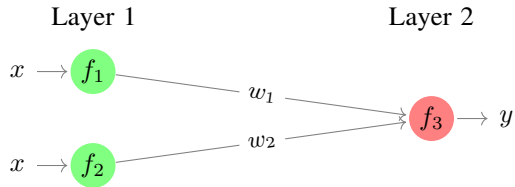

We construct the linked GaSP by building GaSP emulators of individual computer models independently with their own one-shot LHD. It can be seen from Figure~\ref{fig:exp_supp_ind} that ignoring the structural dependence causes a poor LHD of $f_3$, where only one design point falls close to the input space of $f_3$ (see the solid trajectory in Figure~\subref*{fig:exp_supp_ind_f3}) that is significant to the global output, whereas the rest of design points are exploring regions that are insignificant to the global output. As a result, most of the computational resources are wasted and the resulting linked GaSP (see Figure~\subref*{fig:exp_supp_ind_emulator}) is unsatisfactory. It is worth noting that when implementing the LHD for $f_3$ we assume that we have perfect knowledge about the ranges of $w_1$ and $w_2$ that are produced by $f_1$ and $f_2$ (i.e., $w_1\in[0,1]$ and $w_2\in[0,1]$). However, it is often impossible in practice to have good prior knowledge about these ranges and therefore independent designs can result in excessive computational efforts when the input ranges are set too wide or an inadequate linked GaSP when the input ranges are set to narrow. All these mentioned issues related to independent designs could become severer when the input dimensions of individual computer models become high. 

For comparison, Figure~\ref{fig:exp_supp_dep} gives the linked GaSP constructed using the sequential LHD, where the design of $f_3$ is determined by propagating the one-shot LHD of the global input $x$ through $f_1$ and $f_2$. It is apparent that by taking the system structure into account, the design for $f_3$ only explores the region that is significant to the global out (i.e., all training points in Figure~\subref*{fig:exp_supp_dep_f3} fall on the solid trajectory). Consequently, the resulting linked GaSP (see Figure~\subref*{fig:exp_supp_dep_emulator}) provides a much better approximation to the underlying system. 

\begin{figure}[htbp]
\centering 
\subfloat[$f_1$]{\label{fig:exp_supp_ind_f1}\includegraphics[width=0.25\linewidth]{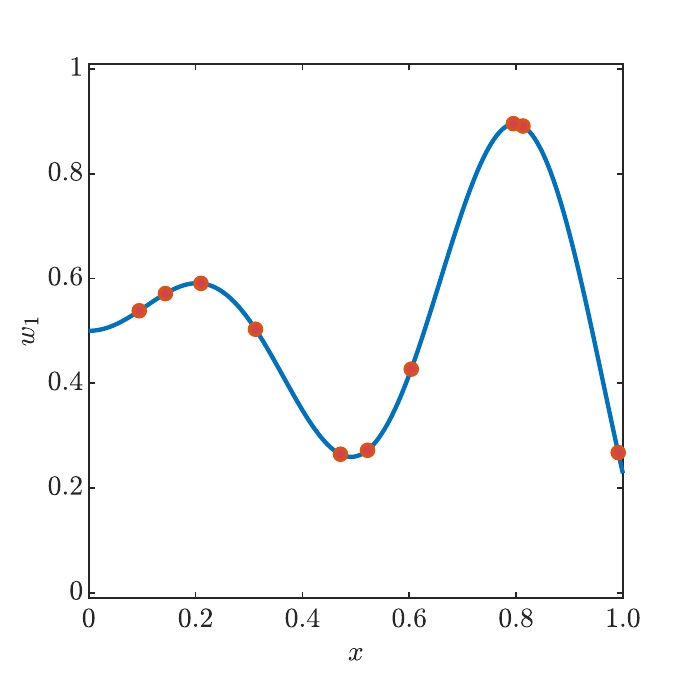}} 
\subfloat[$f_2$]{\label{fig:exp_supp_ind_f2}\includegraphics[width=0.25\linewidth]{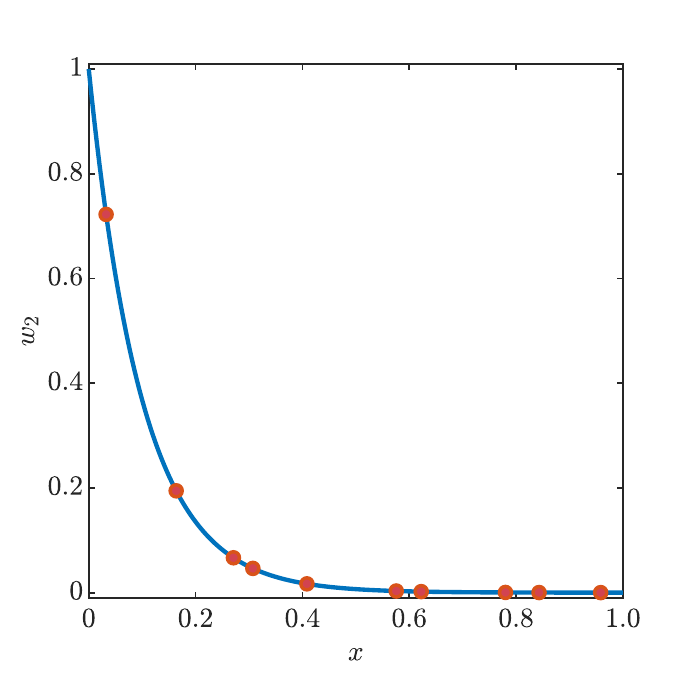}}
\subfloat[$f_3$]{\label{fig:exp_supp_ind_f3}\includegraphics[width=0.25\linewidth]{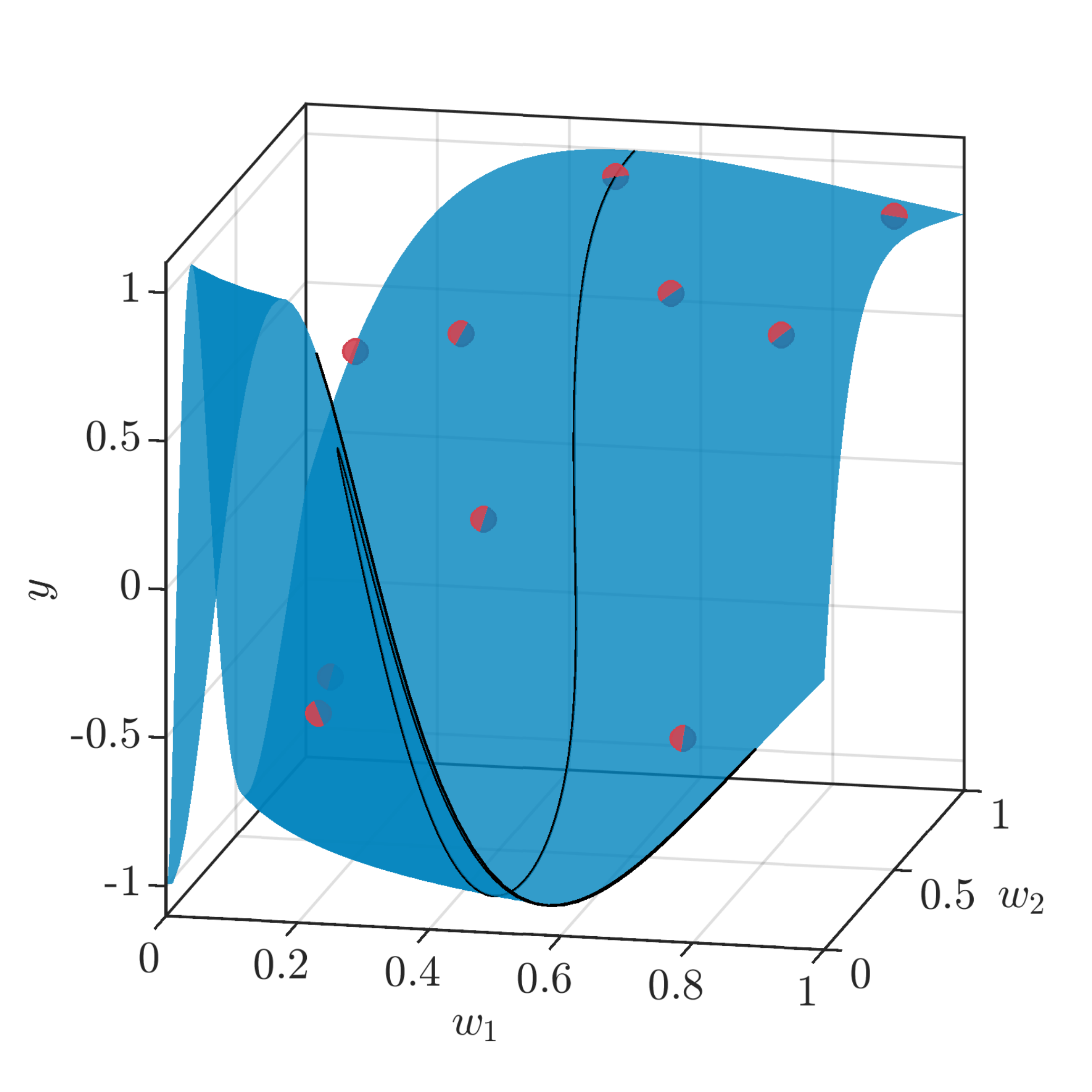}}
\subfloat[linked GaSP]{\label{fig:exp_supp_ind_emulator}\includegraphics[width=0.25\linewidth]{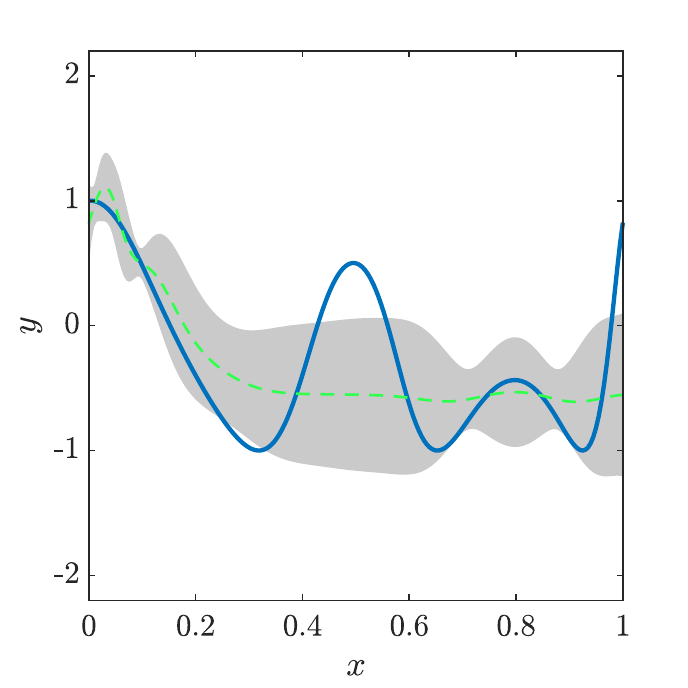}}
\caption{The linked GaSP constructed using the independent LHD. The solid lines in (a), (b) and (d) are true functional forms of $f_1$, $f_2$ and the coupled system; the surface in (c) is the true functional form of $f_3$; the solid trajectory on the surface in (c) corresponds to the region of $f_3$ that has impact on the global output given the interested range of the global input $x$. The dashed line and shaded area in (d) represent the mean and predictive interval of the constructed linked GaSP. The filled circles are training points generated by the LHD to construct the GaSP emulators of individual sub-models.}
\label{fig:exp_supp_ind}
\end{figure}

\begin{figure}[htbp]
\centering 
\subfloat[$f_1$]{\label{fig:exp_supp_dep_f1}\includegraphics[width=0.25\linewidth]{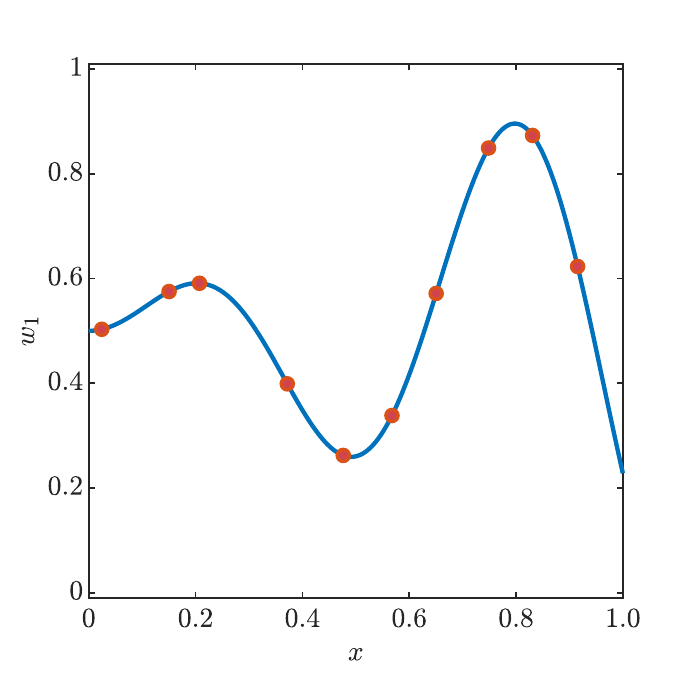}} 
\subfloat[$f_2$]{\label{fig:exp_supp_dep_f2}\includegraphics[width=0.25\linewidth]{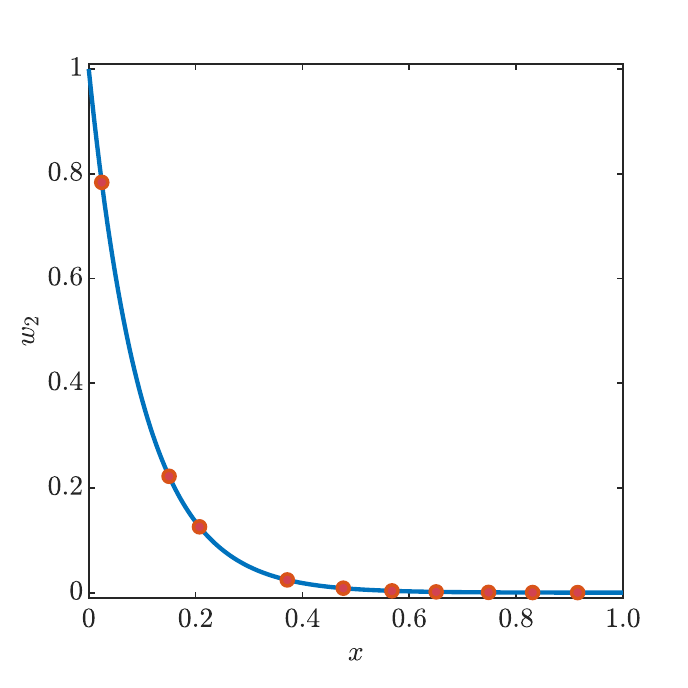}}
\subfloat[$f_3$]{\label{fig:exp_supp_dep_f3}\includegraphics[width=0.25\linewidth]{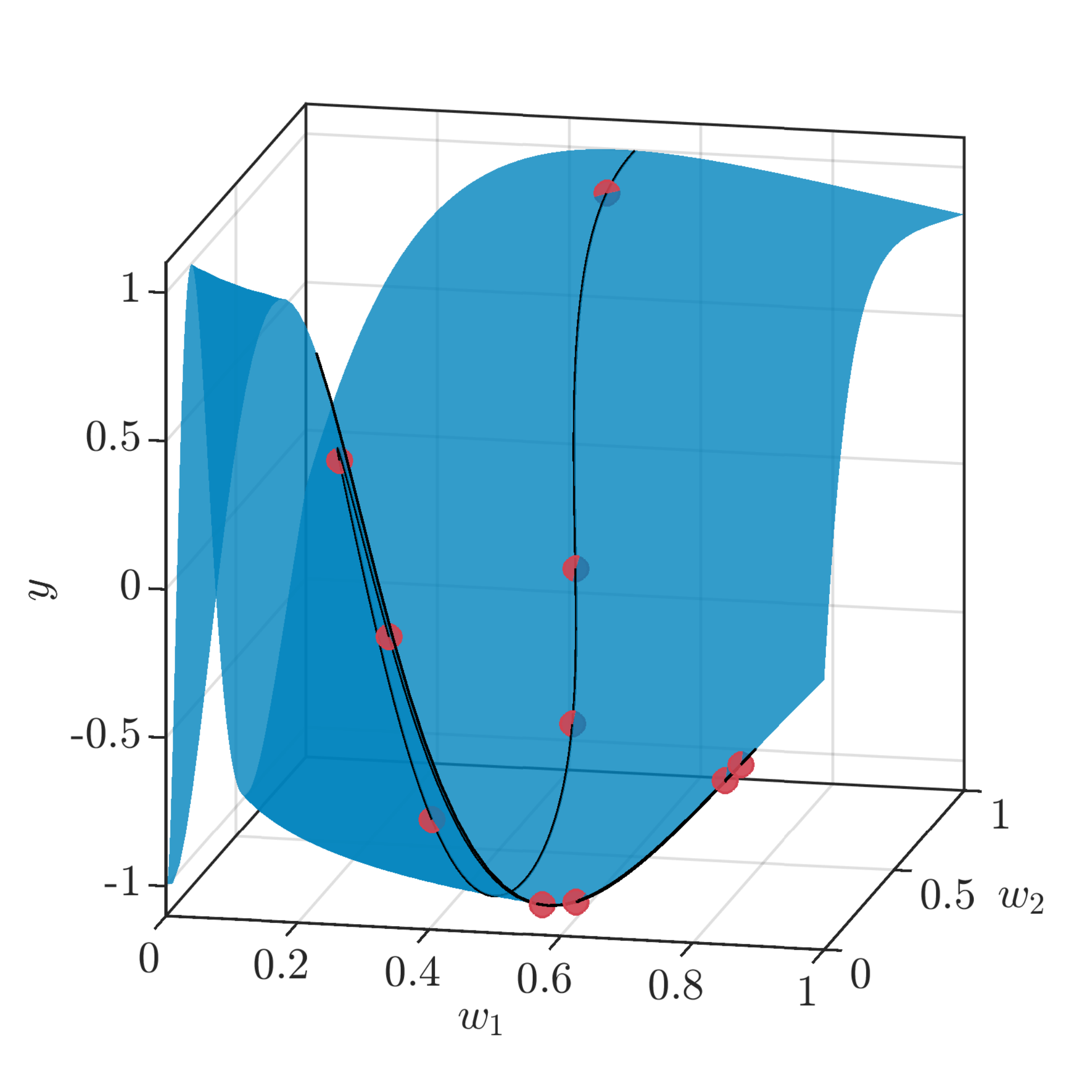}}
\subfloat[linked GaSP]{\label{fig:exp_supp_dep_emulator}\includegraphics[width=0.25\linewidth]{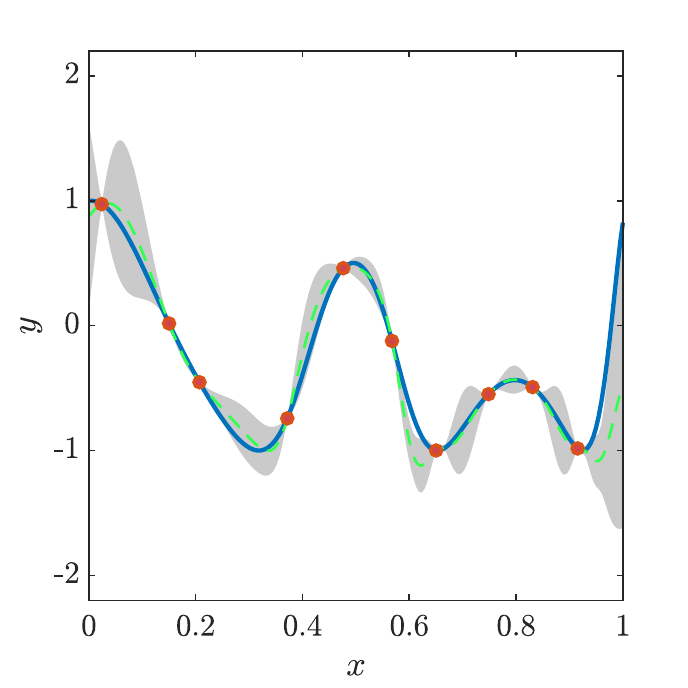}}
\caption{The linked GaSP constructed using the sequential LHD, where the design of $f_3$ is determined by propagating the LHD on the global input $x$ through $f_1$ and $f_2$. The solid lines in (a), (b) and (d) are true functional forms of $f_1$, $f_2$ and the coupled system; the surface in (c) is the true functional form of $f_3$; the solid trajectory on the surface in (c) corresponds to the region of $f_3$ that has impact on the global output given the interested range of the global input $x$. The dashed line and shaded area in (d) represent the mean and predictive interval of the constructed linked GaSP. The filled circles are training points.}
\label{fig:exp_supp_dep}
\end{figure}

\section[Diagnosis of Significance of Dependence among Outputs of Feeding Computer Models]{Diagnosis of Significance of Dependence among Outputs of Feeding Computer Models}
\label{sec:dependence}
In this section, we review the result given in~\cite{kyzyurova2018coupling} that can be used to diagnose whether the ignorance of dependence between the outputs of computer models in the feeding layers has significant impacts on the resultant linked GaSP. We reproduce the following theorem of~\cite{kyzyurova2018coupling} with proof and in consistency with our notations.

\begin{theorem}
\label{thm:fullcov}
Replace Assumption~\ref{ass:2} by the following assumption:
\begin{equation*}
\mathbf{W}\sim\mathcal{MN}(\boldsymbol{\mu},\boldsymbol{\Sigma}),
\end{equation*}
where $\boldsymbol{\Sigma}$ is the covariance matrix of $\mathbf{W}$ with diagonal elements being $\sigma^2_1(\mathbf{x}_1),\dots,\sigma^2_d(\mathbf{x}_d)$. Then, when $\widehat{g}$ is built with the squared exponential kernel, the mean and variance of the linked emulator are given by those from Theorem~\ref{thm:main} with $\boldsymbol{\Omega}=\boldsymbol{\Sigma}$ and
\begin{itemize}
\item the $i$-th element of $\mathbf{I}$:
\begin{equation*}
    I_i=\widetilde{\xi}_{i}\prod_{k=1}^pc_k(z_k,z^{\mathcal{T}}_{ik}),
\end{equation*}
where 
\begin{equation*}
\widetilde{\xi}_{i}=\frac{1}{\sqrt{|(\boldsymbol\Lambda+\boldsymbol\Sigma)\boldsymbol\Lambda^{-1}|}}\exp\left\{-\frac{1}{2}(\boldsymbol{\omega}^{\mathcal{T}}_i-\boldsymbol{\mu})^\top(\boldsymbol{\Lambda}+\boldsymbol{\Sigma})^{-1}(\boldsymbol{\omega}^{\mathcal{T}}_i-\boldsymbol{\mu})\right\}
\end{equation*}
with $\boldsymbol{\Lambda}=\mathrm{diag}(\frac{\gamma^2_1}{2},\dots,\frac{\gamma^2_d}{2})$;
\item the $ij$-th element of $\mathbf{J}$:
\begin{equation*}
J_{ij}=\widetilde{\zeta}_{ij}\prod_{k=1}^p c_k(z_k,\,z^{\mathcal{T}}_{ik})\,c_k(z_k,\,z^{\mathcal{T}}_{jk}),
\end{equation*} 
where 
\begin{multline*}
\widetilde{\zeta}_{ij}=\frac{1}{\sqrt{|(\boldsymbol\Gamma+\boldsymbol\Sigma)\boldsymbol\Gamma^{-1}|}}\exp\left\{-\frac{1}{8}(\boldsymbol{\omega}^{\mathcal{T}}_i-\boldsymbol{\omega}^{\mathcal{T}}_j)^\top\boldsymbol{\Gamma}^{-1}(\boldsymbol{\omega}^{\mathcal{T}}_i-\boldsymbol{\omega}^{\mathcal{T}}_j)\right\}\\
\times\exp\left\{-\frac{1}{2}\left(\frac{\boldsymbol{\omega}^{\mathcal{T}}_i+\boldsymbol{\omega}^{\mathcal{T}}_j}{2}-\boldsymbol{\mu}\right)^\top(\boldsymbol{\Gamma}+\boldsymbol{\Sigma})^{-1}\left(\frac{\boldsymbol{\omega}^{\mathcal{T}}_i+\boldsymbol{\omega}^{\mathcal{T}}_j}{2}-\boldsymbol{\mu}\right)\right\}  
\end{multline*}
with $\boldsymbol{\Gamma}=\mathrm{diag}(\frac{\gamma^2_1}{4},\dots,\frac{\gamma^2_d}{4})$;
\item the $lj$-th elemen of $\mathbf{B}$:
\begin{equation*}
B_{lj}=\widetilde{\psi}_{jl}\prod_{k=1}^pc_k(z_k,z^{\mathcal{T}}_{jk}),
\end{equation*}
where 
\begin{equation*}
\widetilde{\psi}_{jl}=\mathbf{e}_l[\boldsymbol{\Lambda}(\boldsymbol{\Lambda}+\boldsymbol{\Sigma})^{-1}\boldsymbol{\mu}+\boldsymbol{\Sigma}(\boldsymbol{\Lambda}+\boldsymbol{\Sigma})^{-1}\boldsymbol{\omega}^{\mathcal{T}}_j]\,\widetilde{\xi}_{j}.
\end{equation*}
\end{itemize}
\end{theorem}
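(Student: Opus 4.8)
The plan is to recycle the proof of Theorem~\ref{thm:main} almost verbatim, replacing only the handful of $\mathbf{W}$-expectations on which it rests. A first pass through that proof (Section~\ref{sec:thmproof} of the supplement) shows that Assumption~\ref{ass:2} is invoked solely to evaluate three quantities, $\mathbb{E}[c(\mathbf{W},\boldsymbol{\omega}^{\mathcal{T}}_i)]$, $\mathbb{E}[c(\mathbf{W},\boldsymbol{\omega}^{\mathcal{T}}_i)\,c(\mathbf{W},\boldsymbol{\omega}^{\mathcal{T}}_j)]$ and $\mathbb{E}[W_l\,c(\mathbf{W},\boldsymbol{\omega}^{\mathcal{T}}_j)]$, together with the first two moments $\mathbb{E}[\mathbf{W}]=\boldsymbol{\mu}$ and $\mathbb{E}[\mathbf{W}\mathbf{W}^\top]=\boldsymbol{\Sigma}+\boldsymbol{\mu}\boldsymbol{\mu}^\top$; the product structure $I_i=\prod_k\xi_{ik}$, $J_{ij}=\prod_k\zeta_{ijk}$, $B_{lj}=\psi_{jl}\prod_{k\ne l}\xi_{jk}$ is merely a by-product of independence combined with the multiplicative kernel. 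No higher moments and no tail behaviour of $\mathbf{W}$ ever enter, so joint normality is all that is needed. It therefore suffices to recompute these three expectations for $\mathbf{W}\sim\mathcal{MN}(\boldsymbol{\mu},\boldsymbol{\Sigma})$ under the squared exponential kernel, to note that $\boldsymbol{\Omega}=\mathrm{Cov}(\mathbf{W})=\boldsymbol{\Sigma}$, and to substitute; the deterministic factors $\prod_k c_k(z_k,z^{\mathcal{T}}_{ik})$ carried by $\mathbf{I},\mathbf{J},\mathbf{B}$ are unchanged, and $\mathbf{A},\mathbf{Q},\mathbf{C},\mathbf{G},\mathbf{K},\mathbf{P}$ and the scalar prefactors are formally identical.

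For the expectations I would exploit that, for the squared exponential kernel, $c(\mathbf{W},\boldsymbol{\omega}^{\mathcal{T}}_i)=\exp\{-\tfrac12(\mathbf{W}-\boldsymbol{\omega}^{\mathcal{T}}_i)^\top\boldsymbol{\Lambda}^{-1}(\mathbf{W}-\boldsymbol{\omega}^{\mathcal{T}}_i)\}$ is, up to the constant $(2\pi)^{d/2}|\boldsymbol{\Lambda}|^{1/2}$, a Gaussian density in $\mathbf{W}$ with mean $\boldsymbol{\omega}^{\mathcal{T}}_i$ and covariance $\boldsymbol{\Lambda}=\mathrm{diag}(\gamma_1^2/2,\dots,\gamma_d^2/2)$. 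Then $\widetilde{\xi}_i=\mathbb{E}[c(\mathbf{W},\boldsymbol{\omega}^{\mathcal{T}}_i)]$ follows from the Gaussian convolution identity $\int\mathcal{N}(\mathbf{w};\mathbf{a},\mathbf{A})\mathcal{N}(\mathbf{w};\mathbf{b},\mathbf{B})\,\mathrm{d}\mathbf{w}=\mathcal{N}(\mathbf{a};\mathbf{b},\mathbf{A}+\mathbf{B})$, which gives the stated formula after rewriting $|\boldsymbol{\Lambda}|^{1/2}/|\boldsymbol{\Lambda}+\boldsymbol{\Sigma}|^{1/2}=|(\boldsymbol{\Lambda}+\boldsymbol{\Sigma})\boldsymbol{\Lambda}^{-1}|^{-1/2}$. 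For $\widetilde{\zeta}_{ij}$ I would complete the square in the exponent of $c(\mathbf{W},\boldsymbol{\omega}^{\mathcal{T}}_i)c(\mathbf{W},\boldsymbol{\omega}^{\mathcal{T}}_j)$, splitting it into the constant $\exp\{-\tfrac18(\boldsymbol{\omega}^{\mathcal{T}}_i-\boldsymbol{\omega}^{\mathcal{T}}_j)^\top\boldsymbol{\Gamma}^{-1}(\boldsymbol{\omega}^{\mathcal{T}}_i-\boldsymbol{\omega}^{\mathcal{T}}_j)\}$ and a single Gaussian kernel in $\mathbf{W}$ with mean $(\boldsymbol{\omega}^{\mathcal{T}}_i+\boldsymbol{\omega}^{\mathcal{T}}_j)/2$ and covariance $\boldsymbol{\Gamma}=\mathrm{diag}(\gamma_1^2/4,\dots,\gamma_d^2/4)$, then apply the same convolution identity. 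For $\widetilde{\psi}_{jl}$ I would write $\mathcal{N}(\mathbf{w};\boldsymbol{\mu},\boldsymbol{\Sigma})\,\mathcal{N}(\mathbf{w};\boldsymbol{\omega}^{\mathcal{T}}_j,\boldsymbol{\Lambda})$ as $\mathcal{N}(\boldsymbol{\mu};\boldsymbol{\omega}^{\mathcal{T}}_j,\boldsymbol{\Lambda}+\boldsymbol{\Sigma})$ times a Gaussian density in $\mathbf{w}$ with mean $\mathbf{m}=(\boldsymbol{\Sigma}^{-1}+\boldsymbol{\Lambda}^{-1})^{-1}(\boldsymbol{\Sigma}^{-1}\boldsymbol{\mu}+\boldsymbol{\Lambda}^{-1}\boldsymbol{\omega}^{\mathcal{T}}_j)$, so that $\widetilde{\psi}_{jl}=\mathbf{e}_l^\top\mathbf{m}\,\widetilde{\xi}_j$; the identities $(\boldsymbol{\Sigma}^{-1}+\boldsymbol{\Lambda}^{-1})^{-1}\boldsymbol{\Sigma}^{-1}=\boldsymbol{\Lambda}(\boldsymbol{\Lambda}+\boldsymbol{\Sigma})^{-1}$ and $(\boldsymbol{\Sigma}^{-1}+\boldsymbol{\Lambda}^{-1})^{-1}\boldsymbol{\Lambda}^{-1}=\boldsymbol{\Sigma}(\boldsymbol{\Lambda}+\boldsymbol{\Sigma})^{-1}$ then recast $\mathbf{m}$ into the claimed form.

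Substituting these into Theorem~\ref{thm:main} yields the result: the entries of $\mathbf{I},\mathbf{J},\mathbf{B}$ become $\widetilde{\xi}_i\prod_k c_k(z_k,z^{\mathcal{T}}_{ik})$, $\widetilde{\zeta}_{ij}\prod_k c_k(z_k,z^{\mathcal{T}}_{ik})c_k(z_k,z^{\mathcal{T}}_{jk})$ and $\widetilde{\psi}_{jl}\prod_k c_k(z_k,z^{\mathcal{T}}_{jk})$, while the $\mathbb{E}[\mathbf{W}\mathbf{W}^\top]$-dependence collapses to $\boldsymbol{\Omega}=\boldsymbol{\Sigma}$ since $\mathrm{Var}(\mathbf{W}^\top\widehat{\boldsymbol{\theta}})=\widehat{\boldsymbol{\theta}}^\top\boldsymbol{\Sigma}\widehat{\boldsymbol{\theta}}$ and $\mathrm{blkdiag}(\mathrm{Cov}(\mathbf{W}),\mathbf{0})=\mathbf{P}$. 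The individual Gaussian integrals are routine; the step I expect to be the main obstacle is the first one, namely verifying, by tracing through the proof of Theorem~\ref{thm:main}, that independence is used \emph{only} to factorise the $d$-dimensional expectations and nowhere else — in particular that no term of the form $\mathbb{E}[W_l\,c(\mathbf{W},\boldsymbol{\omega}^{\mathcal{T}}_i)c(\mathbf{W},\boldsymbol{\omega}^{\mathcal{T}}_j)]$ or $\mathbb{E}[W_lW_m\,c(\mathbf{W},\cdot)]$ is ever required, so that the three recomputed expectations plus $\boldsymbol{\Sigma}$ genuinely suffice. (The same observation explains why the argument is confined to the squared exponential kernel: for Mat\'ern kernels the product $\prod_k c_k$ is not proportional to a Gaussian density, so the corresponding expectations over a correlated $\mathbf{W}$ no longer reduce to closed-form Gaussian integrals.)
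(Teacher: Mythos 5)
Your proposal is correct and follows essentially the same route as the paper's proof in Section~\ref{sec:prooffullcov}: the paper likewise recomputes only $\widetilde{\xi}_i$, $\widetilde{\zeta}_{ij}$ and $\widetilde{\psi}_{jl}$ as multivariate Gaussian integrals by completing the square and then applies the Woodbury identity to obtain the $(\boldsymbol{\Lambda}+\boldsymbol{\Sigma})^{-1}$ and $(\boldsymbol{\Gamma}+\boldsymbol{\Sigma})^{-1}$ forms, which is exactly the computation your convolution identity packages. Your explicit check that independence enters the proof of Theorem~\ref{thm:main} only through the factorisation of $\mathbf{I}$, $\mathbf{J}$, $\mathbf{B}$ and the moments $\boldsymbol{\mu}$, $\boldsymbol{\Omega}$ is sound and is the right thing to verify.
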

\begin{proof}
The proof is in Section~\ref{sec:prooffullcov}.
\end{proof}

It can be seen from Theorem~\ref{thm:fullcov} that the covariance matrix $\boldsymbol\Sigma$ appears in the forms of inversions and determinants of  $\boldsymbol{\Lambda}+\boldsymbol{\Sigma}$ and $\boldsymbol{\Gamma}+\boldsymbol{\Sigma}$ in most cases and appears only in these two forms when the trend function is set to a constant (i.e., $\mathbf{B}$ has no effects on the mean and variance of the linked emulator). Thus, how significant the dependence (i.e., the off-diagonal elements of $\boldsymbol{\Sigma}$) between outputs $\mathbf{W}$ is to the  linked emulator depends on the magnitudes of $\gamma^2_1,\dots,\gamma^2_d$. When the magnitudes of $\gamma^2_1,\dots,\gamma^2_d$ are sufficiently large such that $\boldsymbol{\Lambda}+\boldsymbol{\Sigma}$ and $\boldsymbol{\Gamma}+\boldsymbol{\Sigma}$ become diagonally dominant, the inversions and determinants of $\boldsymbol{\Lambda}+\boldsymbol{\Sigma}$ and $\boldsymbol{\Gamma}+\boldsymbol{\Sigma}$ can be well approximated by those of $\boldsymbol{\Lambda}+\mathrm{diag}(\boldsymbol{\Sigma})$ and $\boldsymbol{\Gamma}+\mathrm{diag}(\boldsymbol{\Sigma})$~\citep{demmel1992componentwise,ipsen2011determinant}. As a result, in practice one could first construct GaSP emulators of individual computer models without considering the possible dependence between their outputs, and then check the ratios of $\gamma^2_k$ to $\sigma^2_k$ for all $k=1,\dots,d$ to determine whether the dependence structure is non-negligible. Note that given $\gamma^2_k$, the ratio of $\gamma^2_k$ to $\sigma^2_k$ increases as $\sigma^2_k$ drops. Therefore, at least in the squared exponential case given in Theorem~\ref{thm:fullcov}, one can safely neglect the dependence as long as emulators in the feeding layer are produce small variances at the global input positions to be evaluated. This point is intuitive because when the predictive variances go to zero at a given input position, GaSP emulators converge to the corresponding predictive means and become constants. Therefore, incorporating the dependence structure is unnecessary. Figure~\ref{fig:ratio} presents ratios of the synthetic system in Figure~\ref{fig:exp_supp} at various testing global input positions. It can be seen that for most of the global input positions, ratios of $\gamma^2_k$ to $\sigma^2_k$ for $k=1,\,2$ are higher than $100$, meaning that linked GaSPs can be constructed without the consideration of the dependence between $w_1$ and $w_2$. Even though ratios of $\gamma^2_k$ to $\sigma^2_k$ are relative low over $x\in[0.9,1.0]$, these ratios can be raised by improving the GaSP emulators of $f_1$ and $f_2$ over that region.

\begin{figure}[htbp]
\centering 
\includegraphics[width=0.5\linewidth]{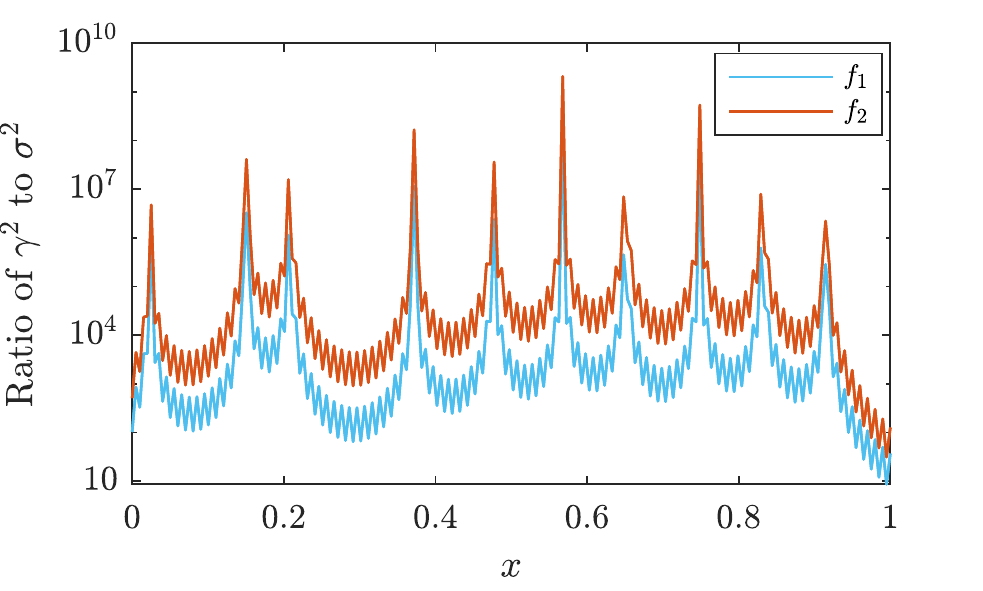}
\caption{Ratios of $\gamma^2_k$ to $\sigma^2_k$ with $k=1,\,2$ for the synthetic system in Figure~\ref{fig:exp_supp}. The upper and lower solid lines give respectively ratios that associate to the outputs of GaSP emulators of $f_2$ and $f_1$, over the global input domain. Ten large spikes correspond to input positions near the training data points. Ratios of $\gamma^2$ to $\sigma^2$ are plotted in log-scale.}
\label{fig:ratio}
\end{figure}

\section[The Approximation Performance of Linked GaSP to Linked Emulator]{The Approximating Performance of Linked GaSP to Linked Emulator}
\label{sec:approx}
In this section, we explore the approximation accuracy of linked GaSP to linked emulator in a three-layered synthetic system shown in Figure~\ref{fig:exp1}. The individual computer models $f_1$, $f_2$ and $f_3$ with scalar-valued output $w_1$, $w_2$ and $y$ respectively are defined by the following analytical forms:
\begin{equation*}
f_1=\sin(\pi x),\quad f_2=\cos(5w_1)\quad\mathrm{and}\quad f_3=\sin(w_2^2),
\end{equation*}
where the global input $x\in[-1,1]$.

\begin{figure}[htbp]
\centering
\begin{tikzpicture}[shorten >=1pt,->,draw=black!50, node distance=4.5cm]
    \tikzstyle{every pin edge}=[<-,shorten <=1pt]
    \tikzstyle{neuron}=[circle,fill=black!25,minimum size=17pt,inner sep=0pt]
    \tikzstyle{layer1}=[neuron, fill=green!50];
    \tikzstyle{layer2}=[neuron, fill=red!50];
    \tikzstyle{layer3}=[neuron, fill=blue!50];
    \tikzstyle{layer4}=[neuron, fill=purple!50];
    \tikzstyle{annot} = [text width=4em, text centered]

    \node[layer1, pin=left:$x$] (I-1) at (0,0) {${f}_1$};
    \node[layer2] (O-1) at (4.5,0) {${f}_2$};
    \node[layer3,pin={[pin edge={->}]right:$y$}] (K-1) at (9cm, 0cm) {${f}_3$};

    \path[draw] (I-1) -- (O-1) node[font=\small,midway,fill=white,align=left,sloped] {$w_1$};
    \path[draw] (O-1) -- (K-1) node[font=\small,midway,fill=white,align=left,sloped] {$w_2$};
    \node[annot,above of=O-1, node distance=0.7cm] (hl) {Layer 2};
    \node[annot,above of=K-1, node distance=0.7cm] (hk) {Layer 3};
    \node[annot,left of=hl] {Layer 1};
\end{tikzpicture}
\caption{A synthetic three-layered computer system with three computer models $f_1$, $f_2$ and $f_3$, all of which have 1-D input and output.}
\label{fig:exp1}
\end{figure}
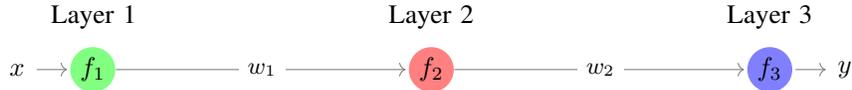 

We draw eight training points from the sequential LHD to construct the linked GaSP and the linked emulator. The linked emulator is represented by $500$ random samples drawn sequentially through GaSP emulators of $f_1$, $f_2$ and $f_3$. Figure~\subref*{fig:exp_supp_density} compares the full probabilistic descriptions between the linked GaSP and linked emulator. It is clear that the linked emulator is not Gaussian distributed because it is skewed and most of its densities are concentrated near zero. As a Gaussian approximation to the linked emulator, the linked GaSP puts some probability masses below zero, giving overestimated and unrealistic uncertainty descriptions of the underlying system at unrealized input positions. This discrepancy on the probability density can cause inaccurate uncertainty assessment based on the linked GaSP if the probability distribution of an emulator is critical, e.g., the tail is of specific interest. However, Figure~\subref*{fig:exp_supp_mean} and~\subref*{fig:exp_supp_std} indicate that the linked GaSP approximates well the mean and variance of the linked emulator. Therefore, if mean and variance are essential quantities of an uncertainty analysis, linked GaSP is an adequate replacement of the linked emulator and one can benefit analytical expressions of the linked GaSP for efficient and effective analysis of the underlying computer system. 

\begin{figure}[htbp]
\centering 
\subfloat[Density]{\label{fig:exp_supp_density}\includegraphics[width=0.33\linewidth]{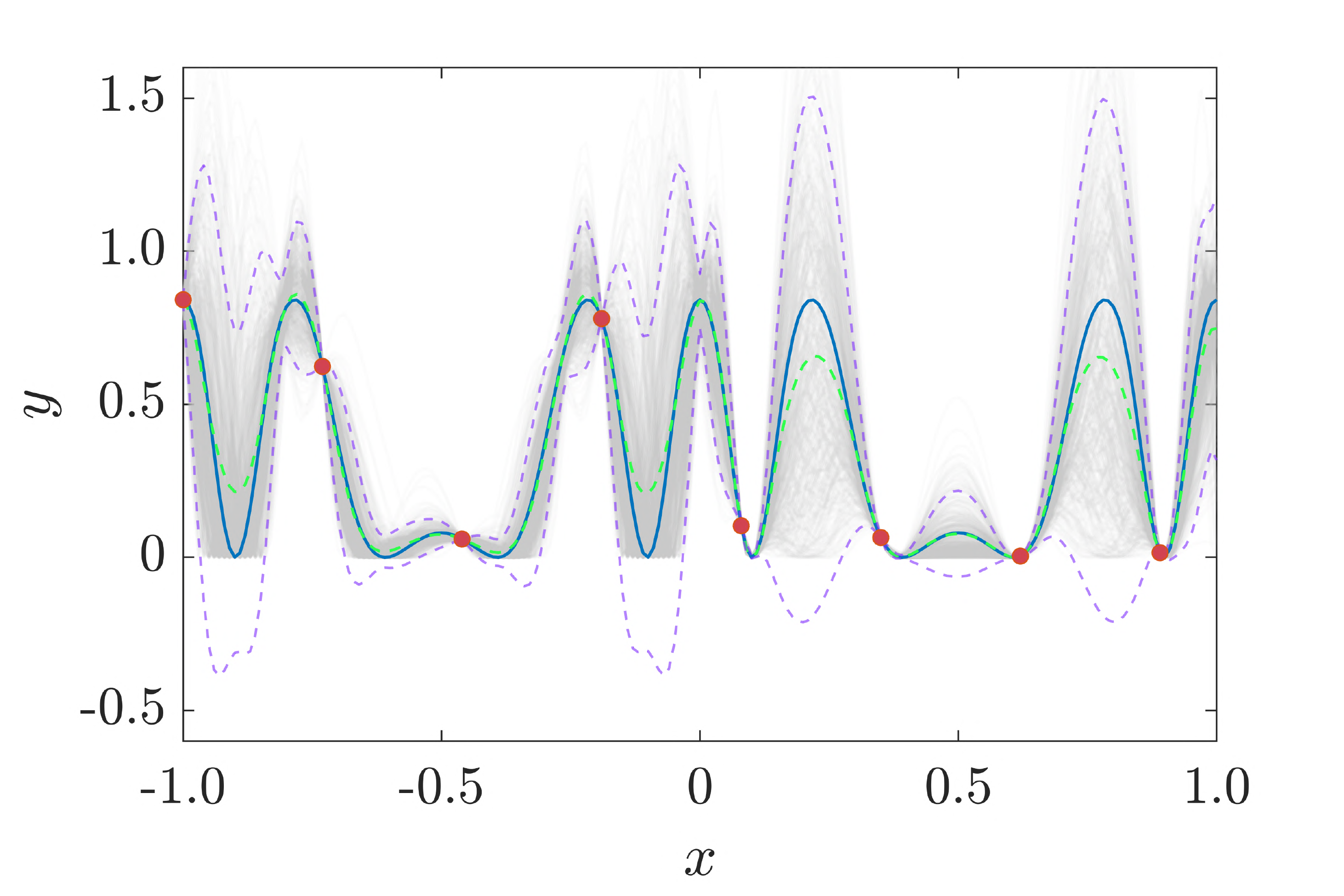}} 
\subfloat[Mean]{\label{fig:exp_supp_mean}\includegraphics[width=0.33\linewidth]{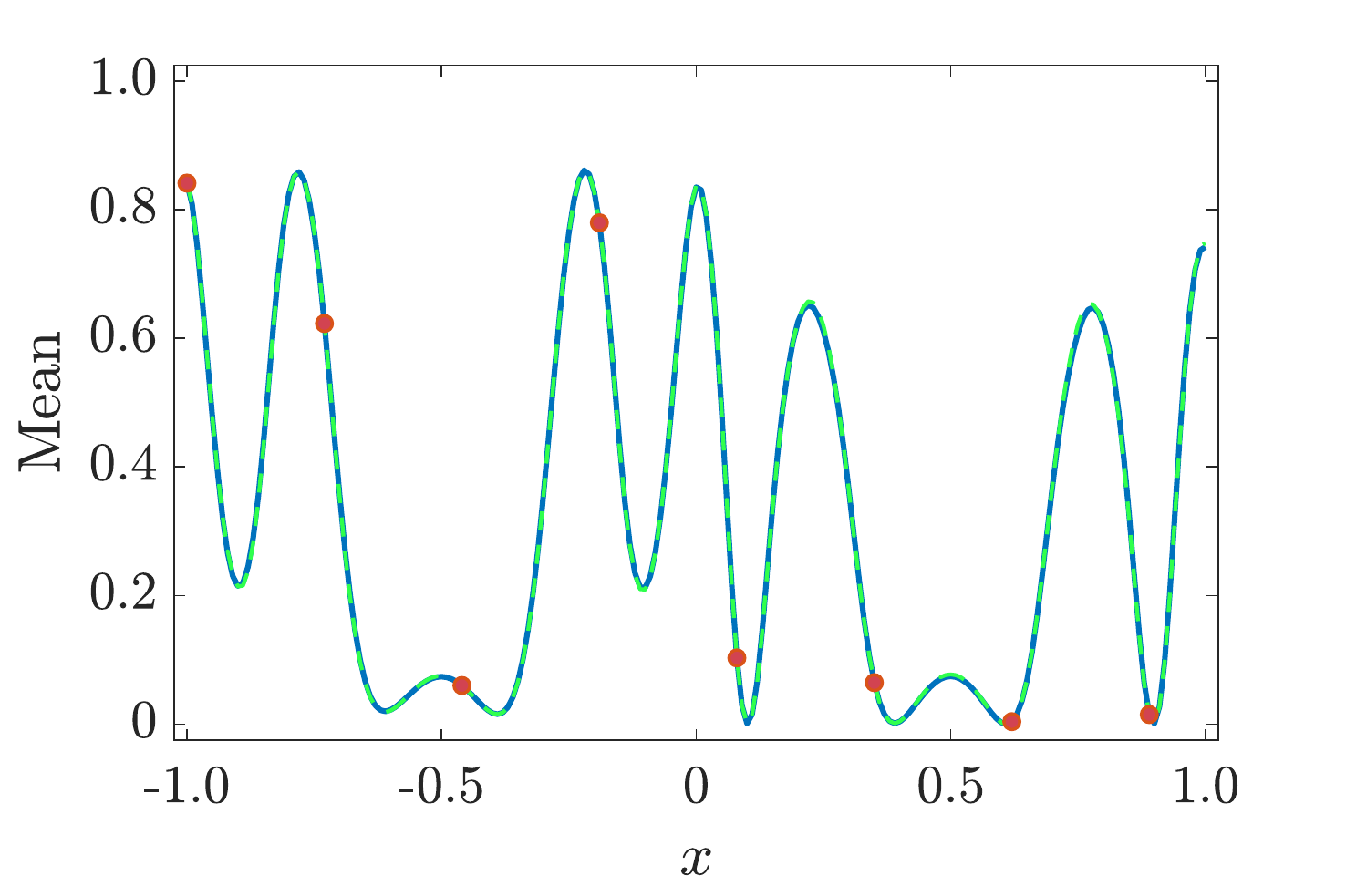}}
\subfloat[Standard Deviation]{\label{fig:exp_supp_std}\includegraphics[width=0.33\linewidth]{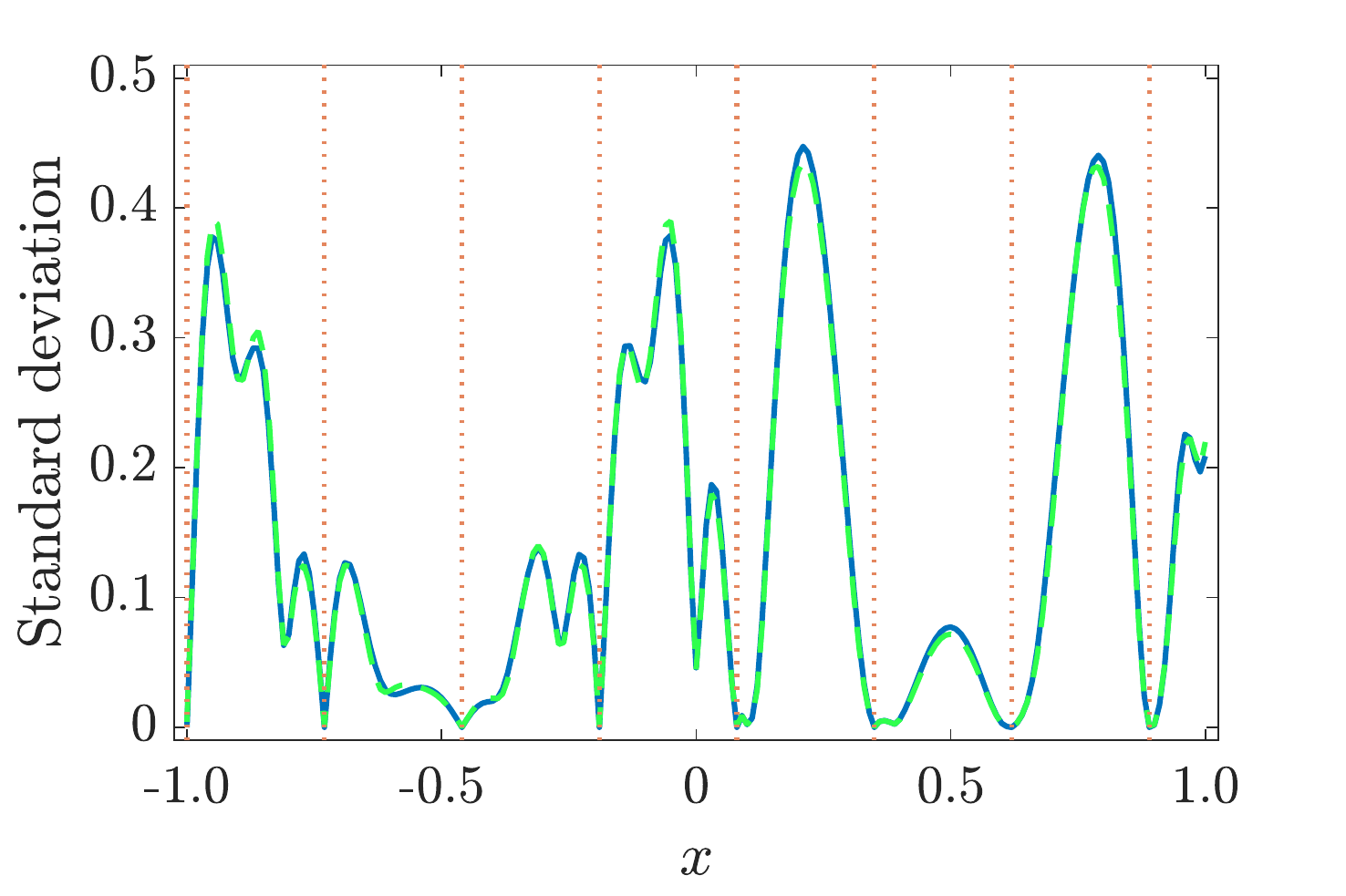}}
\caption{\emph{(a)} Comparison of probability densities of the linked GaSP and the linked emulator. The grey-shaded lines are $500$ random sample paths representing the linked emulator; the blue solid line is the true functional form between the global input and output of the system in Figure~\ref{fig:exp1}; the dashed green line is the mean prediction of the linked GaSP; the dashed purple lines represent $5$-th and $95$-th percentiles of the linked GaSP; the filled circles are training points used to construct the linked GaSP and linked emulator; \emph{(b)} Comparison of predictive mean between the linked GaSP and the linked emulator. The blue solid line is the predictive mean of the linked emulator (that is calculated using the empirical mean of the $500$ sample paths) and the dashed line is the mean of the linked GaSP; the filled circles are training points used to construct the linked GaSP and linked emulator; \emph{(c)} Comparison of standard deviation between the linked GaSP and the linked emulator. The blue solid line is the standard deviation of the linked emulator (that is calculated using the empirical standard deviation of the $500$ sample paths) and the dashed line is the standard deviation of the linked GaSP; the dashed vertical lines indicate the input locations of training points.}
\label{fig:exp_supp_approx}
\end{figure}

\section[Proof of Theorem~3.1]{Proof of Theorem~\ref{thm:main}}
\label{sec:thmproof}
In this section, we prove Theorem~\ref{thm:main} by considering not only the multiplicative form of the kernel function but also the additive form given by
\begin{equation*}
c(\mathbf{X}_i,\,\mathbf{X}_j)=\sum_{k=1}^p c_k(X_{ik},\,X_{jk}).
\end{equation*}

\subsection{\texorpdfstring{{Derivation of $\mu_I$}}{Derivation of the mean}}
We first derive the expression for $\mu_I$. Let $\mu_g(\mathbf{W},\mathbf{z})$ and $\sigma^2_g(\mathbf{W},\mathbf{z})$ be the mean and variance of the GP emulator $\widehat{g}$. Then, by the tower rule, we have
\begin{equation*}
 \mu_I=\mathbb{E}[\mu_g(\mathbf{W},\mathbf{z})],
\end{equation*}
where the expectation is taken respect to $\mathbf{W}$. Replace $\mu_g(\mathbf{W},\mathbf{z})$ by equation~\eqref{eq:krigingmean} with Assumption~\ref{ass:1}, we have
\begin{align}
\label{eq:muL}
 \mu_I=&\mathbb{E}\left[\mathbf{W}^\top\widehat{\boldsymbol{\theta}}+\mathbf{h}(\mathbf{z})^\top\widehat{\boldsymbol{\beta}}+\mathbf{r}^\top(\mathbf{W},\,\mathbf{z})\mathbf{R}^{-1}\left(\mathbf{y}^{\mathcal{T}}-\mathbf{w}^{\mathcal{T}}\widehat{\boldsymbol{\theta}}-\mathbf{H}(\mathbf{z}^{\mathcal{T}})\widehat{\boldsymbol{\beta}}\right)\right]\nonumber\\
 =&\mathbb{E}\left[\mathbf{W}^\top\right]\widehat{\boldsymbol{\theta}}+\mathbf{h}(\mathbf{z})^\top\widehat{\boldsymbol{\beta}}+\mathbb{E}\left[\mathbf{r}^\top(\mathbf{W},\,\mathbf{z})\right]\mathbf{R}^{-1}\left(\mathbf{y}^{\mathcal{T}}-\mathbf{w}^{\mathcal{T}}\widehat{\boldsymbol{\theta}}-\mathbf{H}(\mathbf{z}^{\mathcal{T}})\widehat{\boldsymbol{\beta}}\right)\nonumber\\
 =&\boldsymbol{\mu}^\top\widehat{\boldsymbol{\theta}}+\mathbf{h}(\mathbf{z})^\top\widehat{\boldsymbol{\beta}}+\mathbf{I}^\top\mathbf{A},
\end{align}
where
\begin{itemize}
    \item $\boldsymbol{\mu}=[\mu_1(\mathbf{x}_1),\dots,\mu_d(\mathbf{x}_d)]^\top\in\mathbb{R}^{d\times 1}\,$;
    \item $\mathbf{A}=\mathbf{R}^{-1}\left(\mathbf{y}^{\mathcal{T}}-\mathbf{w}^{\mathcal{T}}\widehat{\boldsymbol{\theta}}-\mathbf{H}(\mathbf{z}^{\mathcal{T}})\widehat{\boldsymbol{\beta}}\right)\in\mathbb{R}^{m\times 1}\,$;
    \item $\left[\widehat{\boldsymbol{\theta}}^\top,\,\widehat{\boldsymbol{\beta}}^\top\right]^\top\eqdef\left(\widetilde{\mathbf{H}}^\top\mathbf{R}^{-1}\widetilde{\mathbf{H}}\right)^{-1}\widetilde{\mathbf{H}}^\top\mathbf{R}^{-1}\mathbf{y}^{\mathcal{T}}$ with $\widetilde{\mathbf{H}}=\left[\mathbf{w}^{\mathcal{T}},\mathbf{H}(\mathbf{z}^{\mathcal{T}})\right]\in\mathbb{R}^{m\times(d+q)}$;
    \item $\mathbf{I}=\mathbb{E}\left[\mathbf{r}(\mathbf{W},\,\mathbf{z})\right]\in\mathbb{R}^{m\times 1}$ with its $i$-th element:
    \begin{align*}
        I_i=&\mathbb{E}\left[c(\mathbf{W},\,\mathbf{w}^{\mathcal{T}}_i)c(\mathbf{z},\,\mathbf{z}^{\mathcal{T}}_i)\right]\\
        =&\mathbb{E}\left[c(\mathbf{W},\,\mathbf{w}^{\mathcal{T}}_i)\right]c(\mathbf{z},\,\mathbf{z}^{\mathcal{T}}_i)\\
        =&\prod_{k=1}^d\mathbb{E}\left[c_k(W_k,\,w^{\mathcal{T}}_{ik})\right]\prod_{k=1}^pc_k(z_k,\,z^{\mathcal{T}}_{ik})\\
        =&\prod_{k=1}^d\xi_{ik} \prod_{k=1}^pc_k(z_k,\,z^{\mathcal{T}}_{ik})
    \end{align*}
    in case of multiplicative form, and
    \begin{align*}
        I_i=&\mathbb{E}\left[c(\mathbf{W},\,\mathbf{w}^{\mathcal{T}}_i)+c(\mathbf{z},\,\mathbf{z}^{\mathcal{T}}_i)\right]\\
        =&\mathbb{E}\left[c(\mathbf{W},\,\mathbf{w}^{\mathcal{T}}_i)\right]+c(\mathbf{z},\,\mathbf{z}^{\mathcal{T}}_i)\\
        =&\sum_{k=1}^d\mathbb{E}\left[c_k(W_k,\,w^{\mathcal{T}}_{ik})\right]+\sum_{k=1}^pc_k(z_k,\,z^{\mathcal{T}}_{ik})\\
        =&\sum_{k=1}^d\xi_{ik}+\sum_{k=1}^pc_k(z_k,\,z^{\mathcal{T}}_{ik})
    \end{align*}
    in case of additive form, where 
    \begin{equation*}
      \xi_{ik}\eqdef\mathbb{E}\left[c_k(W_k,\,w^{\mathcal{T}}_{ik})\right]
    \end{equation*}
    and in the derivation above we use the independence of $W_{i=1,\dots,d}$.
\end{itemize}

\subsection{\texorpdfstring{{Derivation of $\sigma^2_I$}}{Derivation of the variance}}
We now derive the expression for the variance $\sigma^2_I\,$. Using the law of total variance, we have
\begin{align}
\label{eq:sigma2L}
 \sigma^2_I=&\mathbb{E}\left[\sigma^2_g(\mathbf{W},\mathbf{z})\right]+\mathrm{Var}\left(\mu_g(\mathbf{W},\mathbf{z})\right)\nonumber\\
      =&\mathbb{E}\left[\sigma^2_g(\mathbf{W},\mathbf{z})\right]+\mathbb{E}\left[\mu_g^2(\mathbf{W},\mathbf{z})\right]-\mathbb{E}\left[\mu_g(\mathbf{W},\mathbf{z})\right]^2\nonumber\\
      =&\mathbb{E}\left[\sigma^2_g(\mathbf{W},\mathbf{z})\right]+\mathbb{E}\left[\mu_g^2(\mathbf{W},\mathbf{z})\right]-\mu_I^2.
\end{align}

\subsubsection{\texorpdfstring{{Derivation of $\mathbb{E}\left[\mu_g^2(\mathbf{W},\mathbf{z})\right]$}}{Derivation of the variance - 1}} 

Replace $\mu_g(\mathbf{W},\mathbf{z})$ by equation~\eqref{eq:krigingmean}, we have
    \begin{align*}
   \mu_g(\mathbf{W},\mathbf{z})=&\left[\mathbf{W}^\top\widehat{\boldsymbol{\theta}}+\mathbf{h}(\mathbf{z})^\top\widehat{\boldsymbol{\beta}}+\mathbf{r}^\top(\mathbf{W},\,\mathbf{z})\mathbf{R}^{-1}\left(\mathbf{y}^{\mathcal{T}}-\mathbf{w}^{\mathcal{T}}\widehat{\boldsymbol{\theta}}-\mathbf{H}(\mathbf{z}^{\mathcal{T}})\widehat{\boldsymbol{\beta}}\right)\right]^2\\
    =&\mathbf{W}^\top\widehat{\boldsymbol{\theta}}\widehat{\boldsymbol{\theta}}^\top\mathbf{W}+\left(\mathbf{h}(\mathbf{z})^\top\widehat{\boldsymbol{\beta}}\right)^2+2\widehat{\boldsymbol{\theta}}^\top\mathbf{W}\mathbf{h}(\mathbf{z})^\top\widehat{\boldsymbol{\beta}}\\
    &+2\widehat{\boldsymbol{\theta}}^\top\mathbf{W}\mathbf{r}^\top(\mathbf{W},\,\mathbf{z})\mathbf{A}+2\mathbf{h}(\mathbf{z})^\top\widehat{\boldsymbol{\beta}}\mathbf{r}^\top(\mathbf{W},\,\mathbf{z})\mathbf{A}+\mathbf{r}^\top(\mathbf{W},\,\mathbf{z})\mathbf{A}\mathbf{A}^\top\mathbf{r}(\mathbf{W},\,\mathbf{z}).
    \end{align*}
    Then, we have
    \begin{align*}
    \mathbb{E}\left[\mu_g(\mathbf{W},\mathbf{z})^2\right]=&\mathbb{E}\left[\mathbf{W}^\top\widehat{\boldsymbol{\theta}}\widehat{\boldsymbol{\theta}}^\top\mathbf{W}\right]+\left(\mathbf{h}(\mathbf{z})^\top\widehat{\boldsymbol{\beta}}\right)^2+2\widehat{\boldsymbol{\theta}}^\top\mathbb{E}\left[\mathbf{W}\right]\mathbf{h}(\mathbf{z})^\top\widehat{\boldsymbol{\beta}}\nonumber\\
    &+2\widehat{\boldsymbol{\theta}}^\top\mathbb{E}\left[\mathbf{W}\mathbf{r}^\top(\mathbf{W},\,\mathbf{z})\right]\mathbf{A}+2\mathbf{h}(\mathbf{z})^\top\widehat{\boldsymbol{\beta}}\mathbb{E}\left[\mathbf{r}^\top(\mathbf{W},\,\mathbf{z})\right]\mathbf{A}\nonumber\\
    &+\mathbb{E}\left[\mathbf{r}^\top(\mathbf{W},\,\mathbf{z})\mathbf{A}\mathbf{A}^\top\mathbf{r}(\mathbf{W},\,\mathbf{z})\right]\nonumber\\
    =&\mathbb{E}\left[\mathbf{W}^\top\widehat{\boldsymbol{\theta}}\widehat{\boldsymbol{\theta}}^\top\mathbf{W}\right]+\left(\mathbf{h}(\mathbf{z})^\top\widehat{\boldsymbol{\beta}}\right)^2+2\widehat{\boldsymbol{\theta}}^\top\boldsymbol{\mu}\mathbf{h}(\mathbf{z})^\top\widehat{\boldsymbol{\beta}}\nonumber\\
    &+2\widehat{\boldsymbol{\theta}}^\top\mathbf{B}\mathbf{A}+2\mathbf{h}(\mathbf{z})^\top\widehat{\boldsymbol{\beta}}\mathbf{I}^\top\mathbf{A}+\mathbb{E}\left[\mathbf{r}^\top(\mathbf{W},\,\mathbf{z})\mathbf{A}\mathbf{A}^\top\mathbf{r}(\mathbf{W},\,\mathbf{z})\right]\nonumber
    \end{align*}
    The first expectation in the above equation can be solved as follow:
    \begin{align}
    \label{eq:expectation1}
\mathbb{E}\left[\mathbf{W}^\top\widehat{\boldsymbol{\theta}}\widehat{\boldsymbol{\theta}}^\top\mathbf{W}\right]=&\mathrm{tr}\left\{\widehat{\boldsymbol{\theta}}\widehat{\boldsymbol{\theta}}^\top\mathrm{var}(\mathbf{W})\right\}+\mathbb{E}_{\mathbf{W}}\left[\mathbf{W}\right]^\top\widehat{\boldsymbol{\theta}}\widehat{\boldsymbol{\theta}}^\top\mathbb{E}_{\mathbf{W}}\left[\mathbf{W}\right]\nonumber\\
=&\mathrm{tr}\left\{\widehat{\boldsymbol{\theta}}\widehat{\boldsymbol{\theta}}^\top\boldsymbol{\Omega}\right\}+\boldsymbol{\mu}^\top\widehat{\boldsymbol{\theta}}\widehat{\boldsymbol{\theta}}^\top\boldsymbol{\mu}\nonumber\\
=&\mathrm{tr}\left\{\widehat{\boldsymbol{\theta}}\widehat{\boldsymbol{\theta}}^\top\boldsymbol{\Omega}\right\}+\mathrm{tr}\left\{\widehat{\boldsymbol{\theta}}\widehat{\boldsymbol{\theta}}^\top\boldsymbol{\mu}\boldsymbol{\mu}^\top\right\}\nonumber\\
=&\mathrm{tr}\left\{\widehat{\boldsymbol{\theta}}\widehat{\boldsymbol{\theta}}^\top\left(\boldsymbol{\mu}\boldsymbol{\mu}^\top+\boldsymbol{\Omega}\right)\right\}.
\end{align}
The second expectation can be solved in a similar manner:
    \begin{align}
    \label{eq:expectation2}
 \mathbb{E}\left[\mathbf{r}^\top(\mathbf{W},\,\mathbf{z})\mathbf{A}\mathbf{A}^\top\mathbf{r}(\mathbf{W},\,\mathbf{z})\right]=&\mathrm{tr}\left\{\mathbb{E}\left[\mathbf{r}^\top(\mathbf{W},\,\mathbf{z})\mathbf{A}\mathbf{A}^\top\mathbf{r}(\mathbf{W},\,\mathbf{z})\right]\right\}\nonumber\\
 =&\mathbb{E}\left[\mathrm{tr}\left\{\mathbf{r}^\top(\mathbf{W},\,\mathbf{z})\mathbf{A}\mathbf{A}^\top\mathbf{r}(\mathbf{W},\,\mathbf{z})\right\}\right]\nonumber\\
 =&\mathrm{tr}\left\{\mathbf{A}\mathbf{A}^\top\mathbb{E}\left[\mathbf{r}(\mathbf{W},\,\mathbf{z})\mathbf{r}^\top(\mathbf{W},\,\mathbf{z})\right]\right\}\nonumber\\
 =&\mathrm{tr}\left\{\mathbf{A}\mathbf{A}^\top\mathbf{J}\right\}.
\end{align}
Thus, we obtain that

    \begin{align*}
    \mathbb{E}\left[\mu_g(\mathbf{W},\mathbf{z})^2\right]=&\mathrm{tr}\left\{\widehat{\boldsymbol{\theta}}\widehat{\boldsymbol{\theta}}^\top\mathrm{var}(\mathbf{W})\right\}+\mathbb{E}\left[\mathbf{W}\right]^\top\widehat{\boldsymbol{\theta}}\widehat{\boldsymbol{\theta}}^\top\mathbb{E}\left[\mathbf{W}\right]+\left(\mathbf{h}(\mathbf{z})^\top\widehat{\boldsymbol{\beta}}\right)^2+2\widehat{\boldsymbol{\theta}}^\top\boldsymbol{\mu}\mathbf{h}(\mathbf{z})^\top\widehat{\boldsymbol{\beta}}\nonumber\\
    &+2\widehat{\boldsymbol{\theta}}^\top\mathbf{B}\mathbf{A}+2\mathbf{h}(\mathbf{z})^\top\widehat{\boldsymbol{\beta}}\mathbf{I}^\top\mathbf{A}+\mathrm{tr}\left\{\mathbf{A}\mathbf{A}^\top\mathbb{E}\left[\mathbf{r}(\mathbf{W},\,\mathbf{z})\mathbf{r}^\top(\mathbf{W},\,\mathbf{z})\right]\right\}\nonumber\\	
    =&\mathrm{tr}\left\{\widehat{\boldsymbol{\theta}}\widehat{\boldsymbol{\theta}}^\top\left(\boldsymbol{\mu}\boldsymbol{\mu}^\top+\boldsymbol{\Omega}\right)\right\}+\left(\mathbf{h}(\mathbf{z})^\top\widehat{\boldsymbol{\beta}}\right)^2+2\widehat{\boldsymbol{\theta}}^\top\boldsymbol{\mu}\mathbf{h}(\mathbf{z})^\top\widehat{\boldsymbol{\beta}}\nonumber\\
    &+2\left[\widehat{\boldsymbol{\theta}}^\top\mathbf{B}+\mathbf{h}(\mathbf{z})^\top\widehat{\boldsymbol{\beta}}\mathbf{I}^\top\right]\mathbf{A}+\mathrm{tr}\left\{\mathbf{A}\mathbf{A}^\top\mathbf{J}\right\},
    \end{align*}
where 
\begin{itemize}
\item $\boldsymbol{\Omega}=\mathrm{diag}(\sigma^2_1(\mathbf{x}_1),\dots,\sigma^2_d(\mathbf{x}_d))\in\mathbb{R}^{d\times d}\,$;
\item $\mathbf{B}=\mathbb{E}\left[\mathbf{W}\mathbf{r}^\top(\mathbf{W},\,\mathbf{z})\right]\in\mathbb{R}^{d\times m}$ with its $lj$-th element:	
\begin{align*}
    B_{lj}=&\mathbb{E}\left[W_l\,c(\mathbf{W},\,\mathbf{w}^{\mathcal{T}}_j)c(\mathbf{z},\,\mathbf{z}^{\mathcal{T}}_j)\right]\\
    =&\mathbb{E}\left[W_l c(\mathbf{W},\,\mathbf{w}^{\mathcal{T}}_j)\right]c(\mathbf{z},\,\mathbf{z}^{\mathcal{T}}_j)\\
    =&\mathbb{E}\left[W_l\prod_{k=1}^{d}c_k(W_k,\,w^{\mathcal{T}}_{jk})\right]\prod_{k=1}^pc_k(z_k,\,z^{\mathcal{T}}_{jk})\\
    =&\mathbb{E}\left[W_lc_l(W_l,\,w^{\mathcal{T}}_{jl})\right]\prod^d_{\substack{k=1\\k\neq l}}\mathbb{E}\left[c_k(W_k,\,w^{\mathcal{T}}_{jk})\right]\prod_{k=1}^pc_k(z_k,\,z^{\mathcal{T}}_{jk})\\
    =&\psi_{jl}\prod^d_{\substack{k=1\\k\neq l}}\xi_{jk}\prod_{k=1}^pc_k(z_k,\,z^{\mathcal{T}}_{jk})
    \end{align*}
in case of multiplicative form, and
\begin{align*}
    B_{lj}=&\mathbb{E}\left[W_l\left(c(\mathbf{W},\,\mathbf{w}^{\mathcal{T}}_j)+c(\mathbf{z},\,\mathbf{z}^{\mathcal{T}}_j)\right)\right]\\
    =&\mathbb{E}\left[W_l c(\mathbf{W},\,\mathbf{w}^{\mathcal{T}}_j)\right]+\mathbb{E}\left[W_l\right]c(\mathbf{z},\,\mathbf{z}^{\mathcal{T}}_j)\\
    =&\mathbb{E}\left[W_l\sum_{k=1}^{d}c_k(W_k,\,w^{\mathcal{T}}_{jk})\right]+\mu_l\sum_{k=1}^pc_k(z_k,\,z^{\mathcal{T}}_{jk})\\
    =&\mathbb{E}\left[W_lc_l(W_l,\,w^{\mathcal{T}}_{jl})\right]+\mu_l\sum^d_{\substack{k=1\\k\neq l}}\mathbb{E}\left[c_k(W_k,\,w^{\mathcal{T}}_{jk})\right]+\mu_l\sum_{k=1}^pc_k(z_k,\,z^{\mathcal{T}}_{jk})\\
    =&\psi_{jl}+\mu_l\sum^d_{\substack{k=1\\k\neq l}}\xi_{jk}+\mu_l\sum_{k=1}^pc_k(z_k,\,z^{\mathcal{T}}_{jk})
    \end{align*}
in case of additive form, in which
\begin{equation*}
  \psi_{jl}\eqdef\mathbb{E}\left[W_lc_l(W_l,\,w^{\mathcal{T}}_{jl})\right]\,;  
\end{equation*}
\item $\mathbf{J}=\mathbb{E}\left[\mathbf{r}(\mathbf{W},\,\mathbf{z})\mathbf{r}^\top(\mathbf{W},\,\mathbf{z})\right]\in\mathbb{R}^{m\times m}$ with its $ij$-th element:
\begin{align*}
J_{ij}=&\mathbb{E}\Big[c(\mathbf{W},\,\mathbf{w}^{\mathcal{T}}_i)c(\mathbf{z},\,\mathbf{z}^{\mathcal{T}}_i)\,c(\mathbf{W},\,\mathbf{w}^{\mathcal{T}}_j)c(\mathbf{z},\,\mathbf{z}^{\mathcal{T}}_j)\Big]\\
=&\mathbb{E}\left[c(\mathbf{W},\,\mathbf{w}^{\mathcal{T}}_i)c(\mathbf{W},\,\mathbf{w}^{\mathcal{T}}_j)\right]c(\mathbf{z},\,\mathbf{z}^{\mathcal{T}}_i)c(\mathbf{z},\,\mathbf{z}^{\mathcal{T}}_j)\\
=&\prod_{k=1}^d\mathbb{E}\left[c_k(W_k,\,w^{\mathcal{T}}_{ik})c_k(W_k,\,w^{\mathcal{T}}_{jk})\right]\prod_{k=1}^pc_k(z_k,\,z^{\mathcal{T}}_{ik})c_k(z_k,\,z^{\mathcal{T}}_{jk})\\
=&\prod_{k=1}^d \zeta_{ijk}\prod_{k=1}^pc_k(z_k,\,z^{\mathcal{T}}_{ik})c_k(z_k,\,z^{\mathcal{T}}_{jk})
\end{align*}
in case of multiplicative form, and
\begin{align*}
J_{ij}=&\mathbb{E}\Big[\left(c(\mathbf{W},\,\mathbf{w}^{\mathcal{T}}_i)+c(\mathbf{z},\,\mathbf{z}^{\mathcal{T}}_i)\right)\left(c(\mathbf{W},\,\mathbf{w}^{\mathcal{T}}_j)+c(\mathbf{z},\,\mathbf{z}^{\mathcal{T}}_j)\right)\Big]\\
=&\mathbb{E}\left[c(\mathbf{W},\,\mathbf{w}^{\mathcal{T}}_i)c(\mathbf{W},\,\mathbf{w}^{\mathcal{T}}_j)\right]+\mathbb{E}\left[c(\mathbf{W},\,\mathbf{w}^{\mathcal{T}}_i)\right]c(\mathbf{z},\,\mathbf{z}^{\mathcal{T}}_j)\\
&+\mathbb{E}\left[c(\mathbf{W},\,\mathbf{w}^{\mathcal{T}}_j)\right]c(\mathbf{z},\,\mathbf{z}^{\mathcal{T}}_i)+c(\mathbf{z},\,\mathbf{z}^{\mathcal{T}}_i)c(\mathbf{z},\,\mathbf{z}^{\mathcal{T}}_j)\\
=&\sum^d_{\substack{k,l=1\\k\neq l}}\mathbb{E}\left[c_k(W_k,\,w^{\mathcal{T}}_{ik})\right]\mathbb{E}\left[c_l(W_l,\,w^{\mathcal{T}}_{jl})\right]+\sum_{k=1}^d\mathbb{E}\left[c_k(W_k,\,w^{\mathcal{T}}_{ik})c_k(W_k,\,w^{\mathcal{T}}_{jk})\right]\\
&+\sum_{k=1}^d\xi_{ik}\sum_{k=1}^pc_k(z_k,\,z^{\mathcal{T}}_{jk})+\sum_{k=1}^d\xi_{jk}\sum_{k=1}^pc_k(z_k,\,z^{\mathcal{T}}_{ik})+\sum_{k=1}^pc_k(z_k,\,z^{\mathcal{T}}_{ik})\sum_{k=1}^pc_k(z_k,\,z^{\mathcal{T}}_{jk})\\
=&\sum^d_{\substack{k,l=1\\k\neq l}}\xi_{ik}\xi_{jl}+\sum_{k=1}^d\zeta_{ijk}+\sum_{k=1}^d\xi_{ik}\sum_{k=1}^pc_k(z_k,\,z^{\mathcal{T}}_{jk})\\
&+\sum_{k=1}^d\xi_{jk}\sum_{k=1}^pc_k(z_k,\,z^{\mathcal{T}}_{ik})+\sum_{k=1}^pc_k(z_k,\,z^{\mathcal{T}}_{ik})\sum_{k=1}^pc_k(z_k,\,z^{\mathcal{T}}_{jk})
\end{align*}
in case of additive form, in which
\begin{equation*}
\zeta_{ijk}\eqdef\mathbb{E}\left[c_k(W_k,\,w^{\mathcal{T}}_{ik})c_k(W_k,\,w^{\mathcal{T}}_{jk})\right].   
\end{equation*}
\end{itemize}

\subsubsection{\texorpdfstring{{Derivation of $\mathbb{E}\left[\sigma^2_g(\mathbf{W},\mathbf{z})\right]$}}{Derivation of the variance - 2}}

Replacing $\sigma^2_g(\mathbf{W},\mathbf{z})$ by equation~\eqref{eq:krigingvar}:
\begin{align*}
  \mathbb{E}\left[\sigma^2_g(\cdot,\cdot)\right]=&\sigma^2\,\mathbb{E}\Big[1+\eta-\mathbf{r}^\top(\mathbf{W},\,\mathbf{z})\mathbf{R}^{-1}\mathbf{r}(\mathbf{W},\,\mathbf{z})+\left(\mathbf{h}(\mathbf{W},\,\mathbf{z})-\widetilde{\mathbf{H}}^\top\mathbf{R}^{-1}\mathbf{r}(\mathbf{W},\,\mathbf{z})\right)^\top\nonumber\\
  &\times\left(\widetilde{\mathbf{H}}^\top\mathbf{R}^{-1}\widetilde{\mathbf{H}}\right)^{-1}\left(\mathbf{h}(\mathbf{W},\,\mathbf{z})-\widetilde{\mathbf{H}}^\top\mathbf{R}^{-1}\mathbf{r}(\mathbf{W},\,\mathbf{z})\right)\Big]\nonumber\\
 =&\sigma^2(1+\eta)+\sigma^2\,\mathbb{E}\Bigg[\mathbf{h}^\top(\mathbf{W},\,\mathbf{z})\left(\widetilde{\mathbf{H}}^\top\mathbf{R}^{-1}\widetilde{\mathbf{H}}\right)^{-1}\mathbf{h}(\mathbf{W},\,\mathbf{z})\nonumber\\
 &+\mathbf{r}^\top(\mathbf{W},\,\mathbf{z})\left\{\mathbf{R}^{-1}\widetilde{\mathbf{H}}\left(\widetilde{\mathbf{H}}^\top\mathbf{R}^{-1}\widetilde{\mathbf{H}}\right)^{-1}\widetilde{\mathbf{H}}^\top\mathbf{R}^{-1}-\mathbf{R}^{-1}\right\}\mathbf{r}(\mathbf{W},\,\mathbf{z})\nonumber\\
 &-2\mathrm{tr}\left\{\mathbf{h}^\top(\mathbf{W},\,\mathbf{z})\left(\widetilde{\mathbf{H}}^\top\mathbf{R}^{-1}\widetilde{\mathbf{H}}\right)^{-1}\widetilde{\mathbf{H}}^\top\mathbf{R}^{-1}\mathbf{r}(\mathbf{W},\,\mathbf{z})\right\}\Bigg]\nonumber\\
=&\sigma^2(1+\eta)+\sigma^2\,\mathbb{E}\left[\mathbf{h}^\top(\mathbf{W},\,\mathbf{z})\left(\widetilde{\mathbf{H}}^\top\mathbf{R}^{-1}\widetilde{\mathbf{H}}\right)^{-1}\mathbf{h}(\mathbf{W},\,\mathbf{z})\right]\nonumber\\
&+\sigma^2\,\mathbb{E}\left[\mathbf{r}^\top(\mathbf{W},\,\mathbf{z})\left\{\mathbf{R}^{-1}\widetilde{\mathbf{H}}\left(\widetilde{\mathbf{H}}^\top\mathbf{R}^{-1}\widetilde{\mathbf{H}}\right)^{-1}\widetilde{\mathbf{H}}^\top\mathbf{R}^{-1}-\mathbf{R}^{-1}\right\}\mathbf{r}(\mathbf{W},\,\mathbf{z})\right]\nonumber\\
&-2\sigma^2\,\mathbb{E}\left[\mathrm{tr}\left\{\mathbf{h}^\top(\mathbf{W},\,\mathbf{z})\left(\widetilde{\mathbf{H}}^\top\mathbf{R}^{-1}\widetilde{\mathbf{H}}\right)^{-1}\widetilde{\mathbf{H}}^\top\mathbf{R}^{-1}\mathbf{r}(\mathbf{W},\,\mathbf{z})\right\}\right]\nonumber\\
=&\sigma^2\,\left[1+\eta+\mathrm{tr}\left\{\mathbf{C}\mathbf{P}\right\}+\mathbf{G}^\top\mathbf{C}\mathbf{G}+\mathrm{tr}\left\{\mathbf{Q}\mathbf{J}\right\}-2\mathrm{tr}\left\{\mathbf{C}\widetilde{\mathbf{H}}^\top\mathbf{R}^{-1}\mathbf{K}\right\}\right],
\end{align*}
where
\begin{itemize}
\item $\mathbf{C}=\left(\widetilde{\mathbf{H}}^\top\mathbf{R}^{-1}\widetilde{\mathbf{H}}\right)^{-1}\in\mathbb{R}^{(d+q)\times(d+q)}$ with $\widetilde{\mathbf{H}}=\left[\mathbf{w}^{\mathcal{T}},\mathbf{H}(\mathbf{z}^{\mathcal{T}})\right]\in\mathbb{R}^{m\times(d+q)}$;
\item $\mathbf{P}=\mathrm{Var}\left[\mathbf{h}(\mathbf{W},\,\mathbf{z})\right]=\mathrm{Var}\left[\left(\mathbf{W}^\top,\,\mathbf{h}(\mathbf{z})^\top\right)^\top\right]=\mathrm{blkdiag}(\boldsymbol{\Omega},\,\mathbf{0})\in\mathbb{R}^{(d+q)\times (d+q)}\,$;
\item $\mathbf{G}=\mathbb{E}\left[\mathbf{h}(\mathbf{W},\,\mathbf{z})\right]=\mathbb{E}\left[\left(\mathbf{W}^\top,\,\mathbf{h}(\mathbf{z})^\top\right)^\top\right]=[\boldsymbol{\mu}^\top,\,\mathbf{h}(\mathbf{z})^\top]^\top\in\mathbb{R}^{(d+q)\times 1}\,$;
\item $\mathbf{Q}=\mathbf{R}^{-1}\widetilde{\mathbf{H}}\left(\widetilde{\mathbf{H}}^\top\mathbf{R}^{-1}\widetilde{\mathbf{H}}\right)^{-1}\widetilde{\mathbf{H}}^\top\mathbf{R}^{-1}-\mathbf{R}^{-1}\in\mathbb{R}^{m\times m}\,$;
\end{itemize}
and
\begin{equation*}
\mathbf{K}=\mathbb{E}\left[\mathbf{h}(\mathbf{W},\,\mathbf{z})\mathbf{r}^\top(\mathbf{W},\,\mathbf{z})\right]^\top=\left[\mathbf{B}^\top,\,\mathbf{I}\mathbf{h}(\mathbf{z})^\top\right]\in\mathbb{R}^{m\times(d+q)}.
\end{equation*}

\subsubsection{\texorpdfstring{{Derivation of $\mu^2_I$}}{Derivation of the variance - 3}}

Using equation~\eqref{eq:krigingmean}, we have
\begin{align*}
\mu^2_I=&\left(\boldsymbol{\mu}^\top\widehat{\boldsymbol{\theta}}+\mathbf{h}(\mathbf{z})^\top\widehat{\boldsymbol{\beta}}+\mathbf{I}^\top\mathbf{A}\right)\left(\boldsymbol{\mu}^\top\widehat{\boldsymbol{\theta}}+\mathbf{h}(\mathbf{z})^\top\widehat{\boldsymbol{\beta}}+\mathbf{I}^\top\mathbf{A}\right)^\top\nonumber\\
=&\left(\boldsymbol{\mu}^\top\widehat{\boldsymbol{\theta}}+\mathbf{h}(\mathbf{z})^\top\widehat{\boldsymbol{\beta}}+\mathbf{I}^\top\mathbf{A}\right)\left(\widehat{\boldsymbol{\theta}}^\top\boldsymbol{\mu}+\widehat{\boldsymbol{\beta}}^\top\mathbf{h}(\mathbf{z})+\mathbf{A}^\top\mathbf{I}\right)\nonumber\\
=&\boldsymbol{\mu}^\top\widehat{\boldsymbol{\theta}}\widehat{\boldsymbol{\theta}}^\top\boldsymbol{\mu}+\left(\mathbf{h}(\mathbf{z})^\top\widehat{\boldsymbol{\beta}}\right)^2+\mathbf{I}^\top\mathbf{A}\mathbf{A}^\top\mathbf{I}+2\widehat{\boldsymbol{\theta}}^\top\boldsymbol{\mu}\mathbf{h}(\mathbf{z})^\top\widehat{\boldsymbol{\beta}}+2\widehat{\boldsymbol{\theta}}^\top\boldsymbol{\mu}\mathbf{I}^\top\mathbf{A}+2\mathbf{h}(\mathbf{z})^\top\widehat{\boldsymbol{\beta}}\mathbf{I}^\top\mathbf{A}\nonumber\\
=&\mathrm{tr}\left\{\widehat{\boldsymbol{\theta}}\widehat{\boldsymbol{\theta}}^\top\boldsymbol{\mu}\boldsymbol{\mu}^\top\right\}+\left(\mathbf{h}(\mathbf{z})^\top\widehat{\boldsymbol{\beta}}\right)^2+\mathrm{tr}\left\{\mathbf{A}\mathbf{A}^\top\mathbf{I}\mathbf{I}^\top\right\}+2\widehat{\boldsymbol{\theta}}^\top\boldsymbol{\mu}\mathbf{h}(\mathbf{z})^\top\widehat{\boldsymbol{\beta}}+2\left[\widehat{\boldsymbol{\theta}}^\top\boldsymbol{\mu}+\mathbf{h}(\mathbf{z})^\top\widehat{\boldsymbol{\beta}}\right]\mathbf{I}^\top\mathbf{A}
\end{align*}

Finally, we obtain the expression for~\eqref{eq:sigma2L}, which is given by
\begin{align}
\label{eq:sigma2}
\sigma^2_I=&\mathrm{tr}\left\{\mathbf{A}\mathbf{A}^\top\mathbf{J}\right\}-\mathrm{tr}\left\{\mathbf{A}\mathbf{A}^\top\mathbf{I}\mathbf{I}^\top\right\}+2\widehat{\boldsymbol{\theta}}^\top\mathbf{B}\mathbf{A}-2\widehat{\boldsymbol{\theta}}^\top\boldsymbol{\mu}\mathbf{I}^\top\mathbf{A}+\mathrm{tr}\left\{\widehat{\boldsymbol{\theta}}\widehat{\boldsymbol{\theta}}^\top\boldsymbol{\Omega}\right\}\nonumber\\
&+\sigma^2\,\left(1+\eta+\mathrm{tr}\left\{\mathbf{C}\mathbf{P}\right\}+\mathbf{G}^\top\mathbf{C}\mathbf{G}+\mathrm{tr}\left\{\mathbf{Q}\mathbf{J}\right\}-2\mathrm{tr}\left\{\mathbf{C}\widetilde{\mathbf{H}}^\top\mathbf{R}^{-1}\mathbf{K}\right\}\right)\nonumber\\\
=&\mathbf{A}^\top\left(\mathbf{J}-\mathbf{I}\mathbf{I}^\top\right)\mathbf{A}+2\widehat{\boldsymbol{\theta}}^\top\left(\mathbf{B}-\boldsymbol{\mu}\mathbf{I}^\top\right)\mathbf{A}+\mathrm{tr}\left\{\widehat{\boldsymbol{\theta}}\widehat{\boldsymbol{\theta}}^\top\boldsymbol{\Omega}\right\}\nonumber\\\
&+\sigma^2\,\left(1+\eta+\mathrm{tr}\left\{\mathbf{Q}\mathbf{J}\right\}+\mathbf{G}^\top\mathbf{C}\mathbf{G}+\mathrm{tr}\left\{\mathbf{C}\mathbf{P}-2\mathbf{C}\widetilde{\mathbf{H}}^\top\mathbf{R}^{-1}\mathbf{K}\right\}\right).
\end{align}

This together with equation~\eqref{eq:muL} completes the proof. In case that the trend is assumed constant, the expressions for $\mu_I$ and $\sigma^2_I$ can be simplified to the following:

\begin{align*}
\mu_I=&\left(\mathbf{1}_m^\top\mathbf{R}^{-1}\mathbf{1}_m\right)^{-1}\mathbf{1}_m^\top\mathbf{R}^{-1}\mathbf{y}^{\mathcal{T}}+\mathbf{I}^\top\mathbf{A},\\  
\sigma^2_I=&\mathbf{A}^\top\left(\mathbf{J}-\mathbf{I}\mathbf{I}^\top\right)\mathbf{A}+\sigma^2\,\left(1+\eta+\mathrm{tr}\left\{\mathbf{Q}\mathbf{J}\right\}+\mathbf{C}-\mathrm{tr}\left\{2\mathbf{C}\mathbf{1}_m^\top\mathbf{R}^{-1}\mathbf{I}\right\}\right),
\end{align*}
where 
\begin{itemize}
\item $\mathbf{A}=\mathbf{R}^{-1}\left(\mathbf{y}^{\mathcal{T}}-\mathbf{1}_m\left(\mathbf{1}_m^\top\mathbf{R}^{-1}\mathbf{1}_m\right)^{-1}\mathbf{1}_m^\top\mathbf{R}^{-1}\mathbf{y}^{\mathcal{T}}\right)$;
\item $\mathbf{Q}=\mathbf{R}^{-1}\mathbf{1}_m\mathbf{C}\mathbf{1}_m^\top\mathbf{R}^{-1}-\mathbf{R}^{-1}$;
\item $\mathbf{C}=\left(\mathbf{1}_m^\top\mathbf{R}^{-1}\mathbf{1}_m\right)^{-1}$.
\end{itemize}

\section[Proof of Proposition~3.2]{Proof of Proposition~\ref{prop:kernel}}
\label{sec:proofprop}
\begin{lemma}
\label{lemma:partial}
Denote
\begin{equation*}
    \Gamma[m]=\int^{a}_{b}\frac{x^m}{\sigma\sqrt{2\pi}}\exp\left\{-\frac{(x-\mu)^2}{2\sigma^2}\right\}\mathrm{d}x
\end{equation*}
for $m\in\mathbb{N}_0\,$, where $a\in\mathbb{R}\,$, $b\in\mathbb{R}\,$, $\mu\in\mathbb{R}$ and $\sigma\in\mathbb{R}_{\geq0}\,$. Then, we have
\begin{align*}
\Gamma[0]=&\Phi\left(\frac{a-\mu}{\sigma}\right)-\Phi\left(\frac{b-\mu}{\sigma}\right),\\
\Gamma[1]=&\mu\left[\Phi\left(\frac{a-\mu}{\sigma}\right)-\Phi\left(\frac{b-\mu}{\sigma}\right)\right]+\frac{\sigma}{\sqrt{2\pi}}\left[\exp\left\{-\frac{(b-\mu)^2}{2\sigma^2}\right\}-\exp\left\{-\frac{(a-\mu)^2}{2\sigma^2}\right\}\right],\\
\Gamma[2]=&\left(\mu^2+\sigma^2\right)\left[\Phi\left(\frac{a-\mu}{\sigma}\right)-\Phi\left(\frac{b-\mu}{\sigma}\right)\right]\\
&\qquad\qquad+\frac{(\mu+b)\sigma}{\sqrt{2\pi}}\exp\left\{-\frac{(b-\mu)^2}{2\sigma^2}\right\}-\frac{(\mu+a)\sigma}{\sqrt{2\pi}}\exp\left\{-\frac{(a-\mu)^2}{2\sigma^2}\right\},\\
\Gamma[3]=&\left(\mu^3+3\mu\sigma^2\right)\left[\Phi\left(\frac{a-\mu}{\sigma}\right)-\Phi\left(\frac{b-\mu}{\sigma}\right)\right]\\
&\qquad\qquad+\frac{(b^2+\mu b+\mu^2+2\sigma^2)\sigma}{\sqrt{2\pi}}\exp\left\{-\frac{(b-\mu)^2}{2\sigma^2}\right\}\\
&\qquad\qquad-\frac{(a^2+\mu a+\mu^2+2\sigma^2)\sigma}{\sqrt{2\pi}}\exp\left\{-\frac{(a-\mu)^2}{2\sigma^2}\right\},\\
\Gamma[4]=&\left(\mu^4+3\sigma^4+6\mu^2\sigma^2\right)\left[\Phi\left(\frac{a-\mu}{\sigma}\right)-\Phi\left(\frac{b-\mu}{\sigma}\right)\right]\\
&\qquad\qquad+\frac{(b^3+\mu^3+\mu^2 b+\mu b^2+3\sigma^2b+5\sigma^2\mu)\sigma}{\sqrt{2\pi}}\exp\left\{-\frac{(b-\mu)^2}{2\sigma^2}\right\}\\
&\qquad\qquad-\frac{(a^3+\mu^3+\mu^2 a+\mu a^2+3\sigma^2a+5\sigma^2\mu)\sigma}{\sqrt{2\pi}}\exp\left\{-\frac{(a-\mu)^2}{2\sigma^2}\right\},
\end{align*}
where $\Phi(\cdot)$ denotes the cumulative density function of the standard normal.
\end{lemma}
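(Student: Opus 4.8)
The quantities $\Gamma[m]$ are truncated (non-central) moments of a Gaussian density, so the plan is to reduce everything to $\Gamma[0]$ together with a short integration-by-parts recursion. Write $\varphi(x)\eqdef\frac{1}{\sigma\sqrt{2\pi}}\exp\{-(x-\mu)^2/(2\sigma^2)\}$ for the $\mathcal{N}(\mu,\sigma^2)$ density; the only structural fact I need is the elementary identity $\varphi'(x)=-\sigma^{-2}(x-\mu)\varphi(x)$, equivalently $(x-\mu)\varphi(x)=-\sigma^2\varphi'(x)$, together with the observation $\sigma^2\varphi(t)=\frac{\sigma}{\sqrt{2\pi}}\exp\{-(t-\mu)^2/(2\sigma^2)\}$, which is exactly the shape of the Gaussian-kernel terms appearing in the statement.

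First I would dispatch the base case: the substitution $u=(x-\mu)/\sigma$ turns $\Gamma[0]$ into $\int_{(b-\mu)/\sigma}^{(a-\mu)/\sigma}\frac{1}{\sqrt{2\pi}}e^{-u^2/2}\,\mathrm{d}u=\Phi\big(\tfrac{a-\mu}{\sigma}\big)-\Phi\big(\tfrac{b-\mu}{\sigma}\big)$, the claimed formula. For $m\geq 1$ I would split $x^m=x^{m-1}(x-\mu)+\mu\,x^{m-1}$, apply $(x-\mu)\varphi=-\sigma^2\varphi'$ to the first piece, and integrate by parts, obtaining the recursion
\begin{equation*}
\Gamma[m]=\mu\,\Gamma[m-1]+(m-1)\sigma^2\,\Gamma[m-2]-\sigma^2\Big(a^{m-1}\varphi(a)-b^{m-1}\varphi(b)\Big),
\end{equation*}
with the convention that the $\Gamma[m-2]$ term is omitted when $m=1$. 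Using $\sigma^2\varphi(t)=\tfrac{\sigma}{\sqrt{2\pi}}\exp\{-(t-\mu)^2/(2\sigma^2)\}$, this recursion already produces expressions of exactly the claimed form.

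The remaining work is purely mechanical: unrolling the recursion for $m=1,2,3,4$ from $\Gamma[0]$. For instance $\Gamma[1]=\mu\Gamma[0]-\sigma^2(\varphi(a)-\varphi(b))$ gives the $m=1$ line immediately; substituting $\Gamma[1]$ and $\Gamma[0]$ into the recursion gives $\Gamma[2]$, then $\Gamma[3]$, then $\Gamma[4]$. Each step amounts to collecting the coefficient of $\Phi\big(\tfrac{a-\mu}{\sigma}\big)-\Phi\big(\tfrac{b-\mu}{\sigma}\big)$ and of the two kernel terms $\tfrac{\sigma}{\sqrt{2\pi}}\exp\{-(a-\mu)^2/(2\sigma^2)\}$ and $\tfrac{\sigma}{\sqrt{2\pi}}\exp\{-(b-\mu)^2/(2\sigma^2)\}$. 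I expect no conceptual obstacle; the only thing to watch is the bookkeeping of the polynomial coefficients multiplying the exponential terms as the degree grows — e.g.\ checking that the coefficient of $\tfrac{\sigma}{\sqrt{2\pi}}\exp\{-(b-\mu)^2/(2\sigma^2)\}$ in $\Gamma[4]$ is indeed $b^3+\mu^3+\mu^2 b+\mu b^2+3\sigma^2 b+5\sigma^2\mu$, which is a careful expansion rather than a difficulty. As a final sanity check I would let $b\to-\infty$ and $a\to+\infty$ and confirm that the formulas collapse to the standard raw moments $\mathbb{E}[X]=\mu$, $\mathbb{E}[X^2]=\mu^2+\sigma^2$, $\mathbb{E}[X^3]=\mu^3+3\mu\sigma^2$, $\mathbb{E}[X^4]=\mu^4+6\mu^2\sigma^2+3\sigma^4$ of $\mathcal{N}(\mu,\sigma^2)$.
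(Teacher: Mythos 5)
Your argument is correct, and it reaches the stated formulas by a slightly different decomposition than the paper's. The paper first standardizes via $s=(x-\mu)/\sigma$, derives by integration by parts the two-step recursion $\kappa[m]=\frac{1}{\sqrt{2\pi}}\big(t^{m-1}e^{-t^2/2}-s^{m-1}e^{-s^2/2}\big)+(m-1)\kappa[m-2]$ for the truncated moments of the \emph{standard} normal (no $\kappa[m-1]$ term appears because the standardized mean is zero), evaluates $\kappa[0],\dots,\kappa[4]$, and then recovers $\Gamma[m]=\int(\sigma s+\mu)^m\phi(s)\,\mathrm{d}s$ by a binomial expansion and collection of terms. You instead stay in the original variable, use $(x-\mu)\varphi(x)=-\sigma^2\varphi'(x)$ with the split $x^m=x^{m-1}(x-\mu)+\mu x^{m-1}$, and obtain the three-term recursion $\Gamma[m]=\mu\Gamma[m-1]+(m-1)\sigma^2\Gamma[m-2]-\sigma^2\big(a^{m-1}\varphi(a)-b^{m-1}\varphi(b)\big)$, which I have checked reproduces all five displayed expressions (including the coefficient $b^3+\mu^3+\mu^2b+\mu b^2+3\sigma^2b+5\sigma^2\mu$ for $m=4$). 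The trade-off is minor but real: your route avoids the final binomial expansion entirely, since the recursion outputs the answers directly in the $\big(\Phi\text{-difference}\big)$ plus boundary-kernel form of the statement, at the cost of carrying a three-term rather than two-term recursion; the paper's route keeps the recursion minimal but defers the bookkeeping to the substitution step. Your limiting check $b\to-\infty$, $a\to+\infty$ against the raw moments of $\mathcal{N}(\mu,\sigma^2)$ is a sensible validation that the paper does not perform explicitly.
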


\begin{proof}
Denote
\begin{equation*}
    \kappa[m]=\int^{s}_{t}\frac{x^m}{\sqrt{2\pi}}\exp\left\{-\frac{x^2}{2}\right\}\mathrm{d}x
\end{equation*}
for $m\in\mathbb{N}_0\,$, where $s\in\mathbb{R}$ and $t\in\mathbb{R}\,$. Then via integration by parts, we have
\begin{align*}
\kappa[m]=&\frac{1}{\sqrt{2\pi}}\left(-x^{m-1}e^{-\frac{x^2}{2}}\bigg\vert^s_t+(m-1)\int^{s}_{t}x^{m-2}e^{-\frac{x^2}{2}}\mathrm{d}x\right)\nonumber\\
=&\frac{1}{\sqrt{2\pi}}\left(t^{m-1}e^{-\frac{t^2}{2}}-s^{m-1}e^{-\frac{s^2}{2}}\right)+(m-1)\int^{s}_{t}x^{m-2}e^{-\frac{x^2}{2}}\mathrm{d}x\nonumber\\
=&\frac{1}{\sqrt{2\pi}}\left(t^{m-1}e^{-\frac{t^2}{2}}-s^{m-1}e^{-\frac{s^2}{2}}\right)+(m-1)\kappa[m-2]\nonumber.
\end{align*}
Thus, we have
\begin{align}
\label{eq:eq1}
\kappa[0]=&\int^{s}_{t}\frac{1}{\sqrt{2\pi}}\exp\left\{-\frac{x^2}{2}\right\}\mathrm{d}x=\Phi(s)-\Phi(t),\\
\label{eq:eq2}
\kappa[1]=&\int^{s}_{t}\frac{x}{\sqrt{2\pi}}\exp\left\{-\frac{x^2}{2}\right\}\mathrm{d}x\nonumber\\
=&-\frac{1}{\sqrt{2\pi}}e^{-\frac{x^2}{2}}\bigg\vert^s_t\nonumber\\
=&\frac{1}{\sqrt{2\pi}}\left(e^{-\frac{t^2}{2}}-e^{-\frac{s^2}{2}}\right),\\
\label{eq:eq3}
\kappa[2]=&\frac{1}{\sqrt{2\pi}}\left(te^{-\frac{t^2}{2}}-se^{-\frac{s^2}{2}}\right)+\kappa[0]\nonumber\\
=&\frac{1}{\sqrt{2\pi}}\left(te^{-\frac{t^2}{2}}-se^{-\frac{s^2}{2}}\right)+\Phi(s)-\Phi(t),
\end{align}
and
\begin{align}
\label{eq:eq4}
\kappa[3]=&\frac{1}{\sqrt{2\pi}}\left(t^{2}e^{-\frac{t^2}{2}}-s^{2}e^{-\frac{s^2}{2}}\right)+2\kappa[1]\nonumber\\
=&\frac{1}{\sqrt{2\pi}}\left(t^{2}e^{-\frac{t^2}{2}}-s^{2}e^{-\frac{s^2}{2}}\right)+\frac{2}{\sqrt{2\pi}}\left(e^{-\frac{t^2}{2}}-e^{-\frac{s^2}{2}}\right),\\
\label{eq:eq5}
\kappa[4]=&\frac{1}{\sqrt{2\pi}}\left(t^{3}e^{-\frac{t^2}{2}}-s^{3}e^{-\frac{s^2}{2}}\right)+3\kappa[2]\nonumber\\
=&\frac{1}{\sqrt{2\pi}}\left(t^{3}e^{-\frac{t^2}{2}}-s^{3}e^{-\frac{s^2}{2}}\right)+\frac{3}{\sqrt{2\pi}}\left(te^{-\frac{t^2}{2}}-se^{-\frac{s^2}{2}}\right)+3\left[\Phi(s)-\Phi(t)\right],
\end{align}
where $\Phi(\cdot)$ denotes the cumulative density function of the standard normal. 

Denote
\begin{equation*}
    \Gamma[m]=\int^{a}_{b}\frac{x^m}{\sigma\sqrt{2\pi}}\exp\left\{-\frac{(x-\mu)^2}{2\sigma^2}\right\}\mathrm{d}x
\end{equation*}
for $m\in\mathbb{N}_0\,$, where $a\in\mathbb{R}\,$, $b\in\mathbb{R}\,$, $\mu\in\mathbb{R}$ and $\sigma\in\mathbb{R}_{\geq0}\,$. Let
\begin{equation*}
s=\frac{x-\mu}{\sigma},
\end{equation*}
then we have
\begin{equation*}
\Gamma[m]=\int^{\frac{a-\mu}{\sigma}}_{\frac{b-\mu}{\sigma}}\frac{(\sigma s+\mu)^m}{\sqrt{2\pi}}\exp\left\{-\frac{s^2}{2}\right\}\mathrm{d}s    
\end{equation*}
for $m\in\mathbb{N}_0\,$. The lemma is subsequently proved by using equations~\eqref{eq:eq1},~\eqref{eq:eq2},~\eqref{eq:eq3},~\eqref{eq:eq4} and~\eqref{eq:eq5} for all $m\in\{0,\dots,4\}$. 
\end{proof}

\subsection{Derivation for Exponential Case}
\label{sec:exp}

\subsubsection{\texorpdfstring{{Derivation of $\xi_{ik}$}}{Derivation of the first expectation}}
\begin{align*}
    \xi_{ik}=&\mathbb{E}\left[c_k(W_k,\,w^{\mathcal{T}}_{ik})\right]\\
    =&\int\exp\left\{-\frac{|w-w^{\mathcal{T}}_{ik}|}{\gamma_k}\right\}\frac{1}{\sigma_k\sqrt{2\pi}}\exp\left\{-\frac{(w-\mu_k)^2}{2\sigma_k^2}\right\}\mathrm{d}w\\
    =&\int^{+\infty}_{w^{\mathcal{T}}_{ik}}\frac{1}{\sigma_k\sqrt{2\pi}}\exp\left\{-\frac{w-w^{\mathcal{T}}_{ik}}{\gamma_k}-\frac{(w-\mu_k)^2}{2\sigma_k^2}\right\}\mathrm{d}w+\int^{w^{\mathcal{T}}_{ik}}_{-\infty}\frac{1}{\sigma_k\sqrt{2\pi}}\exp\left\{\frac{w-w^{\mathcal{T}}_{ik}}{\gamma_k}-\frac{(w-\mu_k)^2}{2\sigma_k^2}\right\}\mathrm{d}w\\
    =&\exp\left\{\frac{\sigma^2_k+2\gamma_k\left(w^{\mathcal{T}}_{ik}-\mu_k\right)}{2\gamma_k^2}\right\}\int^{+\infty}_{w^{\mathcal{T}}_{ik}}\frac{1}{\sigma_k\sqrt{2\pi}}\exp\left\{-\frac{(w-\mu_A)^2}{2\sigma_k^2}\right\}\mathrm{d}w\\
    &+\exp\left\{\frac{\sigma^2_k-2\gamma_k\left(w^{\mathcal{T}}_{ik}-\mu_k\right)}{2\gamma_k^2}\right\}\int^{w^{\mathcal{T}}_{ik}}_{-\infty}\frac{1}{\sigma_k\sqrt{2\pi}}\exp\left\{-\frac{(w-\mu_B)^2}{2\sigma_k^2}\right\}\mathrm{d}w,
    \end{align*}
    where the last step is obtained by completing the square. Using Lemma~\ref{lemma:partial}, we then have  
    \begin{align*}
    \xi_{ik}=&\exp\left\{\frac{\sigma^2_k+2\gamma_k\left(w^{\mathcal{T}}_{ik}-\mu_k\right)}{2\gamma_k^2}\right\}\Phi\left(\frac{\mu_A-w^{\mathcal{T}}_{ik}}{\sigma_k}\right)+\exp\left\{\frac{\sigma^2_k-2\gamma_k\left(w^{\mathcal{T}}_{ik}-\mu_k\right)}{2\gamma_k^2}\right\}\Phi\left(\frac{w^{\mathcal{T}}_{ik}-\mu_B}{\sigma_k}\right),
\end{align*}
where
\begin{equation*}
    \mu_A=\mu_k-\frac{\sigma^2_k}{\gamma_k}\quad\mathrm{and}\quad\mu_B=\mu_k+\frac{\sigma^2_k}{\gamma_k}.
\end{equation*}
    
\subsubsection{\texorpdfstring{{Derivation of $\zeta_{ijk}$}}{Derivation of the second expectation}}
    \begin{align}
      \zeta_{ijk}=&\mathbb{E}\left[c_k(W_k,\,w^{\mathcal{T}}_{ik})c_k(W_k,\,w^{\mathcal{T}}_{jk})\right]\nonumber\\
      =&\int\frac{1}{\sigma_k\sqrt{2\pi}}\exp\left\{-\frac{|w-w^{\mathcal{T}}_{ik}|}{\gamma_k}-\frac{|w-w^{\mathcal{T}}_{jk}|}{\gamma_k}-\frac{(w-\mu_k)^2}{2\sigma_k^2}\right\}\mathrm{d}w\nonumber\\
      \label{eq:c1}
      =&\int^{+\infty}_{w^{\mathcal{T}}_{jk}}\frac{1}{\sigma_k\sqrt{2\pi}}\exp\left\{-\frac{w-w^{\mathcal{T}}_{ik}}{\gamma_k}-\frac{w-w^{\mathcal{T}}_{jk}}{\gamma_k}-\frac{(w-\mu_k)^2}{2\sigma_k^2}\right\}\mathrm{d}w\\
      \label{eq:c2}
      &+\int^{w^{\mathcal{T}}_{jk}}_{w^{\mathcal{T}}_{ik}}\frac{1}{\sigma_k\sqrt{2\pi}}\exp\left\{-\frac{w-w^{\mathcal{T}}_{ik}}{\gamma_k}-\frac{w^{\mathcal{T}}_{jk}-w}{\gamma_k}-\frac{(w-\mu_k)^2}{2\sigma_k^2}\right\}\mathrm{d}w\\
      \label{eq:c3}
      &+\int^{w^{\mathcal{T}}_{ik}}_{-\infty}\frac{1}{\sigma_k\sqrt{2\pi}}\exp\left\{-\frac{w^{\mathcal{T}}_{ik}-w}{\gamma_k}-\frac{w^{\mathcal{T}}_{jk}-w}{\gamma_k}-\frac{(w-\mu_k)^2}{2\sigma_k^2}\right\}\mathrm{d}w,
    \end{align}
    where $w^{\mathcal{T}}_{ik}\leq w^{\mathcal{T}}_{jk}$ is assumed.
    
   By completing the square, term~\eqref{eq:c1} can be rewritten as follow:
    \begin{align*}
        \eqref{eq:c1}=&\exp\left\{\frac{2\sigma^2_k+\gamma_k\left(w^{\mathcal{T}}_{ik}+w^{\mathcal{T}}_{jk}-2\mu_k\right)}{\gamma_k^2}\right\}\int^{+\infty}_{w^{\mathcal{T}}_{jk}}\frac{1}{\sigma_k\sqrt{2\pi}}\exp\left\{-\frac{(w-\mu_C)^2}{2\sigma_k^2}\right\}\mathrm{d}w,
    \end{align*}
    where 
    \begin{equation*}
    \mu_C=\mu_k-\dfrac{2\sigma^2_k}{\gamma_k}.   
    \end{equation*}
    Then by Lemma~\ref{lemma:partial}, we obtain
    \begin{align*}
        \eqref{eq:c1}=&\exp\left\{\frac{2\sigma^2_k+\gamma_k\left(w^{\mathcal{T}}_{ik}+w^{\mathcal{T}}_{jk}-2\mu_k\right)}{\gamma_k^2}\right\}\Phi\left(\frac{\mu_C-w^{\mathcal{T}}_{jk}}{\sigma_k}\right).
    \end{align*}
    
    Since term~\eqref{eq:c3} can be rewritten as
    \begin{align*}
        \eqref{eq:c3}=&\int^{w^{\mathcal{T}}_{ik}}_{-\infty}\frac{1}{\sigma_k\sqrt{2\pi}}\exp\left\{-\frac{w^{\mathcal{T}}_{ik}-w}{\gamma_k}-\frac{w^{\mathcal{T}}_{jk}-w}{\gamma_k}-\frac{(w-\mu_k)^2}{2\sigma_k^2}\right\}\mathrm{d}w\\
        =&\int^{+\infty}_{-w^{\mathcal{T}}_{ik}}\frac{1}{\sigma_k\sqrt{2\pi}}\exp\left\{-\frac{w+w^{\mathcal{T}}_{ik}}{\gamma_k}-\frac{w+w^{\mathcal{T}}_{jk}}{\gamma_k}-\frac{(w+\mu_k)^2}{2\sigma_k^2}\right\}\mathrm{d}w,
    \end{align*}
the form of which allows us to obtain solution of term~\eqref{eq:c3} by simply using that of term~\eqref{eq:c1}. Thus, we have   
\begin{align*}
        \eqref{eq:c3}=&\exp\left\{\frac{2\sigma^2_k-\gamma_k\left(w^{\mathcal{T}}_{ik}+w^{\mathcal{T}}_{jk}-2\mu_k\right)}{\gamma_k^2}\right\}\Phi\left(\frac{w^{\mathcal{T}}_{ik}-\mu_D}{\sigma_k}\right),
    \end{align*}
    where 
    \begin{equation*}
    \mu_D=\mu_k+\dfrac{2\sigma^2_k}{\gamma_k}\,. 
    \end{equation*}
    Term~\eqref{eq:c2} is obtained as follow:
    \begin{align*}
        \eqref{eq:c2}=&\int^{w^{\mathcal{T}}_{jk}}_{w^{\mathcal{T}}_{ik}}\frac{1}{\sigma_k\sqrt{2\pi}}\exp\left\{-\frac{w^{\mathcal{T}}_{jk}-w^{\mathcal{T}}_{ik}}{\gamma_k}-\frac{(w-\mu_k)^2}{2\sigma_k^2}\right\}\mathrm{d}w\\
        =&\exp\left\{-\frac{w^{\mathcal{T}}_{jk}-w^{\mathcal{T}}_{ik}}{\gamma_k}\right\}\int^{w^{\mathcal{T}}_{jk}}_{w^{\mathcal{T}}_{ik}}\frac{1}{\sigma_k\sqrt{2\pi}}\exp\left\{-\frac{(w-\mu_k)^2}{2\sigma_k^2}\right\}\mathrm{d}w\\
        =&\exp\left\{-\frac{w^{\mathcal{T}}_{jk}-w^{\mathcal{T}}_{ik}}{\gamma_k}\right\}\left[\Phi\left(\frac{w^{\mathcal{T}}_{jk}-\mu_k}{\sigma_k}\right)-\Phi\left(\frac{w^{\mathcal{T}}_{ik}-\mu_k}{\sigma_k}\right)\right],
    \end{align*}
    where the last step uses Lemma~\ref{lemma:partial}. Therefore, we obtain that
   \begin{align}
   \label{eq:c4}
      \zeta_{ijk}=&\exp\left\{\frac{2\sigma^2_k+\gamma_k\left(w^{\mathcal{T}}_{ik}+w^{\mathcal{T}}_{jk}-2\mu_k\right)}{\gamma_k^2}\right\}\Phi\left(\frac{\mu_C-w^{\mathcal{T}}_{jk}}{\sigma_k}\right)\nonumber\\
      &+\exp\left\{-\frac{w^{\mathcal{T}}_{jk}-w^{\mathcal{T}}_{ik}}{\gamma_k}\right\}\left[\Phi\left(\frac{w^{\mathcal{T}}_{jk}-\mu_k}{\sigma_k}\right)-\Phi\left(\frac{w^{\mathcal{T}}_{ik}-\mu_k}{\sigma_k}\right)\right]\nonumber\\
      &+\exp\left\{\frac{2\sigma^2_k-\gamma_k\left(w^{\mathcal{T}}_{ik}+w^{\mathcal{T}}_{jk}-2\mu_k\right)}{\gamma_k^2}\right\}\Phi\left(\frac{w^{\mathcal{T}}_{ik}-\mu_D}{\sigma_k}\right)
      \end{align}
      for $w^{\mathcal{T}}_{ik}\leq w^{\mathcal{T}}_{jk}$. Observe that
      \begin{equation*}
       \mathbb{E}\left[c_k(W_k,\,w^{\mathcal{T}}_{ik})c_k(W_k,\,w^{\mathcal{T}}_{jk})\right]=\mathbb{E}\left[c_k(W_k,\,w^{\mathcal{T}}_{jk})c_k(W_k,\,w^{\mathcal{T}}_{ik})\right],  
      \end{equation*}
       Thus, the expression for $\zeta_{ijk}$ when $w^{\mathcal{T}}_{ik}> w^{\mathcal{T}}_{jk}$ is obtained by simply interchanging the positions of $w^{\mathcal{T}}_{ik}$ and $w^{\mathcal{T}}_{jk}$ in formula~\eqref{eq:c4}.
      
\subsubsection{\texorpdfstring{{Derivation of $\psi_{jk}$}}{Derivation of the third expectation}}
    \begin{align*}
     \psi_{jk}=&\mathbb{E}\left[W_kc_k(W_k,\,w^{\mathcal{T}}_{jk})\right]\\
     =&\int\exp\left\{-\frac{|w-w^{\mathcal{T}}_{jk}|}{\gamma_k}\right\}\frac{w}{\sigma_k\sqrt{2\pi}}\exp\left\{-\frac{(w-\mu_k)^2}{2\sigma_k^2}\right\}\mathrm{d}w\\
    =&\int^{+\infty}_{w^{\mathcal{T}}_{jk}}\frac{w}{\sigma_k\sqrt{2\pi}}\exp\left\{-\frac{w-w^{\mathcal{T}}_{jk}}{\gamma_k}-\frac{(w-\mu_k)^2}{2\sigma_k^2}\right\}\mathrm{d}w+\int^{w^{\mathcal{T}}_{jk}}_{-\infty}\frac{w}{\sigma_k\sqrt{2\pi}}\exp\left\{\frac{w-w^{\mathcal{T}}_{jk}}{\gamma_k}-\frac{(w-\mu_k)^2}{2\sigma_k^2}\right\}\mathrm{d}w\\
    =&\exp\left\{\frac{\sigma^2_k+2\gamma_k\left(w^{\mathcal{T}}_{jk}-\mu_k\right)}{2\gamma_k^2}\right\}\int^{+\infty}_{w^{\mathcal{T}}_{jk}}\frac{w}{\sigma_k\sqrt{2\pi}}\exp\left\{-\frac{(w-\mu_A)^2}{2\sigma_k^2}\right\}\mathrm{d}w\\
    &+\exp\left\{\frac{\sigma^2_k-2\gamma_k\left(w^{\mathcal{T}}_{jk}-\mu_k\right)}{2\gamma_k^2}\right\}\int^{w^{\mathcal{T}}_{jk}}_{-\infty}\frac{w}{\sigma_k\sqrt{2\pi}}\exp\left\{-\frac{(w-\mu_B)^2}{2\sigma_k^2}\right\}\mathrm{d}w,
    \end{align*}
    where the last step is obtained by completing the square. 
    
    Thus, by Lemma~\ref{lemma:partial} we have
    \begin{align*}
    \psi_{jk}=&\exp\left\{\frac{\sigma^2_k+2\gamma_k\left(w^{\mathcal{T}}_{jk}-\mu_k\right)}{2\gamma_k^2}\right\}\left[\mu_A\Phi\left(\frac{\mu_A-w^{\mathcal{T}}_{jk}}{\sigma_k}\right)+\frac{\sigma_k}{\sqrt{2\pi}}\exp\left\{-\frac{\left(w^{\mathcal{T}}_{jk}-\mu_A\right)^2}{2\sigma^2_k}\right\}\right]\\
    &+\exp\left\{\frac{\sigma^2_k-2\gamma_k\left(w^{\mathcal{T}}_{jk}-\mu_k\right)}{2\gamma_k^2}\right\}\left[-\mu_B\Phi\left(\frac{w^{\mathcal{T}}_{jk}-\mu_B}{\sigma_k}\right)+\frac{\sigma_k}{\sqrt{2\pi}}\exp\left\{-\frac{\left(w^{\mathcal{T}}_{jk}-\mu_B\right)^2}{2\sigma^2_k}\right\}\right].
     \end{align*}

\subsection{Derivation for Squared Exponential Case}
\label{sec:sexp}
\subsubsection{\texorpdfstring{{Derivation of $\xi_{ik}$}}{Derivation of the first expectation}}

\begin{align*}
    \xi_{ik}=&\mathbb{E}\left[c_k(W_k,\,w^{\mathcal{T}}_{ik})\right]\\
    =&\int\exp\left\{-\left(\frac{w-w^{\mathcal{T}}_{ik}}{\gamma_k}\right)^2\right\}\frac{1}{\sigma_k\sqrt{2\pi}}\exp\left\{-\frac{(w-\mu_k)^2}{2\sigma_k^2}\right\}\mathrm{d}w\\
    =&\int\frac{1}{\sigma_k\sqrt{2\pi}}\exp\left\{-\frac{\left(w-w^{\mathcal{T}}_{ik}\right)^2}{\gamma_k^2}-\frac{(w-\mu_k)^2}{2\sigma_k^2}\right\}\mathrm{d}w\\
    =&\exp\left\{-\frac{\left(\mu_k-w^{\mathcal{T}}_{ik}\right)^2}{2\sigma^2_k+\gamma_k^2}\right\}\int\frac{1}{\sigma_k\sqrt{2\pi}}\exp\left\{-\frac{2\sigma_k^2+\gamma_k^2}{2\sigma^2_k\gamma_k^2}\left[w-\frac{2\sigma^2_kw^{\mathcal{T}}_{ik}+\gamma_k^2\mu_k}{2\sigma^2_k+\gamma_k^2}\right]^2\right\}\mathrm{d}w,
    \end{align*}   
    where the last step is obtained by completing the square. Consequently, 
    
    \begin{align*}
    \xi_{ik}=&\frac{1}{\sqrt{1+2\sigma^2_k/\gamma_k^2}}\exp\left\{-\frac{\left(\mu_k-w^{\mathcal{T}}_{ik}\right)^2}{2\sigma^2_k+\gamma_k^2}\right\}\int\frac{\sqrt{2\sigma_k^2+\gamma_k^2}}{\sigma_k\gamma_k\sqrt{2\pi}}\exp\left\{-\frac{2\sigma_k^2+\gamma_k^2}{2\sigma^2_k\gamma_k^2}\left[w-\frac{2\sigma^2_kw^{\mathcal{T}}_{ik}+\gamma_k^2\mu_k}{2\sigma^2_k+\gamma_k^2}\right]^2\right\}\mathrm{d}w\\
    =&\frac{1}{\sqrt{1+2\sigma^2_k/\gamma_k^2}}\exp\left\{-\frac{\left(\mu_k-w^{\mathcal{T}}_{ik}\right)^2}{2\sigma^2_k+\gamma_k^2}\right\},
    \end{align*}
    where the last step uses the fact that the integral in the first step equals to one because it integrates the probability density function of a normal distribution with mean and variance equal to  
   \begin{equation*}
    \frac{2\sigma^2_kw^{\mathcal{T}}_{ik}+\gamma_k^2\mu_k}{2\sigma^2_k+\gamma_k^2}\quad\mathrm{and}\quad \frac{\sigma^2_k\gamma_k^2}{2\sigma_k^2+\gamma_k^2}   
   \end{equation*}
    respectively.   
\subsubsection{\texorpdfstring{{Derivation of $\zeta_{ijk}$}}{Derivation of the second expectation}}

 \begin{align*}
      \zeta_{ijk}=&\mathbb{E}\left[c_k(W_k,\,w^{\mathcal{T}}_{ik})c_k(W_k,\,w^{\mathcal{T}}_{jk})\right]\\
      =&\int\frac{1}{\sigma_k\sqrt{2\pi}}\exp\left\{-\frac{\left(w-w^{\mathcal{T}}_{ik}\right)^2}{\gamma_k^2}-\frac{\left(w-w^{\mathcal{T}}_{jk}\right)^2}{\gamma_k^2}-\frac{(w-\mu_k)^2}{2\sigma_k^2}\right\}\mathrm{d}w.
    \end{align*}
    By applying the completing in square, we can obtain the following:
    \begin{align*}
     \zeta_{ijk}=&\frac{1}{\sqrt{1+4\sigma^2_k/\gamma_k^2}}\exp\left\{-\frac{\left(\frac{w^{\mathcal{T}}_{ik}+w^{\mathcal{T}}_{jk}}{2}-\mu_k\right)^2}{\gamma_k^2/2+2\sigma^2_k}-\frac{\left(w^{\mathcal{T}}_{ik}-w^{\mathcal{T}}_{jk}\right)^2}{2\gamma_k^2}\right\}\int\frac{1}{\sigma_*\sqrt{2\pi}}\exp\left\{-\frac{(w-\mu_*)^2}{2\sigma_*^2}\right\}\mathrm{d}w,
    \end{align*}
    where
    \begin{equation*}
        \mu_*=\frac{2\sigma^2_k\left(w^{\mathcal{T}}_{ik}+w^{\mathcal{T}}_{jk}\right)+\gamma_k^2\mu_k}{4\sigma^2_k+\gamma_k^2}\quad\mathrm{and}\quad\sigma^2_*=\frac{\sigma^2_k\gamma_k^2}{4\sigma_k^2+\gamma_k^2}.
    \end{equation*}
    Thus, we have
    \begin{equation*}
    \zeta_{ijk}=\frac{1}{\sqrt{1+4\sigma^2_k/\gamma_k^2}}\exp\left\{-\frac{\left(\frac{w^{\mathcal{T}}_{ik}+w^{\mathcal{T}}_{jk}}{2}-\mu_k\right)^2}{\gamma_k^2/2+2\sigma^2_k}-\frac{\left(w^{\mathcal{T}}_{ik}-w^{\mathcal{T}}_{jk}\right)^2}{2\gamma_k^2}\right\}. 
    \end{equation*}

\subsubsection{\texorpdfstring{{Derivation of $\psi_{jk}$}}{Derivation of the third expectation}}

\begin{align*}
     \psi_{jk}&=\mathbb{E}\left[W_kc_k(W_k,\,w^{\mathcal{T}}_{jk})\right]\\
     &=\int\frac{w}{\sigma_k\sqrt{2\pi}}\exp\left\{-\frac{\left(w-w^{\mathcal{T}}_{jk}\right)^2}{\gamma_k^2}-\frac{(w-\mu_k)^2}{2\sigma_k^2}\right\}\mathrm{d}w\\
     &=\frac{1}{\sqrt{1+2\sigma^2_k/\gamma_k^2}}\exp\left\{-\frac{\left(\mu_k-w^{\mathcal{T}}_{jk}\right)^2}{2\sigma^2_k+\gamma_k^2}\right\}\int\frac{w}{\sigma_*\sqrt{2\pi}}\exp\left\{-\frac{(w-\mu_*)^2}{2\sigma_*^2}\right\}\mathrm{d}w,
     \end{align*}
     where the last step is obtained by completing in square; and
     \begin{equation*}
        \mu_*=\frac{2\sigma^2_kw^{\mathcal{T}}_{jk}+\gamma_k^2\mu_k}{2\sigma^2_k+\gamma_k^2}\quad\mathrm{and}\quad\sigma^2_*=\frac{\sigma^2_k\gamma_k^2}{2\sigma_k^2+\gamma_k^2}. 
    \end{equation*}
    Realising that the integral 
    \begin{equation*}
     \int\frac{w}{\sigma_*\sqrt{2\pi}}\exp\left\{-\frac{(w-\mu_*)^2}{2\sigma_*^2}\right\}\mathrm{d}w   
    \end{equation*}
    is in fact the expectation of a normal random variable with mean $\mu_*$ and variance $\sigma^2_*\,$, we have
    \begin{equation*}
    \psi_{jk}=\frac{1}{\sqrt{1+2\sigma^2_k/\gamma_k^2}}\exp\left\{-\frac{\left(\mu_k-w^{\mathcal{T}}_{jk}\right)^2}{2\sigma^2_k+\gamma_k^2}\right\}\frac{2\sigma^2_kw^{\mathcal{T}}_{jk}+\gamma_k^2\mu_k}{2\sigma^2_k+\gamma_k^2}.
    \end{equation*}

\subsection{Derivation for Mat\'{e}rn-1.5 Case}
\label{sec:matern1.5}
\subsubsection{\texorpdfstring{{Derivation of $\xi_{ik}$}}{Derivation of the first expectation}}
\begin{align}
   \xi_{ik}=&\mathbb{E}\left[c_k(W_k,\,w^{\mathcal{T}}_{ik})\right]\nonumber\\ 
    =&\int\left(1+\frac{\sqrt{3}|w-w^{\mathcal{T}}_{ik}|}{\gamma_k}\right)\frac{1}{\sigma_k\sqrt{2\pi}}\exp\left\{-\frac{\sqrt{3}|w-w^{\mathcal{T}}_{ik}|}{\gamma_k}-\frac{(w-\mu_k)^2}{2\sigma^2_k}\right\}\mathrm{d}w\nonumber\\
    \label{eq:e1}
    =&\int^{+\infty}_{w^{\mathcal{T}}_{ik}}\left(1+\frac{\sqrt{3}\left(w-w^{\mathcal{T}}_{ik}\right)}{\gamma_k}\right)\frac{1}{\sigma_k\sqrt{2\pi}}\exp\left\{-\frac{\sqrt{3}\left(w-w^{\mathcal{T}}_{ik}\right)}{\gamma_k}-\frac{(w-\mu_k)^2}{2\sigma^2_k}\right\}\mathrm{d}w\\
    \label{eq:e2}
     &+\int^{w^{\mathcal{T}}_{ik}}_{-\infty}\left(1+\frac{\sqrt{3}\left(w^{\mathcal{T}}_{ik}-w\right)}{\gamma_k}\right)\frac{1}{\sigma_k\sqrt{2\pi}}\exp\left\{\frac{\sqrt{3}\left(w-w^{\mathcal{T}}_{ik}\right)}{\gamma_k}-\frac{(w-\mu_k)^2}{2\sigma^2_k}\right\}\mathrm{d}w.
 \end{align}   
 
 We first calculate term~\eqref{eq:e1} by completing in square:
     \begin{equation*}
         \eqref{eq:e1}=\exp\left\{\frac{3\sigma^2_k+2\sqrt{3}\gamma_k\left(w^{\mathcal{T}}_{ik}-\mu_k\right)}{2\gamma_k^2}\right\}\int^{+\infty}_{w^{\mathcal{T}}_{ik}}\left[E_{11}w+E_{10}\right]\frac{1}{\sigma_k\sqrt{2\pi}}\exp\left\{-\frac{(w-\mu_A)^2}{2\sigma^2_k}\right\},
     \end{equation*}
     where
     \begin{equation*}
         E_{10}=1-\frac{\sqrt{3}w^{\mathcal{T}}_{ik}}{\gamma_k},\quad
         E_{11}=\frac{\sqrt{3}}{\gamma_k}\quad\mathrm{and}\quad
         \mu_A=\mu_k-\frac{\sqrt{3}\sigma^2_k}{\gamma_k}.
     \end{equation*}
     
     By Lemma~\ref{lemma:partial}, we then obtain
     \begin{align*}
        \eqref{eq:e1}=&\exp\left\{\frac{3\sigma^2_k+2\sqrt{3}\gamma_k\left(w^{\mathcal{T}}_{ik}-\mu_k\right)}{2\gamma_k^2}\right\}\left[\mathbf{E}^\top_1\boldsymbol{\Lambda}_{11}\Phi\left(\frac{\mu_A-w^{\mathcal{T}}_{ik}}{\sigma_k}\right)+\mathbf{E}^\top_1\boldsymbol{\Lambda}_{12}\frac{\sigma_k}{\sqrt{2\pi}}\exp\left\{-\frac{(w^{\mathcal{T}}_{ik}-\mu_A)^2}{2\sigma^2_k}\right\}\right],
     \end{align*}
     where
     \begin{equation*}
      \mathbf{E}_1=[E_{10},\,E_{11}]^\top,\quad\boldsymbol{\Lambda}_{11}=[1,\,\mu_A]^\top\quad\mathrm{and}\quad\boldsymbol{\Lambda}_{12}=[0,\,1]^\top.
     \end{equation*}
         
     Term~\eqref{eq:e2} can be rewritten as follow:
     \begin{align*}
         \eqref{eq:e2}=&\int^{w^{\mathcal{T}}_{ik}}_{-\infty}\left(1+\frac{\sqrt{3}\left(w^{\mathcal{T}}_{ik}-w\right)}{\gamma_k}\right)\frac{1}{\sigma_k\sqrt{2\pi}}\exp\left\{\frac{\sqrt{3}\left(w-w^{\mathcal{T}}_{ik}\right)}{\gamma_k}-\frac{(w-\mu_k)^2}{2\sigma^2_k}\right\}\mathrm{d}w\\
         =&\int^{+\infty}_{-w^{\mathcal{T}}_{ik}}\left(1+\frac{\sqrt{3}\left(w+w^{\mathcal{T}}_{ik}\right)}{\gamma_k}\right)\frac{1}{\sigma_k\sqrt{2\pi}}\exp\left\{-\frac{\sqrt{3}\left(w+w^{\mathcal{T}}_{ik}\right)}{\gamma_k}-\frac{(w+\mu_k)^2}{2\sigma^2_k}\right\}\mathrm{d}w,
     \end{align*}
     the form of which allows us to obtain solution of term~\eqref{eq:e2} by simply using that of term~\eqref{eq:e1}. Thus, we have
      \begin{align*}
        \eqref{eq:e2}=&\exp\left\{\frac{3\sigma^2_k-2\sqrt{3}\gamma_k\left(w^{\mathcal{T}}_{ik}-\mu_k\right)}{2\gamma_k^2}\right\}\left[\mathbf{E}^\top_2\boldsymbol{\Lambda}_{21}\Phi\left(\frac{w^{\mathcal{T}}_{ik}-\mu_B}{\sigma_k}\right)+\mathbf{E}^\top_2\boldsymbol{\Lambda}_{22}\frac{\sigma_k}{\sqrt{2\pi}}\exp\left\{-\frac{(w^{\mathcal{T}}_{ik}-\mu_B)^2}{2\sigma^2_k}\right\}\right],
     \end{align*}
     where
     \begin{equation*}
     \mathbf{E}_2=[E_{20},\,E_{21}]^\top,\quad\boldsymbol{\Lambda}_{21}=[1,\,-\mu_B]^\top\quad\mathrm{and}\quad\boldsymbol{\Lambda}_{22}=[0,\,1]^\top	
     \end{equation*}
     with
     \begin{equation*}
         E_{20}=1+\frac{\sqrt{3}w^{\mathcal{T}}_{ik}}{\gamma_k},\quad
         E_{21}=\frac{\sqrt{3}}{\gamma_k}\quad\mathrm{and}\quad
         \mu_B=\mu_k+\frac{\sqrt{3}\sigma^2_k}{\gamma_k}.
     \end{equation*}
     Finally, we have
     \begin{align*}
        \xi_{ik}=&\exp\left\{\frac{3\sigma^2_k+2\sqrt{3}\gamma_k\left(w^{\mathcal{T}}_{ik}-\mu_k\right)}{2\gamma_k^2}\right\}\left[\mathbf{E}^\top_1\boldsymbol{\Lambda}_{11}\Phi\left(\frac{\mu_A-w^{\mathcal{T}}_{ik}}{\sigma_k}\right)+\mathbf{E}^\top_1\boldsymbol{\Lambda}_{12}\frac{\sigma_k}{\sqrt{2\pi}}\exp\left\{-\frac{(w^{\mathcal{T}}_{ik}-\mu_A)^2}{2\sigma^2_k}\right\}\right]\\
         &+\exp\left\{\frac{3\sigma^2_k-2\sqrt{3}\gamma_k\left(w^{\mathcal{T}}_{ik}-\mu_k\right)}{2\gamma_k^2}\right\}\left[\mathbf{E}^\top_2\boldsymbol{\Lambda}_{21}\Phi\left(\frac{w^{\mathcal{T}}_{ik}-\mu_B}{\sigma_k}\right)+\mathbf{E}^\top_2\boldsymbol{\Lambda}_{22}\frac{\sigma_k}{\sqrt{2\pi}}\exp\left\{-\frac{(w^{\mathcal{T}}_{ik}-\mu_B)^2}{2\sigma^2_k}\right\}\right].
     \end{align*}

\subsubsection{\texorpdfstring{{Derivation of $\zeta_{ijk}$}}{Derivation of the second expectation}}

\begin{align*}
      \zeta_{ijk}=&\mathbb{E}\left[c_k(W_k,\,w^{\mathcal{T}}_{ik})c_k(W_k,\,w^{\mathcal{T}}_{jk})\right]\\
      =&\int\left(1+\frac{\sqrt{3}|w-w^{\mathcal{T}}_{ik}|}{\gamma_k}\right)\left(1+\frac{\sqrt{3}|w-w^{\mathcal{T}}_{jk}|}{\gamma_k}\right)\\
      &\qquad\qquad\times\frac{1}{\sigma_k\sqrt{2\pi}}\exp\left\{-\frac{\sqrt{3}|w-w^{\mathcal{T}}_{ik}|+\sqrt{3}|w-w^{\mathcal{T}}_{jk}|}{\gamma_k}-\frac{(w-\mu_k)^2}{2\sigma^2_k}\right\}\mathrm{d}w.
      \end{align*}
      
      Assume that $w^{\mathcal{T}}_{ik}\leq w^{\mathcal{T}}_{jk}\,$, we have
      \begin{align}
      \label{eq:e4}
      \zeta_{ijk}=&\int^{+\infty}_{w^{\mathcal{T}}_{jk}}\left(1+\frac{\sqrt{3}(w-w^{\mathcal{T}}_{ik})}{\gamma_k}\right)\left(1+\frac{\sqrt{3}(w-w^{\mathcal{T}}_{jk})}{\gamma_k}\right)\nonumber\\
      &\times\frac{1}{\sigma_k\sqrt{2\pi}}\exp\left\{-\frac{\sqrt{3}(w-w^{\mathcal{T}}_{ik})+\sqrt{3}(w-w^{\mathcal{T}}_{jk})}{\gamma_k}-\frac{(w-\mu_k)^2}{2\sigma^2_k}\right\}\mathrm{d}w\\
      \label{eq:e5}
      &+\int^{w^{\mathcal{T}}_{jk}}_{w^{\mathcal{T}}_{ik}}\left(1+\frac{\sqrt{3}(w-w^{\mathcal{T}}_{ik})}{\gamma_k}\right)\left(1+\frac{\sqrt{3}(w^{\mathcal{T}}_{jk}-w)}{\gamma_k}\right)\nonumber\\
      &\times\frac{1}{\sigma_k\sqrt{2\pi}}\exp\left\{-\frac{\sqrt{3}(w-w^{\mathcal{T}}_{ik})+\sqrt{3}(w^{\mathcal{T}}_{jk}-w)}{\gamma_k}-\frac{(w-\mu_k)^2}{2\sigma^2_k}\right\}\mathrm{d}w\\
       \label{eq:e6}
      &+\int^{w^{\mathcal{T}}_{ik}}_{-\infty}\left(1+\frac{\sqrt{3}(w^{\mathcal{T}}_{ik}-w)}{\gamma_k}\right)\left(1+\frac{\sqrt{3}(w^{\mathcal{T}}_{jk}-w)}{\gamma_k}\right)\nonumber\\
      &\times\frac{1}{\sigma_k\sqrt{2\pi}}\exp\left\{-\frac{\sqrt{3}(w^{\mathcal{T}}_{ik}-w)+\sqrt{3}(w^{\mathcal{T}}_{jk}-w)}{\gamma_k}-\frac{(w-\mu_k)^2}{2\sigma^2_k}\right\}\mathrm{d}w.
      \end{align}
      
      We first calculate term~\eqref{eq:e4} by expanding the product of two brackets after the integral sign:
      \begin{multline*}
       \eqref{eq:e4}=\int^{+\infty}_{w^{\mathcal{T}}_{jk}}(E_{32}w^2+E_{31}w+E_{30})\frac{1}{\sigma_k\sqrt{2\pi}}\exp\left\{-\frac{\sqrt{3}(w-w^{\mathcal{T}}_{ik})+\sqrt{3}(w-w^{\mathcal{T}}_{jk})}{\gamma_k}-\frac{(w-\mu_k)^2}{2\sigma^2_k}\right\}\mathrm{d}w,
      \end{multline*}
      where
      \begin{equation*}
       E_{30}=1+\frac{3w^{\mathcal{T}}_{ik}w^{\mathcal{T}}_{jk}-\sqrt{3}\gamma_k\left(w^{\mathcal{T}}_{ik}+w^{\mathcal{T}}_{jk}\right)}{\gamma_k^2},\quad
       E_{31}=\frac{2\sqrt{3}\gamma_k-3\left(w^{\mathcal{T}}_{ik}+w^{\mathcal{T}}_{jk}\right)}{\gamma_k^2}\quad\mathrm{and}\quad E_{32}=\frac{3}{\gamma_k^2}.
      \end{equation*}
      Then by completing in square, we have
      \begin{multline*}
        \eqref{eq:e4}=\exp\left\{\frac{6\sigma_k^2+\sqrt{3}\gamma_k\left(w^{\mathcal{T}}_{ik}+w^{\mathcal{T}}_{jk}-2\mu_k\right)}{\gamma_k^2}\right\}\\
       \times\int^{+\infty}_{w^{\mathcal{T}}_{jk}}(E_{32}w^2+E_{31}w+E_{30})\frac{1}{\sigma_k\sqrt{2\pi}}\exp\left\{-\frac{(w-\mu_{C})^2}{2\sigma^2_k}\right\}\mathrm{d}w,
      \end{multline*}
      where
      \begin{equation*}
        \mu_{C}=\mu_k-2\sqrt{3}\frac{\sigma^2_k}{\gamma_k}. 
      \end{equation*}
      
      Using Lemma~\ref{lemma:partial} and arranging terms, we obtain
      \begin{multline*}
        \eqref{eq:e4}=\exp\left\{\frac{6\sigma_k^2+\sqrt{3}\gamma_k\left(w^{\mathcal{T}}_{ik}+w^{\mathcal{T}}_{jk}-2\mu_k\right)}{\gamma_k^2}\right\}\\
       \times\left[\mathbf{E}^\top_3\boldsymbol{\Lambda}_{31}\Phi\left(\frac{\mu_C-w^{\mathcal{T}}_{jk}}{\sigma_k}\right)+\mathbf{E}^\top_3\boldsymbol{\Lambda}_{32}\frac{\sigma_k}{\sqrt{2\pi}}\exp\left\{-\frac{(w^{\mathcal{T}}_{jk}-\mu_{C})^2}{2\sigma^2_k}\right\}\right],
      \end{multline*}
      where
      \begin{equation*}
      \mathbf{E}_3=[E_{30},\,E_{31},\,E_{32}]^\top,\quad	\boldsymbol{\Lambda}_{31}=[1,\,\mu_C,\,\mu_C^2+\sigma^2_k]^\top\quad\mathrm{and}\quad\boldsymbol{\Lambda}_{32}=[0,\,1,\,\mu_C+w^{\mathcal{T}}_{jk}]^\top.
      \end{equation*}

      The derivation of term~\eqref{eq:e5} is analogue to that of term~\eqref{eq:e4}. By expanding the product of two brackets after the integral sign, we have
      \begin{equation*}
       \eqref{eq:e5}=\int^{w^{\mathcal{T}}_{jk}}_{w^{\mathcal{T}}_{ik}}(E_{42}w^2+E_{41}w+E_{40})\frac{1}{\sigma_k\sqrt{2\pi}}\exp\left\{-\frac{\sqrt{3}(w-w^{\mathcal{T}}_{ik})+\sqrt{3}(w^{\mathcal{T}}_{jk}-w)}{\gamma_k}-\frac{(w-\mu_k)^2}{2\sigma^2_k}\right\}\mathrm{d}w,
      \end{equation*}
      where
      \begin{equation*}
        E_{40}=1+\frac{\sqrt{3}\gamma_k\left(w^{\mathcal{T}}_{jk}-w^{\mathcal{T}}_{ik}\right)-3w^{\mathcal{T}}_{ik}w^{\mathcal{T}}_{jk}}{\gamma_k^2},\quad
        E_{41}=\frac{3\left(w^{\mathcal{T}}_{ik}+w^{\mathcal{T}}_{jk}\right)}{\gamma_k^2}\quad\mathrm{and}\quad E_{42}=-\frac{3}{\gamma_k^2}.	
      \end{equation*}

      Then by completing in square, we have
      \begin{equation*}
        \eqref{eq:e5}=\exp\left\{-\frac{\sqrt{3}\left(w^{\mathcal{T}}_{jk}-w^{\mathcal{T}}_{ik}\right)}{\gamma_k}\right\}\int^{w^{\mathcal{T}}_{jk}}_{w^{\mathcal{T}}_{ik}}(E_{42}w^2+E_{41}w+E_{40})\frac{1}{\sigma_k\sqrt{2\pi}}\exp\left\{-\frac{(w-\mu_k)^2}{2\sigma^2_k}\right\}\mathrm{d}w.
      \end{equation*}
      
      Using Lemma~\ref{lemma:partial} and arranging terms, we obtain 
      \begin{align*}
        \eqref{eq:e5}=&\exp\left\{-\frac{\sqrt{3}\left(w^{\mathcal{T}}_{jk}-w^{\mathcal{T}}_{ik}\right)}{\gamma_k}\right\}\Bigg[\mathbf{E}^\top_4\boldsymbol{\Lambda}_{41}\left(\Phi\left(\frac{w^{\mathcal{T}}_{jk}-\mu_k}{\sigma_k}\right)-\Phi\left(\frac{w^{\mathcal{T}}_{ik}-\mu_k}{\sigma_k}\right)\right)\\
        &\qquad+\mathbf{E}^\top_4\boldsymbol{\Lambda}_{42}\frac{\sigma_k}{\sqrt{2\pi}}\exp\left\{-\frac{(w^{\mathcal{T}}_{ik}-\mu_k)^2}{2\sigma^2_k}\right\}-\mathbf{E}^\top_4\boldsymbol{\Lambda}_{43}\frac{\sigma_k}{\sqrt{2\pi}}\exp\left\{-\frac{(w^{\mathcal{T}}_{jk}-\mu_k)^2}{2\sigma^2_k}\right\}\Bigg],
      \end{align*}
      where
      \begin{equation*}
      \mathbf{E}_4=[E_{40},\,E_{41},\,E_{42}]^\top,\quad \boldsymbol{\Lambda}_{41}=[1,\,\mu_k,\,\mu_k^2+\sigma^2_k]^\top,\quad \boldsymbol{\Lambda}_{42}=[0,\,1,\,\mu_k+w^{\mathcal{T}}_{ik}]^\top\quad\mathrm{and}\quad \boldsymbol{\Lambda}_{43}=[0,\,1,\,\mu_k+w^{\mathcal{T}}_{jk}]^\top.  
      \end{equation*}
      Term~\eqref{eq:e6} can then be computed in the following way:
      \begin{align*}
          \eqref{eq:e6}=&\int^{w^{\mathcal{T}}_{ik}}_{-\infty}\left(1+\frac{\sqrt{3}(w^{\mathcal{T}}_{ik}-w)}{\gamma_k}\right)\left(1+\frac{\sqrt{3}(w^{\mathcal{T}}_{jk}-w)}{\gamma_k}\right)\\
      &\qquad\times\frac{1}{\sigma_k\sqrt{2\pi}}\exp\left\{-\frac{\sqrt{3}(w^{\mathcal{T}}_{ik}-w)+\sqrt{3}(w^{\mathcal{T}}_{jk}-w)}{\gamma_k}-\frac{(w-\mu_k)^2}{2\sigma^2_k}\right\}\mathrm{d}w\\
      =&\int^{+\infty}_{-w^{\mathcal{T}}_{ik}}\left(1+\frac{\sqrt{3}(w+w^{\mathcal{T}}_{ik})}{\gamma_k}\right)\left(1+\frac{\sqrt{3}(w+w^{\mathcal{T}}_{jk})}{\gamma_k}\right)\\
      &\qquad\times\frac{1}{\sigma_k\sqrt{2\pi}}\exp\left\{-\frac{\sqrt{3}(w+w^{\mathcal{T}}_{ik})+\sqrt{3}(w+w^{\mathcal{T}}_{jk})}{\gamma_k}-\frac{(w+\mu_k)^2}{2\sigma^2_k}\right\}\mathrm{d}w,
      \end{align*}
      the form of which allows us to obtain solution of term~\eqref{eq:e6} by simply using that of term~\eqref{eq:e4}. Thus, we have
      \begin{multline*}
        \eqref{eq:e6}=\exp\left\{\frac{6\sigma_k^2-\sqrt{3}\gamma_k\left(w^{\mathcal{T}}_{ik}+w^{\mathcal{T}}_{jk}-2\mu_k\right)}{\gamma_k^2}\right\}\\
        \times\left[\mathbf{E}^\top_5\boldsymbol{\Lambda}_{51}\Phi\left(\frac{w^{\mathcal{T}}_{ik}-\mu_D}{\sigma_k}\right)+\mathbf{E}^\top_5\boldsymbol{\Lambda}_{52}\frac{\sigma_k}{\sqrt{2\pi}}\exp\left\{-\frac{(w^{\mathcal{T}}_{ik}-\mu_{D})^2}{2\sigma^2_k}\right\}\right],
      \end{multline*}
      where
      \begin{equation*}
      \mathbf{E}_5=[E_{50},\,E_{51},\,E_{52}]^\top,\quad \boldsymbol{\Lambda}_{51}=[1,\,-\mu_D,\,\mu_D^2+\sigma^2_k]^\top\quad\mathrm{and}\quad \boldsymbol{\Lambda}_{52}=[0,\,1,\,-\mu_D-w^{\mathcal{T}}_{ik}]^\top 
      \end{equation*}
      with
      \begin{itemize}
      \item $E_{50}=1+\dfrac{3w^{\mathcal{T}}_{ik}w^{\mathcal{T}}_{jk}+\sqrt{3}\gamma_k\left(w^{\mathcal{T}}_{ik}+w^{\mathcal{T}}_{jk}\right)}{\gamma_k^2}\quad \mathrm{and}\quad E_{51}=\dfrac{2\sqrt{3}\gamma_k+3\left(w^{\mathcal{T}}_{ik}+w^{\mathcal{T}}_{jk}\right)}{\gamma_k^2}$;
      \item $E_{52}=\dfrac{3}{\gamma_k^2}\quad\mathrm{and}\quad \mu_D=\mu_k+2\sqrt{3}\dfrac{\sigma^2_k}{\gamma_k}$.
      \end{itemize}
      
      Therefore, the expression for $\zeta_{ijk}$ when $w^{\mathcal{T}}_{ik}\leq w^{\mathcal{T}}_{jk}$ is given by
      \begin{align}
      \label{eq:e7}
      \zeta_{ijk}=&\exp\left\{\frac{6\sigma_k^2+\sqrt{3}\gamma_k\left(w^{\mathcal{T}}_{ik}+w^{\mathcal{T}}_{jk}-2\mu_k\right)}{\gamma_k^2}\right\}\nonumber\\
        &\quad\times\left[\mathbf{E}^\top_3\boldsymbol{\Lambda}_{31}\Phi\left(\frac{\mu_C-w^{\mathcal{T}}_{jk}}{\sigma_k}\right)+\mathbf{E}^\top_3\boldsymbol{\Lambda}_{32}\frac{\sigma_k}{\sqrt{2\pi}}\exp\left\{-\frac{(w^{\mathcal{T}}_{jk}-\mu_{C})^2}{2\sigma^2_k}\right\}\right]\nonumber\\
        &+\exp\left\{-\frac{\sqrt{3}\left(w^{\mathcal{T}}_{jk}-w^{\mathcal{T}}_{ik}\right)}{\gamma_k}\right\}\Bigg[\mathbf{E}^\top_4\boldsymbol{\Lambda}_{41}\left(\Phi\left(\frac{w^{\mathcal{T}}_{jk}-\mu_k}{\sigma_k}\right)-\Phi\left(\frac{w^{\mathcal{T}}_{ik}-\mu_k}{\sigma_k}\right)\right)\nonumber\\
        &\quad+\mathbf{E}^\top_4\boldsymbol{\Lambda}_{42}\frac{\sigma_k}{\sqrt{2\pi}}\exp\left\{-\frac{(w^{\mathcal{T}}_{ik}-\mu_k)^2}{2\sigma^2_k}\right\}-\mathbf{E}^\top_4\boldsymbol{\Lambda}_{43}\frac{\sigma_k}{\sqrt{2\pi}}\exp\left\{-\frac{(w^{\mathcal{T}}_{jk}-\mu_k)^2}{2\sigma^2_k}\right\}\Bigg]\nonumber\\
        &+\exp\left\{\frac{6\sigma_k^2-\sqrt{3}\gamma_k\left(w^{\mathcal{T}}_{ik}+w^{\mathcal{T}}_{jk}-2\mu_k\right)}{\gamma_k^2}\right\}\nonumber\\
        &\quad\times\left[\mathbf{E}^\top_5\boldsymbol{\Lambda}_{51}\Phi\left(\frac{w^{\mathcal{T}}_{ik}-\mu_D}{\sigma_k}\right)+\mathbf{E}^\top_5\boldsymbol{\Lambda}_{52}\frac{\sigma_k}{\sqrt{2\pi}}\exp\left\{-\frac{(w^{\mathcal{T}}_{ik}-\mu_{D})^2}{2\sigma^2_k}\right\}\right].\nonumber
      \end{align}
      
      Observe that
      \begin{equation*}
        \mathbb{E}\left[c_k(W_k,\,w^{\mathcal{T}}_{ik})c_k(W_k,\,w^{\mathcal{T}}_{jk})\right]=\mathbb{E}\left[c_k(W_k,\,w^{\mathcal{T}}_{jk})c_k(W_k,\,w^{\mathcal{T}}_{ik})\right].  
      \end{equation*}
      Thus, the expression for $\zeta_{ijk}$ when $w^{\mathcal{T}}_{ik}> w^{\mathcal{T}}_{jk}$ is obtained by simply interchanging the positions of $w^{\mathcal{T}}_{ik}$ and $w^{\mathcal{T}}_{jk}$ in the above formula of $\zeta_{ijk}$ when $w^{\mathcal{T}}_{ik}\leq w^{\mathcal{T}}_{jk}$.

\subsubsection{\texorpdfstring{{Derivation of $\psi_{jk}$}}{Derivation of the third expectation}}
\begin{align}
     \psi_{jk}=&\mathbb{E}\left[W_kc_k(W_k,\,w^{\mathcal{T}}_{jk})\right]\nonumber\\
     =&\int w\left(1+\frac{\sqrt{3}|w-w^{\mathcal{T}}_{jk}|}{\gamma_k}\right)\frac{1}{\sigma_k\sqrt{2\pi}}\exp\left\{-\frac{\sqrt{3}|w-w^{\mathcal{T}}_{jk}|}{\gamma_k}-\frac{(w-\mu_k)^2}{2\sigma^2_k}\right\}\mathrm{d}w\nonumber\\
     \label{eq:e8}
     =&\int^{+\infty}_{w^{\mathcal{T}}_{jk}}\left(w+\frac{\sqrt{3}w\left(w-w^{\mathcal{T}}_{jk}\right)}{\gamma_k}\right)\frac{1}{\sigma_k\sqrt{2\pi}}\exp\left\{-\frac{\sqrt{3}\left(w-w^{\mathcal{T}}_{jk}\right)}{\gamma_k}-\frac{(w-\mu_k)^2}{2\sigma^2_k}\right\}\mathrm{d}w\\
     \label{eq:e9}
     &+\int^{w^{\mathcal{T}}_{jk}}_{-\infty}\left(w+\frac{\sqrt{3}w\left(w^{\mathcal{T}}_{jk}-w\right)}{\gamma_k}\right)\frac{1}{\sigma_k\sqrt{2\pi}}\exp\left\{\frac{\sqrt{3}\left(w-w^{\mathcal{T}}_{jk}\right)}{\gamma_k}-\frac{(w-\mu_k)^2}{2\sigma^2_k}\right\}\mathrm{d}w.
     \end{align}
     
     We first calculate term~\eqref{eq:e8} by arranging the terms in the bracket after the integral sign and completing in square:
     \begin{equation*}
         \eqref{eq:e8}=\exp\left\{\frac{3\sigma^2_k+2\sqrt{3}\gamma_k\left(w^{\mathcal{T}}_{jk}-\mu_k\right)}{2\gamma_k^2}\right\}\int^{+\infty}_{w^{\mathcal{T}}_{jk}}\left[E_{11}w^2+E_{10}w\right]\frac{1}{\sigma_k\sqrt{2\pi}}\exp\left\{-\frac{(w-\mu_A)^2}{2\sigma^2_k}\right\}.
    \end{equation*}
     
     By Lemma~\ref{lemma:partial}, we then obtain
     \begin{equation*}
        \eqref{eq:e8}=\exp\left\{\frac{3\sigma^2_k+2\sqrt{3}\gamma_k\left(w^{\mathcal{T}}_{jk}-\mu_k\right)}{2\gamma_k^2}\right\}\left[\mathbf{E}^\top_1\boldsymbol{\Lambda}_{61}\Phi\left(\frac{\mu_A-w^{\mathcal{T}}_{jk}}{\sigma_k}\right)+\mathbf{E}^\top_1\boldsymbol{\Lambda}_{62}\frac{\sigma_k}{\sqrt{2\pi}}\exp\left\{-\frac{(w^{\mathcal{T}}_{jk}-\mu_A)^2}{2\sigma^2_k}\right\}\right],
     \end{equation*}
     where
     \begin{equation*}
     \boldsymbol{\Lambda}_{61}=[\mu_A,\,\mu^2_A+\sigma^2_k]^\top\quad\mathrm{and}\quad\boldsymbol{\Lambda}_{62}=[1,\,\mu_A+w^{\mathcal{T}}_{jk}]^\top.	
     \end{equation*}
     
     Term~\eqref{eq:e9} can be rewritten as follow:
     \begin{align*}
         \eqref{eq:e9}=&\int^{w^{\mathcal{T}}_{jk}}_{-\infty}\left(1+\frac{\sqrt{3}\left(w^{\mathcal{T}}_{jk}-w\right)}{\gamma_k}\right)\frac{w}{\sigma_k\sqrt{2\pi}}\exp\left\{\frac{\sqrt{3}\left(w-w^{\mathcal{T}}_{jk}\right)}{\gamma_k}-\frac{(w-\mu_k)^2}{2\sigma^2_k}\right\}\mathrm{d}w\\
         =&-\int^{+\infty}_{-w^{\mathcal{T}}_{jk}}\left(1+\frac{\sqrt{3}\left(w+w^{\mathcal{T}}_{jk}\right)}{\gamma_k}\right)\frac{w}{\sigma_k\sqrt{2\pi}}\exp\left\{-\frac{\sqrt{3}\left(w+w^{\mathcal{T}}_{jk}\right)}{\gamma_k}-\frac{(w+\mu_k)^2}{2\sigma^2_k}\right\}\mathrm{d}w,
     \end{align*}
     the form of which allows us to obtain the solution of term~\eqref{eq:e9} by simply using that of term~\eqref{eq:e8}. Thus, we have
      \begin{multline*}
        \eqref{eq:e9}=-\exp\left\{\frac{3\sigma^2_k-2\sqrt{3}\gamma_k\left(w^{\mathcal{T}}_{jk}-\mu_k\right)}{2\gamma_k^2}\right\}\\
        \times\left[\mathbf{E}^\top_2\boldsymbol{\Lambda}_{71}\Phi\left(\frac{w^{\mathcal{T}}_{jk}-\mu_B}{\sigma_k}\right)+\mathbf{E}^\top_2\boldsymbol{\Lambda}_{72}\frac{\sigma_k}{\sqrt{2\pi}}\exp\left\{-\frac{(w^{\mathcal{T}}_{jk}-\mu_B)^2}{2\sigma^2_k}\right\}\right],
     \end{multline*}
     where
     \begin{equation*}
     \boldsymbol{\Lambda}_{71}=[-\mu_B,\,\mu^2_B+\sigma^2_k]^\top\quad\mathrm{and}\quad\boldsymbol{\Lambda}_{72}=[1,\,-\mu_B-w^{\mathcal{T}}_{jk}]^\top.	
     \end{equation*} 
     
     Finally, we have
     \begin{align*}
         \psi_{jk}=&\exp\left\{\frac{3\sigma^2_k+2\sqrt{3}\gamma_k\left(w^{\mathcal{T}}_{jk}-\mu_k\right)}{2\gamma_k^2}\right\}\left[\mathbf{E}^\top_1\boldsymbol{\Lambda}_{61}\Phi\left(\frac{\mu_A-w^{\mathcal{T}}_{jk}}{\sigma_k}\right)+\mathbf{E}^\top_1\boldsymbol{\Lambda}_{62}\frac{\sigma_k}{\sqrt{2\pi}}\exp\left\{-\frac{(w^{\mathcal{T}}_{jk}-\mu_A)^2}{2\sigma^2_k}\right\}\right]\\
         &-\exp\left\{\frac{3\sigma^2_k-2\sqrt{3}\gamma_k\left(w^{\mathcal{T}}_{jk}-\mu_k\right)}{2\gamma_k^2}\right\}\left[\mathbf{E}^\top_2\boldsymbol{\Lambda}_{71}\Phi\left(\frac{w^{\mathcal{T}}_{jk}-\mu_B}{\sigma_k}\right)+\mathbf{E}^\top_2\boldsymbol{\Lambda}_{72}\frac{\sigma_k}{\sqrt{2\pi}}\exp\left\{-\frac{(w^{\mathcal{T}}_{jk}-\mu_B)^2}{2\sigma^2_k}\right\}\right].
     \end{align*}

\subsection{Derivation for Mat\'{e}rn-2.5 Case}
\label{sec:matern2.5}
\subsubsection{\texorpdfstring{{Derivation of $\xi_{ik}$}}{Derivation of the first expectation}}

\begin{align}
    \xi_{ik}=&\mathbb{E}\left[c_k(W_k,\,w^{\mathcal{T}}_{ik})\right]\nonumber\\ 
    =&\int\left(1+\frac{\sqrt{5}|w-w^{\mathcal{T}}_{ik}|}{\gamma_k}+\frac{5(w-w^{\mathcal{T}}_{ik})^2}{3\gamma_k^2}\right)\frac{1}{\sigma_k\sqrt{2\pi}}\exp\left\{-\frac{\sqrt{5}|w-w^{\mathcal{T}}_{ik}|}{\gamma_k}-\frac{(w-\mu_k)^2}{2\sigma^2_k}\right\}\mathrm{d}w\nonumber\\
    \label{eq:f1}
    =&\int^{+\infty}_{w^{\mathcal{T}}_{ik}}\left(1+\frac{\sqrt{5}\left(w-w^{\mathcal{T}}_{ik}\right)}{\gamma_k}+\frac{5}{3}\left(\frac{w-w^{\mathcal{T}}_{ik}}{\gamma_k}\right)^2\right)\frac{1}{\sigma_k\sqrt{2\pi}}\exp\left\{-\frac{\sqrt{5}\left(w-w^{\mathcal{T}}_{ik}\right)}{\gamma_k}-\frac{(w-\mu_k)^2}{2\sigma^2_k}\right\}\mathrm{d}w\\
      \label{eq:f2}
     &+\int^{w^{\mathcal{T}}_{ik}}_{-\infty}\left(1+\frac{\sqrt{5}\left(w^{\mathcal{T}}_{ik}-w\right)}{\gamma_k}+\frac{5}{3}\left(\frac{w-w^{\mathcal{T}}_{ik}}{\gamma_k}\right)^2\right)\frac{1}{\sigma_k\sqrt{2\pi}}\exp\left\{\frac{\sqrt{5}\left(w-w^{\mathcal{T}}_{ik}\right)}{\gamma_k}-\frac{(w-\mu_k)^2}{2\sigma^2_k}\right\}\mathrm{d}w.
 \end{align}   
 
 We first calculate term~\eqref{eq:f1} by arranging the terms in the bracket after the integral sign and completing the square:
     \begin{equation*}
         \eqref{eq:f1}=\exp\left\{\frac{5\sigma^2_k+2\sqrt{5}\gamma_k\left(w^{\mathcal{T}}_{ik}-\mu_k\right)}{2\gamma_k^2}\right\}\int^{+\infty}_{w^{\mathcal{T}}_{ik}}\left[E_{12}w^2+E_{11}w+E_{10}\right]\frac{1}{\sigma_k\sqrt{2\pi}}\exp\left\{-\frac{(w-\mu_A)^2}{2\sigma^2_k}\right\},
     \end{equation*}
     where
     \begin{equation*}
     E_{10}=1-\frac{\sqrt{5}w^{\mathcal{T}}_{ik}}{\gamma_k}+\frac{5\left(w^{\mathcal{T}}_{ik}\right)^2}{3\gamma_k^2},\quad
     E_{11}=\frac{\sqrt{5}}{\gamma_k}-\frac{10w^{\mathcal{T}}_{ik}}{3\gamma_k^2},\quad
     E_{12}=\frac{5}{3\gamma_k^2},\quad
     \mu_A=\mu_k-\frac{\sqrt{5}\sigma^2_k}{\gamma_k}.	
     \end{equation*}
          
     By Lemma~\ref{lemma:partial}, we then obtain
     \begin{equation*}
        \eqref{eq:f1}=\exp\left\{\frac{5\sigma^2_k+2\sqrt{5}\gamma_k\left(w^{\mathcal{T}}_{ik}-\mu_k\right)}{2\gamma_k^2}\right\}\left[\mathbf{E}^\top_1\boldsymbol{\Lambda}_{11}\Phi\left(\frac{\mu_A-w^{\mathcal{T}}_{ik}}{\sigma_k}\right)+\mathbf{E}^\top_1\boldsymbol{\Lambda}_{12}\frac{\sigma_k}{\sqrt{2\pi}}\exp\left\{-\frac{(w^{\mathcal{T}}_{ik}-\mu_A)^2}{2\sigma^2_k}\right\}\right],
     \end{equation*}
     where
     \begin{equation*}
     \mathbf{E}_1=[E_{10},\,E_{11},\,E_{12}]^\top,\quad \boldsymbol{\Lambda}_{11}=[1,\,\mu_A,\,\mu^2_A+\sigma^2_k]^\top\quad\mathrm{and}\quad \boldsymbol{\Lambda}_{12}=[0,\,1,\,\mu_A+w^{\mathcal{T}}_{ik}]^\top.
     \end{equation*}
     
     Term~\eqref{eq:f2} can be rewritten as follow:
     \begin{align*}
         \eqref{eq:f2}=&\int^{w^{\mathcal{T}}_{ik}}_{-\infty}\left(1+\frac{\sqrt{5}\left(w^{\mathcal{T}}_{ik}-w\right)}{\gamma_k}+\frac{5}{3}\left(\frac{w-w^{\mathcal{T}}_{ik}}{\gamma_k}\right)^2\right)\frac{1}{\sigma_k\sqrt{2\pi}}\exp\left\{\frac{\sqrt{5}\left(w-w^{\mathcal{T}}_{ik}\right)}{\gamma_k}-\frac{(w-\mu_k)^2}{2\sigma^2_k}\right\}\mathrm{d}w\\
         =&\int^{+\infty}_{-w^{\mathcal{T}}_{ik}}\left(1+\frac{\sqrt{5}\left(w+w^{\mathcal{T}}_{ik}\right)}{\gamma_k}+\frac{5}{3}\left(\frac{w+w^{\mathcal{T}}_{ik}}{\gamma_k}\right)^2\right)\frac{1}{\sigma_k\sqrt{2\pi}}\exp\left\{-\frac{\sqrt{5}\left(w+w^{\mathcal{T}}_{ik}\right)}{\gamma_k}-\frac{(w+\mu_k)^2}{2\sigma^2_k}\right\}\mathrm{d}w,
     \end{align*}
     the form of which allows us to obtain solution of term~\eqref{eq:f2} by simply using that of term~\eqref{eq:f1}. Thus, we have
      \begin{equation*}
        \eqref{eq:f2}=\exp\left\{\frac{5\sigma^2_k-2\sqrt{5}\gamma_k\left(w^{\mathcal{T}}_{ik}-\mu_k\right)}{2\gamma_k^2}\right\}\left[\mathbf{E}^\top_2\boldsymbol{\Lambda}_{21}\Phi\left(\frac{w^{\mathcal{T}}_{ik}-\mu_B}{\sigma_k}\right)+\mathbf{E}^\top_2\boldsymbol{\Lambda}_{22}\frac{\sigma_k}{\sqrt{2\pi}}\exp\left\{-\frac{(w^{\mathcal{T}}_{ik}-\mu_B)^2}{2\sigma^2_k}\right\}\right],
     \end{equation*}
     where
     \begin{equation*}
     \mathbf{E}_2=[E_{20},\,E_{21},\,E_{22}]^\top,\quad \boldsymbol{\Lambda}_{21}=[1,\,-\mu_B,\,\mu^2_B+\sigma^2_k]^\top\quad\mathrm{and}\quad \boldsymbol{\Lambda}_{22}=[0,\,1,\,-\mu_B-w^{\mathcal{T}}_{ik}]^\top
     \end{equation*}
     with
     \begin{equation*}
     E_{20}=1+\frac{\sqrt{5}w^{\mathcal{T}}_{ik}}{\gamma_k}+\frac{5\left(w^{\mathcal{T}}_{ik}\right)^2}{3\gamma_k^2},\quad
     E_{21}=\frac{\sqrt{5}}{\gamma_k}+\frac{10w^{\mathcal{T}}_{ik}}{3\gamma_k^2},\quad
     E_{22}=\frac{5}{3\gamma_k^2},\quad\mathrm{and}\quad       \mu_B=\mu_k+\frac{\sqrt{5}\sigma^2_k}{\gamma_k}.
     \end{equation*}
     Thus, we have
     \begin{align*}
        \xi_{ik}=&\exp\left\{\frac{5\sigma^2_k+2\sqrt{5}\gamma_k\left(w^{\mathcal{T}}_{ik}-\mu_k\right)}{2\gamma_k^2}\right\}\left[\mathbf{E}^\top_1\boldsymbol{\Lambda}_{11}\Phi\left(\frac{\mu_A-w^{\mathcal{T}}_{ik}}{\sigma_k}\right)+\mathbf{E}^\top_1\boldsymbol{\Lambda}_{12}\frac{\sigma_k}{\sqrt{2\pi}}\exp\left\{-\frac{(w^{\mathcal{T}}_{ik}-\mu_A)^2}{2\sigma^2_k}\right\}\right]\\
         &+\exp\left\{\frac{5\sigma^2_k-2\sqrt{5}\gamma_k\left(w^{\mathcal{T}}_{ik}-\mu_k\right)}{2\gamma_k^2}\right\}\left[\mathbf{E}^\top_2\boldsymbol{\Lambda}_{21}\Phi\left(\frac{w^{\mathcal{T}}_{ik}-\mu_B}{\sigma_k}\right)+\mathbf{E}^\top_2\boldsymbol{\Lambda}_{22}\frac{\sigma_k}{\sqrt{2\pi}}\exp\left\{-\frac{(w^{\mathcal{T}}_{ik}-\mu_B)^2}{2\sigma^2_k}\right\}\right].
     \end{align*}

\subsubsection{\texorpdfstring{{Derivation of $\zeta_{ijk}$}}{Derivation of the second expectation}}
\vspace{1ex}
Assume that $w^{\mathcal{T}}_{ik}\leq w^{\mathcal{T}}_{jk}\,$, we have
      \begin{align}
      \label{eq:f4}
      \zeta_{ijk}=&\int^{+\infty}_{w^{\mathcal{T}}_{jk}}\left(1+\frac{\sqrt{5}(w-w^{\mathcal{T}}_{ik})}{\gamma_k}+\frac{5}{3}\left(\frac{w-w^{\mathcal{T}}_{ik}}{\gamma_k}\right)^2\right)\left(1+\frac{\sqrt{5}(w-w^{\mathcal{T}}_{jk})}{\gamma_k}+\frac{5}{3}\left(\frac{w-w^{\mathcal{T}}_{jk}}{\gamma_k}\right)^2\right)\nonumber\\
      &\qquad\times\frac{1}{\sigma_k\sqrt{2\pi}}\exp\left\{-\frac{\sqrt{5}(w-w^{\mathcal{T}}_{ik})+\sqrt{5}(w-w^{\mathcal{T}}_{jk})}{\gamma_k}-\frac{(w-\mu_k)^2}{2\sigma^2_k}\right\}\mathrm{d}w\\
      \label{eq:f5}
      +&\int^{w^{\mathcal{T}}_{jk}}_{w^{\mathcal{T}}_{ik}}\left(1+\frac{\sqrt{5}(w-w^{\mathcal{T}}_{ik})}{\gamma_k}+\frac{5}{3}\left(\frac{w-w^{\mathcal{T}}_{ik}}{\gamma_k}\right)^2\right)\left(1+\frac{\sqrt{5}(w^{\mathcal{T}}_{jk}-w)}{\gamma_k}+\frac{5}{3}\left(\frac{w-w^{\mathcal{T}}_{jk}}{\gamma_k}\right)^2\right)\nonumber\\
      &\qquad\times\frac{1}{\sigma_k\sqrt{2\pi}}\exp\left\{-\frac{\sqrt{5}(w-w^{\mathcal{T}}_{ik})+\sqrt{5}(w^{\mathcal{T}}_{jk}-w)}{\gamma_k}-\frac{(w-\mu_k)^2}{2\sigma^2_k}\right\}\mathrm{d}w\\
      \label{eq:f6}
      +&\int^{w^{\mathcal{T}}_{ik}}_{-\infty}\left(1+\frac{\sqrt{5}(w^{\mathcal{T}}_{ik}-w)}{\gamma_k}+\frac{5}{3}\left(\frac{w-w^{\mathcal{T}}_{ik}}{\gamma_k}\right)^2\right)\left(1+\frac{\sqrt{5}(w^{\mathcal{T}}_{jk}-w)}{\gamma_k}+\frac{5}{3}\left(\frac{w-w^{\mathcal{T}}_{jk}}{\gamma_k}\right)^2\right)\nonumber\\
      &\qquad\times\frac{1}{\sigma_k\sqrt{2\pi}}\exp\left\{-\frac{\sqrt{5}(w^{\mathcal{T}}_{ik}-w)+\sqrt{5}(w^{\mathcal{T}}_{jk}-w)}{\gamma_k}-\frac{(w-\mu_k)^2}{2\sigma^2_k}\right\}\mathrm{d}w.
      \end{align}
      
      We first calculate term~\eqref{eq:f4} by expanding the product of two brackets after the integral sign:    
      \begin{multline*}
       \eqref{eq:f4}=\int^{+\infty}_{w^{\mathcal{T}}_{jk}}(E_{34}w^4+E_{33}w^3+E_{32}w^2+E_{31}w+E_{30})\\
       \times\frac{1}{\sigma_k\sqrt{2\pi}}\exp\left\{-\frac{\sqrt{5}(w-w^{\mathcal{T}}_{ik})+\sqrt{5}(w-w^{\mathcal{T}}_{jk})}{\gamma_k}-\frac{(w-\mu_k)^2}{2\sigma^2_k}\right\}\mathrm{d}w,
      \end{multline*}
      where
      \begin{align*}
        E_{30}=&1+\bigg[25\left(w^{\mathcal{T}}_{ik}\right)^2\left(w^{\mathcal{T}}_{jk}\right)^2-3\sqrt{5}\left(3\gamma_k^3+5\gamma_kw^{\mathcal{T}}_{ik}w^{\mathcal{T}}_{jk}\right)\left(w^{\mathcal{T}}_{ik}+w^{\mathcal{T}}_{jk}\right)\\
        &+15\gamma_k^2\left(\left(w^{\mathcal{T}}_{ik}\right)^2+\left(w^{\mathcal{T}}_{jk}\right)^2+3w^{\mathcal{T}}_{ik}w^{\mathcal{T}}_{jk}\right)\bigg]\bigg/9\gamma_k^4\\
        E_{31}=&\bigg[18\sqrt{5}\gamma_k^3+15\sqrt{5}\gamma_k\left(\left(w^{\mathcal{T}}_{ik}\right)^2+\left(w^{\mathcal{T}}_{jk}\right)^2\right)-75\gamma_k^2\left(w^{\mathcal{T}}_{ik}+w^{\mathcal{T}}_{jk}\right)\\
        &-50w^{\mathcal{T}}_{ik}w^{\mathcal{T}}_{jk}\left(w^{\mathcal{T}}_{ik}+w^{\mathcal{T}}_{jk}\right)+60\sqrt{5}\gamma_kw^{\mathcal{T}}_{ik}w^{\mathcal{T}}_{jk}\bigg]\bigg/9\gamma_k^4\\
        E_{32}=&5\bigg[5\left(w^{\mathcal{T}}_{ik}\right)^2+5\left(w^{\mathcal{T}}_{jk}\right)^2+15\gamma_k^2-9\sqrt{5}\gamma_k\left(w^{\mathcal{T}}_{ik}+w^{\mathcal{T}}_{jk}\right)+20w^{\mathcal{T}}_{ik}w^{\mathcal{T}}_{jk}\bigg]\bigg/9\gamma_k^4\\
        E_{33}=&\dfrac{10\left(3\sqrt{5}\gamma_k-5w^{\mathcal{T}}_{ik}-5w^{\mathcal{T}}_{jk}\right)}{9\gamma_k^4}\quad\mathrm{and}\quad E_{34}=\dfrac{25}{9\gamma_k^4}.
      \end{align*}
      
      Then by completing the square, we have
      \begin{multline*}
        \eqref{eq:f4}=\exp\left\{\frac{10\sigma_k^2+\sqrt{5}\gamma_k\left(w^{\mathcal{T}}_{ik}+w^{\mathcal{T}}_{jk}-2\mu_k\right)}{\gamma_k^2}\right\}\\
        \times\int^{+\infty}_{w^{\mathcal{T}}_{jk}}(E_{34}w^4+E_{33}w^3+E_{32}w^2+E_{31}w+E_{30})\frac{1}{\sigma_k\sqrt{2\pi}}\exp\left\{-\frac{(w-\mu_{C})^2}{2\sigma^2_k}\right\}\mathrm{d}w,
      \end{multline*}
      where
      \begin{equation*}
        \mu_{C}=\mu_k-2\sqrt{5}\frac{\sigma^2_k}{\gamma_k}. 
      \end{equation*}
      
      Using Lemma~\ref{lemma:partial} and arranging terms, we obtain
      \begin{multline*}
        \eqref{eq:f4}=\exp\left\{\frac{10\sigma_k^2+\sqrt{5}\gamma_k\left(w^{\mathcal{T}}_{ik}+w^{\mathcal{T}}_{jk}-2\mu_k\right)}{\gamma_k^2}\right\}\\
        \qquad\qquad\times\left[\mathbf{E}^\top_3\boldsymbol{\Lambda}_{31}\Phi\left(\frac{\mu_C-w^{\mathcal{T}}_{jk}}{\sigma_k}\right)+\mathbf{E}^\top_3\boldsymbol{\Lambda}_{32}\frac{\sigma_k}{\sqrt{2\pi}}\exp\left\{-\frac{(w^{\mathcal{T}}_{jk}-\mu_{C})^2}{2\sigma^2_k}\right\}\right],
      \end{multline*}
      where
      \begin{itemize}
          \item $\mathbf{E}_3=[E_{30},\,E_{31},\,E_{32},\,E_{33},\,E_{34}]^\top\,$;
          \item $\boldsymbol{\Lambda}_{31}=[1,\,\mu_C,\,\mu_C^2+\sigma^2_k,\,\mu_C^3+3\sigma^2_k\mu_C,\,\mu_C^4+6\sigma^2_k\mu_C^2+3\sigma_k^4]^\top\,$;
          \item $\boldsymbol{\Lambda}_{32}=[0,\,1,\,\mu_C+w^{\mathcal{T}}_{jk},\,\mu_C^2+2\sigma^2_k+\left(w^{\mathcal{T}}_{jk}\right)^2+\mu_Cw^{\mathcal{T}}_{jk},\,\mu_C^3+\left(w^{\mathcal{T}}_{jk}\right)^3+w^{\mathcal{T}}_{jk}\mu_C^2+\mu_C\left(w^{\mathcal{T}}_{jk}\right)^2+3\sigma_k^2w^{\mathcal{T}}_{jk}+5\sigma_k^2\mu_C]^\top\,$.
      \end{itemize}
      
      The derivation of term~\eqref{eq:f5} is analogue to that of term~\eqref{eq:f4}. By expanding the product of two brackets after the integral sign, we have
      \begin{multline*}
       \eqref{eq:f5}=\int^{w^{\mathcal{T}}_{jk}}_{w^{\mathcal{T}}_{ik}}(E_{44}w^4+E_{43}w^3+E_{42}w^2+E_{41}w+E_{40})\\
       \times\frac{1}{\sigma_k\sqrt{2\pi}}\exp\left\{-\frac{\sqrt{5}(w-w^{\mathcal{T}}_{ik})+\sqrt{5}(w^{\mathcal{T}}_{jk}-w)}{\gamma_k}-\frac{(w-\mu_k)^2}{2\sigma^2_k}\right\}\mathrm{d}w,
      \end{multline*}
      where
      \begin{align*}
        E_{40}=&1+\bigg[25\left(w^{\mathcal{T}}_{ik}\right)^2\left(w^{\mathcal{T}}_{jk}\right)^2+3\sqrt{5}\left(3\gamma_k^3-5\gamma_kw^{\mathcal{T}}_{ik}w^{\mathcal{T}}_{jk}\right)\left(w^{\mathcal{T}}_{jk}-w^{\mathcal{T}}_{ik}\right)\\
        &+15\gamma_k^2\left(\left(w^{\mathcal{T}}_{ik}\right)^2+\left(w^{\mathcal{T}}_{jk}\right)^2-3w^{\mathcal{T}}_{ik}w^{\mathcal{T}}_{jk}\right)\bigg]\bigg/9\gamma_k^4\\
        E_{41}=&5\bigg[3\sqrt{5}\gamma_k\left(\left(w^{\mathcal{T}}_{jk}\right)^2-\left(w^{\mathcal{T}}_{ik}\right)^2\right)+3\gamma_k^2\left(w^{\mathcal{T}}_{ik}+w^{\mathcal{T}}_{jk}\right)-10w^{\mathcal{T}}_{ik}w^{\mathcal{T}}_{jk}\left(w^{\mathcal{T}}_{ik}+w^{\mathcal{T}}_{jk}\right)\bigg]\bigg/9\gamma_k^4\\
        E_{42}=&5\bigg[5\left(w^{\mathcal{T}}_{ik}\right)^2+5\left(w^{\mathcal{T}}_{jk}\right)^2-3\gamma_k^2-3\sqrt{5}\gamma_k\left(w^{\mathcal{T}}_{jk}-w^{\mathcal{T}}_{ik}\right)+20w^{\mathcal{T}}_{ik}w^{\mathcal{T}}_{jk}\bigg]\bigg/9\gamma_k^4\\
        E_{43}=&-\dfrac{50\left(w^{\mathcal{T}}_{ik}+w^{\mathcal{T}}_{jk}\right)}{9\gamma_k^4}\quad\mathrm{and}\quad E_{44}=\dfrac{25}{9\gamma_k^4}.
      \end{align*}
      
      Then by completing the square, we have
      \begin{multline*}
        \eqref{eq:f5}=\exp\left\{-\frac{\sqrt{5}\left(w^{\mathcal{T}}_{jk}-w^{\mathcal{T}}_{ik}\right)}{\gamma_k}\right\}\\
        \times\int^{w^{\mathcal{T}}_{jk}}_{w^{\mathcal{T}}_{ik}}(E_{44}w^4+E_{43}w^3+E_{42}w^2+E_{41}w+E_{40})\frac{1}{\sigma_k\sqrt{2\pi}}\exp\left\{-\frac{(w-\mu_k)^2}{2\sigma^2_k}\right\}\mathrm{d}w.
      \end{multline*}
      
      Using Lemma~\ref{lemma:partial} and arranging terms, we obtain 
      \begin{multline*}
        \eqref{eq:f5}=\exp\left\{-\frac{\sqrt{5}\left(w^{\mathcal{T}}_{jk}-w^{\mathcal{T}}_{ik}\right)}{\gamma_k}\right\}\Bigg[\mathbf{E}^\top_4\boldsymbol{\Lambda}_{41}\left[\Phi\left(\frac{w^{\mathcal{T}}_{jk}-\mu_k}{\sigma_k}\right)-\Phi\left(\frac{w^{\mathcal{T}}_{ik}-\mu_k}{\sigma_k}\right)\right]\\
        +\mathbf{E}^\top_4\boldsymbol{\Lambda}_{42}\frac{\sigma_k}{\sqrt{2\pi}}\exp\left\{-\frac{(w^{\mathcal{T}}_{ik}-\mu_k)^2}{2\sigma^2_k}\right\}-\mathbf{E}^\top_4\boldsymbol{\Lambda}_{43}\frac{\sigma_k}{\sqrt{2\pi}}\exp\left\{-\frac{(w^{\mathcal{T}}_{jk}-\mu_k)^2}{2\sigma^2_k}\right\}\Bigg],
      \end{multline*}
      where
      \begin{itemize}
          \item $\mathbf{E}_4=[E_{40},\,E_{41},\,E_{42},\,E_{43},\,E_{44}]^\top\,$;
          \item $\boldsymbol{\Lambda}_{41}=[1,\,\mu_k,\,\mu_k^2+\sigma^2_k,\,\mu_k^3+3\sigma^2_k\mu_k,\,\mu_k^4+6\sigma^2_k\mu_k^2+3\sigma_k^4]^\top\,$;
          \item $\boldsymbol{\Lambda}_{42}=[0,\,1,\,\mu_k+w^{\mathcal{T}}_{ik},\,\mu_k^2+2\sigma^2_k+\left(w^{\mathcal{T}}_{ik}\right)^2+\mu_kw^{\mathcal{T}}_{ik},\,\mu_k^3+\left(w^{\mathcal{T}}_{ik}\right)^3+w^{\mathcal{T}}_{ik}\mu_k^2+\mu_k\left(w^{\mathcal{T}}_{ik}\right)^2+3\sigma_k^2w^{\mathcal{T}}_{ik}+5\sigma_k^2\mu_k]^\top\,$;
          \item $\boldsymbol{\Lambda}_{43}=[0,\,1,\,\mu_k+w^{\mathcal{T}}_{jk},\,\mu_k^2+2\sigma^2_k+\left(w^{\mathcal{T}}_{jk}\right)^2+\mu_kw^{\mathcal{T}}_{jk},\,\mu_k^3+\left(w^{\mathcal{T}}_{jk}\right)^3+w^{\mathcal{T}}_{jk}\mu_k^2+\mu_k\left(w^{\mathcal{T}}_{jk}\right)^2+3\sigma_k^2w^{\mathcal{T}}_{jk}+5\sigma_k^2\mu_k]^\top\,$.
      \end{itemize}
      
      Term~\eqref{eq:f6} can be computed in the following way:
      \begin{align*}
          \eqref{eq:f6}=&\int^{w^{\mathcal{T}}_{ik}}_{-\infty}\left(1+\frac{\sqrt{5}(w^{\mathcal{T}}_{ik}-w)}{\gamma_k}+\frac{5}{3}\left(\frac{w-w^{\mathcal{T}}_{ik}}{\gamma_k}\right)^2\right)\left(1+\frac{\sqrt{5}(w^{\mathcal{T}}_{jk}-w)}{\gamma_k}+\frac{5}{3}\left(\frac{w-w^{\mathcal{T}}_{jk}}{\gamma_k}\right)^2\right)\\
      &\qquad\times\frac{1}{\sigma_k\sqrt{2\pi}}\exp\left\{-\frac{\sqrt{5}(w^{\mathcal{T}}_{ik}-w)+\sqrt{5}(w^{\mathcal{T}}_{jk}-w)}{\gamma_k}-\frac{(w-\mu_k)^2}{2\sigma^2_k}\right\}\mathrm{d}w\\
      =&\int^{+\infty}_{-w^{\mathcal{T}}_{ik}}\left(1+\frac{\sqrt{5}(w+w^{\mathcal{T}}_{ik})}{\gamma_k}+\frac{5}{3}\left(\frac{w+w^{\mathcal{T}}_{ik}}{\gamma_k}\right)^2\right)\left(1+\frac{\sqrt{5}(w+w^{\mathcal{T}}_{jk})}{\gamma_k}+\frac{5}{3}\left(\frac{w+w^{\mathcal{T}}_{jk}}{\gamma_k}\right)^2\right)\\
      &\qquad\times\frac{1}{\sigma_k\sqrt{2\pi}}\exp\left\{-\frac{\sqrt{5}(w+w^{\mathcal{T}}_{ik})+\sqrt{5}(w+w^{\mathcal{T}}_{jk})}{\gamma_k}-\frac{(w+\mu_k)^2}{2\sigma^2_k}\right\}\mathrm{d}w,
      \end{align*}
      the form of which allows us to obtain solution of term~\eqref{eq:f6} by simply using that of term~\eqref{eq:f4}. Thus, we have
      \begin{multline*}
        \eqref{eq:f6}=\exp\left\{\frac{10\sigma_k^2-\sqrt{5}\gamma_k\left(w^{\mathcal{T}}_{ik}+w^{\mathcal{T}}_{jk}-2\mu_k\right)}{\gamma_k^2}\right\}\\
        \times\left[\mathbf{E}^\top_5\boldsymbol{\Lambda}_{51}\Phi\left(\frac{w^{\mathcal{T}}_{ik}-\mu_D}{\sigma_k}\right)+\mathbf{E}^\top_5\boldsymbol{\Lambda}_{52}\frac{\sigma_k}{\sqrt{2\pi}}\exp\left\{-\frac{(w^{\mathcal{T}}_{ik}-\mu_{D})^2}{2\sigma^2_k}\right\}\right],
      \end{multline*}
      where
      \begin{itemize}
          \item $\mathbf{E}_5=[E_{50},\,E_{51},\,E_{52},\,E_{53},\,E_{54}]^\top\,$;
          \item $\boldsymbol{\Lambda}_{51}=[1,\,-\mu_D,\,\mu_D^2+\sigma^2_k,\,-\mu_D^3-3\sigma^2_k\mu_D,\,\mu_D^4+6\sigma^2_k\mu_D^2+3\sigma_k^4]^\top\,$;
          \item $\boldsymbol{\Lambda}_{52}=[0,\,1,\,-\mu_D-w^{\mathcal{T}}_{ik},\,\mu_D^2+2\sigma^2_k+\left(w^{\mathcal{T}}_{ik}\right)^2+\mu_Dw^{\mathcal{T}}_{ik},\,-\mu_D^3-\left(w^{\mathcal{T}}_{ik}\right)^3-w^{\mathcal{T}}_{ik}\mu_D^2-\mu_D\left(w^{\mathcal{T}}_{ik}\right)^2-3\sigma_k^2w^{\mathcal{T}}_{ik}-5\sigma_k^2\mu_D]^\top$
      \end{itemize}
      with
      \begin{align*}
        E_{50}=&1+\bigg[25\left(w^{\mathcal{T}}_{ik}\right)^2\left(w^{\mathcal{T}}_{jk}\right)^2+3\sqrt{5}\left(3\gamma_k^3+5\gamma_kw^{\mathcal{T}}_{ik}w^{\mathcal{T}}_{jk}\right)\left(w^{\mathcal{T}}_{ik}+w^{\mathcal{T}}_{jk}\right)\\
        &+15\gamma_k^2\left(\left(w^{\mathcal{T}}_{ik}\right)^2+\left(w^{\mathcal{T}}_{jk}\right)^2+3w^{\mathcal{T}}_{ik}w^{\mathcal{T}}_{jk}\right)\bigg]\bigg/9\gamma_k^4\\
        E_{51}=&\bigg[18\sqrt{5}\gamma_k^3+15\sqrt{5}\gamma_k\left(\left(w^{\mathcal{T}}_{ik}\right)^2+\left(w^{\mathcal{T}}_{jk}\right)^2\right)+75\gamma_k^2\left(w^{\mathcal{T}}_{ik}+w^{\mathcal{T}}_{jk}\right)\\
        &+50w^{\mathcal{T}}_{ik}w^{\mathcal{T}}_{jk}\left(w^{\mathcal{T}}_{ik}+w^{\mathcal{T}}_{jk}\right)+60\sqrt{5}\gamma_kw^{\mathcal{T}}_{ik}w^{\mathcal{T}}_{jk}\bigg]\bigg/9\gamma_k^4\\
        E_{52}=&5\bigg[5\left(w^{\mathcal{T}}_{ik}\right)^2+5\left(w^{\mathcal{T}}_{jk}\right)^2+15\gamma_k^2+9\sqrt{5}\gamma_k\left(w^{\mathcal{T}}_{ik}+w^{\mathcal{T}}_{jk}\right)+20w^{\mathcal{T}}_{ik}w^{\mathcal{T}}_{jk}\bigg]\bigg/9\gamma_k^4\\
        E_{53}=&\dfrac{10\left(3\sqrt{5}\gamma_k+5w^{\mathcal{T}}_{ik}+5w^{\mathcal{T}}_{jk}\right)}{9\gamma_k^4},\quad E_{54}=\dfrac{25}{9\gamma_k^4}\quad\mathrm{and}\quad \mu_D=\mu_k+2\sqrt{5}\frac{\sigma^2_k}{\gamma_k}.	
      \end{align*}
      
      Therefore, the expression for $\zeta_{ijk}$ when $w^{\mathcal{T}}_{ik}\leq w^{\mathcal{T}}_{jk}$ is given by
      \begin{align}
      \label{eq:jd}
      \zeta_{ijk}=&\exp\left\{\frac{10\sigma_k^2+\sqrt{5}\gamma_k\left(w^{\mathcal{T}}_{ik}+w^{\mathcal{T}}_{jk}-2\mu_k\right)}{\gamma_k^2}\right\}\nonumber\\
        &\quad\times\left[\mathbf{E}^\top_3\boldsymbol{\Lambda}_{31}\Phi\left(\frac{\mu_C-w^{\mathcal{T}}_{jk}}{\sigma_k}\right)+\mathbf{E}^\top_3\boldsymbol{\Lambda}_{32}\frac{\sigma_k}{\sqrt{2\pi}}\exp\left\{-\frac{(w^{\mathcal{T}}_{jk}-\mu_{C})^2}{2\sigma^2_k}\right\}\right]\nonumber\\
        &+\exp\left\{-\frac{\sqrt{5}\left(w^{\mathcal{T}}_{jk}-w^{\mathcal{T}}_{ik}\right)}{\gamma_k}\right\}\Bigg[\mathbf{E}^\top_4\boldsymbol{\Lambda}_{41}\left(\Phi\left(\frac{w^{\mathcal{T}}_{jk}-\mu_k}{\sigma_k}\right)-\Phi\left(\frac{w^{\mathcal{T}}_{ik}-\mu_k}{\sigma_k}\right)\right)\nonumber\\
        &\quad+\mathbf{E}^\top_4\boldsymbol{\Lambda}_{42}\frac{\sigma_k}{\sqrt{2\pi}}\exp\left\{-\frac{(w^{\mathcal{T}}_{ik}-\mu_k)^2}{2\sigma^2_k}\right\}-\mathbf{E}^\top_4\boldsymbol{\Lambda}_{43}\frac{\sigma_k}{\sqrt{2\pi}}\exp\left\{-\frac{(w^{\mathcal{T}}_{jk}-\mu_k)^2}{2\sigma^2_k}\right\}\Bigg]\nonumber\\
        &+\exp\left\{\frac{10\sigma_k^2-\sqrt{5}\gamma_k\left(w^{\mathcal{T}}_{ik}+w^{\mathcal{T}}_{jk}-2\mu_k\right)}{\gamma_k^2}\right\}\nonumber\\
        &\quad\times\left[\mathbf{E}^\top_5\boldsymbol{\Lambda}_{51}\Phi\left(\frac{w^{\mathcal{T}}_{ik}-\mu_D}{\sigma_k}\right)+\mathbf{E}^\top_5\boldsymbol{\Lambda}_{52}\frac{\sigma_k}{\sqrt{2\pi}}\exp\left\{-\frac{(w^{\mathcal{T}}_{ik}-\mu_{D})^2}{2\sigma^2_k}\right\}\right],\nonumber
      \end{align}
      and interchanging positions of $w^{\mathcal{T}}_{ik}$ and $w^{\mathcal{T}}_{jk}$ gives the expression for $\zeta_{ijk}$ when $w^{\mathcal{T}}_{ik}> w^{\mathcal{T}}_{jk}\,$.

\subsubsection{\texorpdfstring{{Derivation of $\psi_{jk}$}}{Derivation of the third expectation}}
\begin{align}
     \psi_{jk}=&\int w\left(1+\frac{\sqrt{5}|w-w^{\mathcal{T}}_{jk}|}{\gamma_k}+\frac{5}{3}\left(\frac{w-w^{\mathcal{T}}_{jk}}{\gamma_k}\right)^2\right)\frac{1}{\sigma_k\sqrt{2\pi}}\exp\left\{-\frac{\sqrt{5}|w-w^{\mathcal{T}}_{jk}|}{\gamma_k}-\frac{(w-\mu_k)^2}{2\sigma^2_k}\right\}\mathrm{d}w\nonumber\\
     \label{eq:f8}
     =&\int^{+\infty}_{w^{\mathcal{T}}_{jk}}\left(w+\frac{\sqrt{5}w\left(w-w^{\mathcal{T}}_{jk}\right)}{\gamma_k}+\frac{5w}{3}\left(\frac{w-w^{\mathcal{T}}_{jk}}{\gamma_k}\right)^2\right)\frac{1}{\sigma_k\sqrt{2\pi}}\exp\left\{-\frac{\sqrt{5}\left(w-w^{\mathcal{T}}_{jk}\right)}{\gamma_k}-\frac{(w-\mu_k)^2}{2\sigma^2_k}\right\}\mathrm{d}w\\
     \label{eq:f9}
     &+\int^{w^{\mathcal{T}}_{jk}}_{-\infty}\left(w+\frac{\sqrt{5}w\left(w^{\mathcal{T}}_{jk}-w\right)}{\gamma_k}+\frac{5w}{3}\left(\frac{w-w^{\mathcal{T}}_{jk}}{\gamma_k}\right)^2\right)\frac{1}{\sigma_k\sqrt{2\pi}}\exp\left\{\frac{\sqrt{5}\left(w-w^{\mathcal{T}}_{jk}\right)}{\gamma_k}-\frac{(w-\mu_k)^2}{2\sigma^2_k}\right\}\mathrm{d}w.
     \end{align}
     
     We first calculate term~\eqref{eq:f8} by arranging the terms in the bracket after the integral sign and completing the square:
     \begin{equation*}
         \eqref{eq:f8}=\exp\left\{\frac{5\sigma^2_k+2\sqrt{5}\gamma_k\left(w^{\mathcal{T}}_{jk}-\mu_k\right)}{2\gamma_k^2}\right\}\int^{+\infty}_{w^{\mathcal{T}}_{jk}}\left[E_{12}w^3+E_{11}w^2+E_{10}w\right]\frac{1}{\sigma_k\sqrt{2\pi}}\exp\left\{-\frac{(w-\mu_A)^2}{2\sigma^2_k}\right\}.
     \end{equation*}
     By Lemma~\ref{lemma:partial}, we then obtain
     \begin{equation*}
        \eqref{eq:f8}=\exp\left\{\frac{5\sigma^2_k+2\sqrt{5}\gamma_k\left(w^{\mathcal{T}}_{jk}-\mu_k\right)}{2\gamma_k^2}\right\}\left[\mathbf{E}^\top_1\boldsymbol{\Lambda}_{61}\Phi\left(\frac{\mu_A-w^{\mathcal{T}}_{jk}}{\sigma_k}\right)+\mathbf{E}^\top_1\boldsymbol{\Lambda}_{62}\frac{\sigma_k}{\sqrt{2\pi}}\exp\left\{-\frac{(w^{\mathcal{T}}_{jk}-\mu_A)^2}{2\sigma^2_k}\right\}\right],
     \end{equation*}
     where
     \begin{itemize}
     \item $\boldsymbol{\Lambda}_{61}=\left[\mu_A,\,\mu^2_A+\sigma^2_k,\,\mu^3_A+3\sigma_k^2\mu_A\right]^\top$;
     \item $\boldsymbol{\Lambda}_{62}=\left[1,\,\mu_A+w^{\mathcal{T}}_{jk},\,\mu^2_A+2\sigma^2_k+\left(w^{\mathcal{T}}_{jk}\right)^2+\mu_Aw^{\mathcal{T}}_{jk}\right]^\top$.
     \end{itemize}
Term~\eqref{eq:f9} can be rewritten as follow:
     \begin{align*}
         \eqref{eq:f9}=&\int^{w^{\mathcal{T}}_{jk}}_{-\infty}\left(1+\frac{\sqrt{5}\left(w^{\mathcal{T}}_{jk}-w\right)}{\gamma_k}+\frac{5}{3}\left(\frac{w-w^{\mathcal{T}}_{jk}}{\gamma_k}\right)^2\right)\frac{w}{\sigma_k\sqrt{2\pi}}\exp\left\{\frac{\sqrt{5}\left(w-w^{\mathcal{T}}_{jk}\right)}{\gamma_k}-\frac{(w-\mu_k)^2}{2\sigma^2_k}\right\}\mathrm{d}w\\
         =&-\int^{+\infty}_{-w^{\mathcal{T}}_{jk}}\left(1+\frac{\sqrt{5}\left(w+w^{\mathcal{T}}_{jk}\right)}{\gamma_k}+\frac{5}{3}\left(\frac{w+w^{\mathcal{T}}_{jk}}{\gamma_k}\right)^2\right)\frac{w}{\sigma_k\sqrt{2\pi}}\exp\left\{-\frac{\sqrt{5}\left(w+w^{\mathcal{T}}_{jk}\right)}{\gamma_k}-\frac{(w+\mu_k)^2}{2\sigma^2_k}\right\}\mathrm{d}w,
     \end{align*}
     the form of which allows us to obtain solution of term~\eqref{eq:f9} by using that of term~\eqref{eq:f8}. Thus, we have
      \begin{multline*}
        \eqref{eq:f9}=-\exp\left\{\frac{5\sigma^2_k-2\sqrt{5}\gamma_k\left(w^{\mathcal{T}}_{jk}-\mu_k\right)}{2\gamma_k^2}\right\}\\
        \times\left[\mathbf{E}^\top_2\boldsymbol{\Lambda}_{71}\Phi\left(\frac{w^{\mathcal{T}}_{jk}-\mu_B}{\sigma_k}\right)+\mathbf{E}^\top_2\boldsymbol{\Lambda}_{72}\frac{\sigma_k}{\sqrt{2\pi}}\exp\left\{-\frac{(w^{\mathcal{T}}_{jk}-\mu_B)^2}{2\sigma^2_k}\right\}\right],
     \end{multline*}
     where
     \begin{itemize}
     \item $\boldsymbol{\Lambda}_{71}=\left[-\mu_B,\,\mu^2_B+\sigma^2_k,\,-\mu^3_B-3\sigma_k^2\mu_B\right]^\top$;
     \item $\boldsymbol{\Lambda}_{72}=\left[1,\,-\mu_B-w^{\mathcal{T}}_{jk},\,\mu^2_B+2\sigma^2_k+\left(w^{\mathcal{T}}_{jk}\right)^2+\mu_Bw^{\mathcal{T}}_{jk}\right]^\top$.
     \end{itemize}
Thus, we have
     \begin{align*}
         \psi_{jk}=&\exp\left\{\frac{5\sigma^2_k+2\sqrt{5}\gamma_k\left(w^{\mathcal{T}}_{jk}-\mu_k\right)}{2\gamma_k^2}\right\}\left[\mathbf{E}^\top_1\boldsymbol{\Lambda}_{61}\Phi\left(\frac{\mu_A-w^{\mathcal{T}}_{jk}}{\sigma_k}\right)+\mathbf{E}^\top_1\boldsymbol{\Lambda}_{62}\frac{\sigma_k}{\sqrt{2\pi}}\exp\left\{-\frac{(w^{\mathcal{T}}_{jk}-\mu_A)^2}{2\sigma^2_k}\right\}\right]\\
         &-\exp\left\{\frac{5\sigma^2_k-2\sqrt{5}\gamma_k\left(w^{\mathcal{T}}_{jk}-\mu_k\right)}{2\gamma_k^2}\right\}\left[\mathbf{E}^\top_2\boldsymbol{\Lambda}_{71}\Phi\left(\frac{w^{\mathcal{T}}_{jk}-\mu_B}{\sigma_k}\right)+\mathbf{E}^\top_2\boldsymbol{\Lambda}_{72}\frac{\sigma_k}{\sqrt{2\pi}}\exp\left\{-\frac{(w^{\mathcal{T}}_{jk}-\mu_B)^2}{2\sigma^2_k}\right\}\right].
     \end{align*}

\section[Proof of Proposition~5.1]{Proof of Proposition~\ref{prop:variance}}
\label{sec:proofvariance}
Replace $\mu_g(\mathbf{W},\mathbf{z})$ by equation~\eqref{eq:krigingmean} with Assumption~\ref{ass:1}, we have
\begin{align*}
\mathbb{E}_{W_{k\in\mathbb{S}^\mathsf{c}}}\left[\mu_g(\mathbf{W},\mathbf{z})\right]=&\mathbb{E}_{W_{k\in\mathbb{S}^\mathsf{c}}}\left[\mathbf{W}^\top\right]\widehat{\boldsymbol{\theta}}+\mathbf{h}(\mathbf{z})^\top\widehat{\boldsymbol{\beta}}+\mathbb{E}_{W_{k\in\mathbb{S}^\mathsf{c}}}\left[\mathbf{r}^\top(\mathbf{W},\,\mathbf{z})\right]\mathbf{R}^{-1}\left(\mathbf{y}^{\mathcal{T}}-\mathbf{w}^{\mathcal{T}}\widehat{\boldsymbol{\theta}}-\mathbf{H}(\mathbf{z}^{\mathcal{T}})\widehat{\boldsymbol{\beta}}\right)\nonumber\\
 =&\widetilde{\boldsymbol{\mu}}^\top\widehat{\boldsymbol{\theta}}+\mathbf{h}(\mathbf{z})^\top\widehat{\boldsymbol{\beta}}+\widetilde{\mathbf{I}}^\top\mathbf{A},
\end{align*}
where
\begin{itemize}
\item $\widetilde{\boldsymbol{\mu}}=\mathbb{E}_{W_{k\in\mathbb{S}^\mathsf{c}}}\left[\mathbf{W}^\top\right]\in\mathbb{R}^{d\times 1}$ is a column vector with its $k$-th element:
\begin{equation*}
\widetilde{\mu}_k=\begin{dcases*}
W_k, & $k\in\mathbb{S}$,	\\
\mu_k, & $k\in\mathbb{S}^\mathsf{c}$;
\end{dcases*}
\end{equation*}
\item $\widetilde{\mathbf{I}}=\mathbb{E}_{W_{k\in\mathbb{S}^\mathsf{c}}}\left[\mathbf{r}^\top(\mathbf{W},\,\mathbf{z})\right]\in\mathbb{R}^{m\times 1}$ with its $i$-th element:
    \begin{align*}
        \widetilde{I}_i=&\mathbb{E}_{W_{k\in\mathbb{S}^\mathsf{c}}}\left[c(\mathbf{W},\,\mathbf{w}^{\mathcal{T}}_i)c(\mathbf{z},\,\mathbf{z}^{\mathcal{T}}_i)\right]\\
        =&\mathbb{E}_{W_{k\in\mathbb{S}^\mathsf{c}}}\left[c(\mathbf{W},\,\mathbf{w}^{\mathcal{T}}_i)\right]c(\mathbf{z},\,\mathbf{z}^{\mathcal{T}}_i)\\
        =&\mathbb{E}_{W_{k\in\mathbb{S}^\mathsf{c}}}\left[\prod_{k=1}^dc_k(W_k,\,w^{\mathcal{T}}_{ik})\right]\prod_{k=1}^pc_k(z_k,\,z^{\mathcal{T}}_{ik})\\
        =&\prod_{k\in\mathbb{S}}c_k(W_k,\,w^{\mathcal{T}}_{ik})\prod_{k\in\mathbb{S}^\mathsf{c}}\mathbb{E}_{W_k}\left[c_k(W_k,\,w^{\mathcal{T}}_{ik})\right]\prod_{k=1}^pc_k(z_k,\,z^{\mathcal{T}}_{ik})\\
        =&\prod_{k\in\mathbb{S}}c_k(W_k,\,w^{\mathcal{T}}_{ik})\prod_{k\in\mathbb{S}^\mathsf{c}}\xi_{ik} \prod_{k=1}^pc_k(z_k,\,z^{\mathcal{T}}_{ik}).
    \end{align*}
    \end{itemize}
    
Then, we have
\begin{align}
\label{eq:v1proof}
	V_1(\mathbb{S})=&\mathrm{Var}_{W_{k\in\mathbb{S}}}\left(\widetilde{\boldsymbol{\mu}}^\top\widehat{\boldsymbol{\theta}}+\mathbf{h}(\mathbf{z})^\top\widehat{\boldsymbol{\beta}}+\widetilde{\mathbf{I}}^\top\mathbf{A}\right)\nonumber\\
	=&\mathrm{Var}_{W_{k\in\mathbb{S}}}\left(\widetilde{\boldsymbol{\mu}}^\top\widehat{\boldsymbol{\theta}}+\widetilde{\mathbf{I}}^\top\mathbf{A}\right)\nonumber\\
	=&\underbrace{\mathbb{E}_{W_{k\in\mathbb{S}}}\left[\left(\widetilde{\boldsymbol{\mu}}^\top\widehat{\boldsymbol{\theta}}+\widetilde{\mathbf{I}}^\top\mathbf{A}\right)^2\right]}_{(\mathrm{\ref{eq:v1proof}.1})}-\underbrace{\left(\mathbb{E}_{W_{k\in\mathbb{S}}}\left[\widetilde{\boldsymbol{\mu}}^\top\widehat{\boldsymbol{\theta}}+\widetilde{\mathbf{I}}^\top\mathbf{A}\right]\right)^2}_{(\mathrm{\ref{eq:v1proof}.2})}.
\end{align}	

We first derive~$(\mathrm{\ref{eq:v1proof}.1})$ as follow:
\begin{align}
\label{eq:expectation1exp}
(\mathrm{\ref{eq:v1proof}.1})=&\mathbb{E}_{W_{k\in\mathbb{S}}}\left[\widetilde{\boldsymbol{\mu}}^\top\widehat{\boldsymbol{\theta}}\widehat{\boldsymbol{\theta}}^\top\widetilde{\boldsymbol{\mu}}+\widetilde{\mathbf{I}}^\top\mathbf{A}\mathbf{A}^\top\widetilde{\mathbf{I}}+2\widehat{\boldsymbol{\theta}}^\top\widetilde{\boldsymbol{\mu}}\widetilde{\mathbf{I}}^\top\mathbf{A}\right]	\nonumber\\
=&\mathrm{tr}\left\{\widehat{\boldsymbol{\theta}}\widehat{\boldsymbol{\theta}}^\top\left(\boldsymbol{\mu}\boldsymbol{\mu}^\top+\widetilde{\boldsymbol{\Omega}}\right)\right\}+\mathrm{tr}\left\{\mathbf{A}\mathbf{A}^\top\mathbb{E}_{W_{k\in\mathbb{S}}}\left[\widetilde{\mathbf{I}}\widetilde{\mathbf{I}}^\top\right]\right\}+2\widehat{\boldsymbol{\theta}}^\top\mathbb{E}_{W_{k\in\mathbb{S}}}\left[\widetilde{\boldsymbol{\mu}}\widetilde{\mathbf{I}}^\top\right]\mathbf{A}\nonumber\\
=&\mathrm{tr}\left\{\widehat{\boldsymbol{\theta}}\widehat{\boldsymbol{\theta}}^\top\left(\boldsymbol{\mu}\boldsymbol{\mu}^\top+\widetilde{\boldsymbol{\Omega}}\right)\right\}+\mathrm{tr}\left\{\mathbf{A}\mathbf{A}^\top\widetilde{\mathbf{J}}\right\}+2\widehat{\boldsymbol{\theta}}^\top\widetilde{\mathbf{B}}\mathbf{A},
\end{align}
where the second step uses the derivations analogous to those used for equations~\eqref{eq:expectation1} and~\eqref{eq:expectation2}, and 
\begin{itemize}
\item $\widetilde{\boldsymbol{\Omega}}=\mathrm{Var}_{W_{k\in\mathbb{S}}}\left(\widetilde{\boldsymbol{\mu}}\right)\in\mathbb{R}^{d\times d}$ being a diagonal matrix with its $k$-th diagonal element given by 
\begin{equation*}
\widetilde{\boldsymbol{\Omega}}_k=\sigma_k^2(\mathbf{x}_k)\mathbbm{1}_{\{k\in\mathbb{S}\}};
\end{equation*}
\item $\widetilde{\mathbf{B}}=\mathbb{E}_{W_{k\in\mathbb{S}}}\left[\widetilde{\boldsymbol{\mu}}\widetilde{\mathbf{I}}^\top\right]\in\mathbb{R}^{d\times m}$ with its $lj$-th element:
\begin{align*}
\widetilde{B}_{lj}=&\mathbb{E}_{W_{k\in\mathbb{S}}}\left[\widetilde{\mu}_l\prod_{k\in\mathbb{S}}c_k(W_k,\,w^{\mathcal{T}}_{jk})\prod_{k\in\mathbb{S}^\mathsf{c}}\xi_{jk} \prod_{k=1}^pc_k(z_k,\,z^{\mathcal{T}}_{jk})\right]\\
=&\mathbb{E}_{W_{k\in\mathbb{S}}}\left[\widetilde{\mu}_l\prod_{k\in\mathbb{S}}c_k(W_k,\,w^{\mathcal{T}}_{jk})\right]\prod_{k\in\mathbb{S}^\mathsf{c}}\xi_{jk} \prod_{k=1}^pc_k(z_k,\,z^{\mathcal{T}}_{jk})\\
=&\begin{dcases}
\mathbb{E}_{W_{k\in\mathbb{S}}}\left[W_lc_l(W_l,\,w^{\mathcal{T}}_{jl})\prod_{\substack{k\in\mathbb{S}\\k\neq l}}c_k(W_k,\,w^{\mathcal{T}}_{jk})\right]\prod_{k\in\mathbb{S}^\mathsf{c}}\xi_{jk}\prod_{k=1}^pc_k(z_k,\,z^{\mathcal{T}}_{jk}), & l\in\mathbb{S}\\
\mathbb{E}_{W_{k\in\mathbb{S}}}\left[\mu_l\prod_{k\in\mathbb{S}}c_k(W_k,\,w^{\mathcal{T}}_{jk})\right]\prod_{k\in\mathbb{S}^\mathsf{c}}\xi_{jk}\prod_{k=1}^pc_k(z_k,\,z^{\mathcal{T}}_{jk}), & l\in\mathbb{S}^\mathsf{c}	
\end{dcases}\\
=&\begin{dcases}
\mathbb{E}_{W_l}\left[W_lc_l(W_l,\,w^{\mathcal{T}}_{jl})\right]\prod_{\substack{k\in\mathbb{S}\\k\neq l}}\mathbb{E}_{W_k}\left[c_k(W_k,\,w^{\mathcal{T}}_{jk})\right]\prod_{k\in\mathbb{S}^\mathsf{c}}\xi_{jk}\prod_{k=1}^pc_k(z_k,\,z^{\mathcal{T}}_{jk}), & l\in\mathbb{S}\\
\mu_l\prod_{k\in\mathbb{S}}\mathbb{E}_{W_k}\left[c_k(W_k,\,w^{\mathcal{T}}_{jk})\right]\prod_{k\in\mathbb{S}^\mathsf{c}}\xi_{jk}\prod_{k=1}^pc_k(z_k,\,z^{\mathcal{T}}_{jk}), & l\in\mathbb{S}^\mathsf{c}	
\end{dcases}\\
=&\begin{dcases}
\psi_{jl}\prod_{\substack{k\in\mathbb{S}\\k\neq l}}\xi_{jk}\prod_{k\in\mathbb{S}^\mathsf{c}}\xi_{jk}\prod_{k=1}^pc_k(z_k,\,z^{\mathcal{T}}_{jk}), & l\in\mathbb{S}\\
\mu_l\prod_{k\in\mathbb{S}}\xi_{jk}\prod_{k\in\mathbb{S}^\mathsf{c}}\xi_{jk}\prod_{k=1}^pc_k(z_k,\,z^{\mathcal{T}}_{jk}), & l\in\mathbb{S}^\mathsf{c}	
\end{dcases}\\
=&\begin{dcases}
\psi_{jl}\prod^d_{\substack{k=1\\k\neq l}}\xi_{jk}\prod_{k=1}^pc_k(z_k,\,z^{\mathcal{T}}_{jk}), & l\in\mathbb{S}\\
\mu_l\prod_{k=1}^d\xi_{jk}\prod_{k=1}^pc_k(z_k,\,z^{\mathcal{T}}_{jk}), & l\in\mathbb{S}^\mathsf{c};
\end{dcases}
\end{align*}

\item $\widetilde{\mathbf{J}}=\mathbb{E}_{W_{k\in\mathbb{S}}}\left[\widetilde{\mathbf{I}}\widetilde{\mathbf{I}}^\top\right]\in\mathbb{R}^{m\times m}$ with its $ij$-th element:
\begin{align}
\widetilde{J}_{ij}=&\mathbb{E}_{W_{k\in\mathbb{S}}}\left[\prod_{k\in\mathbb{S}}c_k(W_k,\,w^{\mathcal{T}}_{ik})\prod_{k\in\mathbb{S}^\mathsf{c}}\xi_{ik} \prod_{k=1}^pc_k(z_k,\,z^{\mathcal{T}}_{ik})\times \prod_{k\in\mathbb{S}}c_k(W_k,\,w^{\mathcal{T}}_{jk})\prod_{k\in\mathbb{S}^\mathsf{c}}\xi_{jk} \prod_{k=1}^pc_k(z_k,\,z^{\mathcal{T}}_{jk})\right]\nonumber\\
=&\mathbb{E}_{W_{k\in\mathbb{S}}}\left[\prod_{k\in\mathbb{S}}c_k(W_k,\,w^{\mathcal{T}}_{ik})c_k(W_k,\,w^{\mathcal{T}}_{jk})\prod_{k\in\mathbb{S}^\mathsf{c}}\xi_{ik}\xi_{jk}\prod_{k=1}^pc_k(z_k,\,z^{\mathcal{T}}_{ik})c_k(z_k,\,z^{\mathcal{T}}_{jk})\right]\nonumber\\
=&\mathbb{E}_{W_{k\in\mathbb{S}}}\left[\prod_{k\in\mathbb{S}}c_k(W_k,\,w^{\mathcal{T}}_{ik})c_k(W_k,\,w^{\mathcal{T}}_{jk})\right]\prod_{k\in\mathbb{S}^\mathsf{c}}\xi_{ik}\xi_{jk}\prod_{k=1}^pc_k(z_k,\,z^{\mathcal{T}}_{ik})c_k(z_k,\,z^{\mathcal{T}}_{jk})\nonumber\\
=&\prod_{k\in\mathbb{S}}\mathbb{E}_{W_k}\left[c_k(W_k,\,w^{\mathcal{T}}_{ik})c_k(W_k,\,w^{\mathcal{T}}_{jk})\right]\prod_{k\in\mathbb{S}^\mathsf{c}}\xi_{ik}\xi_{jk}\prod_{k=1}^pc_k(z_k,\,z^{\mathcal{T}}_{ik})c_k(z_k,\,z^{\mathcal{T}}_{jk})\nonumber\\
=&\prod_{k\in\mathbb{S}}\zeta_{ijk}\prod_{k\in\mathbb{S}^\mathsf{c}}\xi_{ik}\xi_{jk}\prod_{k=1}^pc_k(z_k,\,z^{\mathcal{T}}_{ik})c_k(z_k,\,z^{\mathcal{T}}_{jk})\nonumber.
\end{align}
\end{itemize}

We now derive~$(\mathrm{\ref{eq:v1proof}.2})$ as follow: 
\begin{align}
\label{eq:expectation2exp}
(\mathrm{\ref{eq:v1proof}.2})=&\left(\mathbb{E}_{W_{k\in\mathbb{S}}}\left[\widetilde{\boldsymbol{\mu}}^\top\right]\widehat{\boldsymbol{\theta}}+\mathbb{E}_{W_{k\in\mathbb{S}}}\left[\widetilde{\mathbf{I}}^\top\right]\mathbf{A}\right)^2\nonumber\\
=&\left(\boldsymbol{\mu}^\top\widehat{\boldsymbol{\theta}}+\mathbf{I}^\top\mathbf{A}\right)^2\nonumber\\
=&\boldsymbol{\mu}^\top\widehat{\boldsymbol{\theta}}\widehat{\boldsymbol{\theta}}^\top\boldsymbol{\mu}+\mathbf{A}^\top\mathbf{I}\mathbf{I}^\top\mathbf{A}+2\widehat{\boldsymbol{\theta}}^\top\boldsymbol{\mu}\mathbf{I}^\top\mathbf{A}.
\end{align}

Plugging equations~\eqref{eq:expectation1exp} and~\eqref{eq:expectation2exp} back into equation~\eqref{eq:v1proof}, we obtain
\begin{align*}
V_1(\mathbb{S})=&\mathrm{tr}\left\{\widehat{\boldsymbol{\theta}}\widehat{\boldsymbol{\theta}}^\top\left(\boldsymbol{\mu}\boldsymbol{\mu}^\top+\widetilde{\boldsymbol{\Omega}}\right)\right\}+\mathrm{tr}\left\{\mathbf{A}\mathbf{A}^\top\widetilde{\mathbf{J}}\right\}+2\widehat{\boldsymbol{\theta}}^\top\widetilde{\mathbf{B}}\mathbf{A}-\left(\boldsymbol{\mu}^\top\widehat{\boldsymbol{\theta}}\widehat{\boldsymbol{\theta}}^\top\boldsymbol{\mu}+\mathbf{A}^\top\mathbf{I}\mathbf{I}^\top\mathbf{A}+2\widehat{\boldsymbol{\theta}}^\top\boldsymbol{\mu}\mathbf{I}^\top\mathbf{A}\right)\\
=&\mathrm{tr}\left\{\widehat{\boldsymbol{\theta}}\widehat{\boldsymbol{\theta}}^\top\boldsymbol{\mu}\boldsymbol{\mu}\right\}+\mathrm{tr}\left\{\widehat{\boldsymbol{\theta}}\widehat{\boldsymbol{\theta}}^\top\widetilde{\boldsymbol{\Omega}}\right\}+\mathbf{A}^\top\widetilde{\mathbf{J}}\mathbf{A}+2\widehat{\boldsymbol{\theta}}^\top\widetilde{\mathbf{B}}\mathbf{A}-\boldsymbol{\mu}^\top\widehat{\boldsymbol{\theta}}\widehat{\boldsymbol{\theta}}^\top\boldsymbol{\mu}-\mathbf{A}^\top\mathbf{I}\mathbf{I}^\top\mathbf{A}-2\widehat{\boldsymbol{\theta}}^\top\boldsymbol{\mu}\mathbf{I}^\top\mathbf{A}\\
=&\mathrm{tr}\left\{\widehat{\boldsymbol{\theta}}\widehat{\boldsymbol{\theta}}^\top\boldsymbol{\mu}\boldsymbol{\mu}\right\}+\mathrm{tr}\left\{\widehat{\boldsymbol{\theta}}\widehat{\boldsymbol{\theta}}^\top\widetilde{\boldsymbol{\Omega}}\right\}+\mathbf{A}^\top\widetilde{\mathbf{J}}\mathbf{A}+2\widehat{\boldsymbol{\theta}}^\top\widetilde{\mathbf{B}}\mathbf{A}-\mathrm{tr}\left\{\widehat{\boldsymbol{\theta}}\widehat{\boldsymbol{\theta}}^\top\boldsymbol{\mu}\boldsymbol{\mu}\right\}-\mathbf{A}^\top\mathbf{I}\mathbf{I}^\top\mathbf{A}-2\widehat{\boldsymbol{\theta}}^\top\boldsymbol{\mu}\mathbf{I}^\top\mathbf{A}\\
=&\mathrm{tr}\left\{\widehat{\boldsymbol{\theta}}\widehat{\boldsymbol{\theta}}^\top\widetilde{\boldsymbol{\Omega}}\right\}+\mathbf{A}^\top\left(\widetilde{\mathbf{J}}-\mathbf{I}\mathbf{I}^\top\right)\mathbf{A}+2\widehat{\boldsymbol{\theta}}^\top\left(\widetilde{\mathbf{B}}-\boldsymbol{\mu}\mathbf{I}^\top\right)\mathbf{A}.
\end{align*}

In case that the trend is assumed constant, $V_1(\mathbb{S})$ can be simplified to the following expression:
\begin{equation*}
V_1(\mathbb{S})=\mathbf{A}^\top\left(\widetilde{\mathbf{J}}-\mathbf{I}\mathbf{I}^\top\right)\mathbf{A}.	
\end{equation*}

\section[Proof of Theorem~S.2.1]{Proof of Theorem~\ref{thm:fullcov}}
\label{sec:prooffullcov}

\subsection{\texorpdfstring{{Derivation of $\widetilde{\xi}_{i}$}}{Derivation of the first expectation}}
\begin{align*}
\widetilde{\xi}_{i}&=\mathbb{E}\left[c(\mathbf{W},\,\mathbf{w}^{\mathcal{T}}_i)\right]\\
&=\int\exp\left\{-\sum_{k=1}^{d}\frac{\left(w_k-w^{\mathcal{T}}_{ik}\right)^2}{\gamma_k^2}\right\}\frac{1}{\sqrt{(2\pi)^d|\boldsymbol\Sigma|}}\exp\left\{-\frac{1}{2}(\mathbf{w}-\boldsymbol{\mu})^{\top}\boldsymbol{\Sigma}^{-1}(\mathbf{w}-\boldsymbol{\mu})\right\}\mathrm{d}\mathbf{w}\\
 &=\int\exp\left\{-\frac{1}{2}(\mathbf{w}-\boldsymbol{\omega}^{\mathcal{T}}_i)^{\top}\boldsymbol{\Lambda}^{-1}(\mathbf{w}-\boldsymbol{\omega}^{\mathcal{T}}_i)\right\}\frac{1}{\sqrt{(2\pi)^d|\boldsymbol\Sigma|}}\exp\left\{-\frac{1}{2}(\mathbf{w}-\boldsymbol{\mu})^{\top}\boldsymbol{\Sigma}^{-1}(\mathbf{w}-\boldsymbol{\mu})\right\}\mathrm{d}\mathbf{w},
\end{align*}
where $\boldsymbol{\Lambda}=\mathrm{diag}(\frac{\gamma_1^2}{2},\dots,\frac{\gamma_d^2}{2})\in\mathbb{R}^{d\times d}$ is a diagonal matrix. 

By completing in squares, we then have
\begin{align*}
\widetilde{\xi}_{i}&=\frac{1}{\sqrt{(2\pi)^d|\mathbf{M}^{-1}|}}\frac{1}{\sqrt{|\boldsymbol\Sigma\mathbf{M}|}}\\  
&\quad\times\int\exp\left\{-\frac{1}{2}(\mathbf{w}-\mathbf{M}^{-1}\mathbf{V})^\top\mathbf{M}(\mathbf{w}-\mathbf{M}^{-1}\mathbf{V})+\frac{1}{2}(\mathbf{V}^\top\mathbf{M}^{-1}\mathbf{V}-R)\right\}\mathrm{d}\mathbf{w},
\end{align*}
where $\mathbf{M}=\boldsymbol{\Sigma}^{-1}+\boldsymbol{\Lambda}^{-1}$, $\mathbf{V}=\boldsymbol{\Sigma}^{-1}\boldsymbol{\mu}+\boldsymbol{\Lambda}^{-1}\boldsymbol{\omega}^{\mathcal{T}}_i$ and $R=\boldsymbol{\mu}^\top\boldsymbol{\Sigma}^{-1}\boldsymbol{\mu}+(\boldsymbol{\omega}^{\mathcal{T}}_i)^\top\boldsymbol{\Lambda}^{-1}\boldsymbol{\omega}^{\mathcal{T}}_i$.

By integrating out the probability density function of a multivariate normal distribution with mean $\mathbf{M}^{-1}\mathbf{V}$ and covariance matrix $\mathbf{M}^{-1}$, we have
\begin{equation*}
\widetilde{\xi}_{i}=\frac{1}{\sqrt{|\boldsymbol\Sigma\mathbf{M}|}}\exp\left\{\frac{1}{2}(\mathbf{V}^\top\mathbf{M}^{-1}\mathbf{V}-R)\right\}
\end{equation*}

Using the Woodbury identity~\cite{Petersen2012}, we have
\begin{align*}
   \mathbf{M}^{-1}&=\boldsymbol{\Sigma}-\boldsymbol{\Sigma}(\boldsymbol{\Sigma}+\boldsymbol{\Lambda})^{-1}\boldsymbol{\Sigma}\\
   \mathbf{M}^{-1}&=\boldsymbol{\Lambda}-\boldsymbol{\Lambda}(\boldsymbol{\Sigma}+\boldsymbol{\Lambda})^{-1}\boldsymbol{\Lambda}.
\end{align*}
Thus, we obtain
\begin{equation*}
\widetilde{\xi}_{i}=\frac{1}{\sqrt{|(\boldsymbol\Lambda+\boldsymbol\Sigma)\boldsymbol\Lambda^{-1}|}}\exp\left\{-\frac{1}{2}(\boldsymbol{\omega}^{\mathcal{T}}_i-\boldsymbol{\mu})^\top(\boldsymbol{\Lambda}+\boldsymbol{\Sigma})^{-1}(\boldsymbol{\omega}^{\mathcal{T}}_i-\boldsymbol{\mu})\right\},
\end{equation*}

\subsection{\texorpdfstring{{Derivation of $\widetilde{\zeta}_{ij}$}}{Derivation of the second expectation}}

\begin{align*}
  \widetilde{\zeta}_{ij}&=\mathbb{E}\left[c(\mathbf{W},\,\mathbf{w}^{\mathcal{T}}_i)c(\mathbf{W},\,\mathbf{w}^{\mathcal{T}}_j)\right]\\
  &=\int\exp\left\{-\sum_{k=1}^{d}\frac{\left(w_k-w^{\mathcal{T}}_{ik}\right)^2}{\gamma_k^2}-\sum_{k=1}^{d}\frac{\left(w_k-w^{\mathcal{T}}_{jk}\right)^2}{\gamma_k^2}\right\}\\
  &\quad\times\frac{1}{\sqrt{(2\pi)^d|\boldsymbol\Sigma|}}\exp\left\{-\frac{1}{2}(\mathbf{w}-\boldsymbol{\mu})^{\top}\boldsymbol{\Sigma}^{-1}(\mathbf{w}-\boldsymbol{\mu})\right\}\mathrm{d}\mathbf{w}\\
  &=\int\exp\left\{-\sum_{k=1}^{d}\frac{2(w_k-w^{\mathcal{T}}_{ik})(w_k-w^{\mathcal{T}}_{jk})}{\gamma_k^2}-\sum_{k=1}^{d}\frac{\left(w^{\mathcal{T}}_{ik}-w^{\mathcal{T}}_{jk}\right)^2}{\gamma_k^2}\right\}\\
  &\quad\times\frac{1}{\sqrt{(2\pi)^d|\boldsymbol\Sigma|}}\exp\left\{-\frac{1}{2}(\mathbf{w}-\boldsymbol{\mu})^{\top}\boldsymbol{\Sigma}^{-1}(\mathbf{w}-\boldsymbol{\mu})\right\}\mathrm{d}\mathbf{w}\\
  &=\int\exp\left\{-\frac{1}{2}(\mathbf{w}-\boldsymbol{\omega}^{\mathcal{T}}_i)^{\top}\boldsymbol{\Gamma}^{-1}(\mathbf{w}-\boldsymbol{\omega}^{\mathcal{T}}_j)-\frac{1}{4}(\boldsymbol{\omega}^{\mathcal{T}}_i-\boldsymbol{\omega}^{\mathcal{T}}_j)^\top\boldsymbol{\Gamma}^{-1}(\boldsymbol{\omega}^{\mathcal{T}}_i-\boldsymbol{\omega}^{\mathcal{T}}_j)\right\}\\
  &\quad\times\frac{1}{\sqrt{(2\pi)^d|\boldsymbol\Sigma|}}\exp\left\{-\frac{1}{2}(\mathbf{w}-\boldsymbol{\mu})^{\top}\boldsymbol{\Sigma}^{-1}(\mathbf{w}-\boldsymbol{\mu})\right\}\mathrm{d}\mathbf{w}\\
  &=\exp\left\{-\frac{1}{4}(\boldsymbol{\omega}^{\mathcal{T}}_i-\boldsymbol{\omega}^{\mathcal{T}}_j)^\top\boldsymbol{\Gamma}^{-1}(\boldsymbol{\omega}^{\mathcal{T}}_i-\boldsymbol{\omega}^{\mathcal{T}}_j)\right\}\frac{1}{\sqrt{(2\pi)^d|\boldsymbol\Sigma|}}\\
  &\quad\times\int\exp\left\{-\frac{1}{2}\left[(\mathbf{w}-\boldsymbol{\omega}^{\mathcal{T}}_i)^{\top}\boldsymbol{\Gamma}^{-1}(\mathbf{w}-\boldsymbol{\omega}^{\mathcal{T}}_j)+(\mathbf{w}-\boldsymbol{\mu})^{\top}\boldsymbol{\Sigma}^{-1}(\mathbf{w}-\boldsymbol{\mu})\right]\right\}\mathrm{d}\mathbf{w},
\end{align*}
where $\boldsymbol{\Gamma}=\mathrm{diag}(\frac{\gamma_1^2}{4},\dots,\frac{\gamma_d^2}{4})\in\mathbb{R}^{d\times d}$ is a diagonal matrix. By completing in squares, we then have
\begin{align*}
\widetilde{\zeta}_{ij}&=\exp\left\{-\frac{1}{4}(\boldsymbol{\omega}^{\mathcal{T}}_i-\boldsymbol{\omega}^{\mathcal{T}}_j)^\top\boldsymbol{\Gamma}^{-1}(\boldsymbol{\omega}^{\mathcal{T}}_i-\boldsymbol{\omega}^{\mathcal{T}}_j)\right\}\frac{1}{\sqrt{(2\pi)^d|\mathbf{M}^{-1}|}}\frac{1}{\sqrt{|\boldsymbol\Sigma\mathbf{M}|}}\\  
&\quad\times\int\exp\left\{-\frac{1}{2}(\mathbf{w}-\mathbf{M}^{-1}\mathbf{V})^\top\mathbf{M}(\mathbf{w}-\mathbf{M}^{-1}\mathbf{V})+\frac{1}{2}(\mathbf{V}^\top\mathbf{M}^{-1}\mathbf{V}-R)\right\}\mathrm{d}\mathbf{w},
\end{align*}
where $\mathbf{M}=\boldsymbol{\Sigma}^{-1}+\boldsymbol{\Gamma}^{-1}$; $\mathbf{V}=\boldsymbol{\Sigma}^{-1}\boldsymbol{\mu}+\boldsymbol{\Gamma}^{-1}\boldsymbol{\omega}$ with $\boldsymbol{\omega}=\frac{1}{2}(\boldsymbol{\omega}^{\mathcal{T}}_i+\boldsymbol{\omega}^{\mathcal{T}}_j)$; and $R=\boldsymbol{\mu}^\top\boldsymbol{\Sigma}^{-1}\boldsymbol{\mu}+(\boldsymbol{\omega}^{\mathcal{T}}_i)^\top\boldsymbol{\Gamma}^{-1}\boldsymbol{\omega}^{\mathcal{T}}_j$.

By integrating out the probability density function of a multivariate normal distribution with mean $\mathbf{M}^{-1}\mathbf{V}$ and covariance matrix $\mathbf{M}^{-1}$, we have
\begin{equation*}
\widetilde{\zeta}_{ij}=\exp\left\{-\frac{1}{4}(\boldsymbol{\omega}^{\mathcal{T}}_i-\boldsymbol{\omega}^{\mathcal{T}}_j)^\top\boldsymbol{\Gamma}^{-1}(\boldsymbol{\omega}^{\mathcal{T}}_i-\boldsymbol{\omega}^{\mathcal{T}}_j)\right\}\frac{1}{\sqrt{|\boldsymbol\Sigma\mathbf{M}|}}\exp\left\{\frac{1}{2}(\mathbf{V}^\top\mathbf{M}^{-1}\mathbf{V}-R)\right\}.
\end{equation*}
Using the Woodbury identity~\cite{Petersen2012}, we have
\begin{align*}
   \mathbf{M}^{-1}&=\boldsymbol{\Sigma}-\boldsymbol{\Sigma}(\boldsymbol{\Sigma}+\boldsymbol{\Gamma})^{-1}\boldsymbol{\Sigma}\\
   \mathbf{M}^{-1}&=\boldsymbol{\Gamma}-\boldsymbol{\Gamma}(\boldsymbol{\Sigma}+\boldsymbol{\Gamma})^{-1}\boldsymbol{\Gamma}.
\end{align*}
Thus, we obtain
\begin{equation*}
\widetilde{\zeta}_{ij}=\exp\left\{-\frac{1}{8}(\boldsymbol{\omega}^{\mathcal{T}}_i-\boldsymbol{\omega}^{\mathcal{T}}_j)^\top\boldsymbol{\Gamma}^{-1}(\boldsymbol{\omega}^{\mathcal{T}}_i-\boldsymbol{\omega}^{\mathcal{T}}_j)\right\}\frac{1}{\sqrt{|(\boldsymbol\Gamma+\boldsymbol\Sigma)\boldsymbol\Gamma^{-1}|}}\exp\left\{-\frac{1}{2}(\boldsymbol{\omega}-\boldsymbol{\mu})^\top(\boldsymbol{\Gamma}+\boldsymbol{\Sigma})^{-1}(\boldsymbol{\omega}-\boldsymbol{\mu})\right\}.
\end{equation*}

\subsection{\texorpdfstring{{Derivation of $\widetilde{\psi}_{jl}$}}{Derivation of the third expectation}}
\begin{align*}
\widetilde{\psi}_{jl}&=\mathbb{E}\left[W_l c(\mathbf{W},\,\mathbf{w}^{\mathcal{T}}_j)\right]\\
&=\int w_l\exp\left\{-\sum_{k=1}^{d}\frac{\left(w_k-w^{\mathcal{T}}_{jk}\right)^2}{\gamma_k^2}\right\}\frac{1}{\sqrt{(2\pi)^d|\boldsymbol\Sigma|}}\exp\left\{-\frac{1}{2}(\mathbf{w}-\boldsymbol{\mu})^{\top}\boldsymbol{\Sigma}^{-1}(\mathbf{w}-\boldsymbol{\mu})\right\}\mathrm{d}\mathbf{w}\\
 &=\int w_l\exp\left\{-\frac{1}{2}(\mathbf{w}-\boldsymbol{\omega}^{\mathcal{T}}_i)^{\top}\boldsymbol{\Lambda}^{-1}(\mathbf{w}-\boldsymbol{\omega}^{\mathcal{T}}_i)\right\}\frac{1}{\sqrt{(2\pi)^d|\boldsymbol\Sigma|}}\exp\left\{-\frac{1}{2}(\mathbf{w}-\boldsymbol{\mu})^{\top}\boldsymbol{\Sigma}^{-1}(\mathbf{w}-\boldsymbol{\mu})\right\}\mathrm{d}\mathbf{w},
\end{align*}
where $\boldsymbol{\Lambda}=\mathrm{diag}(\frac{\gamma_1^2}{2},\dots,\frac{\gamma_d^2}{2})\in\mathbb{R}^{d\times d}$ is a diagonal matrix. 

By completing in squares, we then have
\begin{align*}
\widetilde{\psi}_{jl}&=\frac{1}{\sqrt{(2\pi)^d|\mathbf{M}^{-1}|}}\frac{1}{\sqrt{|\boldsymbol\Sigma\mathbf{M}|}}\\  
&\times\int w_l\exp\left\{-\frac{1}{2}(\mathbf{w}-\mathbf{M}^{-1}\mathbf{V})^\top\mathbf{M}(\mathbf{w}-\mathbf{M}^{-1}\mathbf{V})+\frac{1}{2}(\mathbf{V}^\top\mathbf{M}^{-1}\mathbf{V}-R)\right\}\mathrm{d}\mathbf{w},
\end{align*}
where $\mathbf{M}=\boldsymbol{\Sigma}^{-1}+\boldsymbol{\Lambda}^{-1}$, $\mathbf{V}=\boldsymbol{\Sigma}^{-1}\boldsymbol{\mu}+\boldsymbol{\Lambda}^{-1}\boldsymbol{\omega}^{\mathcal{T}}_j$ and $R=\boldsymbol{\mu}^\top\boldsymbol{\Sigma}^{-1}\boldsymbol{\mu}+(\boldsymbol{\omega}^{\mathcal{T}}_j)^\top\boldsymbol{\Lambda}^{-1}\boldsymbol{\omega}^{\mathcal{T}}_j$.

By integrating out $w_l$ with respect to the probability density function of a multivariate normal distribution with mean $\mathbf{M}^{-1}\mathbf{V}$ and covariance matrix $\mathbf{M}^{-1}$, we have
\begin{equation*}
\widetilde{\psi}_{jl}=\frac{\mathbf{e}_l\mathbf{M}^{-1}\mathbf{V}}{\sqrt{|\boldsymbol\Sigma\mathbf{M}|}}\exp\left\{\frac{1}{2}(\mathbf{V}^\top\mathbf{M}^{-1}\mathbf{V}-R)\right\},
\end{equation*}
where $\mathbf{e}_l$ is a unit row vector with $l$-th element being one. 

Using the Woodbury identity~\citep{Petersen2012}, we have
\begin{align*}
   \mathbf{M}^{-1}&=\boldsymbol{\Sigma}-\boldsymbol{\Sigma}(\boldsymbol{\Sigma}+\boldsymbol{\Lambda})^{-1}\boldsymbol{\Sigma}\\
   \mathbf{M}^{-1}&=\boldsymbol{\Lambda}-\boldsymbol{\Lambda}(\boldsymbol{\Sigma}+\boldsymbol{\Lambda})^{-1}\boldsymbol{\Lambda}.
\end{align*}
Thus, we obtain
\begin{equation*}
\widetilde{\psi}_{jl}=\frac{\mathbf{e}_l[\boldsymbol{\Lambda}(\boldsymbol{\Lambda}+\boldsymbol{\Sigma})^{-1}\boldsymbol{\mu}+\boldsymbol{\Sigma}(\boldsymbol{\Lambda}+\boldsymbol{\Sigma})^{-1}\boldsymbol{\omega}^{\mathcal{T}}_j]}{\sqrt{|(\boldsymbol\Lambda+\boldsymbol\Sigma)\boldsymbol\Lambda^{-1}|}}\exp\left\{-\frac{1}{2}(\boldsymbol{\omega}^{\mathcal{T}}_j-\boldsymbol{\mu})^\top(\boldsymbol{\Lambda}+\boldsymbol{\Sigma})^{-1}(\boldsymbol{\omega}^{\mathcal{T}}_j-\boldsymbol{\mu})\right\},
\end{equation*}
which is
\begin{equation*}
\widetilde{\psi}_{jl}=\mathbf{e}_l[\boldsymbol{\Lambda}(\boldsymbol{\Lambda}+\boldsymbol{\Sigma})^{-1}\boldsymbol{\mu}+\boldsymbol{\Sigma}(\boldsymbol{\Lambda}+\boldsymbol{\Sigma})^{-1}\boldsymbol{\omega}^{\mathcal{T}}_j]\,\widetilde{\xi}_{j}.  
\end{equation*}

\end{document}